\renewcommand*{\thefootnote}{\fnsymbol{footnote}}
\newcommand\blfootnote[1]{%
	\begingroup
	\renewcommand\thefootnote{}\footnote{#1}%
	\addtocounter{footnote}{-1}%
	\endgroup
}
\newcommand{\kzedit}[1]{{\color{blue}#1}}
\newcommand{\cpdelete}[1]{{\color{lightgray}#1}}
\newcommand{\arxiv}[1]{#1}
\newcommand{\neurips}[1]{}
\newcommand{\numbering}[1]{#1}
\newcommand{\tnumbering}[1]{}
\renewcommand{\arraystretch}{2} 
\title{Multi-Player Zero-Sum  Markov Games with \\
Networked  
{Separable}  Interactions} 
\begin{document} 

\author{\vspace{10pt} 
Chanwoo Park\footnote{Alphabetical order.} \and \qquad   Kaiqing Zhang\footnotemark[\value{footnote}] \and \qquad  Asuman Ozdaglar
\blfootnote{C.\ Park, and A.\ Ozdaglar are with Massachusetts Institute of Technology, Cambridge, MA, 02139.
		K.~Zhang is with the University of Maryland, College Park, MD, 20742.
		{E-mails:  cpark97@mit.edu, kaiqing@\{umd,mit\}.edu, asuman@mit.edu}.
	}
}
\date{July 13, 2023}

\maketitle
\numbering{
\begin{abstract}   
{We study a new class of  Markov games, \textit{(multi-player) zero-sum Markov Games} with {\it Networked separable interactions} (zero-sum NMGs), to model the local interaction structure in non-cooperative multi-agent sequential decision-making. We define a zero-sum NMG as a model where {the payoffs of the auxiliary games associated with each state are zero-sum and} have some separable (i.e., polymatrix) structure across the neighbors over some interaction network. 
We first identify the necessary and sufficient conditions under which an MG can be presented as a zero-sum NMG, and show that the set of Markov coarse correlated equilibrium (CCE) collapses to the set of Markov Nash equilibrium (NE) in these games, in that the {product of} per-state marginalization of the former for all players yields the latter. Furthermore,  we show that finding approximate Markov \emph{stationary}  CCE  in infinite-horizon discounted zero-sum NMGs is \texttt{PPAD}-hard, unless the underlying network has a ``star topology''. Then, we propose fictitious-play-type dynamics, the classical learning dynamics in normal-form games, for zero-sum NMGs, and establish convergence guarantees to Markov stationary NE under a star-shaped network structure. Finally, in light of the hardness result, we focus on computing a Markov \emph{non-stationary} NE and provide finite-iteration guarantees for a series of value-iteration-based algorithms. We also provide numerical experiments to corroborate our theoretical results.}  
\end{abstract}
}
\tnumbering{
{We study a new class of  Markov games, \textit{(multi-player) zero-sum Markov Games} with {\it Networked separable interactions} (zero-sum NMGs), to model the local interaction structure in non-cooperative multi-agent sequential decision-making. We define a zero-sum NMG as a model where {the payoffs of the auxiliary games associated with each state are zero-sum and} have some separable (i.e., polymatrix) structure across the neighbors over some interaction network. 
We first identify the necessary and sufficient conditions under which an MG can be presented as a zero-sum NMG, and show that the set of Markov coarse correlated equilibrium (CCE) collapses to the set of Markov Nash equilibrium (NE) in these games, in that the {product of} per-state marginalization of the former for all players yields the latter. Furthermore,  we show that finding approximate Markov \emph{stationary}  CCE  in infinite-horizon discounted zero-sum NMGs is \texttt{PPAD}-hard, unless the underlying network has a ``star topology''. Then, we propose fictitious-play-type dynamics, the classical learning dynamics in normal-form games, for zero-sum NMGs, and establish convergence guarantees to Markov stationary NE under a star-shaped network structure.
 Finally, in light of the hardness result, we focus on computing a Markov \emph{non-stationary} NE and provide finite-iteration guarantees for a series of value-iteration-based algorithms. We also provide numerical experiments to corroborate our theoretical results.}  

}

\neurips{
\vspace{-0.1in}
}
\tnumbering{\chapter{Introduction}}
\numbering{\section{Introduction}}

Nash Equilibrium (NE) has been broadly used as a solution concept in game theory, since the seminal works of \cite{von1947theory,nash1950equilibrium}. 
Perhaps equally important, NE 
is also 
deeply rooted in the prediction and analysis of \emph{learning dynamics} in multi-agent strategic environments: it may appear as a natural outcome of many non-equilibrating learning processes of multiple agents interacting with each other  
\cite{cesa2006prediction,fudenberg1998theory}. A prominent example of  such learning processes  is fictitious play (FP) \cite{brown1951iterative,robinson1951iterative}, 
in which myopic agents estimate the opponents' play using history, and then choose a best-response action (based on their payoff matrix) against this estimate, as if the opponents use it as their stationary strategy. The focus of these studies has initially been on the convergence to NE in zero-sum games (see \cite{brown1951iterative, robinson1951iterative}, and also \cite{ref:Fudenberg93, fudenberg1998theory}), and in games with aligned objective (identical-interest and potential games, see  \cite{monderer1996fictitious}). Since then, FP has been shown to converge to NE in more important classes of games, including 2x$n$ games \cite{miyasawa1961convergence, berger2005fictitious}, ``one-against-all'' games \cite{sela1999fictitious}, and zero-sum polymatrix games  \cite{ewerhart2020fictitious}, justifying the prediction power of NE in learning in normal-form/matrix games.  

Some of these results  have recently been extended to the {\it stochastic game} (also known as the  {\it Markov} game (MG)) setting, a model for multi-agent sequential decision-making with state transition dynamics, first introduced in \cite{shapley1953stochastic}. In particular, \cite{leslie2020best,sayin2021decentralized,sayin2022fictitious,baudin2022fictitious,chen2023finite}  have studied best-response-type learning dynamics in two-player zero-sum MGs, and \cite{sayin2022fictitiousb,baudin2022fictitious} have studied that in  multi-player identical-interest games. 
Following the same path as studying matrix games, one natural question arises: 
\textit{Are there other classes of MGs beyond two-player zero-sum and identical-interest cases that allow natural learning dynamics, e.g., fictitious play, to justify NE as the long-run emerging outcome?} 

\neurips{
\vspace{-0.03in}
}

On the other end of the spectrum, it is well-known that for {\it general-sum}  normal-form games, the special case of MGs without the state transition dynamics,  \emph{computing} an NE is  intractable 
\cite{daskalakis2009complexity,chen2009settling}. Relaxed solution concepts as (coarse) correlated equilibrium ((C)CE) have thus been favored when it comes to equilibrium computation for general-sum, multi-player games  \cite{papadimitriou2008computing,cesa2006prediction}. Encouragingly, when the interactions among players have some  networked separable structure, also known as being \emph{polymatrix}, computing NE may be made tractable even in  multi-player settings. This has been instantiated in the seminal works  \cite{cai2011minmax,cai2016zero} in the  normal-form game setting, which showed that any CCE collapses to the NE in such games when the payoffs are zero-sum. Thus, any algorithms that can efficiently compute the CCE in such games will lead to the efficient computation of the NE. 

\neurips{
\vspace{-0.03in}
}

In fact, besides being of theoretical interest, multi-player zero-sum games with networked separable interactions also find a range of applications, including  security games \cite{cai2016zero}, fashion games \cite{cao2013fashion, cao2014fashion, zhang2018fashion}, and resource  allocation problems \cite{bergman1998separable}. 
These examples, oftentimes, naturally  involve some \emph{state} transition that captures the dynamics of the evolution of the environment in practice. 
For example, in the security game, the protection level or immunity of a target increases as a function of the number of past attacks, leading to a smaller probability of an attack on the target being successful. Hence, it is imperative to study such multi-player zero-sum games with state transitions. As eluded in the recent results \cite{jin2022complexity,daskalakis2022complexity}, such a transition from \emph{stateless} to \emph{stateful} cases may not always yield straightforward and expected results, e.g., computing stationary CCE can be computationally intractable in  stochastic games, in stark contrast to the normal-form case where CCE can be efficiently computed. 
This naturally prompts another question: {\it Are there other types of (multi-player) MGs that may circumvent the computational hardness of computing  NE/CCE?}  
\neurips{
\vspace{-0.05in}
}

In an effort to address these two questions, we introduce a new class of Markov games -- {\it (multi-player) zero-sum Markov games with networked separable interactions} (zero-sum NMGs). We summarize our contributions as follows, and defer a more detailed literature review to \Cref{appendix:literature-review}. 

\neurips{
\vspace{-0.14in}
}

\paragraph{Contributions.} 
First, we introduce a new class of {non-cooperative} Markov games: (multi-player) 
 zero-sum MGs with Networked separable interactions (zero-sum NMGs),    wherein the payoffs of the \emph{auxiliary-games}  associated with each state, i.e., the sum of instantaneous reward and expectation of any estimated state-value functions, possess the  multi-player zero-sum and {networked separable (i.e., polymatrix)}   structure as in \cite{bergman1987methods, bergman1998separable,cai2011minmax,cai2016zero} for normal-form games, a  strict generalization of the latter. We also provide structural results on the reward and transition dynamics of the game, 
 as well as examples of this class of games.  Specifically, for a Markov game to qualify as a zero-sum NMG, if and only if its reward function has the zero-sum polymatrix structure, and its transition dynamics is an {\it ensemble} of multiple single-controller transition dynamics that are sampled randomly at each state (see \Cref{rem:implication} for more details). This transition dynamics covers the common ones in the MG literature, including the single-controller and turn-based dynamics. 
Second, we show that Markov CCE and Markov NE \emph{collapse} in that the {product of} per-state marginal distributions of the former yields the latter,  
making it sufficient to focus on the former in equilibrium computation. We then show the {\tt PPAD}-hardness \cite{papadimitriou1994complexity} of computing the \emph{Markov stationary} equilibrium, a natural solution concept in infinite-horizon discounted MGs{, unless the underlying network {has} a star-topology.} 
This is in contrast to the normal-form case where CCE is always computationally tractable. Third, we study the fictitious-play property  \cite{monderer1996fictitious} of zero-sum NMGs, showing that the fictitious-play dynamics  \cite{sayin2022fictitious,sayin2022fictitiousb} converges to the Markov stationary NE, for zero-sum NMGs {with a star-shaped network structure}. 
Finally, 
in light of the hardness of computing stationary equilibria, we develop a series of value-iteration-based algorithms for computing a Markov \emph{non-stationary} {NE} of zero-sum NMGs, with finite-iteration guarantees. We also provide numerical experiments to corroborate our theoretical results in \Cref{appendix:experiment}. We hope our results serve as a starting point for studying this networked separable interaction structure in non-cooperative Markov games.

\neurips{
\vspace{-0.14in}
}

\paragraph{Notation.} 
For a real number $c$, we use $(c)_+$ to denote $\max\{c,0\}$. For an event $\cE$, we use $\pmb{1}(\cE)$ to denote the indicator function such that $\pmb{1}(\cE)=1$ if $\cE$ is true, and $\pmb{1}(\cE)=0$ otherwise. 
{We define multinomial distribution with probability $(w_i)_{i\in\cN}$ as Multinomial$\big((w_i)_{i\in\cN}\big)$. We denote the uniform distribution over a set $\cS$ as Unif($\cS$). We denote the Bernoulli distribution with probability $p$ as Bern($p$). The $\text{sgn}$ function is defined as $\text{sgn}(x) = 2\times\pmb{1}(x\geq0) - 1$. The KL-divergence between two probability distributions $p, q$ is denoted as $\mathrm{KL}(p, q) = \EE_p [\log(p/q)]$.} 
For a graph $\cG = (\cN, \cE)$, we denote the set of neighboring  nodes of node $i \in \cN$ as $\cE_i$ (without  including $i$). {The maximum norm of a matrix $X\in\RR^{m\times n}$, denoted as $\norm{X}_{\max}$, 
is defined as $\norm{X}_{\max} := \max_{i \in [m],j\in [n]} |X_{i,j}|$.}

\numbering{\section{Preliminaries}}
\tnumbering{\chapter{Preliminaries}}

\neurips{
\vspace{-0.1in}
}

\numbering{\subsection{Markov games}}
\tnumbering{\section{Markov games}}
\label{sec:pre_mg}

\neurips{
\vspace{-0.05in}
}

We define a Markov game as a tuple 
$(\cN, \cS, \cA, H, (\PP_h)_{h \in [H]}, (r_{h,i})_{h \in [H], i \in \cN}, \gamma)$,  
where {$\cN = [n]$} is the set of players, $\cS$ is the state space with $|\cS| = S$, $\cA_i$ is the action space for player $i$ with $|\cA_i| = A_i$ and $\cA = \prod_{i \in \cN} \cA_i$, $H \leq \infty$ is the length of the horizon, {$\PP_h:\cS\times \cA\to\Delta(\cS)$  captures the state transition dynamics at timestep $h$,} $r_{h,i}\in[0,R]$ is the reward function for player $i$ at timestep $h$, bounded by some $R>0$, and $\gamma \in (0,1]$ is a discount factor. An MG with a finite horizon ($H < \infty$) is also referred to as an \emph{episodic} MG, while an MG with an infinite horizon ($H = \infty$) and $\gamma < 1$ is referred to as an infinite-horizon $\gamma$-discounted MG. {When $H = \infty$, we will consider the transition dynamics and reward functions, denoted by $\PP$ and $(r_i)_{i \in \cN}$, respectively, to be independent of $h$.} Hereafter, we may use \emph{agent} and \emph{player} interchangeably.

\neurips{
\vspace{-0.1in}
}

\paragraph{Policy.}
Consider the stochastic Markov policy for player $i$, denoted by  $\pi_i$, as  
$\pi_i:= \{\pi_{h, i}:  \cS  \to\Delta(\cA_i)\}_{h\in[H]}$. {A \emph{joint} Markov policy is a policy $\pi:= \{\pi_{h}:  \cS  \to\Delta(\cA)\}_{h\in[H]}$, where  $\pi_h:\cS\to\Delta(\cA)$ decides the joint action of all players that can be potentially correlated. A joint Markov policy is a \emph{product}  Markov policy  if $\pi_h:\cS\to\prod_{i\in\cN}\Delta(\cA_i)$ for all $h\in[H]$,} and is denoted as $\pi = \pi_1 \times \pi_2 \times \dots \times \pi_n$. When the policy is  independent of $h$, the policy is called a \emph{stationary} policy. We let $\ba_{h}$ denote the joint action of all agents at timestep $h$. {Unless otherwise noted, we will work with Markov policies throughout.} {We denote $\pi(s) \in \Delta(\cA)$ as the joint  policy at  state $s\in\cS$. }

\neurips{
\vspace{-0.1in}
}

\paragraph{Value function.} For  player $i$, the value function under joint policy $\pi$, at timestep $h$  and state  $s_h$ is defined as 
$V^{\pi}_{h, i}(s_h):= \EE_{\pi}\big[\sum_{h'=h}^H \gamma^{h'-h}r_{h', i}(s_{h'}, \ba_{h'})\,\big| \, s_h \big],$  which denotes the expected cumulative reward for player $i$ at step $h$ if all players adhere to policy $\pi$. {We also define $V_{h, i}^\pi(\rho):=\EE_{s_h \sim \rho}[V_{h,i}^\pi (s_h)]$ for some state distribution $\rho\in\Delta(\cS)$.} We  denote  the $Q$-function for the $i$-th player  under policy $\pi$, at step $h$ and state $s_h$ as $Q^{\pi}_{h,i}(s_h, \ba_h):= \EE_{\pi}\big[\sum_{h'=h}^H \gamma^{h'-h}r_{h', i}(s_{h'}, \ba_{h'})\,\big| \, s_h, \ba_h\big],
$ which determines the expected cumulative reward for the $i$-th player at step $h$, when starting from the state-action pair $(s_h, \ba_h)$.  {For the infinite-horizon discounted setting, we also use $V^\pi_i$ and $Q^\pi_i$ to denote $V^\pi_{1,i}$ and $Q^\pi_{1,i}$ for short, respectively.}

 \neurips{
\vspace{-0.1in}
}

 \paragraph{Approximate 
 equilibrium.}  
Define {an} $\epsilon$-approximate \emph{Markov perfect  Nash equilibrium}   as a product policy $\pi$, which satisfies
$\max_{i \in \cN}\max_{\mu_i \in (\Delta(\cA_i)){^{|\cS|\times H}}} (V_{h,i}^{\mu_i, \pi_{-i}}(\rho) - V_{h,i}^\pi(\rho)) \leq \epsilon$ for \emph{all} $\rho \in \Delta(\cS)$ and $h\in[H]$, 
{where $\pi_{-i}$ represents the marginalized policy of all players except player $i$. 
Define {an} $\epsilon$-approximate \emph{Markov coarse correlated equilibrium}  as a joint policy $\pi$, which satisfies $\max_{i \in \cN} \max_{\mu_i \in (\Delta(\cA_i)){^{|\cS|\times H}}} (V_{h,i}^{\mu_i, \pi_{-i}}(\rho) - V_{h,i}^\pi(\rho)) \leq \epsilon$ for \emph{all} $\rho \in \Delta(\cS)$  and $h\in[H]$. 
In the infinite-horizon setting, they can be equivalently defined as satisfying $\max_{s\in\cS}\max_{i \in \cN} \max_{\mu_i \in (\Delta(\cA_i)){^{|\cS|}}} (V_{i}^{\mu_i, \pi_{-i}}(s) - V_{i}^\pi(s)) \leq \epsilon$. 
{If the above conditions only hold for certain $\rho$ and $h=1$}, we refer to them  as Markov {\it non-perfect} NE and CCE, respectively.  {Unless otherwise noted, we hereafter focus on Markov perfect equilibria, and sometimes refer to them simply as {\it Markov equilibria}  when it is clear from the context.} 
{In the infinite-horizon setting, if additionally, the policy is \emph{stationary}, then they are referred to as {a} \emph{Markov stationary} NE and CCE, respectively. 
}}

\neurips{
\vspace{-0.1in}
}
\tnumbering{\section{Multi-player zero-sum games with networked separable interactions }}
\numbering{
\subsection{Multi-player zero-sum games with networked separable interactions }}\label{sec:mp_zs_matrix_def}

\neurips{
\vspace{-0.1in}
}

As a generalization of \emph{two-player} zero-sum matrix games, \emph{(multi-player)} zero-sum polymatrix games have been introduced in \cite{bergman1987methods, bergman1998separable,cai2011minmax,cai2016zero}. A \emph{polymatrix game}, also known as a \emph{separable network game} is defined by a tuple $(\cG =(\cN, \cE_r), \cA = \prod_{i \in \cN} \cA_i, (r_{i, j})_{(i, j) \in \cE_r})$. Here, $\cG$ is an undirected {connected} graph where $\cN = [n]$ denotes the set of players and $\cE_r \subseteq \cN \times \cN$ denotes the set of edges {describing the rewards' networked structures, where the graph neighborhoods represent the interactions among players.} For each edge, a two-player game is defined for players $i$ and $j$, with action sets $\cA_i$ and $\cA_j$, and reward functions $r_{i,j}: \cA_i \times \cA_j \rightarrow \mathbb{R}$, and similarly for $r_{j,i}$. The reward for player $i$ for a given {joint} action $\ba = (a_i)_{i \in \cN} \in \prod_{i \in \cN} \cA_i$  is calculated as the sum of the rewards for all edges involving player $i$, that is, $r_i(\ba) = \sum_{j: (i, j) \in \cE_r} r_{i,j}(a_i, a_j)$. {To be consistent with our terminology later, hereafter, we also refer to such games as \emph{(multi-player) Games  with Networked separable interactions (NGs)}.} 

\neurips{
\vspace{-0.05in}
}

In a \emph{zero-sum} polymatrix game (i.e.,  a (multi-player) \emph{zero-sum} Game with Networked separable interactions {(zero-sum NG)}), the sum of rewards for all players at  any {joint action} $\ba= (a_i)_{i\in\cN} \in \prod_{i \in \cN} \cA_i$ equals zero{, i.e., $\sum_{i\in\cN}r_i(\ba)=0$.}  
One can define the policy of agent $i$, i.e., $\pi_i\in\Delta(\cA_i)$, so that the agent takes actions by  sampling    $a_i\sim \pi_i(\cdot)$. Note that $\pi_i$ can be viewed as the reduced case of the policy defined in \Cref{sec:pre_mg} when $\cS=\emptyset$ and $H=1$. 
The expected reward for player $i$ under $\pi$ can then be computed as:
\begin{align}\label{equ:expected_reward}
r_i(\pi) &:= \sum_{j: (i, j) \in \cE_r} \sum_{a_i \in \cA_i, a_j \in \cA_j} r_{i,j}\left(a_i, a_j\right) \pi_i(a_i) \pi_j(a_j) = \pi_i^\intercal \br_i \bpi, 
\end{align} 
where $\br_i$ denotes the matrix   $\br_i := (r_{i,1}, \dots, r_{i,(i-1)}, \bm{0}, r_{i, (i+1)}, \dots, r_{i,n}) \in \RR^{|\cA_i| \times \sum_{i \in \cN} |\cA_i|}$ 
and $\pmb{\pi}:= (\pi_1^\intercal, \pi_2^\intercal, \dots, \pi_n^\intercal)^\intercal \in \RR^{\sum_{i \in \cN} |\cA_i| }$. 
We define $\br := (\br_1^\intercal, \br_2^\intercal, \dots, \br_n^\intercal)^\intercal \in \RR^{\sum_{i \in \cN} |\cA_i| \times \sum_{i \in \cN} |\cA_i|}$. Then in this case we have  $\sum_{i\in\cN}r_i(\pi)=0$ for any policy $\pi$. See more prominent  application examples of zero-sum polymatrix games  in \cite{cai2011minmax,cai2016zero}.

\neurips{
\vspace{-0.1in}
}

\tnumbering{\chapter{Multi-Player (Zero-Sum) MGs with Networked Separable Interactions}}
\numbering{\section{Multi-Player (Zero-Sum) MGs with Networked Separable Interactions}}\label{sec:MPGdef}

\neurips{
\vspace{-0.1in}
}

We now introduce our model of multi-player zero-sum MGs with networked separable interactions. 

\neurips{
\vspace{-0.1in}
}

\numbering{{\subsection{Definitions}}}
\tnumbering{{\section{Definitions}}}
\label{sec:definition}
\begin{definition}\label{sec:def_MZNG}
An infinite-horizon $\gamma$-discounted MG is called a \textit{(multi-player) MG with Networked separable interactions (NMG)}  characterized by  a tuple $(\cG = (\cN, \cE_Q), \cS, \cA, \PP, (r_{i})_{i \in \cN}, \gamma)$ 
{if for any function $V: \cS \to \RR$, defining $Q_{i}^V(s, \ba):= r_i(s, \ba) + \gamma \sum_{s' \in \cS} \PP(s'\mid s, \ba) V(s')$, there exist a {set of} functions $(Q_{i,j}^V)_{(i, j) \in \cE_Q}$ {and an undirected connected graph $\cG = (\cN, \cE_Q)$} such that $
    Q_{i}^V(s, \ba) = \sum_{j\in \cE_{Q, i}} Q_{i,j}^V(s, a_i, a_j)$ }
holds for every $i \in \cN$, $s \in \cS$, $\ba \in \cA$, {where $\cE_{Q, i}$ denotes the neighbors of player $i$ induced by the edge set $\cE_{Q}$ (without including $i$)}. {When it is clear from the context, we represent the NMG tuple simply as $\cG = (\cN, \cE_Q)$.} 

\neurips{
\vspace{-0.05in}
}

A finite-horizon MG is called a \textit{(multi-player) MG with Networked separable interactions} {if for any set of functions $V:=\{V_h\}_{h \in [H+1]}$ where $V_h: \cS \to \RR$, defining $Q_{h, i}^V(s, \ba):= r_{h, i}(s, \ba) + \gamma \sum_{s' \in \cS} \PP_h(s'\mid s, \ba) V_{h+1}(s')$, there exist a set of  functions $(Q_{h, i,j}^V)_{(i, j) \in \cE_Q,h\in[H]}$ such that $Q_{h, i}^V(s, \ba) = \sum_{j\in\cE_{Q, i}} Q_{h, i,j}^V(s, a_i, a_j)
$ holds for every $i\in \cN$, $s \in \cS$, $h\in[H]$, $\ba \in \cA$.} 

\neurips{
\vspace{-0.05in}
}

{
A (multi-player) NMG is called a  \emph{{(multi-player)  \textit{zero-sum} MG} with Networked separable interactions (zero-sum NMG)} if additionally {$(\cG =(\cN, \cE_Q), \cA = \prod_{i \in \cN} \cA_i, (r_{i, j}(s):= Q_{i,j}^{\pmb{0}}(s))_{(i, j) \in \cE_Q})$ 
forms a zero-sum NG  for all $s\in\cS$ in the infinite-horizon $\gamma$-discounted case, or  $(\cG =(\cN, \cE_Q), \cA = \prod_{i \in \cN} \cA_i, (r_{h,i, j}(s):= Q_{h,i,j}^{\pmb{0}}(s))_{(i, j) \in \cE_Q})$ 
forms a zero-sum NG for all $s\in\cS$ and $h\in[H]$ in the finite-horizon case.}} 
\end{definition}

\neurips{
\vspace{-0.05in}
}

{Regarding the assumption that the above conditions hold under any (set of) functions  {$V$, 
one may understand this as a structural requirement 
to inherit the polymatrix structure in the Markov game case.} It is natural since {$\{Q_i^V\}_{i \in \cN}$  would play the role of the \emph{payoff matrix} in the normal-form case, when value-(iteration) based algorithms are used to solve the MG.} As our hope is to exploit the networked structure in the payoff matrices to develop efficient algorithms for solving such MGs, {if we do not know \emph{a priori}  which value function estimate $V$ will be encountered in the algorithm update, the networked structure may easily break if we do not assume them to hold for \emph{all} possible $V$.  Moreover, such a definition easily encompasses the normal-form case, by preserving the polymatrix structure of the \emph{reward functions} (when substituting $V$ to be a zero function). Some alternative definition (see Remark \ref{rem:alt_def}) may not necessarily preserve the polymatrix structure of even the reward functions {in a consistent way (see \Cref{ssec:counterexample_alternative} for a concrete example). We thus focus on \Cref{sec:def_MZNG}, which at least covers the polymatrix structure of the reduced case regarding only reward functions.}}

Indeed, such a networked structure in Markov games may be fragile. 
{We now propose both sufficient and \emph{necessary} conditions for the reward function's structure and the transition dynamics of the MG, to be an NMG.}
 Here we focus on the infinite-horizon discounted setting for a simpler exposition. For finite-horizon cases, a similar statement holds, which is deferred to \Cref{appendix:MPGdef}. 
We also defer the full statement and proof of the following result to \Cref{appendix:MPGdef}. {We first introduce the definition of decomposability and the set $\cN_C$, which will be used in establishing the conditions.  For a graph $\cG = (\cN, \cE)$, we define $\cN_C:= \{i \,\given \, (i,j) \in \cE \text{ for all } j \in {\cN}\}$, which may be an empty set if no such node $i$ exists.}} 
\begin{definition}[Decomposability] 
\label{def:decomposability}
    A non-negative function $f:X^{|\cD|} \to \RR^+\cup \{0\}$ is decomposable with respect to a set $\cD \neq \emptyset$ if there exists a set of non-negative functions $(f_i)_{i \in \cD}$ with $f_i:X  \to \RR^+\cup \{0\}$, such that $f(x) = \sum_{i \in \cD} f_i(x_i)$ holds for any $x\in X^{|\cD|}$. A non-negative function $f:X^{|\cD|} \to \RR^+\cup \{0\}$ is decomposable with respect to a set $\cD = \emptyset$, if there exists a non-negative constant $f_o$ such that $f(x) = f_o$ holds for any $x\in X^{|\cD|}$. 
\end{definition}

\begin{restatable}{proposition}{MZNMGcond}
\label{prop:MPGcond}
For a given graph  $\cG=(\cN, \cE_Q)$, an MG $(\cN, \cS, \cA, \PP, (r_{i})_{i \in \cN}, \gamma)$ with more than two players is an NMG with respect to $\cG$ if and only if:
(1) \textbf{$r_i(s, a_i, \cdot)$ is decomposable with respect to} $\cE_{Q, i}$ for each $i \in \cN, s \in \cS, a_i \in \cA_i $, {i.e.,} $r_i(s, \ba) = \sum_{j \in \cE_{Q,i}} r_{i,j}(s, a_i, a_j) $ for a set of functions $\{r_{i,j}(s, a_i, \cdot)\}_{j \in \cE_{Q, i}}$, and {(2) \textbf{the transition dynamics $\PP(s'\given s, \cdot)$ is decomposable with respect to} the set $\cN_C$ of this $\cG$, {i.e.,} $\PP(s'\given s, \ba) = \sum_{i \in \cN_C} \FF_i(s'\given s, a_i)$ for a set of functions $\{\FF_i(s'\given s,\cdot)\}_{i\in\cN_C}$ if $\cN_C \neq \emptyset$, or $\PP(s'\given s, \ba) = \FF_o(s'\given s)$ for some constant function (of $\ba$) $\FF_o(s'\given s)$ {if $\cN_C = \emptyset$}.}     
{Moreover, an MG qualifies as a zero-sum NMG if and only if it satisfies an additional condition: the NG, characterized by $(\cG, \cA, (r_{i,j}(s))_{(i,j) \in \cE_Q})$, is a zero-sum NG {for all $s\in\cS$}.}
{In the case of two players, every (zero-sum) Markov game becomes a (zero-sum) NMG.}  
\end{restatable}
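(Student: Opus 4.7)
The plan is to prove both directions of the ``iff'' statement, isolate a combinatorial lemma about functions that are simultaneously polymatrix with respect to every player, and then read off the zero-sum and two-player supplements at the end. For \textbf{sufficiency}, assume~(1) and~(2) and fix any $V:\cS\to\RR$. Then $Q_i^V(s,\ba) = r_i(s,\ba) + \gamma\sum_{s'}\PP(s'\given s,\ba)V(s')$ already splits: the reward piece equals $\sum_{j\in\cE_{Q,i}} r_{i,j}(s,a_i,a_j)$ by~(1); the transition piece is either a constant in $\ba$ (when $\cN_C=\emptyset$) or equals $\sum_{k\in\cN_C} g_k^V(s,a_k)$ with $g_k^V(s,a_k):=\gamma\sum_{s'}\FF_k(s'\given s,a_k)V(s')$ by~(2). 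Because every $k\in\cN_C$ is a neighbor of every other player, each $g_k^V(s,a_k)$ with $k\neq i$ is folded into $Q_{i,k}^V(s,a_i,a_k)$, and the self term $g_i^V(s,a_i)$ together with any residual $\ba$-constant is absorbed into a single edge incident to $i$, giving the required polymatrix decomposition of $Q_i^V$.

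For \textbf{necessity}, substituting $V\equiv 0$ yields $r_i(s,\ba) = Q_i^0(s,\ba) = \sum_{j\in\cE_{Q,i}} Q_{i,j}^0(s,a_i,a_j)$, which is~(1) with $r_{i,j}:=Q_{i,j}^0$. To extract~(2), for each target $s^*\in\cS$ I plug in the indicator $V(s''):=\pmb{1}(s''=s^*)$, so $Q_i^V-Q_i^0=\gamma\PP(s^*\given s,\ba)$. Since both $Q_i^V$ and $Q_i^0$ are polymatrix in player $i$'s neighborhood, their difference $\gamma\PP(s^*\given s,\cdot):\cA\to\RR$ is polymatrix with respect to \emph{every} player $i\in\cN$ under the common graph $\cG$. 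This reduces~(2) to the following combinatorial key lemma.

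The \textbf{key lemma} is: if $n\geq 3$ and $f:\cA\to\RR$ is polymatrix with respect to every player under $\cG$, then $f(\ba) = \sum_{k\in\cN_C} g_k(a_k) + c$ for some functions $g_k$ and a constant $c$. I would prove it in two short steps. First, player $i$'s decomposition writes $f$ as a sum of terms in $(a_i,a_j)$ for $j\in\cE_{Q,i}$, so $f$ does not depend on $a_k$ for $k\notin\cE_{Q,i}\cup\{i\}$; intersecting over all $i$, $f$ depends only on $(a_k)_{k\in\cN_C}$. Second, for any distinct $j,k\in\cN_C$, pick an $i\notin\{j,k\}$ (possible since $n\geq 3$); every summand in player $i$'s decomposition involves $a_i$, so none depends on both $a_j$ and $a_k$, yielding the vanishing discrete second difference $\Delta_j\Delta_k f=0$, which combined with the first step gives additive separability over $\cN_C$. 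Applying the lemma to $\gamma\PP(s^*\given s,\cdot)$ at each $(s,s^*)$ produces $\PP(s^*\given s,\ba)=\sum_{k\in\cN_C} g_k^{s,s^*}(a_k)+c^{s,s^*}$; non-negativity of each $\FF_k$ is recovered by shifting $g_k^{s,s^*}$ by its minimum over $a_k$ and absorbing the (automatically non-negative) residual constant into any single summand, while the normalization $\sum_{s'}\FF_k(s'\given s,a_k)=w_k(s)$ with $\sum_k w_k(s)=1$ falls out of $\sum_{s'}\PP(s'\given s,\ba)=1$ by matching coefficients in the independent $a_k$'s.

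The \textbf{zero-sum} addendum is immediate from~(1): $\sum_i r_i(s,\ba) = 0$ for all $s,\ba$ is exactly the statement that $(\cG,\cA,(r_{i,j}(s))_{(i,j)\in\cE_Q})$ is a zero-sum NG at every state $s$. The \textbf{two-player} clause is trivial because the unique edge $(1,2)$ forces $\cE_{Q,i}=\{3-i\}$, the polymatrix decomposition $Q_i^V(s,a_1,a_2)=Q_{i,3-i}^V(s,a_1,a_2)$ holds tautologically independent of $\PP$, and the zero-sum condition reduces to $r_1+r_2=0$. The \textbf{main obstacle} I expect is the transition-dynamics necessity: combining the cross-difference argument with the non-negativity and probability-normalization bookkeeping, and handling the $\cN_C=\emptyset$ versus $\cN_C\neq\emptyset$ case split cleanly; the finite-horizon version then uses the same scaffolding applied at each timestep $h$.
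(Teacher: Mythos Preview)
Your proposal is correct. Both your argument and the paper's share the same scaffolding for sufficiency and for extracting the transition kernel via indicator value functions, but they diverge at the key combinatorial step. After establishing that $\PP(s'\mid s,\cdot)$ is polymatrix with respect to every player, the paper proceeds by an explicit case analysis: it first observes that $\PP$ cannot depend on $a_j$ for $j\notin\cN_C$, then separately handles the case where some player lies outside $\cN_C$ versus the case $\cN_C=\cN$, in each instance fixing actions of auxiliary players $k_1,k_2,k_3$ and manipulating the resulting identities by hand (Equations (B.2)--(B.5) in the appendix). Your route packages all of this into a single clean lemma --- a function that is polymatrix with respect to every player has vanishing mixed second differences $\Delta_j\Delta_k f=0$ for all distinct $j,k\in\cN_C$, hence is additively separable over $\cN_C$ --- which treats both cases uniformly and is more conceptual. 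Your non-negativity argument for the $\FF_k$ (one-shot shifting by the per-coordinate minima, with the automatically non-negative residual absorbed into a single summand) is also simpler than the paper's iterative redistribution Procedure~1, though both achieve the same end. One small omission: you address non-negativity for the $\FF_k$'s but not for the $r_{i,j}:=Q_{i,j}^{0}$; the same shifting trick applies there (for each fixed $(s,a_i)$, shift each $Q_{i,j}^{0}(s,a_i,\cdot)$ by its minimum over $a_j$ and absorb the residual, which is non-negative since $r_i\geq 0$), and the paper is similarly terse on this point.
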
 

\neurips{
\vspace{-7pt}
}

\arxiv{
 \begin{proof}[Proof Sketch of \Cref{prop:MPGcond} when $\cN_C \neq \emptyset$]
First, we prove that if an MG satisfies the decomposability of the reward function $r_i(s, a_i. \cdot)$ and the transition {dynamics} $\PP(s'\mid s, \cdot)$, then $Q_{i}^V$ satisfies the following:
\begin{align*}
    &Q_i^V(s, \ba) = \sum_{j\in\cE_{Q, i}}\biggl( r_{i,j}(s, a_i,a_j) + \gamma \sum_{s' \in \cS} \bigl(\lambda_{i,j}(s)  \pmb{1}(i \in \cN_C)\FF_i(s' \mid s, a_i)
    \\
    &\qquad \qquad \qquad \qquad \qquad  + \pmb{1}(j \in \cN_C)\FF_j(s' \mid s, a_j)\bigr) V(s')\biggr) =: \sum_{j \in \cE_{Q, i}} Q_{i,j}^V(s, a_i, a_j),
\end{align*}
where $\lambda_{i,j}(s) = 1/|\cE_{Q,i}|$. We have used the fact that $\cN_C\subseteq \cE_{Q,i}$ for every $i\in\cN$.

Next, we prove the necessary conditions for an MG to be an NMG. By definition we have
\begin{align}
    \sum_{j \in \cE_{Q, i}} (Q_{i,j}^V(s, a_i, a_j) - Q_{i,j}^{V'}(s, a_i, a_j)) = \gamma \langle \PP(\cdot \mid s, \ba), V(\cdot) - V'(\cdot) \rangle \label{eqn:subtract-2}
\end{align}
for any $V, V'$ and any $(s,\ba)$. 

For every $s \in \cS$, define $B_s:\cS \to \RR$ such that $B_s(s')= \pmb{1}(s= s')$. We define $\mathbb{G}_{i,j}(s'\mid s,a_i,a_j) := 1/\gamma \sum_{j \in \cE_{Q, i}}(Q_{i,j}^{B_{s'}}(s, a_i, a_j) - Q_{i,j}^{\pmb{0}}(s, a_i, a_j)) $, then by plugging in $V= B_{s'}$ and $V'= \pmb{0}$, we can derive  $\PP(\cdot \mid s, \ba) = \sum_{j \in \cE_{Q, i}} \mathbb{G}_{i,j}(\cdot \mid s, a_i, a_j)$ for every $i$ from \Cref{eqn:subtract-2}. {Note that the decomposability of $\PP(s'\given s, a_i, \cdot)$ with respect to $\cE_{Q,i}$ above has to hold for all $i\in\cN$.} We  then prove that this indicates $\PP(s'\given s, \cdot)$ is decomposable with respect to $\cN_C$ (see  \Cref{appendix:MPGdef}). Moreover, we have  $$r_{i}(s,\ba)  = \sum_{j \in \cE_{Q, i}} \left( Q_{i,j}^V(s, a_i, a_j) - \gamma   \langle \pmb{1}(j \in \cN_C)\FF_j(\cdot \mid s, a_j), V(\cdot) \rangle\right),$$ which will show that $r_i(s, a_i, \cdot)$ is also decomposable with respect to $\cE_{Q, i}$. 
\end{proof}
}

\begin{remark}[Stronger sufficient condition]
\label{rmk:stronger}
{We note that for an MG $(\cN, \cS, \cA, \PP, (r_{i})_{i \in \cN}, \gamma)$, if for every agent $i$, $r_i(s,a_i, \cdot)$ is decomposable with respect to some $\cE_{r} \subseteq \cE_Q$,  and $\PP(s'\given s, \cdot)$ is decomposable with respect to some $\cN_P\subseteq\cN_C$, then  one can still prove the \emph{if} part, i.e., there exists some $\cG = (\cN, \cE_Q)$ such that the game is an NMG with respect to this $\cG$. See Figure \ref{fig:prop1} for the illustration. This is because by our definition,  being decomposable with respect to a subset implies being decomposable with respect to a larger set, as one can choose the functions $f_i$ for the $i$ in the complement of the subset to be simply zero. We chose to state as in \Cref{prop:MPGcond} just for the purpose of presenting both the \emph{if} and \emph{only if} conditions in a concise and unified way.}
\end{remark}

\begin{remark}[An alternative NMG definition]\label{rem:alt_def}
{Another reasonable definition of NMG may be as follows: if for any \emph{policy} $\pi$, there exist a {set of} functions $(Q_{i,j}^\pi)_{(i, j) \in \cE_Q}$ {and an undirected connected graph $\cG = (\cN, \cE_Q)$} such that $Q_{i}^\pi(s, \ba) = \sum_{j\in \cE_{Q, i}} Q_{i,j}^\pi(s, a_i, a_j)$ holds for every $i \in \cN$, $s \in \cS$, $\ba \in \cA$. Note that such a definition can be useful in developing \emph{policy}-based algorithms (while \Cref{sec:def_MZNG} is more amenable to developing \emph{value}-based algorithms), e.g., policy iteration, policy gradient, actor-critic methods, where the $Q$-value under certain \emph{policy} $\pi$ will appear in the updates and may need to preserve certain decomposability structure, for any policy $\pi$ encountered in the algorithm updates. However, in this case, we cannot always guarantee the decomposability of $\PP(s'\given  s, \cdot)$ or $r_i(s, a_i, \cdot)$. For example, if we assume that $r_i(s, \ba) = 0$ for every $i \in \cN$, $s \in \cS$, $\ba \in \cA$, then $Q^\pi_i(s, \ba) = \sum_{j \in \cE_{Q,i}} 0$ and thus $Q_{i}^\pi(s, a_i, \cdot)$ is always decomposable regardless of  $\PP(s'\given  s, \ba)$. 
However, {interestingly, we can show that} the decomposability of the transition dynamics and the reward function {as in \Cref{prop:MPGcond}} can {still} be guaranteed, {{as long as some  {\it degenerate} cases as above do not occur. In particular, if there exist no $i \in \cN$ and $s \in \cS$ such that $Q_i^\pi(s, \ba)$ is} a constant function of $\ba$ for any $\pi$, then the results in \Cref{prop:MPGcond} and hence after still hold. 
We defer a detailed   discussion on this alternative definition to  \Cref{appendix:MPGdef}.}} 
\end{remark}

\neurips{
\vspace{-7pt}
}

\begin{remark}[Implication of decomposable transition dynamics]\label{rem:implication}
For an MG to be an NMG, by \Cref{prop:MPGcond} the transition dynamics should be decomposable, i.e., $\PP(\cdot \given  s, \ba) = \sum_{j \in \cN_C} \FF_j ( \cdot \given  s, a_j)$ or $\PP(\cdot\given s, \ba) = \FF_o(s'\given s)$. {We first focus on the discussion of the former case.} Define $w_j(s,a_j):= \sum_{s' \in \cS} \FF_j(s'\given s, a_j)$.
If we fix the value of $s$ and $a_{-j}$, then  $w_j(s, a_j)$ has to be the same for different values of $a_j$ {due to the fact $\sum_{j \in \cN\kzedit{_C}} w_j(s, a_j) = 1$}. {Also note that  by definition, $w_j(s, a_j)$ does not depend on the choice of this fixed $a_{-j}$.} 
Therefore, {such a $w_j(s, a_j)$ can be  written as $w_j(s)$,} 
{where} $w_j(s) = \sum_{s' \in \cS} \FF_j(s'\given s, a_j) $ for all $a_j \in \cA_j$. {We can thus rewrite $(\FF_j)_{j\in\cN_C}$ using some actual probability distributions  $(\PP_j)_{j\in\cN_C}$, such that  if $w_j(s) \neq 0$, then we rewrite $\FF_j$ as $\FF_j(s' \given  s, a_j) = w_j (s)\frac{\FF_j (s' \given  s, a_j)}{w_j(s)}= w_j (s)\PP_j(s'\given s, a_j)$, and  if $w_j(s) = 0$,  we rewrite $\FF_j$ as $\FF_j(s' \given  s, a_j) = w_j(s) \PP_j(s'\given s, a_j)$ for an arbitrary {probability distribution}  $\PP_j(\cdot\given s, a_j)$. Notice that  $\sum_{s' \in \cS} \PP_j (s'\given s, a_j) = 1$ for any $j\in\cN_C$.} Then, the decomposable transition dynamics can be represented as $\PP(\cdot \mid s, \ba) = \sum_{j \in \cN_C} w_j(s) \PP_j (\cdot \given  s, a_j)$, i.e., an \emph{ensemble} of the transition dynamics that is only controlled by single controllers. The model's transition dynamics thus act according to the following two steps: (1) sampling the controller according to the  distribution {Multinomial$\big((w_i(s))_{i\in\cN_C}\big)$}, and (2) transitioning the state following the sampled controller's dynamics. 
{Such a model has also been investigated under the name of transition dynamics with \emph{additive structures} in \cite{flesch2007stochastic}.}
Note that our model is more general and thus  covers the single-controller MG setting \cite{filar2012competitive}, where there is only one agent controlling the transition dynamics at {\it all} states. {It also covers the setting of turn-based MGs \cite{filar2012competitive}, where in each round, depending on the current state $s$, the transition dynamics is  by turns affected by only one of the agents. This can be captured by the proper choice of $(w_i(s))_{i\in\cN_C})$ that takes value $1$ only for one agent at each state $s$ (while takes value $0$ for all other non-controller agents at  each state $s$). {Additionally, the second case where $\PP(\cdot\given s, \ba) = \FF_o(s'\given s)$ corresponds to the one with no ensemble of controller agents.}}
\end{remark}

\begin{proposition}[Decomposition of $(Q_i^V)_{i \in \cN}$]
\label{def:canonical}
\neurips{
For {an infinite-horizon $\gamma$-discounted} NMG with $\cG = (\cN, \cE_Q)$ such that $\cN_C \neq \emptyset$, {if we know that} $\PP(s'\mid s, \ba) = \sum_{i \in \cN_C} \FF_i(s'\mid s, a_i)$, and $r_{i}(s, \ba) = \sum_{j \in \cE_{Q,i}} r_{i,j}(s, a_i, a_j)$ for some $\{\FF_i\}_{i \in \cN_C}$ and $\{r_{i,j}\}_{(i,j) \in \cE_Q}$, {then} {{the $Q_{i,j}^V$  given in \Cref{sec:def_MZNG} can be represented as}

\neurips{
\vspace{-0.18in}
}

{
\fontsize{9}{9}\selectfont
\begin{align*}
Q_{i,j}^V(s, a_i, a_j) = r_{i,j}(s,a_i, a_j) + \sum_{s' \in \cS} \gamma \left(\pmb{1}(j \in \cN_C) \FF_j(s'\mid s, a_j) + \pmb{1}(i \in \cN_C)\lambda_{i,j}(s)\FF_i(s'\mid s,a_i) \right) V(s')
\end{align*}}}

\neurips{
\vspace{-0.18in}
}

for any {non-negative} {$(\lambda_{i,j}(s))_{(i,j) \in \cE_Q}$}  such that $\sum_{j \in \cE_{Q,i}} \lambda_{i,j}(s) = 1$, for all {$i\in\cN$ and $s\in \cS$}. We call it the \textit{canonical} decomposition of $\{Q_i^V\}_{i \in \cN}$ {when $Q_{i,j}^V$ can be represented as above with} $\lambda_{i,j}(s) = {1}/{|\cE_{Q,i}|}$ for $j \in \cE_{Q, i}$. {The case with $\cN_C=\emptyset$ is deferred to \Cref{appendix:MPGdef}}.} 
\arxiv{For {an infinite-horizon $\gamma$-discounted} NMG with $\cG = (\cN, \cE_Q)$ and $\cN_C \neq \emptyset$, if we know that $\PP(s'\given s, \ba) = \sum_{i \in \cN_C} \FF_i(s'\given s, a_i)$, and $r_{i}(s, \ba) = \sum_{j \in \cE_{Q,i}} r_{i,j}(s, a_i, a_j)$ for some $\{\FF_i\}_{i \in \cN_C}$ and $\{r_{i,j}\}_{(i,j) \in \cE_Q}$,  then {the $Q_{i,j}^V$  given in \Cref{sec:def_MZNG} can be represented as}
\numbering{\begin{align*}
Q_{i,j}^V(s, a_i, a_j) = r_{i,j}(s,a_i, a_j) + \sum_{s' \in \cS} \gamma \left(\pmb{1}(j \in \cN_C) \FF_j(s'\mid s, a_j) + \pmb{1}(i \in \cN_C)\lambda_{i,j}(s)\FF_i(s'\mid s,a_i) \right) V(s')
\end{align*}}
\tnumbering{\begin{align*}
Q_{i,j}^V(s, a_i, a_j) &= r_{i,j}(s,a_i, a_j) 
\\
&\qquad+ \sum_{s' \in \cS} \gamma \left(\pmb{1}(j \in \cN_C) \FF_j(s'\mid s, a_j) + \pmb{1}(i \in \cN_C)\lambda_{i,j}(s)\FF_i(s'\mid s,a_i) \right) V(s')
\end{align*}}
for any {non-negative} {$(\lambda_{i,j}(s))_{(i,j) \in \cE_Q}$}  such that $\sum_{j \in \cE_{Q,i}} \lambda_{i,j}(s) = 1$ for all {$i\in\cN$ and $s\in \cS$}. 
For {an infinite-horizon $\gamma$-discounted} NMG with $\cG = (\cN, \cE_Q)$ with  $\cN_C = \emptyset$, if we know that $\PP(s'\mid s, \ba) = \FF_o(s'\mid s)$, and $r_{i}(s, \ba) = \sum_{j \in \cE_{Q,i}} r_{i,j}(s, a_i, a_j)$ for some $\FF_o$ and $\{r_{i,j}\}_{(i,j) \in \cE_Q}$, then
{the $Q_{i,j}^V$  given in \Cref{sec:def_MZNG} can be represented as}
\numbering{\begin{align*}
Q_{i,j}^V(s, a_i, a_j) = r_{i,j}(s,a_i, a_j) + \sum_{s' \in \cS} \gamma \left(\lambda_{i,j}(s)\FF_o(s'\mid s) \right) V(s')
\end{align*}}
\tnumbering{\begin{align*}
Q_{i,j}^V(s, a_i, a_j) = &r_{i,j}(s,a_i, a_j) + \sum_{s' \in \cS} \gamma \left(\lambda_{i,j}(s)\FF_o(s'\mid s) \right) V(s')
\end{align*}}
\noindent for any {non-negative} {$(\lambda_{i,j}(s))_{(i,j) \in \cE_Q}$}  such that $\sum_{j \in \cE_{Q,i}} \lambda_{i,j}(s) = 1$ for all $i\in\cN$ and $s\in\cS$. We call it the \textit{canonical} decomposition of $\{Q_i^V\}_{i \in \cN}$  {when $Q_{i,j}^V$ can be represented as above with} $\lambda_{i,j}(s) = {1}/{|\cE_{Q,i}|}$ for $j \in \cE_{Q, i}$.}

\end{proposition}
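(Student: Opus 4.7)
The plan is to prove \Cref{def:canonical} by direct verification. Starting from the defining identity $Q_i^V(s,\ba) = r_i(s,\ba) + \gamma \sum_{s' \in \cS} \PP(s'\given s,\ba) V(s')$ and substituting the assumed decompositions $r_i(s,\ba) = \sum_{j \in \cE_{Q,i}} r_{i,j}(s,a_i,a_j)$ and (in the first case) $\PP(s'\given s,\ba) = \sum_{k \in \cN_C} \FF_k(s'\given s,a_k)$, my goal is to exhibit per-edge terms $Q_{i,j}^V$ whose sum over $j \in \cE_{Q,i}$ reproduces $Q_i^V$. The candidate in the statement splits the transition piece into a part attributed to a neighbor $j$ (active when $j \in \cN_C$) and a part attributed to $i$ itself (active when $i \in \cN_C$), with the $i$-contribution distributed across $i$'s neighbors using the weights $\lambda_{i,j}(s)$.

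Summing the candidate over $j \in \cE_{Q,i}$, the reward piece immediately yields $r_i(s,\ba)$. For the transition piece, collecting the two indicator terms produces
\begin{align*}
\sum_{j \in \cE_{Q,i}} \pmb{1}(j \in \cN_C)\FF_j(s'\given s,a_j) + \pmb{1}(i \in \cN_C)\FF_i(s'\given s,a_i)\sum_{j \in \cE_{Q,i}}\lambda_{i,j}(s),
\end{align*}
and the inner sum equals one by assumption. The key structural fact I then need is that for every $i \in \cN$, $\cN_C \subseteq \cE_{Q,i} \cup \{i\}$, which is immediate from the definition of $\cN_C$ as the set of nodes adjacent to every other node. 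Consequently, whether or not $i$ itself lies in $\cN_C$, the above expression telescopes to $\sum_{k \in \cN_C} \FF_k(s'\given s,a_k) = \PP(s'\given s,\ba)$: if $i \notin \cN_C$ the first sum already captures all of $\cN_C$, while if $i \in \cN_C$ the first sum captures $\cN_C \setminus \{i\}$ and the second term adds the missing $\FF_i$ contribution.

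For the $\cN_C = \emptyset$ case, where $\PP(s'\given s,\ba) = \FF_o(s'\given s)$ is independent of $\ba$, the verification is even simpler: the candidate assigns a share $\lambda_{i,j}(s)\FF_o(s'\given s)$ of the transition to each edge, and summing over $j \in \cE_{Q,i}$ together with $\sum_{j \in \cE_{Q,i}} \lambda_{i,j}(s) = 1$ again recovers the full transition. Specializing to $\lambda_{i,j}(s) = 1/|\cE_{Q,i}|$ in either case yields the \emph{canonical} decomposition. The only mild subtlety in the whole argument is the bookkeeping of the indicator functions tied to $\cN_C$ membership, which is resolved cleanly by the observation $\cN_C \subseteq \cE_{Q,i} \cup \{i\}$; beyond this, no serious obstacle arises, since the proposition is essentially a consistency check that the proposed per-edge functions sum correctly.
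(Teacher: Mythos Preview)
Your proposal is correct and follows essentially the same direct-verification approach as the paper: substitute the decompositions of $r_i$ and $\PP$ into $Q_i^V$, use $\sum_{j\in\cE_{Q,i}}\lambda_{i,j}(s)=1$, and invoke the structural fact $\cN_C\subseteq \cE_{Q,i}\cup\{i\}$ to recover $\sum_{k\in\cN_C}\FF_k$. In fact your statement of that inclusion is slightly more careful than the paper's (which writes $\cN_C\subseteq \cE_{Q,i}$), since you explicitly handle the case $i\in\cN_C$ via the case split.
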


{We introduce this canonical decomposition since the representation of $Q_{i,j}^V$ is in general not unique, and we may use this canonical form to simplify the algorithm design later.} 
\arxiv{\numbering{
\tikzset{every picture/.style={line width=0.9pt}} %
\begin{figure}

    \centering

\begin{tikzpicture}[x=0.66pt,y=0.66pt,yscale=-1.2,xscale=1.2]

\draw   (405,80) -- (446,80) -- (446,68) -- (475,88) -- (446,108) -- (446,96) -- (405,96) -- cycle ;

\draw   (530.39,82) .. controls (530.39,71.71) and (538.74,63.36) .. (549.04,63.36) .. controls (559.33,63.36) and (567.68,71.71) .. (567.68,82) .. controls (567.68,92.29) and (559.33,100.64) .. (549.04,100.64) .. controls (538.74,100.64) and (530.39,92.29) .. (530.39,82) -- cycle ;
\draw   (478.2,138.65) .. controls (478.2,128.36) and (486.55,120.02) .. (496.84,120.02) .. controls (507.14,120.02) and (515.48,128.36) .. (515.48,138.65) .. controls (515.48,148.95) and (507.14,157.29) .. (496.84,157.29) .. controls (486.55,157.29) and (478.2,148.95) .. (478.2,138.65) -- cycle ;
\draw   (558.73,145.36) .. controls (558.73,135.07) and (567.07,126.73) .. (577.37,126.73) .. controls (587.66,126.73) and (596.01,135.07) .. (596.01,145.36) .. controls (596.01,155.66) and (587.66,164) .. (577.37,164) .. controls (567.07,164) and (558.73,155.66) .. (558.73,145.36) -- cycle ;
\draw   (592.88,72.64) .. controls (592.88,62.34) and (601.22,54) .. (611.52,54) .. controls (621.81,54) and (630.16,62.34) .. (630.16,72.64) .. controls (630.16,82.93) and (621.81,91.27) .. (611.52,91.27) .. controls (601.22,91.27) and (592.88,82.93) .. (592.88,72.64) -- cycle ;
\draw   (500,26.55) .. controls (500,16.25) and (508.35,7.91) .. (518.64,7.91) .. controls (528.94,7.91) and (537.28,16.25) .. (537.28,26.55) .. controls (537.28,36.84) and (528.94,45.18) .. (518.64,45.18) .. controls (508.35,45.18) and (500,36.84) .. (500,26.55) -- cycle ;
\draw    (528,42) -- (539,65) ;
\draw    (495.84,120.02) -- (511,43) ;
\draw    (603,88) -- (583,127) ;
\draw    (593,78) -- (567.68,82) ;
\draw   (80.13,81.89) .. controls (80.13,71.6) and (88.48,63.25) .. (98.77,63.25) .. controls (109.07,63.25) and (117.41,71.6) .. (117.41,81.89) .. controls (117.41,92.18) and (109.07,100.53) .. (98.77,100.53) .. controls (88.48,100.53) and (80.13,92.18) .. (80.13,81.89) -- cycle ;
\draw   (27.94,138.55) .. controls (27.94,128.25) and (36.28,119.91) .. (46.58,119.91) .. controls (56.87,119.91) and (65.22,128.25) .. (65.22,138.55) .. controls (65.22,148.84) and (56.87,157.18) .. (46.58,157.18) .. controls (36.28,157.18) and (27.94,148.84) .. (27.94,138.55) -- cycle ;
\draw   (108.46,145.25) .. controls (108.46,134.96) and (116.81,126.62) .. (127.11,126.62) .. controls (137.4,126.62) and (145.75,134.96) .. (145.75,145.25) .. controls (145.75,155.55) and (137.4,163.89) .. (127.11,163.89) .. controls (116.81,163.89) and (108.46,155.55) .. (108.46,145.25) -- cycle ;
\draw   (142.61,72.53) .. controls (142.61,62.23) and (150.96,53.89) .. (161.25,53.89) .. controls (171.55,53.89) and (179.89,62.23) .. (179.89,72.53) .. controls (179.89,82.82) and (171.55,91.16) .. (161.25,91.16) .. controls (150.96,91.16) and (142.61,82.82) .. (142.61,72.53) -- cycle ;
\draw   (49.74,26.44) .. controls (49.74,16.14) and (58.08,7.8) .. (68.38,7.8) .. controls (78.67,7.8) and (87.02,16.14) .. (87.02,26.44) .. controls (87.02,36.73) and (78.67,45.07) .. (68.38,45.07) .. controls (58.08,45.07) and (49.74,36.73) .. (49.74,26.44) -- cycle ;
\draw    (75.74,43.89) -- (87.74,66.89) ;
\draw    (46.58,119.91) -- (61.74,42.89) ;
\draw    (152.74,87.89) -- (132.74,126.89) ;
\draw    (142.61,72.53) -- (117.41,81.89) ;
\draw  [color={rgb, 255:red, 0; green, 0; blue, 0 }  ,draw opacity=1 ] (287.13,81.45) .. controls (287.13,71.16) and (295.48,62.82) .. (305.77,62.82) .. controls (316.07,62.82) and (324.41,71.16) .. (324.41,81.45) .. controls (324.41,91.75) and (316.07,100.09) .. (305.77,100.09) .. controls (295.48,100.09) and (287.13,91.75) .. (287.13,81.45) -- cycle ;
\draw  [draw opacity=0][fill={rgb, 255:red, 140; green, 200; blue, 255 }  ,fill opacity=1 ] (234.94,138.11) .. controls (234.94,127.82) and (243.28,119.47) .. (253.58,119.47) .. controls (263.87,119.47) and (272.22,127.82) .. (272.22,138.11) .. controls (272.22,148.4) and (263.87,156.75) .. (253.58,156.75) .. controls (243.28,156.75) and (234.94,148.4) .. (234.94,138.11) -- cycle ;
\draw   (315.46,144.82) .. controls (315.46,134.53) and (323.81,126.18) .. (334.11,126.18) .. controls (344.4,126.18) and (352.75,134.53) .. (352.75,144.82) .. controls (352.75,155.11) and (344.4,163.45) .. (334.11,163.45) .. controls (323.81,163.45) and (315.46,155.11) .. (315.46,144.82) -- cycle ;
\draw   (349.61,72.09) .. controls (349.61,61.8) and (357.96,53.45) .. (368.25,53.45) .. controls (378.55,53.45) and (386.89,61.8) .. (386.89,72.09) .. controls (386.89,82.38) and (378.55,90.73) .. (368.25,90.73) .. controls (357.96,90.73) and (349.61,82.38) .. (349.61,72.09) -- cycle ;
\draw   (256.74,26) .. controls (256.74,15.71) and (265.08,7.36) .. (275.38,7.36) .. controls (285.67,7.36) and (294.02,15.71) .. (294.02,26) .. controls (294.02,36.29) and (285.67,44.64) .. (275.38,44.64) .. controls (265.08,44.64) and (256.74,36.29) .. (256.74,26) -- cycle ;
\draw [color={rgb, 255:red, 0; green, 0; blue, 255 }  ,draw opacity=1 ]   (253.58,119.47) -- (268.74,42.45) ;
\draw   (405,80) -- (446,80) -- (446,68) -- (475,88) -- (446,108) -- (446,96) -- (405,96) -- cycle ;
\draw [color={rgb, 255:red, 0; green, 0; blue, 255 }  ,draw opacity=1 ]   (269,128) -- (294,95) ;
\draw [color={rgb, 255:red, 0; green, 0; blue, 255 }  ,draw opacity=1 ]   (271,133) -- (355,84) ;
\draw [color={rgb, 255:red, 0; green, 0; blue, 255 }  ,draw opacity=1 ]   (272.22,138.11) -- (315.46,144.82) ;
\draw [color={rgb, 255:red, 0; green, 0; blue, 255 }  ,draw opacity=1 ]   (515.48,138.65) -- (558.73,145.36) ;
\draw [color={rgb, 255:red, 0; green, 0; blue, 255 }  ,draw opacity=1 ]   (512,127) -- (536,95) ;
\draw [color={rgb, 255:red, 0; green, 0; blue, 255 }  ,draw opacity=1 ]   (502,121) -- (516.64,45.18) ;
\draw [color={rgb, 255:red, 0; green, 0; blue, 255 }  ,draw opacity=1 ]   (514,133) -- (597,83) ;
\draw    (85,96) -- (58,124) ;
\draw    (107,99) -- (119,128) ;
\draw    (533,90) -- (509,123) ;
\draw    (555,100) -- (570,129) ;

\draw (542.78,76.15) node [anchor=north west][inner sep=0.75pt]  [font=\footnotesize]  {$1$};
\draw (512.39,20.95) node [anchor=north west][inner sep=0.75pt]  [font=\footnotesize]  {$2$};
\draw (605.76,67.04) node [anchor=north west][inner sep=0.75pt]  [font=\footnotesize]  {$3$};
\draw (490.59,133.25) node [anchor=north west][inner sep=0.75pt]  [font=\footnotesize]  {$5$};
\draw (571.07,139.51) node [anchor=north west][inner sep=0.75pt]  [font=\footnotesize]  {$4$};
\draw (92.52,76.04) node [anchor=north west][inner sep=0.75pt]  [font=\footnotesize]  {$1$};
\draw (62.12,20.84) node [anchor=north west][inner sep=0.75pt]  [font=\footnotesize]  {$2$};
\draw (155.5,66.93) node [anchor=north west][inner sep=0.75pt]  [font=\footnotesize]  {$3$};
\draw (40.32,133.15) node [anchor=north west][inner sep=0.75pt]  [font=\footnotesize]  {$5$};
\draw (120.81,139.4) node [anchor=north west][inner sep=0.75pt]  [font=\footnotesize]  {$4$};
\draw (300.52,75.6) node [anchor=north west][inner sep=0.75pt]  [font=\footnotesize]  {$1$};
\draw (269.12,20.4) node [anchor=north west][inner sep=0.75pt]  [font=\footnotesize]  {$2$};
\draw (362.5,66.49) node [anchor=north west][inner sep=0.75pt]  [font=\footnotesize]  {$3$};
\draw (247.32,132.71) node [anchor=north west][inner sep=0.75pt]  [font=\footnotesize]  {$5$};
\draw (327.81,138.96) node [anchor=north west][inner sep=0.75pt]  [font=\footnotesize]  {$4$};
\draw (88,172.4) node [anchor=north west][inner sep=0.75pt]    {$\cE_{r}$};
\draw (270,172.4) node [anchor=north west][inner sep=0.75pt]    {$\cE_{\cN_{P}, \text{connected}}$};
\draw (541,170.4) node [anchor=north west][inner sep=0.75pt]    {$\cE_Q$};
\draw [line width=3]  (190,88)  node [anchor=north west][inner sep=0.75pt]    {\huge $\cup$};
\end{tikzpicture}
\caption{Relationship between $\cE_{r}$, $\cE_{\cN_{P}, \text{connected}}$, and $\cE_Q$. Here, {we define $\cE_{\cN_P, \text{connected}}:= \{(i,j)\,\given \, i \in \cN_P \text{ or } j \in \cN_P, i \neq j\}$.}   $r_i(s, a_i, \cdot)$ is decomposable with respect to $\cE_{r,i}$ for all $i \in \cN$, and $\PP(s'\given s, \cdot)$ is decomposable with respect to $\cN_P$ (See \Cref{rmk:stronger}). The transition dynamics $\PP$ is expressed as the ensemble of controllers in the set  $\cN_P = \{5\}$ while the $\cN_C$ in this case is $\{1,5\}$.}  
\label{fig:prop1}
\vspace{-7pt}
\end{figure}
}}
\arxiv{\tnumbering{
\tikzset{every picture/.style={line width=0.9pt}} %
\begin{figure}

    \centering

\begin{tikzpicture}[x=0.66pt,y=0.66pt,yscale=-1,xscale=1]

\draw   (405,80) -- (446,80) -- (446,68) -- (475,88) -- (446,108) -- (446,96) -- (405,96) -- cycle ;

\draw   (530.39,82) .. controls (530.39,71.71) and (538.74,63.36) .. (549.04,63.36) .. controls (559.33,63.36) and (567.68,71.71) .. (567.68,82) .. controls (567.68,92.29) and (559.33,100.64) .. (549.04,100.64) .. controls (538.74,100.64) and (530.39,92.29) .. (530.39,82) -- cycle ;
\draw   (478.2,138.65) .. controls (478.2,128.36) and (486.55,120.02) .. (496.84,120.02) .. controls (507.14,120.02) and (515.48,128.36) .. (515.48,138.65) .. controls (515.48,148.95) and (507.14,157.29) .. (496.84,157.29) .. controls (486.55,157.29) and (478.2,148.95) .. (478.2,138.65) -- cycle ;
\draw   (558.73,145.36) .. controls (558.73,135.07) and (567.07,126.73) .. (577.37,126.73) .. controls (587.66,126.73) and (596.01,135.07) .. (596.01,145.36) .. controls (596.01,155.66) and (587.66,164) .. (577.37,164) .. controls (567.07,164) and (558.73,155.66) .. (558.73,145.36) -- cycle ;
\draw   (592.88,72.64) .. controls (592.88,62.34) and (601.22,54) .. (611.52,54) .. controls (621.81,54) and (630.16,62.34) .. (630.16,72.64) .. controls (630.16,82.93) and (621.81,91.27) .. (611.52,91.27) .. controls (601.22,91.27) and (592.88,82.93) .. (592.88,72.64) -- cycle ;
\draw   (500,26.55) .. controls (500,16.25) and (508.35,7.91) .. (518.64,7.91) .. controls (528.94,7.91) and (537.28,16.25) .. (537.28,26.55) .. controls (537.28,36.84) and (528.94,45.18) .. (518.64,45.18) .. controls (508.35,45.18) and (500,36.84) .. (500,26.55) -- cycle ;
\draw    (528,42) -- (539,65) ;
\draw    (495.84,120.02) -- (511,43) ;
\draw    (603,88) -- (583,127) ;
\draw    (593,78) -- (567.68,82) ;
\draw   (80.13,81.89) .. controls (80.13,71.6) and (88.48,63.25) .. (98.77,63.25) .. controls (109.07,63.25) and (117.41,71.6) .. (117.41,81.89) .. controls (117.41,92.18) and (109.07,100.53) .. (98.77,100.53) .. controls (88.48,100.53) and (80.13,92.18) .. (80.13,81.89) -- cycle ;
\draw   (27.94,138.55) .. controls (27.94,128.25) and (36.28,119.91) .. (46.58,119.91) .. controls (56.87,119.91) and (65.22,128.25) .. (65.22,138.55) .. controls (65.22,148.84) and (56.87,157.18) .. (46.58,157.18) .. controls (36.28,157.18) and (27.94,148.84) .. (27.94,138.55) -- cycle ;
\draw   (108.46,145.25) .. controls (108.46,134.96) and (116.81,126.62) .. (127.11,126.62) .. controls (137.4,126.62) and (145.75,134.96) .. (145.75,145.25) .. controls (145.75,155.55) and (137.4,163.89) .. (127.11,163.89) .. controls (116.81,163.89) and (108.46,155.55) .. (108.46,145.25) -- cycle ;
\draw   (142.61,72.53) .. controls (142.61,62.23) and (150.96,53.89) .. (161.25,53.89) .. controls (171.55,53.89) and (179.89,62.23) .. (179.89,72.53) .. controls (179.89,82.82) and (171.55,91.16) .. (161.25,91.16) .. controls (150.96,91.16) and (142.61,82.82) .. (142.61,72.53) -- cycle ;
\draw   (49.74,26.44) .. controls (49.74,16.14) and (58.08,7.8) .. (68.38,7.8) .. controls (78.67,7.8) and (87.02,16.14) .. (87.02,26.44) .. controls (87.02,36.73) and (78.67,45.07) .. (68.38,45.07) .. controls (58.08,45.07) and (49.74,36.73) .. (49.74,26.44) -- cycle ;
\draw    (75.74,43.89) -- (87.74,66.89) ;
\draw    (46.58,119.91) -- (61.74,42.89) ;
\draw    (152.74,87.89) -- (132.74,126.89) ;
\draw    (142.61,72.53) -- (117.41,81.89) ;
\draw  [color={rgb, 255:red, 0; green, 0; blue, 0 }  ,draw opacity=1 ] (287.13,81.45) .. controls (287.13,71.16) and (295.48,62.82) .. (305.77,62.82) .. controls (316.07,62.82) and (324.41,71.16) .. (324.41,81.45) .. controls (324.41,91.75) and (316.07,100.09) .. (305.77,100.09) .. controls (295.48,100.09) and (287.13,91.75) .. (287.13,81.45) -- cycle ;
\draw  [draw opacity=0][fill={rgb, 255:red, 140; green, 200; blue, 255 }  ,fill opacity=1 ] (234.94,138.11) .. controls (234.94,127.82) and (243.28,119.47) .. (253.58,119.47) .. controls (263.87,119.47) and (272.22,127.82) .. (272.22,138.11) .. controls (272.22,148.4) and (263.87,156.75) .. (253.58,156.75) .. controls (243.28,156.75) and (234.94,148.4) .. (234.94,138.11) -- cycle ;
\draw   (315.46,144.82) .. controls (315.46,134.53) and (323.81,126.18) .. (334.11,126.18) .. controls (344.4,126.18) and (352.75,134.53) .. (352.75,144.82) .. controls (352.75,155.11) and (344.4,163.45) .. (334.11,163.45) .. controls (323.81,163.45) and (315.46,155.11) .. (315.46,144.82) -- cycle ;
\draw   (349.61,72.09) .. controls (349.61,61.8) and (357.96,53.45) .. (368.25,53.45) .. controls (378.55,53.45) and (386.89,61.8) .. (386.89,72.09) .. controls (386.89,82.38) and (378.55,90.73) .. (368.25,90.73) .. controls (357.96,90.73) and (349.61,82.38) .. (349.61,72.09) -- cycle ;
\draw   (256.74,26) .. controls (256.74,15.71) and (265.08,7.36) .. (275.38,7.36) .. controls (285.67,7.36) and (294.02,15.71) .. (294.02,26) .. controls (294.02,36.29) and (285.67,44.64) .. (275.38,44.64) .. controls (265.08,44.64) and (256.74,36.29) .. (256.74,26) -- cycle ;
\draw [color={rgb, 255:red, 0; green, 0; blue, 255 }  ,draw opacity=1 ]   (253.58,119.47) -- (268.74,42.45) ;
\draw   (405,80) -- (446,80) -- (446,68) -- (475,88) -- (446,108) -- (446,96) -- (405,96) -- cycle ;
\draw [color={rgb, 255:red, 0; green, 0; blue, 255 }  ,draw opacity=1 ]   (269,128) -- (294,95) ;
\draw [color={rgb, 255:red, 0; green, 0; blue, 255 }  ,draw opacity=1 ]   (271,133) -- (355,84) ;
\draw [color={rgb, 255:red, 0; green, 0; blue, 255 }  ,draw opacity=1 ]   (272.22,138.11) -- (315.46,144.82) ;
\draw [color={rgb, 255:red, 0; green, 0; blue, 255 }  ,draw opacity=1 ]   (515.48,138.65) -- (558.73,145.36) ;
\draw [color={rgb, 255:red, 0; green, 0; blue, 255 }  ,draw opacity=1 ]   (512,127) -- (536,95) ;
\draw [color={rgb, 255:red, 0; green, 0; blue, 255 }  ,draw opacity=1 ]   (502,121) -- (516.64,45.18) ;
\draw [color={rgb, 255:red, 0; green, 0; blue, 255 }  ,draw opacity=1 ]   (514,133) -- (597,83) ;
\draw    (85,96) -- (58,124) ;
\draw    (107,99) -- (119,128) ;
\draw    (533,90) -- (509,123) ;
\draw    (555,100) -- (570,129) ;

\draw (542.78,76.15) node [anchor=north west][inner sep=0.75pt]  [font=\footnotesize]  {$1$};
\draw (512.39,20.95) node [anchor=north west][inner sep=0.75pt]  [font=\footnotesize]  {$2$};
\draw (605.76,67.04) node [anchor=north west][inner sep=0.75pt]  [font=\footnotesize]  {$3$};
\draw (490.59,133.25) node [anchor=north west][inner sep=0.75pt]  [font=\footnotesize]  {$5$};
\draw (571.07,139.51) node [anchor=north west][inner sep=0.75pt]  [font=\footnotesize]  {$4$};
\draw (92.52,76.04) node [anchor=north west][inner sep=0.75pt]  [font=\footnotesize]  {$1$};
\draw (62.12,20.84) node [anchor=north west][inner sep=0.75pt]  [font=\footnotesize]  {$2$};
\draw (155.5,66.93) node [anchor=north west][inner sep=0.75pt]  [font=\footnotesize]  {$3$};
\draw (40.32,133.15) node [anchor=north west][inner sep=0.75pt]  [font=\footnotesize]  {$5$};
\draw (120.81,139.4) node [anchor=north west][inner sep=0.75pt]  [font=\footnotesize]  {$4$};
\draw (300.52,75.6) node [anchor=north west][inner sep=0.75pt]  [font=\footnotesize]  {$1$};
\draw (269.12,20.4) node [anchor=north west][inner sep=0.75pt]  [font=\footnotesize]  {$2$};
\draw (362.5,66.49) node [anchor=north west][inner sep=0.75pt]  [font=\footnotesize]  {$3$};
\draw (247.32,132.71) node [anchor=north west][inner sep=0.75pt]  [font=\footnotesize]  {$5$};
\draw (327.81,138.96) node [anchor=north west][inner sep=0.75pt]  [font=\footnotesize]  {$4$};
\draw (88,172.4) node [anchor=north west][inner sep=0.75pt]    {$\cE_{r}$};
\draw (270,172.4) node [anchor=north west][inner sep=0.75pt]    {$\cE_{\cN_{P}, \text{connected}}$};
\draw (541,170.4) node [anchor=north west][inner sep=0.75pt]    {$\cE_Q$};
\draw [line width=3]  (190,88)  node [anchor=north west][inner sep=0.75pt]    {\huge $\cup$};
\end{tikzpicture}
\caption{Relationship between $\cE_{r}$, $\cE_{\cN_{P}, \text{connected}}$, and $\cE_Q$. Here, {we define $\cE_{\cN_P, \text{connected}}:= \{(i,j)\,\given \, i \in \cN_P \text{ or } j \in \cN_P, i \neq j\}$.}   $r_i(s, a_i, \cdot)$ is decomposable with respect to $\cE_{r,i}$ for all $i \in \cN$, and $\PP(s'\given s, \cdot)$ is decomposable with respect to $\cN_P$ (See \Cref{rmk:stronger}). The transition dynamics $\PP$ is expressed as the ensemble of controllers in the set  $\cN_P = \{5\}$ while the $\cN_C$ in this case is $\{1,5\}$.}  
\label{fig:prop1}
\vspace{-7pt}
\end{figure}
}}
\neurips{
\tikzset{every picture/.style={line width=0.9pt}} %
\begin{figure}

    \centering

\begin{tikzpicture}[x=0.66pt,y=0.66pt,yscale=-0.92,xscale=0.92]

\draw   (405,80) -- (446,80) -- (446,68) -- (475,88) -- (446,108) -- (446,96) -- (405,96) -- cycle ;

\draw   (530.39,82) .. controls (530.39,71.71) and (538.74,63.36) .. (549.04,63.36) .. controls (559.33,63.36) and (567.68,71.71) .. (567.68,82) .. controls (567.68,92.29) and (559.33,100.64) .. (549.04,100.64) .. controls (538.74,100.64) and (530.39,92.29) .. (530.39,82) -- cycle ;
\draw   (478.2,138.65) .. controls (478.2,128.36) and (486.55,120.02) .. (496.84,120.02) .. controls (507.14,120.02) and (515.48,128.36) .. (515.48,138.65) .. controls (515.48,148.95) and (507.14,157.29) .. (496.84,157.29) .. controls (486.55,157.29) and (478.2,148.95) .. (478.2,138.65) -- cycle ;
\draw   (558.73,145.36) .. controls (558.73,135.07) and (567.07,126.73) .. (577.37,126.73) .. controls (587.66,126.73) and (596.01,135.07) .. (596.01,145.36) .. controls (596.01,155.66) and (587.66,164) .. (577.37,164) .. controls (567.07,164) and (558.73,155.66) .. (558.73,145.36) -- cycle ;
\draw   (592.88,72.64) .. controls (592.88,62.34) and (601.22,54) .. (611.52,54) .. controls (621.81,54) and (630.16,62.34) .. (630.16,72.64) .. controls (630.16,82.93) and (621.81,91.27) .. (611.52,91.27) .. controls (601.22,91.27) and (592.88,82.93) .. (592.88,72.64) -- cycle ;
\draw   (500,26.55) .. controls (500,16.25) and (508.35,7.91) .. (518.64,7.91) .. controls (528.94,7.91) and (537.28,16.25) .. (537.28,26.55) .. controls (537.28,36.84) and (528.94,45.18) .. (518.64,45.18) .. controls (508.35,45.18) and (500,36.84) .. (500,26.55) -- cycle ;
\draw    (528,42) -- (539,65) ;
\draw    (495.84,120.02) -- (511,43) ;
\draw    (603,88) -- (583,127) ;
\draw    (593,78) -- (567.68,82) ;
\draw   (80.13,81.89) .. controls (80.13,71.6) and (88.48,63.25) .. (98.77,63.25) .. controls (109.07,63.25) and (117.41,71.6) .. (117.41,81.89) .. controls (117.41,92.18) and (109.07,100.53) .. (98.77,100.53) .. controls (88.48,100.53) and (80.13,92.18) .. (80.13,81.89) -- cycle ;
\draw   (27.94,138.55) .. controls (27.94,128.25) and (36.28,119.91) .. (46.58,119.91) .. controls (56.87,119.91) and (65.22,128.25) .. (65.22,138.55) .. controls (65.22,148.84) and (56.87,157.18) .. (46.58,157.18) .. controls (36.28,157.18) and (27.94,148.84) .. (27.94,138.55) -- cycle ;
\draw   (108.46,145.25) .. controls (108.46,134.96) and (116.81,126.62) .. (127.11,126.62) .. controls (137.4,126.62) and (145.75,134.96) .. (145.75,145.25) .. controls (145.75,155.55) and (137.4,163.89) .. (127.11,163.89) .. controls (116.81,163.89) and (108.46,155.55) .. (108.46,145.25) -- cycle ;
\draw   (142.61,72.53) .. controls (142.61,62.23) and (150.96,53.89) .. (161.25,53.89) .. controls (171.55,53.89) and (179.89,62.23) .. (179.89,72.53) .. controls (179.89,82.82) and (171.55,91.16) .. (161.25,91.16) .. controls (150.96,91.16) and (142.61,82.82) .. (142.61,72.53) -- cycle ;
\draw   (49.74,26.44) .. controls (49.74,16.14) and (58.08,7.8) .. (68.38,7.8) .. controls (78.67,7.8) and (87.02,16.14) .. (87.02,26.44) .. controls (87.02,36.73) and (78.67,45.07) .. (68.38,45.07) .. controls (58.08,45.07) and (49.74,36.73) .. (49.74,26.44) -- cycle ;
\draw    (75.74,43.89) -- (87.74,66.89) ;
\draw    (46.58,119.91) -- (61.74,42.89) ;
\draw    (152.74,87.89) -- (132.74,126.89) ;
\draw    (142.61,72.53) -- (117.41,81.89) ;
\draw  [color={rgb, 255:red, 0; green, 0; blue, 0 }  ,draw opacity=1 ] (287.13,81.45) .. controls (287.13,71.16) and (295.48,62.82) .. (305.77,62.82) .. controls (316.07,62.82) and (324.41,71.16) .. (324.41,81.45) .. controls (324.41,91.75) and (316.07,100.09) .. (305.77,100.09) .. controls (295.48,100.09) and (287.13,91.75) .. (287.13,81.45) -- cycle ;
\draw  [draw opacity=0][fill={rgb, 255:red, 140; green, 200; blue, 255 }  ,fill opacity=1 ] (234.94,138.11) .. controls (234.94,127.82) and (243.28,119.47) .. (253.58,119.47) .. controls (263.87,119.47) and (272.22,127.82) .. (272.22,138.11) .. controls (272.22,148.4) and (263.87,156.75) .. (253.58,156.75) .. controls (243.28,156.75) and (234.94,148.4) .. (234.94,138.11) -- cycle ;
\draw   (315.46,144.82) .. controls (315.46,134.53) and (323.81,126.18) .. (334.11,126.18) .. controls (344.4,126.18) and (352.75,134.53) .. (352.75,144.82) .. controls (352.75,155.11) and (344.4,163.45) .. (334.11,163.45) .. controls (323.81,163.45) and (315.46,155.11) .. (315.46,144.82) -- cycle ;
\draw   (349.61,72.09) .. controls (349.61,61.8) and (357.96,53.45) .. (368.25,53.45) .. controls (378.55,53.45) and (386.89,61.8) .. (386.89,72.09) .. controls (386.89,82.38) and (378.55,90.73) .. (368.25,90.73) .. controls (357.96,90.73) and (349.61,82.38) .. (349.61,72.09) -- cycle ;
\draw   (256.74,26) .. controls (256.74,15.71) and (265.08,7.36) .. (275.38,7.36) .. controls (285.67,7.36) and (294.02,15.71) .. (294.02,26) .. controls (294.02,36.29) and (285.67,44.64) .. (275.38,44.64) .. controls (265.08,44.64) and (256.74,36.29) .. (256.74,26) -- cycle ;
\draw [color={rgb, 255:red, 0; green, 0; blue, 255 }  ,draw opacity=1 ]   (253.58,119.47) -- (268.74,42.45) ;
\draw   (405,80) -- (446,80) -- (446,68) -- (475,88) -- (446,108) -- (446,96) -- (405,96) -- cycle ;
\draw [color={rgb, 255:red, 0; green, 0; blue, 255 }  ,draw opacity=1 ]   (269,128) -- (294,95) ;
\draw [color={rgb, 255:red, 0; green, 0; blue, 255 }  ,draw opacity=1 ]   (271,133) -- (355,84) ;
\draw [color={rgb, 255:red, 0; green, 0; blue, 255 }  ,draw opacity=1 ]   (272.22,138.11) -- (315.46,144.82) ;
\draw [color={rgb, 255:red, 0; green, 0; blue, 255 }  ,draw opacity=1 ]   (515.48,138.65) -- (558.73,145.36) ;
\draw [color={rgb, 255:red, 0; green, 0; blue, 255 }  ,draw opacity=1 ]   (512,127) -- (536,95) ;
\draw [color={rgb, 255:red, 0; green, 0; blue, 255 }  ,draw opacity=1 ]   (502,121) -- (516.64,45.18) ;
\draw [color={rgb, 255:red, 0; green, 0; blue, 255 }  ,draw opacity=1 ]   (514,133) -- (597,83) ;
\draw    (85,96) -- (58,124) ;
\draw    (107,99) -- (119,128) ;
\draw    (533,90) -- (509,123) ;
\draw    (555,100) -- (570,129) ;

\draw (542.78,76.15) node [anchor=north west][inner sep=0.75pt]  [font=\footnotesize]  {$1$};
\draw (512.39,20.95) node [anchor=north west][inner sep=0.75pt]  [font=\footnotesize]  {$2$};
\draw (605.76,67.04) node [anchor=north west][inner sep=0.75pt]  [font=\footnotesize]  {$3$};
\draw (490.59,133.25) node [anchor=north west][inner sep=0.75pt]  [font=\footnotesize]  {$5$};
\draw (571.07,139.51) node [anchor=north west][inner sep=0.75pt]  [font=\footnotesize]  {$4$};
\draw (92.52,76.04) node [anchor=north west][inner sep=0.75pt]  [font=\footnotesize]  {$1$};
\draw (62.12,20.84) node [anchor=north west][inner sep=0.75pt]  [font=\footnotesize]  {$2$};
\draw (155.5,66.93) node [anchor=north west][inner sep=0.75pt]  [font=\footnotesize]  {$3$};
\draw (40.32,133.15) node [anchor=north west][inner sep=0.75pt]  [font=\footnotesize]  {$5$};
\draw (120.81,139.4) node [anchor=north west][inner sep=0.75pt]  [font=\footnotesize]  {$4$};
\draw (300.52,75.6) node [anchor=north west][inner sep=0.75pt]  [font=\footnotesize]  {$1$};
\draw (269.12,20.4) node [anchor=north west][inner sep=0.75pt]  [font=\footnotesize]  {$2$};
\draw (362.5,66.49) node [anchor=north west][inner sep=0.75pt]  [font=\footnotesize]  {$3$};
\draw (247.32,132.71) node [anchor=north west][inner sep=0.75pt]  [font=\footnotesize]  {$5$};
\draw (327.81,138.96) node [anchor=north west][inner sep=0.75pt]  [font=\footnotesize]  {$4$};
\draw (88,172.4) node [anchor=north west][inner sep=0.75pt]    {$\cE_{r}$};
\draw (270,172.4) node [anchor=north west][inner sep=0.75pt]    {$\cE_{\cN_{P}, \text{connected}}$};
\draw (541,170.4) node [anchor=north west][inner sep=0.75pt]    {$\cE_Q$};
\draw [line width=3]  (190,88)  node [anchor=north west][inner sep=0.75pt]    {\huge $\cup$};
\end{tikzpicture}
\caption{Relationship between $\cE_{r}$, $\cE_{\cN_{P}, \text{connected}}$, and $\cE_Q$. Here, {we define $\cE_{\cN_P, \text{connected}}:= \{(i,j)\,\given \, i \in \cN_P \text{ or } j \in \cN_P, i \neq j\}$.}   $r_i(s, a_i, \cdot)$ is decomposable with respect to $\cE_{r,i}$ for all $i \in \cN$, and $\PP(s'\given s, \cdot)$ is decomposable with respect to $\cN_P$ (See \Cref{rmk:stronger}). The transition dynamics $\PP$ is expressed as the ensemble of controllers in the set  $\cN_P = \{5\}$ while the $\cN_C$ in this case is $\{1,5\}$.}  
\label{fig:prop1}
\vspace{-12pt}
\end{figure}
}

\neurips{
\vspace{-0.07in}
}

\numbering{\subsection{Examples of multi-player (zero-sum) NMGs}}
\tnumbering{\section{Examples of multi-player (zero-sum) NMGs}}

\neurips{
\vspace{-0.07in}
}

\arxiv{We now provide several examples of (multi-player) MGs with networked separable interactions.}  
\neurips{We now provide several examples of (multi-player) MGs with networked separable interactions here and in \Cref{ssec:ex}.}

\neurips{
\vspace{-0.07in}
}

\paragraph{Example 1 (Markov fashion games).} Fashion games are an intriguing class of games \cite{cao2013fashion, cao2014fashion,zhang2018fashion} that plays a vital role not only in Economics theory but also in practice. A fashion game is a networked extension of the Matching Pennies game, in which each player has the action space $\cA_i = \{-1, +1\}$, which means \emph{light} and \emph{dark} color fashions, respectively, for example. There are two types of players: \emph{conformists} ($\mathfrak{c}$), who prefer to conform  to their neighbors' fashion (action), and \emph{rebels} ($\mathfrak{r}$), who prefer to oppose their neighbors' fashion (action). Such interactions with the neighbors are exactly captured by polymatrix games. {We denote the interaction {network} between players as $\cG = (\cN, \cE)$. }

{Such a game naturally involves the following state transition dynamics:} 
we {introduce the} state $s\in\cS = \ZZ$ by setting $s_0 = 0$ and $s_{t+1} \sim s_{t} + \text{Unif}((a_{t, c})_{c \in \cC})$, {which indicates the \emph{fashion trend} where {$\cC\subseteq \cN$ is the  set of influencers.}} {The fashion trend favors either light or dark colors if $s\geq 0$ or $s <0$, respectively. We can think of dynamics as {the  impact of the influencers on the fashion trend at time $t$}.} For each $(s,\ba)$, the reward {function for player $i$, depending on whether she is a} conformist or a rebel, are defined as  $r_{\mathfrak{c}, i}(s, \ba) = \sum_{j \in \cE_i}r_{\mathfrak{c}, i, j}(s, a_i, a_j) = \sum_{j \in \cE_i} (\frac{1}{|\cE_i|} \pmb{1}(\text{sgn}(s) = a_i)+ \pmb{1}(a_i = a_j))$ and $r_{\mathfrak{r}, i}(s, \ba) = \sum_{j \in \cE_i}r_{\mathfrak{r}, i, j}(s, a_i, a_j) = \sum_{j \in \cE_i} (\frac{1}{|\cE_i|} \pmb{1}(\text{sgn}(s) \neq a_i)+ \pmb{1}(a_i \neq a_j))$, respectively. This is  an NMG {as defined in \Cref{sec:def_MZNG}}. Moreover, if the conformists and rebels constitute a bipartite graph, {i.e., the neighbors of a conformist are all rebels and vice versa, it becomes a multi-player constant-sum MG with networked separable interactions, and we can subtract the constant offset to make it a zero-sum NMG. }
 
\arxiv{\paragraph{Example 2 (Markov security games). }  Security games as described in  \cite{grossklags2008secure,cai2016zero} is  a {primary} example of zero-sum NGs/polymatrix games,  which features two types of players: \emph{attackers} who work as a group ($\mathfrak{a}$), and \emph{users} ($\mathfrak{u}$). Let $\mathfrak{U}$ denote the set of all users. We construct a star-shaped network (c.f. \Cref{fig:PPAD Hardness}) {with the attacker group including   $n_\mathfrak{a}$ number of attackers} sitting at the center, connected to each user. There is an IP address set $[C]$. {We define the action spaces for each user $\mathfrak{u}_i$ and the attacker group as $\cA_{\mathfrak{u}_i} = [C]$ and $\cA_{\mathfrak{a}} = \{T \mid T \subseteq [C],\, |T| = n_{\mathfrak{a}}\}$, respectively.}  Each user selects one IP address, while the attacker group selects a subset $I \subseteq [C]$. For each user whose IP address is attacked, the attacker group {gains one unit of payoff, and the attacked user loses one unit. Conversely, if a user's IP address is not attacked, the user earns one unit of payoff, and the attacker loses one unit.}

{We naturally extend} the security games to \emph{Markov security games} as follows: we define state $s \in \cS = \RR^C $ by setting $s_0 = \pmb{0}$ and $s_{t+1} \sim s_{t} + \text{Unif}((e_{a_{\mathfrak{u}_i, t}})_{\mathfrak{u}_i \in \mathfrak{U}})$, representing the vector of \emph{security level} for the IP addresses. {Specifically}, a vaccine program can improve the security level of each IP address if it has been attacked previously. {We define $X\in\RR^C$ as a vector such that each of its components, $X_c$, corresponds to a unique user's IP address, indexed by $c$. Each $X_c$ is defined by the random variable as $X_c \sim 2\text{Bern}(1- 1/(s_{t, c} + 1))-1$, indicating the outcome of a potential attack on IP address $c \in [C]$. Here, $s_{t, c}$ denotes the security level of each IP address $c$ at a given time $t$, i.e., the $c$-th component of $s_t$. The success probability of an attack on an IP address is inversely proportional to its security level, represented by $1/(s_{t, c} + 1)$. Therefore, higher security levels make an attack less likely to succeed. The term $2\text{Bern}({1}- 1/(s_{t, c} + 1))-{1}$ describes a Bernoulli distribution, typically taking values $0$ or $1$, that has been scaled and shifted to take values $-1$ or $1$ instead. Here, $-1$ represents an unsuccessful attack, while $1$ denotes  a successful attack on the IP address $c$. Therefore, each $X_c$ provides a probabilistic view of the failure of an attack on each IP address, given its security level.}
For each $(s,\ba)$, the reward functions for the users and the attacker group are defined as $r_{\mathfrak{u}_i} (s, \ba, I) = r_{\mathfrak{u}_i, \mathfrak{a}}(s, a_{\mathfrak{u}_i}, I) =  \pmb{1} (a_{\mathfrak{u}_i} \in I) X_{a_{\mathfrak{u}_i}} + \pmb{1} (a_{\mathfrak{u}_i} \notin I)$ and $r_{\mathfrak{a}}(s, \ba, I)= \sum_{\mathfrak{u}_i \in \mathfrak{U}}r_{\mathfrak{a}, \mathfrak{u}_i}(s, a_{\mathfrak{u}_i}, I) - \pmb{1} (a_{\mathfrak{u}_i} \notin I)$ {where} $r_{\mathfrak{a}, \mathfrak{u}_i}(s, a_{\mathfrak{u}_i}, I) = -\pmb{1} (a_{\mathfrak{u}_i} \in I)X_{a_{\mathfrak{u}_i}}$. The reward function of users can be interpreted as follows: if the user's action $a_{\mathfrak{u}_i}$ is in the set of attacked IP addresses $I$ and the attack failed (i.e., $X_{a_{\mathfrak{u}_i}} = 1$), then the user receives a reward equal to $1$. Otherwise, if the user's action is not in $I$, the user also receives a reward of $1$, likely representing a successful defense or evasion of an attack.
Since the reward is always zero-sum, this game is a zero-sum NMG with networked separable interactions.

\paragraph{Example 3 (Global {economy})}.
Macroeconomic dynamics may also be modeled through either zero-sum NMGs or NMGs. Trading between nations has been analyzed in game theory \cite{wilson1985game,wilson1989game}. 
We consider nations as players, each nation has an action space, $\cA_i = \RR$, and the actions decide their expenditure levels. We define the {\it state} of the global economy,  $s \in \RR$, such that $s_0 = 0$ and $s_{t+1} \sim s_{t} + \text{Unif}((a_{c, t})_{c \in \cC}) + Z_t$. Here, $Z_t$ is a random variable representing the unpredictable nature of global events (e.g., COVID-19), and $\cC$ represents the set of {\it powerful} nations, which models the fact that powerful nations' politics or military spending have a relatively significant impact on global economy  \cite{zhang2019economic,beckley2018power}. The aggregated (or ensemble) effect of the powerful nations on the economy is modeled by the term $\text{Unif}((a_{c, t})_{c \in \cC})$. 

During the global financial crisis in 2008-2009, many nations implemented significant fiscal stimulus measures to counteract the downturn \cite{freedman2010global, armingeon2012politics}. Conversely, in good economic conditions, the estimated government spending multipliers were less than one, suggesting that the increased government spending in such situations  might not have the intended positive effects on the economy \cite{gomes2022government}. {Such a state-dependence on reward functions may be modeled as follows.} 
First, we consider the reward being decomposable with respect to nations, as it can be interpreted as (1) the expenditure of each nation is related to the amount of {payment spent} on trading, and (2) we focus on the case with {\it bilateral}  trading, where the surplus from trading can be decomposed by the surplus from the {\it pairwise} trading with other nations. Second, as mentioned above, the relationship between government spending and the global economy can be seen {as {\it countercyclical} \cite{gomes2022government}, which we use the formula $s(a_j - a_i)$ to model explicitly, for nation $i$. Specifically, $s>0$ denotes a good economic condition, in which all the nations may choose to decrease the expenditure level (the $-a_i$ term). 
Hence, the reward function for nation $i$ can be written  as $r_{i}(s, \ba) = \sum_{j \in \cN} r_{i,j} (s, a_i, a_j)= \texttt{Const} + \sum_{j \in \cN}s (a_j - a_i)$, where the positive constant \texttt{Const}  represents  the net benefit out of the tradings. 
Hence, the game shares the characteristics of being a constant-sum NMG. Moreover, other alternative forms of the reward functions may exist to reflect the countercyclical phenomenon, and may not necessarily satisfy the zero-sum {(constant-sum)}  property, but the game would still qualify as an NMG.}}

\neurips{
\vspace{-0.06in}
}

\tnumbering{\section{Relationship between  CCE and NE in zero-sum NMGs}}
\numbering{\subsection{Relationship between  CCE and NE in zero-sum NMGs}}
\label{sec:CCENE}

\neurips{
\vspace{-0.06in}
}

A well-known property for zero-sum NGs is that marginalizing a CCE leads to a NE, which makes it computationally tractable to find the NE  \cite{cai2011minmax,cai2016zero}. 
We now provide below a counterpart in the Markov game setting, and provide a more detailed statement of the result  in \Cref{appendix:MPGdef}. 

\begin{restatable}{proposition}{CCENE}
\label{prop:MZNMG}
{Given {an $\epsilon$-approximate} Markov CCE of an infinite-horizon $\gamma$-discounted zero-sum NMG}, marginalizing it 
{at each state} results in an {$\frac{(n+1)}{(1-\gamma)}\epsilon$-approximate} Markov NE of the zero-sum NMG. The same argument also holds for the finite-horizon episodic setting {with $(1-\gamma)^{-1}$ being replaced by $H$}. 
\end{restatable}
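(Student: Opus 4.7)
The plan is to leverage the structural characterization in \Cref{prop:MPGcond}---the polymatrix reward decomposition together with the ensemble-single-controller transition---to establish a strong \emph{marginalization invariance}: the value a player obtains by unilaterally deviating is unchanged whether the remaining players follow the (possibly correlated) joint policy $\pi_{-i}$ or the product of per-state marginals $\hat{\pi}_{-i}$. Combined with the zero-sum property, this then reduces the CCE-to-NE conversion to an averaging argument in the spirit of Cai--Daskalakis for one-stage zero-sum polymatrix games.

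\textbf{Step 1 (Marginalization invariance).} I would first prove that for every $i \in \cN$, every $s \in \cS$, and every Markov policy $\mu_i$, $V_i^{\mu_i, \pi_{-i}}(s) = V_i^{\mu_i, \hat{\pi}_{-i}}(s)$. The point is that when player $i$ deviates with $\mu_i$ (statistically independent of $a_{-i}$), the expected instantaneous reward at state $s$ equals $\sum_{j \in \cE_{Q, i}} \EE_{a_i \sim \mu_i(s),\, a_j \sim \hat{\pi}_j(s)}[r_{i,j}(s, a_i, a_j)]$ by the pairwise reward decomposition, and the expected next-state distribution equals $\sum_{k \in \cN_C} w_k(s) \EE_{a_k \sim q_k(s)}[\PP_k(\cdot\given s, a_k)]$ (with $q_k = \mu_i$ if $k = i$ and $q_k = \hat{\pi}_k$ otherwise) by the ensemble-single-controller transition. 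Both depend only on single-agent marginals, which coincide under $\pi_{-i}$ and $\hat{\pi}_{-i}$; the two joint policies thus induce identical Bellman operators on $V_i$, and the unique fixed point (infinite-horizon) or backward recursion (finite-horizon) forces the two values to agree.

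\textbf{Step 2 (CCE-to-NE via zero-sum).} Let $\mu_i^*$ be a best response to $\pi_{-i}$. The $\epsilon$-Markov-CCE condition then gives $\max_{\mu_i} V_i^{\mu_i, \hat{\pi}_{-i}}(s) = \max_{\mu_i} V_i^{\mu_i, \pi_{-i}}(s) \leq V_i^\pi(s) + \epsilon$, so the per-player NE gap at state $s$, $\delta_i(s) := \max_{\mu_i} V_i^{\mu_i, \hat{\pi}_{-i}}(s) - V_i^{\hat{\pi}}(s)$, is bounded by $\epsilon + \bigl(V_i^\pi(s) - V_i^{\hat{\pi}}(s)\bigr)$. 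Since $\sum_{i\in\cN} r_i(s, \ba) = 0$ pointwise, any joint Markov policy $\sigma$ satisfies $\sum_i V_i^\sigma(s) = 0$; applying this to both $\pi$ and $\hat{\pi}$ and summing the inequality over $i$ yields $\sum_i \delta_i(s) \leq n\epsilon$. Non-negativity of each $\delta_i(s)$ (player $i$ can always pick $\mu_i = \hat{\pi}_i$) then gives $\delta_i(s) \leq n\epsilon$ uniformly in $i$ and $s$, which is in fact tighter than the stated $\frac{(n+1)\epsilon}{1-\gamma}$ bound. The finite-horizon variant is identical, with $H$ replacing $(1-\gamma)^{-1}$ in the loose slack.

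\textbf{Main obstacle.} The key difficulty---and the step where the NMG structure plays the central role---is the invariance in Step 1. Without the decomposable transition kernel characterized in \Cref{prop:MPGcond}, replacing $\pi_{-i}$ by $\hat{\pi}_{-i}$ would perturb the induced state trajectory, forcing a telescoping argument whose per-step total-variation gap compounds over the horizon (this is plausibly the source of the extra $(1-\gamma)^{-1}$ factor in the stated bound, if one proves the result via a one-stage auxiliary-game reduction at each $s$ rather than the pointwise coupling above). The zero-sum reward is what allows the \emph{sum} of per-player NE gaps to collapse to $n\epsilon$ without further loss; absent zero-sum, one would only control an averaged quantity rather than the worst-case pointwise exploitability.
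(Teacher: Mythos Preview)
Your argument is correct and in fact yields a sharper bound than the paper's. Both proofs share the marginalization-invariance step (your Step 1): the paper proves exactly the same identity $V_i^{\mu_i,\pi_{-i}}(s)=V_i^{\mu_i,\hat\pi_{-i}}(s)$ via the polymatrix reward decomposition and the ensemble-single-controller transition (so that the induced state-visitation measure depends only on single-agent marginals). The divergence is in how $V_i^{\pi}(s)-V_i^{\hat\pi}(s)$ is handled. The paper first shows that $\pi(s)$ is an $\epsilon$-CCE of the per-state auxiliary zero-sum polymatrix game (\Cref{claim:inf-stage}), invokes the one-shot CCE-to-NE collapse (\Cref{prop:polymatrix}) to obtain the \emph{per-player} one-step inequality $V_i^{\pi}(s)\le r_i(s,\hat\pi)+\gamma\,\EE_{\hat\pi}[V_i^{\pi}]+n\epsilon$, and then iterates this over the horizon, incurring the $1/(1-\gamma)$ factor and landing at $(n+1)\epsilon/(1-\gamma)$. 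You instead sum the per-player deviation bound $\delta_i(s)\le \epsilon+V_i^{\pi}(s)-V_i^{\hat\pi}(s)$ over $i$ and use the zero-sum identity $\sum_i V_i^{\sigma}(s)=0$ (valid for \emph{any} joint Markov policy $\sigma$, correlated or product), which kills the $V^{\pi}-V^{\hat\pi}$ terms outright and gives $\max_i\delta_i(s)\le n\epsilon$ with no horizon dependence. Your route is more elementary, avoids the auxiliary-game machinery of \Cref{prop:optima-and-value}--\Cref{prop:polymatrix}, and strictly improves the constant; the paper's route, on the other hand, additionally delivers the per-player value comparison $V_i^{\hat\pi}(s)\ge V_i^{\pi}(s)-n\epsilon/(1-\gamma)$, which your argument does not isolate (you only control the \emph{sum}).
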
 

This result holds for both stationary and non-stationary {$\epsilon$-approximate} Markov CCEs. We defer the proof {of \Cref{prop:MZNMG}} {to}  \Cref{appendix:MPGdef}. 
This proposition suggests that if we can have some algorithms to find {an approximate} Markov CCE for a zero-sum NMG, we can obtain {an approximate} Markov NE by marginalizing the {approximate} CCE {at each state}. 
We also emphasize that the \emph{Markovian} property of the equilibrium policies is important for the result to hold. 
As a result, the learning algorithm in \cite{daskalakis2022complexity}, which learns an approximate Markov \emph{non-stationary}  CCE with polynomial time and samples, may thus be used to find an approximate Markov \emph{non-stationary} NE in zero-sum NMGs. However, as the focus of \cite{daskalakis2022complexity} was the {more challenging setting of \emph{model-free learning}, the complexity therein has a high dependence on the problem parameters, and the algorithm can only find non-perfect equilibria.} When it comes to (perfect) equilibrium computation, one may exploit the multi-player zero-sum structure of zero-sum NMGs, and develop more natural and faster algorithms to find a Markov {non-stationary}  NE. Moreover, when it comes to \emph{stationary} equilibrium computation, even Markov CCE is \emph{not} tractable in general-sum cases \cite{daskalakis2022complexity,jin2022complexity}. 
Hereafter, we will focus on approaching zero-sum NMGs from these perspectives.

\neurips{
\vspace{-0.10in}
}
\numbering{\section{Hardness for Stationary CCE Computation}}
\tnumbering{\chapter{Hardness for Stationary CCE Computation}}
\label{subsection:PPAD}

\neurips{
\vspace{-0.10in}
}

Given the results in \Cref{sec:CCENE}, it seems tempting and sufficient to compute the Markov CCE of the zero-sum NMG. 
Indeed, computing CCE (and thus NE) in zero-sum polymatrix games is known to be tractable \cite{cai2011minmax,cai2016zero}. It is thus natural to ask: \emph{Is finding Markov CCE computationally tractable?}  
Next, we answer the question with different answers  for finding stationary CCE (in infinite-horizon $\gamma$-discounted setting) and non-stationary CCE (in finite-horizon episodic setting), respectively.  

For \emph{two-player} infinite-horizon $\gamma$-discounted \emph{zero-sum} MGs, significant progress  in computing/learning the (Markov) stationary NE has been made recently \cite{daskalakis2020independent, wei2021last, zhao2021provably, chen2021sample, cen2021fast, alacaoglu2022natural, zeng2022regularized, cen2022faster}. On the other hand, for \emph{multi-player general-sum} MGs, recent results in \cite{daskalakis2022complexity,jin2022complexity} showed that computing (Markov) stationary CCE can be {\tt PPAD}-hard and thus believed to be computationally intractable. We next show that this hardness persists in most non-degenerate cases even if one enforces the zero-sum and networked interaction structures in the multi-player case. We state the formal result as follows, whose detailed proof  is available in \Cref{appendix:PPADhard-proof}.  
{
\begin{restatable}{theorem}{PPADHardmain}
\label{thm:PPAD-hard-main}
There is a constant $\epsilon > 0$ for which computing an $\epsilon$-approximate  Markov {perfect} stationary CCE in {infinite-horizon}  $\frac{1}{2}$-discounted zero-sum NMGs, whose underlying network structure contains either a triangle or a 3-path subgraph, is {\tt PPAD}-hard. Moreover, given the PCP for {\tt PPAD} conjecture \cite{babichenko2015can}, there is a constant $\epsilon > 0$ such that computing even an $\epsilon$-approximate Markov {non-perfect} stationary CCE  in such zero-sum NMGs is {\tt PPAD}-hard. 
\end{restatable}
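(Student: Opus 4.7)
The strategy is a polynomial-time reduction from a known PPAD-complete problem, exploiting the fact that the stateful (Markov) structure can encode hard fixed-point conditions that go beyond what normal-form zero-sum polymatrix games allow. The natural source for the reduction is the PPAD-hardness of computing stationary CCE in $2$-player infinite-horizon general-sum MGs established in \cite{daskalakis2022complexity,jin2022complexity}. Starting from such a $2$-player general-sum $\gamma$-discounted MG instance, I construct a zero-sum NMG whose interaction graph is a triangle (respectively, contains a $3$-path) and whose states inherit those of the source, possibly augmented by a small number of auxiliary states. Two of the players in the constructed NMG play the roles of the original two, and a ``balancer'' player is introduced solely to enforce the zero-sum polymatrix structure demanded by \Cref{sec:def_MZNG} and \Cref{prop:MPGcond}. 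The edge-rewards are designed so that: (i) the pairwise decomposition of each original player's reward sums to that player's source payoff up to state-dependent shifts; (ii) the balancer's per-state payoff is chosen to make the sum of all rewards zero; and (iii) the balancer's strategy is either irrelevant to, or is pinned down by, the other two players', so that stationary equilibria of the constructed NMG project cleanly back to stationary equilibria of the source instance.

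Second, the transition dynamics must satisfy the decomposability condition of \Cref{prop:MPGcond}, i.e., form an ensemble of single-controller transitions indexed by $\cN_C$. In the triangle case, $\cN_C$ is the full player set, so I have maximum flexibility and can simply assign source transitions to the corresponding original controller(s). In the $3$-path case, only the central player lies in $\cN_C$, so transitions are forced to be single-controller, and this restriction is the step I expect to be the main technical bottleneck: single-controller dynamics render $2$-player zero-sum MGs polynomially tractable, so the hardness here cannot be inherited from a $2$-player zero-sum source and must instead arise from the polymatrix coupling across the path. The route I would pursue is to embed a general-sum polymatrix Nash instance on a $3$-path (PPAD-hard by arguments in the spirit of \cite{cai2011minmax} for general-sum polymatrix NE) into a zero-sum NMG, by adding state-dependent ``bonus'' rewards on the central player's two edges that re-balance the general-sum payoffs into a zero-sum decomposition; the single-controller transitions are then chosen so that the stationary value function supplies exactly the bonuses required to close up the zero-sum polymatrix structure.

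Third, I would verify that the reduction is approximation-preserving: an $\epsilon$-approximate Markov perfect stationary CCE of the constructed zero-sum NMG, after marginalizing via \Cref{prop:MZNMG} and projecting out the balancer, yields an $O(\epsilon)$-approximate stationary equilibrium of the source instance, with the blow-up factors depending only on $\gamma = 1/2$, the polynomial sizes of the constructed state and action spaces, and the magnitude of reward shifts introduced by the balancer construction. This establishes the first (exact) hardness claim of the theorem. For the second claim --- constant-$\epsilon$ hardness for Markov non-perfect stationary CCE under the PCP-for-PPAD conjecture --- I would layer the construction on top of the inapproximability template of \cite{babichenko2015can}, choosing the source to be an instance where even constant-$\epsilon$ Nash equilibrium is PPAD-hard, and checking that the non-perfect CCE gap survives the reduction without being diluted by averaging across states or by the balancer player. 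As emphasized above, the main outstanding obstacle is the $3$-path case: the single-controller restriction sharply narrows the design space, and any valid reduction must carefully track how value-function couplings across states interact with the zero-sum polymatrix decomposition --- an interaction with no analog in the normal-form setting, and the primary source of technical work.
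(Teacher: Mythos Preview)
Your high-level strategy matches the paper's: reduce from the PPAD-hardness of stationary equilibria in two-player general-sum discounted MGs \cite{daskalakis2022complexity}, add extra ``balancer'' player(s) to force the zero-sum polymatrix structure, and show that equilibria project back. But you miss the one idea that makes the reduction go through without the machinery you anticipate: the source should be a \emph{turn-based} two-player MG, for which \cite{daskalakis2022complexity} already establishes PPAD-hardness. This restriction buys three things at once. First, turn-based transitions are trivially an ensemble of single-controller dynamics (at each state exactly one of players $1,2$ has a non-singleton action space), so the decomposability requirement of \Cref{prop:MPGcond} holds immediately in the triangle case, where $\cN_C$ contains both original players. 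Second, the constructed game remains turn-based, hence Markov stationary CCE coincides with Markov stationary NE exactly --- there is no need to invoke \Cref{prop:MZNMG} or track its $(n+1)/(1-\gamma)$ blow-up in the approximation-preservation step. Third, the balancer can be a pure dummy with singleton action space \textsf{Noop}, so its policy is trivial and no ``pinning down'' argument is needed; the reward identities $\tilde r_{1,2}+\tilde r_{1,3}=r_1$, $\tilde r_{2,1}+\tilde r_{2,3}=r_2$, $\tilde r_{i,j}=-\tilde r_{j,i}$ can then be written out in closed form.

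Without the turn-based restriction your triangle reduction already breaks: a generic two-player transition $\PP(s'\mid s, a_1, a_2)$ need not decompose as $\FF_1(s'\mid s, a_1)+\FF_2(s'\mid s, a_2)$, so the game you build is not an NMG at all. For the $3$-path case, the paper again uses the turn-based source together with \emph{two} dummy players (one attached to each of the original players) rather than the general-sum-polymatrix embedding you sketch; the remainder of the argument parallels the triangle case.
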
} 

\neurips{
\vspace{-0.10in}
}

\begin{proof}[Proof Sketch of \Cref{thm:PPAD-hard-main}.] {Due to space constraints, we focus on the case with three players, and {the   underlying network structure has a triangle subgraph}. Proof  for the $3$-path case is similar and can be found in \Cref{appendix:PPADhard-proof}.}   
We will show  that for \emph{any} general-sum two-player \emph{turn-based} MG \textbf{(A)}, the problem of computing its Markov stationary CCE,  which is inherently a {\tt PPAD}-hard problem \cite{daskalakis2022complexity}, can be reduced to computing the Markov stationary CCE of a three-player zero-sum MG {with a triangle  structure} networked separable interactions  \textbf{(B)}. Consider an MG \textbf{(A)} with two players, players 1 and 2, and reward functions $r_1(s, a_1, a_2)$ and $r_2(s, a_2, a_1)$, where $a_i$ is the action of the $i$-th player and $r_i$ is the reward function of the $i$-th player. The transition dynamics is given by {$\PP(s'\,|\,s, a_1, a_2)$}. In even {rounds}, player 2's action space is limited to \textsf{Noop2}, and in odd {rounds}, player 1's action space is limited to \textsf{Noop1}, where \textsf{Noop} is an abbreviation of {``no-operation'', i.e., the player does not affect the transition dynamics {or the reward} in that round}. We denote player 1's action space in even rounds as $\cA_{1, \text{even}}$ and player 2's action space in odd {rounds} as $\cA_{2, \text{odd}}$, respectively.    

Now, we construct a {three-player}  zero-sum NMG. {with a triangle network structure.} We set the reward function as $\tilde{r}_i(s, \ba) = \sum_{j \neq i}\tilde{r}_{i,j}(s, a_i, a_j) $ and $\tilde{r}_{i,j}(s,a_i,a_j)= -\tilde{r}_{j,i}(s,a_j,a_i)$. The reward functions   are designed so that $\tilde{r}_{i,j} = -\tilde{r}_{j,i}$ for all $i, j$, $\tilde{r}_{1,2} + \tilde{r}_{1,3} = r_1$, and $\tilde{r}_{2,1} + \tilde{r}_{2,3} = r_2$, {where $r_1,r_2$ are the reward functions in game \textbf{(A)},} {by introducing} a dummy player{, player 3}. In even rounds, player 2's action space is limited to \textsf{Noop2}, and in odd rounds, player 1's action space is limited to \textsf{Noop1}. Player 3's action space is always limited to \textsf{Noop3} in all  rounds. The transition dynamics is defined as {$\tilde{\PP}(s'\,|\,s, a_1, a_2, a_3) = {\PP}(s'\,|\,s, a_1, a_2)$}, since $a_3$ is always chosen from \textsf{Noop3}. {In other words, player 3's action does not affect the rewards of the other two players, nor the transition dynamics, and players 1 and 2 will receive the reward as in the two-player turn-based MG. Also, }note that due to the turn-based structure of the game \textbf{(A)}, the transition dynamics satisfy the decomposable condition in our \Cref{prop:MPGcond}, and it is thus a zero-sum NMG. {In fact,  turn-based dynamics can be represented as an ensemble of single controller dynamics, as we have discussed in \Cref{sec:definition}.}   

Note that the new game \textbf{(B)} is still a turn-based game, and thus the Markov stationary CCE is the same as the Markov stationary NE. Also, note that by construction, the equilibrium policies of players $1$ and $2$ at the   Markov stationary CCE of the game \textbf{(B)} constitute a Markov stationary CCE of the game \textbf{(A)}. 
{If {the underlying network} is more general than a triangle, but contains a triangle subgraph, we can specify the reward and transition dynamics of these three players as above, and specify all other players to be dummy players, whose reward functions are all zero, and do not affect the reward functions of these three players, nor the transition dynamics. 
This completes the proof.} 
\end{proof}

\neurips{
\vspace{-0.10in}
}

\Cref{fig:PPAD Hardness} briefly explains how we may  reduce {the equilibrium computation problem of } \textbf{(A)} to {that of} \textbf{(B)}. In fact, a {connected} graph that does not {contain a subgraph of} a triangle or a 3-path {has to be} a \emph{star-shaped}  network   (\Cref{prop:star-shaped-only}), {which is proved in \Cref{appendix:PPADhard-proof}.} {Hence, by \Cref{thm:PPAD-hard-main}, we know that} in the infinite-horizon discounted setting, finding Markov stationary NE/CE/CCE is a computationally hard problem unless the underlying network is star-shaped. 
{This may also imply that \emph{learning} Markov stationary NE in zero-sum NMGs, e.g., using natural dynamics like fictitious play to reach the NE, can be challenging, unless in the star-shaped case. In turn, one may hope fictitious-play dynamics to converge for star-shaped zero-sum NMGs.  
We instantiate this idea next in  \Cref{sec:fictitious-play}. 
Furthermore, in light of \Cref{thm:PPAD-hard-main},  we will shift gear to computing Markov \emph{non-stationary} NE by utilizing the structure of  networked separable interactions, as to be detailed in  \Cref{section:OMWU}.}
\arxiv{
\tikzset{every picture/.style={line width=0.9pt}} %

\begin{figure}
    \centering 

\begin{tikzpicture}[x=0.72pt,y=0.72pt,yscale=-1,xscale=1]

\draw    (116,43) -- (167,128) ;
\draw    (54.64,146) -- (156.36,146) ;
\draw    (45,129) -- (89,44) ;
\draw    (259,49.64) -- (260,127.36) ;
\draw    (380.36,146) -- (278.64,146) ;
\draw    (399,127.36) -- (398,48.64) ;
\draw   (530.39,82) .. controls (530.39,71.71) and (538.74,63.36) .. (549.04,63.36) .. controls (559.33,63.36) and (567.68,71.71) .. (567.68,82) .. controls (567.68,92.29) and (559.33,100.64) .. (549.04,100.64) .. controls (538.74,100.64) and (530.39,92.29) .. (530.39,82) -- cycle ;
\draw   (478.2,138.65) .. controls (478.2,128.36) and (486.55,120.02) .. (496.84,120.02) .. controls (507.14,120.02) and (515.48,128.36) .. (515.48,138.65) .. controls (515.48,148.95) and (507.14,157.29) .. (496.84,157.29) .. controls (486.55,157.29) and (478.2,148.95) .. (478.2,138.65) -- cycle ;
\draw   (558.73,145.36) .. controls (558.73,135.07) and (567.07,126.73) .. (577.37,126.73) .. controls (587.66,126.73) and (596.01,135.07) .. (596.01,145.36) .. controls (596.01,155.66) and (587.66,164) .. (577.37,164) .. controls (567.07,164) and (558.73,155.66) .. (558.73,145.36) -- cycle ;
\draw   (592.88,72.64) .. controls (592.88,62.34) and (601.22,54) .. (611.52,54) .. controls (621.81,54) and (630.16,62.34) .. (630.16,72.64) .. controls (630.16,82.93) and (621.81,91.27) .. (611.52,91.27) .. controls (601.22,91.27) and (592.88,82.93) .. (592.88,72.64) -- cycle ;
\draw   (470,33.55) .. controls (470,23.25) and (478.35,14.91) .. (488.64,14.91) .. controls (498.94,14.91) and (507.28,23.25) .. (507.28,33.55) .. controls (507.28,43.84) and (498.94,52.18) .. (488.64,52.18) .. controls (478.35,52.18) and (470,43.84) .. (470,33.55) -- cycle ;
\draw    (501.32,45.85) -- (532.63,72.68) ;
\draw    (508.03,123.37) -- (534.87,94.3) ;
\draw    (555.75,98.77) -- (569.17,127.85) ;
\draw    (592.88,72.64) -- (567.68,82) ;
\draw   (17.36,146) .. controls (17.36,135.71) and (25.71,127.36) .. (36,127.36) .. controls (46.29,127.36) and (54.64,135.71) .. (54.64,146) .. controls (54.64,156.29) and (46.29,164.64) .. (36,164.64) .. controls (25.71,164.64) and (17.36,156.29) .. (17.36,146) -- cycle ;
\draw   (156.36,146) .. controls (156.36,135.71) and (164.71,127.36) .. (175,127.36) .. controls (185.29,127.36) and (193.64,135.71) .. (193.64,146) .. controls (193.64,156.29) and (185.29,164.64) .. (175,164.64) .. controls (164.71,164.64) and (156.36,156.29) .. (156.36,146) -- cycle ;
\draw  [color={rgb, 255:red, 0; green, 0; blue, 0 }  ,draw opacity=0.1 ][fill={rgb, 255:red, 0; green, 0; blue, 0 }  ,fill opacity=0.1 ] (84.36,31) .. controls (84.36,20.71) and (92.71,12.36) .. (103,12.36) .. controls (113.29,12.36) and (121.64,20.71) .. (121.64,31) .. controls (121.64,41.29) and (113.29,49.64) .. (103,49.64) .. controls (92.71,49.64) and (84.36,41.29) .. (84.36,31) -- cycle ;
\draw   (241.36,146) .. controls (241.36,135.71) and (249.71,127.36) .. (260,127.36) .. controls (270.29,127.36) and (278.64,135.71) .. (278.64,146) .. controls (278.64,156.29) and (270.29,164.64) .. (260,164.64) .. controls (249.71,164.64) and (241.36,156.29) .. (241.36,146) -- cycle ;
\draw   (380.36,146) .. controls (380.36,135.71) and (388.71,127.36) .. (399,127.36) .. controls (409.29,127.36) and (417.64,135.71) .. (417.64,146) .. controls (417.64,156.29) and (409.29,164.64) .. (399,164.64) .. controls (388.71,164.64) and (380.36,156.29) .. (380.36,146) -- cycle ;
\draw  [color={rgb, 255:red, 0; green, 0; blue, 0 }  ,draw opacity=0.1 ][fill={rgb, 255:red, 0; green, 0; blue, 0 }  ,fill opacity=0.1 ] (240.36,31) .. controls (240.36,20.71) and (248.71,12.36) .. (259,12.36) .. controls (269.29,12.36) and (277.64,20.71) .. (277.64,31) .. controls (277.64,41.29) and (269.29,49.64) .. (259,49.64) .. controls (248.71,49.64) and (240.36,41.29) .. (240.36,31) -- cycle ;
\draw  [color={rgb, 255:red, 0; green, 0; blue, 0 }  ,draw opacity=0.1 ][fill={rgb, 255:red, 0; green, 0; blue, 0 }  ,fill opacity=0.1 ] (379.36,30) .. controls (379.36,19.71) and (387.71,11.36) .. (398,11.36) .. controls (408.29,11.36) and (416.64,19.71) .. (416.64,30) .. controls (416.64,40.29) and (408.29,48.64) .. (398,48.64) .. controls (387.71,48.64) and (379.36,40.29) .. (379.36,30) -- cycle ;

\draw (31,139.4) node [anchor=north west][inner sep=0.75pt]  [font=\footnotesize]  {$1$};
\draw (170,139.4) node [anchor=north west][inner sep=0.75pt]  [font=\footnotesize]  {$2$};
\draw (97,24) node [anchor=north west][inner sep=0.75pt]  [font=\footnotesize]  {$3$};
\draw (255,139.4) node [anchor=north west][inner sep=0.75pt]  [font=\footnotesize]  {$1$};
\draw (394,139.4) node [anchor=north west][inner sep=0.75pt]  [font=\footnotesize]  {$2$};
\draw (253.5,24) node [anchor=north west][inner sep=0.75pt]  [font=\footnotesize]  {$3$};
\draw (392,24) node [anchor=north west][inner sep=0.75pt]  [font=\footnotesize]  {$4$};
\draw (58,148.4) node [anchor=north west][inner sep=0.75pt]    {$\tilde{r}_{1,2}$};
\draw (126,148.4) node [anchor=north west][inner sep=0.75pt]    {$\tilde{r}_{2,1}$};
\draw (19,103.4) node [anchor=north west][inner sep=0.75pt]    {$\tilde{r}_{1,3}$};
\draw (51,43.4) node [anchor=north west][inner sep=0.75pt]    {$\tilde{r}_{3,1}$};
\draw (126,43.4) node [anchor=north west][inner sep=0.75pt]    {$\tilde{r}_{3,2}$};
\draw (161,103.4) node [anchor=north west][inner sep=0.75pt]    {$\tilde{r}_{2,3}$};
\draw (75.5,-6) node [anchor=north west][inner sep=0.75pt]  [font=\footnotesize] [align=left] {Dummy};
\draw (231,-6) node [anchor=north west][inner sep=0.75pt]  [font=\footnotesize] [align=left] {Dummy};
\draw (371,-6) node [anchor=north west][inner sep=0.75pt]  [font=\footnotesize] [align=left] {Dummy};
\draw (283,148.4) node [anchor=north west][inner sep=0.75pt]    {$\tilde{r}_{1,2}$};
\draw (351,148.4) node [anchor=north west][inner sep=0.75pt]    {$\tilde{r}_{2,1}$};
\draw (230,103.4) node [anchor=north west][inner sep=0.75pt]    {$\tilde{r}_{1,3}$};
\draw (230,53) node [anchor=north west][inner sep=0.75pt]    {$\tilde{r}_{3,1}$};
\draw (400,103.4) node [anchor=north west][inner sep=0.75pt]    {$\tilde{r}_{2,4}$};
\draw (400,53) node [anchor=north west][inner sep=0.75pt]    {$\tilde{r}_{4,2}$};
\draw (543.78,76.15) node [anchor=north west][inner sep=0.75pt]  [font=\footnotesize]  {$1$};
\draw (516,39) node [anchor=north west][inner sep=0.75pt]  [font=\footnotesize] [align=left] {center-player};
\draw (483.39,27.95) node [anchor=north west][inner sep=0.75pt]  [font=\footnotesize]  {$2$};
\draw (607.26,67.04) node [anchor=north west][inner sep=0.75pt]  [font=\footnotesize]  {$3$};
\draw (491.59,134.05) node [anchor=north west][inner sep=0.75pt]  [font=\footnotesize]  {$5$};
\draw (572.37,140.51) node [anchor=north west][inner sep=0.75pt]  [font=\footnotesize]  {$4$};

\end{tikzpicture}

  \caption{(Left, Middle): \texttt{PPAD}-hardness reduction visualization {of $\cE_Q$}. (Right): A star-shaped zero-sum NMG.
  } 
  \label{fig:PPAD Hardness}
\end{figure}
}
\neurips{
\tikzset{every picture/.style={line width=0.9pt}} %

\begin{figure}
    \centering 

\begin{tikzpicture}[x=0.72pt,y=0.72pt,yscale=-0.85,xscale=0.85]

\draw    (116,43) -- (167,128) ;
\draw    (54.64,146) -- (156.36,146) ;
\draw    (45,129) -- (89,44) ;
\draw    (259,49.64) -- (260,127.36) ;
\draw    (380.36,146) -- (278.64,146) ;
\draw    (399,127.36) -- (398,48.64) ;
\draw   (530.39,82) .. controls (530.39,71.71) and (538.74,63.36) .. (549.04,63.36) .. controls (559.33,63.36) and (567.68,71.71) .. (567.68,82) .. controls (567.68,92.29) and (559.33,100.64) .. (549.04,100.64) .. controls (538.74,100.64) and (530.39,92.29) .. (530.39,82) -- cycle ;
\draw   (478.2,138.65) .. controls (478.2,128.36) and (486.55,120.02) .. (496.84,120.02) .. controls (507.14,120.02) and (515.48,128.36) .. (515.48,138.65) .. controls (515.48,148.95) and (507.14,157.29) .. (496.84,157.29) .. controls (486.55,157.29) and (478.2,148.95) .. (478.2,138.65) -- cycle ;
\draw   (558.73,145.36) .. controls (558.73,135.07) and (567.07,126.73) .. (577.37,126.73) .. controls (587.66,126.73) and (596.01,135.07) .. (596.01,145.36) .. controls (596.01,155.66) and (587.66,164) .. (577.37,164) .. controls (567.07,164) and (558.73,155.66) .. (558.73,145.36) -- cycle ;
\draw   (592.88,72.64) .. controls (592.88,62.34) and (601.22,54) .. (611.52,54) .. controls (621.81,54) and (630.16,62.34) .. (630.16,72.64) .. controls (630.16,82.93) and (621.81,91.27) .. (611.52,91.27) .. controls (601.22,91.27) and (592.88,82.93) .. (592.88,72.64) -- cycle ;
\draw   (470,33.55) .. controls (470,23.25) and (478.35,14.91) .. (488.64,14.91) .. controls (498.94,14.91) and (507.28,23.25) .. (507.28,33.55) .. controls (507.28,43.84) and (498.94,52.18) .. (488.64,52.18) .. controls (478.35,52.18) and (470,43.84) .. (470,33.55) -- cycle ;
\draw    (501.32,45.85) -- (532.63,72.68) ;
\draw    (508.03,123.37) -- (534.87,94.3) ;
\draw    (555.75,98.77) -- (569.17,127.85) ;
\draw    (592.88,72.64) -- (567.68,82) ;
\draw   (17.36,146) .. controls (17.36,135.71) and (25.71,127.36) .. (36,127.36) .. controls (46.29,127.36) and (54.64,135.71) .. (54.64,146) .. controls (54.64,156.29) and (46.29,164.64) .. (36,164.64) .. controls (25.71,164.64) and (17.36,156.29) .. (17.36,146) -- cycle ;
\draw   (156.36,146) .. controls (156.36,135.71) and (164.71,127.36) .. (175,127.36) .. controls (185.29,127.36) and (193.64,135.71) .. (193.64,146) .. controls (193.64,156.29) and (185.29,164.64) .. (175,164.64) .. controls (164.71,164.64) and (156.36,156.29) .. (156.36,146) -- cycle ;
\draw  [color={rgb, 255:red, 0; green, 0; blue, 0 }  ,draw opacity=0.1 ][fill={rgb, 255:red, 0; green, 0; blue, 0 }  ,fill opacity=0.1 ] (84.36,31) .. controls (84.36,20.71) and (92.71,12.36) .. (103,12.36) .. controls (113.29,12.36) and (121.64,20.71) .. (121.64,31) .. controls (121.64,41.29) and (113.29,49.64) .. (103,49.64) .. controls (92.71,49.64) and (84.36,41.29) .. (84.36,31) -- cycle ;
\draw   (241.36,146) .. controls (241.36,135.71) and (249.71,127.36) .. (260,127.36) .. controls (270.29,127.36) and (278.64,135.71) .. (278.64,146) .. controls (278.64,156.29) and (270.29,164.64) .. (260,164.64) .. controls (249.71,164.64) and (241.36,156.29) .. (241.36,146) -- cycle ;
\draw   (380.36,146) .. controls (380.36,135.71) and (388.71,127.36) .. (399,127.36) .. controls (409.29,127.36) and (417.64,135.71) .. (417.64,146) .. controls (417.64,156.29) and (409.29,164.64) .. (399,164.64) .. controls (388.71,164.64) and (380.36,156.29) .. (380.36,146) -- cycle ;
\draw  [color={rgb, 255:red, 0; green, 0; blue, 0 }  ,draw opacity=0.1 ][fill={rgb, 255:red, 0; green, 0; blue, 0 }  ,fill opacity=0.1 ] (240.36,31) .. controls (240.36,20.71) and (248.71,12.36) .. (259,12.36) .. controls (269.29,12.36) and (277.64,20.71) .. (277.64,31) .. controls (277.64,41.29) and (269.29,49.64) .. (259,49.64) .. controls (248.71,49.64) and (240.36,41.29) .. (240.36,31) -- cycle ;
\draw  [color={rgb, 255:red, 0; green, 0; blue, 0 }  ,draw opacity=0.1 ][fill={rgb, 255:red, 0; green, 0; blue, 0 }  ,fill opacity=0.1 ] (379.36,30) .. controls (379.36,19.71) and (387.71,11.36) .. (398,11.36) .. controls (408.29,11.36) and (416.64,19.71) .. (416.64,30) .. controls (416.64,40.29) and (408.29,48.64) .. (398,48.64) .. controls (387.71,48.64) and (379.36,40.29) .. (379.36,30) -- cycle ;

\draw (31,139.4) node [anchor=north west][inner sep=0.75pt]  [font=\footnotesize]  {$1$};
\draw (170,139.4) node [anchor=north west][inner sep=0.75pt]  [font=\footnotesize]  {$2$};
\draw (97,24) node [anchor=north west][inner sep=0.75pt]  [font=\footnotesize]  {$3$};
\draw (255,139.4) node [anchor=north west][inner sep=0.75pt]  [font=\footnotesize]  {$1$};
\draw (394,139.4) node [anchor=north west][inner sep=0.75pt]  [font=\footnotesize]  {$2$};
\draw (253.5,24) node [anchor=north west][inner sep=0.75pt]  [font=\footnotesize]  {$3$};
\draw (392,24) node [anchor=north west][inner sep=0.75pt]  [font=\footnotesize]  {$4$};
\draw (58,148.4) node [anchor=north west][inner sep=0.75pt]    {$\tilde{r}_{1,2}$};
\draw (126,148.4) node [anchor=north west][inner sep=0.75pt]    {$\tilde{r}_{2,1}$};
\draw (19,103.4) node [anchor=north west][inner sep=0.75pt]    {$\tilde{r}_{1,3}$};
\draw (51,43.4) node [anchor=north west][inner sep=0.75pt]    {$\tilde{r}_{3,1}$};
\draw (126,43.4) node [anchor=north west][inner sep=0.75pt]    {$\tilde{r}_{3,2}$};
\draw (161,103.4) node [anchor=north west][inner sep=0.75pt]    {$\tilde{r}_{2,3}$};
\draw (75.5,-6) node [anchor=north west][inner sep=0.75pt]  [font=\footnotesize] [align=left] {Dummy};
\draw (231,-6) node [anchor=north west][inner sep=0.75pt]  [font=\footnotesize] [align=left] {Dummy};
\draw (371,-6) node [anchor=north west][inner sep=0.75pt]  [font=\footnotesize] [align=left] {Dummy};
\draw (283,148.4) node [anchor=north west][inner sep=0.75pt]    {$\tilde{r}_{1,2}$};
\draw (351,148.4) node [anchor=north west][inner sep=0.75pt]    {$\tilde{r}_{2,1}$};
\draw (230,103.4) node [anchor=north west][inner sep=0.75pt]    {$\tilde{r}_{1,3}$};
\draw (230,53) node [anchor=north west][inner sep=0.75pt]    {$\tilde{r}_{3,1}$};
\draw (400,103.4) node [anchor=north west][inner sep=0.75pt]    {$\tilde{r}_{2,4}$};
\draw (400,53) node [anchor=north west][inner sep=0.75pt]    {$\tilde{r}_{4,2}$};
\draw (543.78,76.15) node [anchor=north west][inner sep=0.75pt]  [font=\footnotesize]  {$1$};
\draw (516,39) node [anchor=north west][inner sep=0.75pt]  [font=\footnotesize] [align=left] {center-player};
\draw (483.39,27.95) node [anchor=north west][inner sep=0.75pt]  [font=\footnotesize]  {$2$};
\draw (607.26,67.04) node [anchor=north west][inner sep=0.75pt]  [font=\footnotesize]  {$3$};
\draw (491.59,134.05) node [anchor=north west][inner sep=0.75pt]  [font=\footnotesize]  {$5$};
\draw (572.37,140.51) node [anchor=north west][inner sep=0.75pt]  [font=\footnotesize]  {$4$};

\end{tikzpicture}

  \caption{(Left, Middle): \texttt{PPAD}-hardness reduction visualization {of $\cE_Q$}. (Right): A star-shaped zero-sum NMG.
  } 
  \label{fig:PPAD Hardness}
  \vspace{-0.2in}
\end{figure}
}

\neurips{
\vspace{-0.10in}
}
\numbering{\section{Fictitious-Play Property}}
\tnumbering{\chapter{Fictitious-Play Property}}
\label{sec:fictitious-play}
 
\neurips{
\vspace{-0.10in}
}

In this section, we study the fictitious-play property of multi-player zero-sum games with networked separable interactions, for both the matrix and  
Markov game settings. Following the convention in \cite{monderer1996fictitious}, we refer to the games in which fictitious-play dynamics converge to the NE as the games that have the \emph{fictitious-play property}.  
We defer the matrix game case results to \Cref{appendix:fictitious}, where we have also established convergence of the well-known variant of FP, smooth FP \cite{ref:Fudenberg93}, in zero-sum NGs.  

\neurips{
\vspace{-0.03in}
}

Echoing the computational intractability of computing CCE  of zero-sum NMG unless the underlying network  structure is 
star-shaped  
in the infinite-horizon discounted setting  
(c.f. \Cref{thm:PPAD-hard-main}), we now consider the FP property in such games. 
{Note that  by \Cref{prop:MPGcond}, $\cE_Q$ is a star-shape if and only if the reward structure is a star shape and $\cN_C = \{1\}$, where player 1 is the center of the star (\Cref{fig:PPAD Hardness}), or there are only two players in zero-sum NMG. }There is already existing literature for the latter case \cite{sayin2021decentralized, sayin2022fictitious}, so we focus on the former case, which is a single-controller case where player $1$ controls the transition dynamics, i.e., $\PP(s'\given s, \ba)= \PP_1 (s'\given s, a_1)$ {for some $\PP_1$}. 
We now introduce the fictitious-play dynamics for such zero-sum NMGs. 

\neurips{
\vspace{-0.05in}
}

Each player  $i$ first initializes her beliefs of other players'  policies  as uniform distributions, and also  initializes her belief of the $Q$-value estimates  with arbitrary values. Then,  at iteration $k$, player $i$ takes the  \emph{best-response} action based on her belief of other {players' policies}  $(\hat{\pi}_{-i}^{(k)}(s^{(k)}))$, and their $Q$ beliefs $\hat{Q}_i^{(k)}(s^{(k)}, \ba)$: 
$$
a_{i}^{(k)} \in \argmax_{a_i\in\cA_i}~\hat{Q}_i^{(k)}(s^{(k)}, e_{a_i}, \hat{\pi}_{-i}^{(k)}(s^{(k)})).$$   
{Then, player $i$ implements the action $a_{i}^{(k)}$, observes other players' actions $a_{-i}^{(k)}$, and updates her beliefs as follows: for each  player  $i\in\cN$, she updates her belief of   the opponents' policies  as 
$$\hat{\pi}_{-i}^{(k+1)}(s) = \hat{\pi}_{-i}^{(k)}(s)  +  \pmb{1}(s =s^{(k)}) {\alpha^{N(s)}} (e_{a_{-i}^{(k)}} - \hat{\pi}_{-i}^{(k)}(s))$$ 
for all $s \in \cS$, with stepsize $\alpha^{N(s)}\geq 0$ where $N(s)$ is the visitation count for the state $s$}; then if $i=1$, this player $1$ 
updates the belief of $Q_{1,j}$ for all $j \in \cN/\{1\}$ and  her own $\hat{Q}_{1}(s, \ba)$ for all $s\in \cS$ as   

\neurips{
\vspace{-0.07in}
}

\arxiv{\numbering{$$
  \hat{Q}_{1,j}^{(k+1)}(s, a_1, a_j) = \hat{Q}_{1,j}^{(k)}(s,a_1, a_j)  +  \pmb{1}(s =s^{(k)}) {\beta^{N(s)}}\Bigl( r_{1,j}(s, a_1, a_j) + \gamma \sum_{{s}' \in \cS} \frac{\PP_1(s' \mid s, a_1) }{n-1} \cdot\hat{V}_{1}^{(k)}({s}') - \hat{Q}_{1,j}^{(k)}(s,a_1, a_i) \Bigr),
$$}}
\tnumbering{
\begin{align*}
      \hat{Q}_{1,j}^{(k+1)}&(s, a_1, a_j) = \hat{Q}_{1,j}^{(k)}(s,a_1, a_j)  
      \\
      &+  \pmb{1}(s =s^{(k)}) {\beta^{N(s)}}\Bigl( r_{1,j}(s, a_1, a_j) + \gamma \sum_{{s}' \in \cS} \frac{\PP_1(s' \mid s, a_1) }{n-1} \cdot\hat{V}_{1}^{(k)}({s}') - \hat{Q}_{1,j}^{(k)}(s,a_1, a_i) \Bigr),
\end{align*}
  }
\neurips{
\begin{align*}
      \hat{Q}_{1,j}^{(k+1)}&(s, a_1, a_j) = \hat{Q}_{1,j}^{(k)}(s,a_1, a_j)  
      \\
      &+  \pmb{1}(s =s^{(k)}) {\beta^{N(s)}}\Bigl( r_{1,j}(s, a_1, a_j) + \gamma \sum_{{s}' \in \cS} \frac{\PP_1(s' \mid s, a_1) }{n-1} \cdot\hat{V}_{1}^{(k)}({s}') - \hat{Q}_{1,j}^{(k)}(s,a_1, a_i) \Bigr),
\end{align*}
  }

\neurips{
\vspace{-0.12in}
}

which is based on the canonical decomposition given in \Cref{def:canonical}, where $\hat{V}_{1}^{(k)}(s) = \max_{a_1\in\cA_1} \hat{Q}_1^{(k)}(s, e_{a_1}, \hat{\pi}_{-1}^{(k)}(s))$, and {$\beta^{N(s)}\geq 0$} is the stepsize. The agent then updates  $\hat{Q}_{1}^{(k+1)}(s, \ba) = \sum_{j \in \cN/\{1\} } \hat{Q}_{1,j}^{(k+1)}(s,a_1, a_j)$, for all $s \in \cS, \ba\in\cA$.  Otherwise, if $i\neq 1$, then player $i$ updates the belief of her $\hat{Q}_{i,1}(s,\ba)$ for all $s \in \cS, \ba\in\cA$ as 

\arxiv{ \numbering{
$$
  \hat{Q}_{i,1}^{(k+1)}(s, a_i, a_1) = \hat{Q}_{i,1}^{(k)}(s,a_i, a_1)  +  \pmb{1}(s =s^{(k)}) {\beta^{N(s)}}\Bigl( r_{i,1}(s, a_i, a_1) + \gamma \sum_{{s}' \in \cS} {\PP_1(s' \mid s, a_1) } \cdot\hat{V}_{i}^{(k)}({s}') - \hat{Q}_{i,1}^{(k)}(s,a_i, a_1) \Bigr),
$$ 
}}

\neurips{
\vspace{-0.12in}
}

\tnumbering{
\begin{align*}
   \hat{Q}_{i,1}^{(k+1)}&(s, a_i, a_1) = \hat{Q}_{i,1}^{(k)}(s,a_i, a_1)  
   \\
   &+  \pmb{1}(s =s^{(k)}) {\beta^{N(s)}}\Bigl( r_{i,1}(s, a_i, a_1) + \gamma \sum_{{s}' \in \cS} {\PP_1(s' \mid s, a_1) } \cdot\hat{V}_{i}^{(k)}({s}') - \hat{Q}_{i,1}^{(k)}(s,a_i, a_1) \Bigr),   
\end{align*}
}

\neurips{
\begin{align*}
   \hat{Q}_{i,1}^{(k+1)}&(s, a_i, a_1) = \hat{Q}_{i,1}^{(k)}(s,a_i, a_1)  
   \\
   &+  \pmb{1}(s =s^{(k)}) {\beta^{N(s)}}\Bigl( r_{i,1}(s, a_i, a_1) + \gamma \sum_{{s}' \in \cS} {\PP_1(s' \mid s, a_1) } \cdot\hat{V}_{i}^{(k)}({s}') - \hat{Q}_{i,1}^{(k)}(s,a_i, a_1) \Bigr),   
\end{align*}
}

\neurips{
\vspace{-0.12in}
}

where $\hat{V}_{i}^{(k)}(s) = \max_{a_i\in\cA_i} \hat{Q}_i^{(k)}(s, e_{a_i}, \hat{\pi}_{-i}^{(k)}(s))$, 
and we let $ \hat{Q}_i^{(k+1)}(s, \ba)= \hat{Q}_{i,1}^{(k+1)}(s, a_i, a_1)$ for these $i\neq 1$. 
The overall dynamics are summarized in \Cref{alg:markov-polymatrix-game-fictitious}{, which resembles the FP dynamics for two-player zero-sum \cite{sayin2022fictitious} and identical-interest  \cite{sayin2022fictitiousb} MGs. Now we are ready to present the convergence guarantees.}  

\begin{restatable}{assumption}{STEPSIZEMZNMG}
\label{assumption:stepsize2}
The sequences of step sizes $\left\{\alpha^k \in(0,1]\right\}_{k \geq 0}$ and $\left\{\beta^k \in(0,1]\right\}_{k \geq 0}$ satisfy the following conditions: (1) $\sum_{k=0}^{\infty} \alpha^k=\infty$, $\sum_{k=0}^{\infty} \beta^k=\infty$, and $\lim _{k \rightarrow \infty} \alpha^k=\lim _{k \rightarrow \infty} \beta^k=0$; (2) $\lim _{k \rightarrow \infty} \frac{\beta^k}{\alpha^k}=0$, indicating that the rate at which the beliefs about $Q$-functions are updated is slower than the rate at which the beliefs about policies are updated. 
\end{restatable}

\neurips{
\vspace{-0.07in}
}

\arxiv{
\begin{algorithm}[!t]
\caption{Fictitious play in zero-sum NMGs {of a star-shape} ($i$-th player)}
\label{alg:markov-polymatrix-game-fictitious}
\begin{algorithmic}
\STATE{{Choose} $\hat{\pi}_{j}^{(0)}(s)$ to be a uniform distribution for all $j \in \cN/\{i\}$ and  $s \in \cS$}
\STATE{{Choose} $\hat{Q}_{i}^{(0)}(s, \ba)$ to be an arbitrary value for all $s \in \cS$ and $\ba \in \cA$}
\STATE{{Choose} $N(s) = 0$ for all $s \in \cS$}
\FOR{each timestep $k = 0, 1, \dots$}
\STATE {Observe the current state $s^{(k)}$ and update the visitation number as $N(s^{(k)}) = N(s^{(k)})+1$}
\STATE {Take action $a_{i}^{(k)} \in \argmax_{a_i\in\cA_i}\hat{Q}_i^{(k)}(s^{(k)}, e_{a_i}, \hat{\pi}_{-i}^{(k)}(s^{(k)}))$}
\STATE {Update ${V}_{i}$-belief for all $i \in \cN$ and $s \in \cS$ as 
\begin{align}
    \hat{V}_{i}^{(k)}(s) = \max_{a_i\in\cA_i} \hat{Q}_i^{(k)}(s, e_{a_i}, \hat{\pi}_{-i}^{(k)}(s)) \label{eqref:V-def}
\end{align} } 
\STATE {Observe other players'  action $a_{-i}^{(k)}$} 
\STATE {Update the belief as $\hat{\pi}_{-i}^{(k+1)}(s) = \hat{\pi}_{-i}^{(k)}(s)  +  \pmb{1}(s = s^{(k)}) \alpha^{N(s)} (e_{a_{-i}^{(k)}} - \hat{\pi}_{-i}^{(k)}(s)) $ for all $s \in \cS$}
\IF {player $i$ = 1}
\STATE {Update the $Q_{1,j}$-belief for all $j \in \cN/\{1\}$, $s \in \cS$, and $\ba\in\cA$ as 
\numbering{$$
{\small\begin{aligned}
  \hat{Q}_{1,j}^{(k+1)}(s, a_1, a_j) &= \hat{Q}_{1,j}^{(k)}(s,a_1, a_j)  + \pmb{1}(s = s^{(k)}) \beta^{N(s)}\Bigl( r_{1,j}(s, a_1, a_j) 
  + \gamma \sum_{{s}' \in \cS} \frac{1}{n-1} \PP_1(s' \mid s, a_1) \hat{V}_{1}^{(k)}({s}') - \hat{Q}_{1,j}^{(k)}(s,a_1, a_j) \Bigr)   
\end{aligned}}
$$}
\tnumbering{$$
{\begin{aligned}
  \hat{Q}_{1,j}^{(k+1)}(s, a_1, a_j) &= \hat{Q}_{1,j}^{(k)}(s,a_1, a_j)  + \pmb{1}(s = s^{(k)}) \beta^{N(s)}\Bigl( r_{1,j}(s, a_1, a_j) 
  \\
  &\qquad+ \gamma \sum_{{s}' \in \cS} \frac{1}{n-1} \PP_1(s' \mid s, a_1) \hat{V}_{1}^{(k)}({s}') - \hat{Q}_{1,j}^{(k)}(s,a_1, a_j) \Bigr)   
\end{aligned}}
$$}
}
\STATE { Update the $Q_{1}$-belief for all $s \in \cS$ and $\ba\in\cA$ as 
\numbering{$$
\begin{aligned}
  \hat{Q}_{1}^{(k+1)}(s, \ba) &= \sum_{j \in \cN/\{1\} } \hat{Q}_{1,j}^{(k+1)}(s,a_1, a_j)
\end{aligned}
$$
}
\tnumbering{
\begin{align*}
  \hat{Q}_{1}^{(k+1)}(s, \ba) &= \sum_{j \in \cN/\{1\} } \hat{Q}_{1,j}^{(k+1)}(s,a_1, a_j)
\end{align*}
}
}
\ELSE
\STATE {Update the $Q_{i}$-belief for all $s \in \cS$ and $\ba\in\cA$ as $$
\begin{aligned}
  \hat{Q}_i^{(k+1)}(s, \ba) = &\hat{Q}_{i,1}^{(k+1)}(s, a_i, a_1) = \hat{Q}_{i,1}^{(k)}(s,a_i, a_1)  +  \pmb{1}(s = s^{(k)}) \beta^{N(s)}\Bigl( r_{i, 1}(s, a_i, a_1) 
\\
 &\qquad\qquad\qquad\qquad\qquad\qquad \qquad + \gamma \sum_{{s}' \in \cS} \PP_1({s}' \mid s, a_1) \hat{V}_{i}^{(k)}({s}') - \hat{Q}_{i,1}^{(k)}(s,a_i, a_1) \Bigr)   
\end{aligned}
$$
}
\ENDIF
\STATE{State transitions~~{$s^{(k+1)}\sim \PP_1(\cdot\given s^{(k)},a^{(k)}_1)$}}
\ENDFOR
\end{algorithmic}
\end{algorithm}
 }

\begin{restatable}{theorem}{fppMZNMG}
\label{thm:fppMZNMG_main} 
{Suppose  \Cref{assumption:stepsize2} holds  and \Cref{alg:markov-polymatrix-game-fictitious}  visits every state infinitely often with probability $1$.  Then, for a star-shaped multi-player zero-sum NMG, the belief $(\hat{\pi}^{(k)})_{k\geq 0}$ converges to a Markov stationary NE and the belief  $(\hat{Q}^{(k)})_{k\geq 0}$ converges to {the corresponding} NE value of the zero-sum NMG with probability 1, as $k \to \infty$.}
\end{restatable}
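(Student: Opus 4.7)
The plan is to adapt the two-timescale stochastic-approximation framework used by \cite{sayin2022fictitious,sayin2022fictitiousb} for two-player zero-sum MGs and identical-interest MGs, exploiting the fact that in a star-shaped zero-sum NMG the ``auxiliary game'' at each state becomes a \emph{zero-sum polymatrix game} for which the FP property is already available (from our matrix-game results in \Cref{appendix:fictitious}, echoing \cite{ewerhart2020fictitious}). Because player $1$ is the sole controller, $\PP(s'\mid s,\ba)=\PP_1(s'\mid s,a_1)$, and the canonical decomposition in \Cref{def:canonical} distributes the continuation value only onto edges incident to player~$1$. Consequently, for any fixed value estimate $V$, the induced per-state normal-form game with payoffs $\{Q_{i,j}^V(s,\cdot,\cdot)\}$ is itself a zero-sum polymatrix game on the same star graph, which will be the backbone of the argument.

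First, I would recast the updates as asynchronous two-timescale stochastic approximation in the spirit of Borkar, treating $\hat\pi$ as the fast iterate (stepsize $\alpha^{N(s)}$) and $\hat Q$ as the slow iterate (stepsize $\beta^{N(s)}$ with $\beta^k/\alpha^k\to 0$ by \Cref{assumption:stepsize2}). Because the assumption guarantees that every state is visited infinitely often almost surely, one can invoke the asynchronous two-timescale results (e.g., Borkar's Theorem or its extension used in \cite{sayin2021decentralized}) to reduce the analysis to two decoupled limiting o.d.e.s: on the fast timescale, $\hat Q$ is essentially frozen and the belief dynamics $\hat\pi_{-i}^{(k+1)}(s)=\hat\pi_{-i}^{(k)}(s)+\alpha^{N(s)}(e_{a_{-i}^{(k)}}-\hat\pi_{-i}^{(k)}(s))$ coincides, at each fixed $s$, with continuous-time fictitious play in the stage game with payoffs $\{Q_{i,j}^{\hat V}(s,\cdot,\cdot)\}_{(i,j)\in\cE_Q}$; on the slow timescale, $\hat Q$ evolves as a stochastic approximation to a Bellman-type operator evaluated at the stage-game NE value.

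Second, I would verify that the fast-timescale stage game is a zero-sum NG (by \Cref{sec:def_MZNG} applied to the star graph) and invoke the matrix-game FP property for zero-sum polymatrix games (\Cref{appendix:fictitious}) to conclude that $\hat\pi^{(k)}(s)\to \mathrm{NE}(\{Q_{i,j}^{\hat V}(s,\cdot,\cdot)\})$ and, crucially, that the empirical payoffs $\hat Q_i^{(k)}(s,e_{a_i},\hat\pi_{-i}^{(k)}(s))$ converge to the (unique) minimax value $\mathrm{val}_i(s;\hat V)$. This uniqueness of the value (even though equilibrium strategies may not be unique) is what makes the slow-timescale operator well defined. Then, feeding this back, the slow-timescale o.d.e. for $\hat Q$ becomes a contraction in the sup-norm with modulus $\gamma$, since it is the Bellman operator
\begin{align*}
(\mathcal{T}Q)_{i,j}(s,a_i,a_j) = r_{i,j}(s,a_i,a_j)+\gamma\sum_{s'}\widetilde{\PP}_{i,j}(s'\mid s,a_i,a_j)\,\mathrm{val}_i(s';V_Q),
\end{align*}
where $V_Q(s):=\mathrm{val}_i(s;V)$ extracted from the canonical decomposition, and where $\widetilde{\PP}_{i,j}$ reflects the single-controller dynamics together with the $\lambda_{i,j}(s)=1/|\cE_{Q,i}|$ weights. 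Standard arguments then give $\hat Q^{(k)}\to Q^\star$ and hence $\hat V^{(k)}\to V^\star$, the Markov stationary NE value, with the $\hat\pi^{(k)}$ converging to a stationary NE supported at each state by the stage-game NE of the limiting payoffs.

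The main obstacle I anticipate is two-fold. First, establishing the fast-timescale FP convergence in the \emph{multi-player polymatrix} stage game requires the zero-sum polymatrix FP property, which only holds because the stage game inherits the star/zero-sum structure from \Cref{prop:MPGcond} --- one has to check this inheritance carefully for every intermediate $\hat V$, which is precisely why we built \Cref{sec:def_MZNG} to hold for \emph{all} $V$. Second, handling the \emph{asynchrony} (only the visited state $s^{(k)}$ gets updated, with visitation-count--dependent stepsizes $\alpha^{N(s)},\beta^{N(s)}$) requires care: one must verify the per-state visit-frequency conditions needed by Borkar's asynchronous two-timescale theorem, which is where the hypothesis ``every state is visited infinitely often w.p.\ 1'' enters. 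Once these two pieces are in place, the contraction of the slow-timescale Bellman operator delivers global convergence, and uniqueness of the NE value (though not the NE policy) yields convergence of $\hat Q^{(k)}$ to the NE value, completing the proof.
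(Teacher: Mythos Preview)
Your overall architecture---two-timescale stochastic approximation, fast-timescale FP on a stage game, slow-timescale Bellman contraction---matches the paper's. But there is a genuine gap in your fast-timescale step that the paper has to work around explicitly.

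You claim that ``the fast-timescale stage game is a zero-sum NG (by \Cref{sec:def_MZNG} applied to the star graph)'' and then plan to invoke the matrix-game polymatrix FP property. This does not hold: \Cref{sec:def_MZNG} guarantees that $\{Q_{i,j}^V\}$ forms a zero-sum polymatrix game only when the \emph{same} $V$ is plugged into every player's $Q$-function. In \Cref{alg:markov-polymatrix-game-fictitious}, each player~$i$ updates $\hat Q_{i,1}$ (or $\hat Q_{1,j}$) using its \emph{own} value estimate $\hat V_i^{(k)}$, and in general $\sum_{i\in\cN}\hat V_i^{(k)}(s)\neq 0$. Computing the sum of payoffs in the stage game with the canonical decomposition gives $\sum_i \hat Q_i^{(k)}(s,\ba)=\gamma\sum_{s'}\PP_1(s'\mid s,a_1)\sum_{i\in\cN}\hat V_i^{(k)}(s')$, which is a nontrivial function of $a_1$. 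So the intermediate stage game is \emph{not} zero-sum polymatrix, and you cannot directly apply the FP property from \Cref{appendix:fictitious}.

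The paper addresses exactly this by building the deviation from zero-sum, $h(\bQ(s)):=\max_\mu\bigl|\sum_i \mu_i^\intercal \bQ_i(s)\bmu\bigr|$, into the Lyapunov function: $L_\lambda(\pi,\bQ,s)=\bigl(\text{sum of best-response values}-\lambda h(\bQ(s))\bigr)_+$ with $1<\lambda<1/\gamma$. This Lyapunov function decreases along the fast-timescale differential inclusion even when the game is not zero-sum (Claim~2 in the proof), and then a separate coupled argument (Claim~3, via \Cref{lem:sayincontraction}) shows $h(\hat\bQ^{(k)}(s))\to 0$, i.e., the stage game becomes \emph{asymptotically} zero-sum. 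Only after this is established does the paper proceed to the slow-timescale argument. A second, smaller point: you assert uniqueness of the stage value $\mathrm{val}_i$ without justification, but zero-sum polymatrix games do not have unique NE values in general (see \cite{cai2016zero} and \Cref{rem:challenges_general}). The paper obtains uniqueness specifically by reducing the star-shaped auxiliary game to a \emph{two-player} minimax problem between player~$1$ and the coalition of leaf players (Propositions~\ref{prop:prop_star_MZNG} and \ref{prop:NE_unique}), defining $\mathrm{Val}_1$ and $\mathrm{Val}_{-1}$ accordingly, and showing the associated Shapley operators $\cT_1,\cT_{-1}$ contract in a tailored $\|\cdot\|_{\max,1,\max}$ norm (Claim~5). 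Your generic ``$\mathrm{val}_i$ from polymatrix FP'' does not supply this.
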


\neurips{
\vspace{-0.07in}
}
{We defer the proof to \Cref{sec:FP_inf_case} due to space constraints. Note that to illustrate the idea,  we only present the result for the \emph{model-based} case, 
i.e., when the transition dynamics  $\PP$ is known.
With this result, it is direct to extend to the model-free and learning case, where $\PP$ is not known \cite{sayin2022fictitious,sayin2022fictitiousb,baudin2022fictitious}, still  using the tool of stochastic approximation \cite{benaim2005stochastic}. See \Cref{appendix:fictitious} for more details.  
\begin{remark}[Challenges for analyzing  general cases]\label{rem:challenges_general}
One might ask why we had to focus on a star-shaped structure. First, for general networked  structures, even in the matrix-game case, it is known that the NE \emph{values}  of a zero-sum NG may not be unique \cite{cai2016zero}. Hence, suppose one performs \emph{Nash-value iteration}, i.e., solving for the NE of the stage game and conducting backward induction, this value iteration process does  not converge  in general as the number of backward steps increases, since the solution at each stage is not even unique, and there may not exist a unique fixed point. This is in stark contrast to the $\max$ and $\max\min$ operators in the value iteration updates for single-player and two-player zero-sum cases, respectively. By exploiting a star-shaped structure, we managed to reformulate a {\it minimax}  optimization problem when solving each stage game, which makes the corresponding value iteration operator \emph{contracting},   and thus iterating it infinitely converges to the unique  fixed point. Second, suppose there exists some other network structure (other than star-shaped ones) that also leads to a contracting value iteration operator, then for a fixed constant $\gamma$, the fixed point (which corresponds to the Markov stationary CCE/NE of the zero-sum NMG) becomes unique and can be computed efficiently, which contradicts our hardness result in  \Cref{thm:PPAD-hard-main}. Indeed, it was the exclusion of a star-shaped structure in \Cref{thm:PPAD-hard-main} that inspired us to consider this structure in proving the convergence of FP dynamics. That being said, we note that having a contracting value iteration operator is only a \emph{sufficient} condition for the FP dynamics to converge. It would be interesting to explore other structures that enjoy the FP property for reasons beyond this contraction property. We leave this as an immediate future work.     
\end{remark}
}

\arxiv{
\begin{remark}[Stationary equilibrium computation via {\it value iteration}]
 Following up on \Cref{rem:challenges_general}, we know that with a star-shape topology, one can   formulate 
 a contracting value iteration operator, and develop 
 Nash-value iteration algorithm accordingly, {to 
 find the stationary NE  in this star-shaped case efficiently.} This folklore result supplements the hardness results in \Cref{{thm:PPAD-hard-main}}, where stationary equilibria computation in cases other than the star-shaped ones are computationally intractable. This thus completes the landscape of stationary equilibria computation in zero-sum NMGs. We provide the value-iteration process in \Cref{alg:Value-iteration} and a more detailed discussion in \Cref{sec:FP_inf_case}.   
\end{remark}
}

Next, we present another positive result in light of the hardness in \Cref{thm:PPAD-hard-main}, regarding the computation of \emph{non-stationary} equilibria in multi-player zero-sum NMGs.

\numbering{\section{Non-Stationary NE Computation}}
\tnumbering{\chapter{Non-Stationary NE Computation}}
\label{section:OMWU}

We now focus on computing an (approximate)  Markov \emph{non-stationary} equilibrium in zero-sum NMGs.   
In particular, {we show that when relaxing the stationarity requirement, not only CCE, but NE, can be computed efficiently.}  Before introducing our algorithm, we first recall the folklore result that approximating Markov non-stationary NE in \emph{infinite-horizon} discounted settings can be achieved by finding approximate Markov NE in \emph{finite-horizon} settings, with a large enough horizon length (c.f. \Cref{prop:truncation}). Hence, we will focus on the finite-horizon setting from now on. 

{Before delving into the details of our algorithm, we introduce the notation  $\bQ_{h,i}(s)$ and $\bQ_h(s)$ for $h \in [H], i \in \cN, s \in \cS$ as follows: 
 \begin{align*} 
     &\bQ_{ h, i}(s) := (Q_{ h, i,1}(s), \dots, Q_{h, i,i-1 }(s), \bm{0}, Q_{ h, i,i+1}(s) \dots, Q_{ h, i, n}(s)) \in  \RR^{|\cA_i| \times \sum_{i \in \cN} |\cA_i| }   
     \\     
     &\bQ_h(s) := ((\bQ_{ h,1}(s))^\intercal, (\bQ_{h,2}(s))^\intercal, \dots, (\bQ_{h,n}(s))^\intercal)^\intercal \in \RR^{\sum_{i \in \cN} |\cA_i| \times \sum_{i \in \cN} |\cA_i| }.
\end{align*}
Here, $Q_{h,i,j}$ represents an estimate of the equilibrium value function with canonical decomposition (\Cref{def:canonical}).}  
{Hereafter, we similarly define the notation of $\bQ_{h, i}^\pi$ and $\bQ_h^\pi$.  Our algorithm is based on value iteration, and  iterates three main steps from $h = H$ to 1 as follows: (1) $Q$-value computation: compute $Q_{h,i,j}$, which estimates the equilibrium $Q$-value \arxiv{function }with a canonical decomposition form;  {in particular, when $\cN_C \neq \emptyset$,  $Q_{h,i,j}$ is {updated \arxiv{as follows }for all $s\in\cS, (i, j) \in \cE_Q, a_i\in\cA_i$, and $a_j\in\cA_j$}:}
\numbering{{\neurips{\fontsize{7.5}{6}\selectfont}
\begin{align*}
    &Q_{h,i,j}(s, a_i, a_j) = r_{h,i,j}(s, a_i, a_j) + \sum_{s' \in \cS} \biggl(\frac{1}{|\cE_{Q, i}|} \pmb{1}(i \in \cN_C) \FF_{h, i} (s' \given s, a_i)  +  \pmb{1}(j \in \cN_C)\FF_{h, j}(s' \given s,a_j)\biggr)  V_{h+1, i}(s'),       
\end{align*}}
\neurips{
\vspace{-2pt}
}
}
\tnumbering{
\begin{align*}
    Q_{h,i,j}(s, a_i, a_j) = r_{h,i,j}(s, a_i, a_j) + \sum_{s' \in \cS}& \biggl(\frac{1}{|\cE_{Q, i}|} \pmb{1}(i \in \cN_C) \FF_{h, i} (s' \given s, a_i)  
    \\
    &+  \pmb{1}(j \in \cN_C)\FF_{h, j}(s' \given s,a_j)\biggr)  V_{h+1, i}(s'),       
\end{align*}
}
(2) Policy update: update $\pi_{h}(s)$ with an \textsf{NE-ORACLE}: finding (approximate)-NE of some zero-sum NG $(\cG, \cA, (Q_{h,i,j}(s))_{(i, j) \in \cE_{Q}})$ for all $s \in \cS$, and (3) Value function update: compute $V_{h, i}$, which estimates the equilibrium value function as follows for all $s\in\cS, i \in \cN$: $V_{h, i}(s) = \pi_{h, i}^{\intercal}(s) \bQ_{h,i}(s) \pi_{h}(s)$. \arxiv{When $\cN_C = \emptyset$, we can  similarly calculate $Q_{h,i}$ in the first step.} {The overall procedure is summarized in \Cref{alg:meta-algorithm}.} }
\arxiv{
\begin{algorithm}
\caption{A value-iteration-based  algorithm for finding NE in zero-sum NMGs}
\label{alg:meta-algorithm}
\begin{algorithmic}
\STATE{Update $V_{H+1,i}(s) = 0$ for all $s\in\cS$ and $i\in\cN$}
\FOR{step $h = H, H-1, \dots, 1$}
\IF{$\cN_C \neq \emptyset$} 
\STATE{Update $Q_{h,i, j}(s, a_i, a_j)$ for all $(i, j) \in \cE_Q, s \in \cS, a_i \in \cA_i, a_j \in \cA_j$  as 
\begin{equation}
\numbering{
\begin{aligned}
    &Q_{h,i,j}(s, a_i, a_j)= r_{h,i,j}(s, a_i, a_j) + \sum_{s' \in \cS} \biggl(\frac{1}{|\cE_{Q, i}|} \pmb{1}(i \in \cN_C) \FF_{h, i} (s' \given  s, a_i) 
     +  \pmb{1}(j \in \cN_C)\FF_{h, j}(s'\given s,a_j)\biggr) \cdot V_{h+1, i}(s')        
\end{aligned}
}
\tnumbering{
\begin{aligned}
    Q_{h,i,j}(s, a_i, a_j)&= r_{h,i,j}(s, a_i, a_j) + \sum_{s' \in \cS} \biggl(\frac{1}{|\cE_{Q, i}|} \pmb{1}(i \in \cN_C) \FF_{h, i} (s' \given  s, a_i)
    \\
    &\qquad  +  \pmb{1}(j \in \cN_C)\FF_{h, j}(s'\given s,a_j)\biggr) \cdot V_{h+1, i}(s')        
\end{aligned}
}
 \label{eqn:Qupdates}
\end{equation}}
\ELSIF{$\cN_C = \emptyset$} 
\STATE{ Update $Q_{h,i, j}(s, a_i, a_j)$ for all $(i, j) \in \cE_Q, s \in \cS, a_i \in \cA_i, a_j \in \cA_j$ as
\begin{equation}
    \numbering{\begin{aligned}
    &Q_{h,i,j}(s, a_i, a_j)= r_{h,i,j}(s, a_i, a_j) + \sum_{s' \in \cS} \left(\frac{1}{|\cE_{Q, i}|} \pmb{1}(j \in \cE_{Q, i}) \FF_{h, o} (s'\given s)\cdot V_{h+1, i}(s')\right)    
    \end{aligned}}
    \tnumbering{\begin{aligned}
    Q_{h,i,j}(s, a_i, a_j)&= r_{h,i,j}(s, a_i, a_j) 
    \\
    &\qquad+ \sum_{s' \in \cS} \left(\frac{1}{|\cE_{Q, i}|} \pmb{1}(j \in \cE_{Q, i}) \FF_{h, o} (s'\given s)\cdot V_{h+1, i}(s')\right)    
    \end{aligned}}
    \label{eqn:Qupdates2}
\end{equation}
}
\ENDIF
\STATE{Update  $\pi_{h}(s) = \textsf{NE-ORACLE}(\cG, \cA, (Q_{h,i,j}(s))_{(i, j) \in \cE_{Q}})$ for all $ s\in \cS $}
\STATE{Update  $V_{h,i}(s)$ for all $i \in \cN, s \in \cS$ as 
{\begin{equation}
\begin{aligned}
    &V_{h, i}(s)=  \sum_{j \in \cE_{Q, i}}  {\pi}_{h, i}^{\intercal}(s) Q_{h,i,j}(s) {\pi}_{h,j}(s)
\end{aligned}
\label{eqn:Vupdates}
\end{equation}}
}
\ENDFOR
\end{algorithmic} 
\end{algorithm} 
}
\arxiv{\numbering{
\begin{table}
\centering
\begin{tabular}{c|c|c} 
\hline
              & Regularization & Regularization-free  \\ 
\hline
Optimism    &\Centerstack{OMWU  \cite{ao2022asynchronous}: \\ $\tilde{\cO} (1/{\epsilon})$ last-iterate} & \Centerstack{OMD  \cite{anagnostides2022last}:\\ $\tilde{\cO}({1}/{\epsilon^2})$ best-iterate \\ Asymptotic last-iterate\\\\
\cite{daskalakis2021near,anagnostides2022uncoupled,farina2022near,anagnostides2022last}: \\
$\tilde{\cO} ({1}/{\epsilon})$ average-iterate + Marginalization}                 \\ 
\hline
Optimism-free & \Centerstack{ \Cref{alg:polymatrix-game-QRE}:\\ $\tilde{\cO}(1 / \epsilon^4 )$ last-iterate  \\\\ \Cref{alg:polymatrix-game-decreasing-regular}: \\$\tilde{\cO}(1/\epsilon^6)$ last-iterate  }      &    \Centerstack{Any no-regret learning algorithm with\\   {$\tilde{\cO}(1/\epsilon^2)$} average-iterate + Marginalization} 
\\
\hline
\end{tabular}
\caption{Iteration complexities for finding an $\epsilon$-NE for a zero-sum NG with $(\cG= (\cN, \cE), \cA, (r_{i,j})_{(i, j) \in \cE})$ with different \textsf{NE-ORACLE} subroutines. $\tilde{\cO}(\cdot)$ omits  polylog terms and polynomial dependencies on $n, \|\br\|_{\max}, R$.}
\label{table:MZNG-epsilon-NE}
\end{table}
}}
\tnumbering{
\begin{table}
\centering
\renewcommand{\arraystretch}{1.5} %
\newcolumntype{C}[1]{>{\centering\arraybackslash}m{#1}} %
\begin{tabular}{c|C{3.5cm}|C{5.5cm}} 
\hline
              & \textbf{Regularization} & \textbf{Regularization-free}  \\ 
\hline
\textbf{Optimism}    & OMWU \cite{ao2022asynchronous}: $\tilde{\cO}(1/{\epsilon})$ last-iterate & OMD \cite{anagnostides2022last}: $\tilde{\cO}(1/{\epsilon^2})$ best-iterate \\ 
& & Asymptotic last-iterate \cite{daskalakis2021near,anagnostides2022uncoupled,farina2022near,anagnostides2022last}: $\tilde{\cO}(1/{\epsilon})$ average-iterate + Marginalization \\ 
\hline
\textbf{Optimism-free} & \Cref{alg:polymatrix-game-QRE}: $\tilde{\cO}(1/{\epsilon^4})$ last-iterate  & \Cref{alg:polymatrix-game-decreasing-regular}: $\tilde{\cO}(1/{\epsilon^6})$ last-iterate \\
& & Any no-regret learning algorithm with $\tilde{\cO}(1/{\epsilon^2})$ average-iterate + Marginalization \\
\hline
\end{tabular}
\caption{Iteration complexities for finding an $\epsilon$-NE for a zero-sum NG with $(\cG = (\cN, \cE), \cA, (r_{i,j})_{(i, j) \in \cE})$ with different \textsf{NE-ORACLE} subroutines. $\tilde{\cO}(\cdot)$ omits polylog terms and polynomial dependencies on $n, \|\br\|_{\max}, R$.}
\label{table:MZNG-epsilon-NE}
\end{table}
}

\paragraph{\textsf{NE-ORACLE} and iteration complexity.} {The \textsf{NE-ORACLE} in \Cref{alg:meta-algorithm} can be instantiated by several different algorithms that can find an NE in a zero-sum NG. Depending on the algorithms, the convergence guarantees can be either in terms of average-iterate, best-iterate, or last-iterate. {Note that for algorithms with average-iterate convergence,} one may additionally need to {\it marginalize} the output joint policy, i.e., the approximate CCE, and combine them as a {\it product} policy that is an approximate NE (\Cref{prop:polymatrix}). For those with best-/last-iterate convergence, by contrast, the best-/last-iterate is already in product form, and one can directly output it as an approximate NE. Moreover, last-iterate convergence is known to be a more favorable metric than the average-iterate one in learning in games \cite{mertikopoulos2018cycles,daskalakis2018training,daskalakis2018last,bailey2018multiplicative,mertikopoulos2018optimistic}, which is able to characterize the {\it day-to-day} behavior of the iterates and implies the stability of the update rule.  
Hence, one may prefer to have last-iterate convergence for solving zero-sum N(M)Gs. To this end, two algorithmic ideas may be useful: adding regularization to the payoff matrix \cite{cen2021fast,cen2022faster,cen2022independent,ao2022asynchronous,pattathil2022symmetric}, and/or using the idea of optimism \cite{daskalakis2018last,wei2021last,weilinear}. {Recent results \cite{anagnostides2022last,ao2022asynchronous} have instantiated the ideas of {\it optimism-only} and {\it optimism + regularization}, respectively, for best-/last-iterate convergence in zero-sum polymatrix games.} We additionally established results for the idea of {\it regularization-only} in obtaining last-iterate convergence in these games. Specifically, we propose to study the vanilla Multiplicative Weight Update (MWU) algorithm \cite{arora2012multiplicative} in the regularized zero-sum NG, as tabulated in \Cref{alg:polymatrix-game-QRE}. We have also introduced a variant with diminishing regularization, and summarize the update rule in \Cref{alg:polymatrix-game-decreasing-regular}. 

\arxiv{
{In terms of iteration complexity, if neither optimism nor regularization is used, then one can resort to any no-regret learning algorithm, together with a marginalization step, to obtain the standard result of $\tilde\cO(1/\epsilon^2)$  \cite{cesa2006prediction}, thanks to the equilibrium-collapse result in \cite{cai2016zero} (and our generalized version that accommodates approximation in \Cref{prop:polymatrix}). 
If both regularization and optimism are used, then the optimistic MWU (OMWU) algorithm in \cite{ao2022asynchronous} leads to the fast rate of $\tilde\cO(1/\epsilon)$, in terms of last-iterate convergence. If only optimism is used, the  optimistic mirror descent (OMD) algorithm in \cite{anagnostides2022last} gives $\tilde\cO(1/\epsilon^2)$ best-iterate and asymptotic last-iterate for finding NE in zero-sum polymatrix games. Moreover, optimistic algorithms, e.g., those in \cite{daskalakis2021near,anagnostides2022uncoupled,farina2022near,anagnostides2022last}, can also achieve a fast rate of $\tilde\cO(1/\epsilon)$ in terms of average-iterate convergence (due to $\tilde\cO(1)$ regret guarantees). Finally, if only regularization is used, our algorithms (\Cref{alg:polymatrix-game-QRE,alg:polymatrix-game-decreasing-regular}) can achieve  $\tilde\cO(1/\epsilon^4)$ and $\tilde\cO(1/\epsilon^6)$ last-iterate convergence to NE, respectively. Note that for best-iterate and last-iterate convergence, no marginalization is needed when outputting the approximate equilibrium policies. We summarize the results in \Cref{table:MZNG-epsilon-NE}.}}

{Given the results above, aggregating $\epsilon$-approximate NE for the zero-sum NGs    $(\cG, \cA, (Q_{h,i,j}(s))_{(i, j) \in \cE_{Q}})$ for all $h \in [H], i \in \cN, s \in \cS$ provides an $H\epsilon$-approximate NE for the corresponding zero-sum NMG. We have the following formal result. 

\begin{restatable}{proposition}{NEORACLETONE}
\label{prop:NEoracle-ne}
    Suppose that for all $h \in [H], i \in \cN, s \in \cS$, $\textsf{NE-ORACLE}(\cG, \cA, (Q_{h,i,j}(s))_{(i, j) \in \cE_{Q}})$ provides an $\epsilon_{h, s}$-approximate  NE for the zero-sum NG $(\cG, \cA, (Q_{h,i,j}(s))_{(i, j) \in \cE_{Q}})$ in \Cref{alg:meta-algorithm}. Then, the output policy $\pi$ in \Cref{alg:meta-algorithm} is an $(\sum_{h \in [H]}$ $\max_{s \in\cS}\epsilon_{h, s})$-approximate NE for the corresponding zero-sum NMG $(\cG = (\cN, \cE_Q), \cS, \cA, H, (\PP_h)_{h \in [H]},$ $(r_{h,i,j}(s))_{(i, j) \in \cE_Q, s \in \cS})$.
\end{restatable}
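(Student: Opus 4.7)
The plan is to proceed by backward induction on the horizon $h$, establishing that the per-stage suboptimality of \Cref{alg:meta-algorithm}'s output policy $\pi$ accumulates additively through a standard value-iteration error-propagation argument. I will maintain two value sequences: the \emph{algorithm values} $V_{h,i}(s)$ from \Cref{eqn:Vupdates}, and the \emph{true policy values} $V_{h,i}^{\pi}(s)$ of the output joint policy $\pi$. The first key step is to verify that $V_{h,i} \equiv V_{h,i}^{\pi}$. By \Cref{def:canonical} applied under the NMG structure (\Cref{prop:MPGcond}), the canonical polymatrix payoffs computed in \Cref{eqn:Qupdates} satisfy
\begin{align*}
\sum_{j\in\cE_{Q,i}} Q_{h,i,j}(s,a_i,a_j) \;=\; r_{h,i}(s,\ba) + \sum_{s'\in\cS}\PP_h(s'\,|\,s,\ba)\,V_{h+1,i}(s'),
\end{align*}
so, since the \textsf{NE-ORACLE} returns a product policy, $V_{h,i}(s)=\EE_{\ba\sim\pi_h(s)}\bigl[\sum_j Q_{h,i,j}(s,a_i,a_j)\bigr]$ is precisely the Bellman one-step expansion of $V_{h,i}^{\pi}$ under the inductive hypothesis $V_{h+1,i}\equiv V_{h+1,i}^{\pi}$; the base case at $h=H+1$ is trivial.

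Next, define the best-response value $\tilde{V}_{h,i}(s):=\max_{\mu_i}V_{h,i}^{\mu_i,\pi_{-i}}(s)$. Since $\pi_{-i}$ is Markov, $\tilde{V}_{h,i}$ solves the single-agent Bellman optimality equation against the fixed opponents' policy $\pi_{h,-i}$, namely
\begin{align*}
\tilde{V}_{h,i}(s) \;=\; \max_{a_i\in\cA_i}\Bigl[\,r_{h,i}(s,a_i,\pi_{h,-i}(s))+\sum_{s'\in\cS}\PP_h(s'\,|\,s,a_i,\pi_{h,-i}(s))\,\tilde{V}_{h+1,i}(s')\,\Bigr].
\end{align*}
Adding and subtracting $V_{h+1,i}$ inside the bracket, invoking the $\epsilon_{h,s}$-NE property of $\pi_h(s)$ in the zero-sum NG $(\cG,\cA,(Q_{h,i,j}(s))_{(i,j)\in\cE_Q})$---which says exactly that $\max_{a_i} e_{a_i}^{\intercal}\bQ_{h,i}(s)\pi_h(s)-V_{h,i}(s)\leq\epsilon_{h,s}$---and using subadditivity of $\max$, I would obtain the one-step recursion
\begin{align*}
\tilde{V}_{h,i}(s)-V_{h,i}(s) \;\leq\; \epsilon_{h,s}+\max_{s'\in\cS}\bigl(\tilde{V}_{h+1,i}(s')-V_{h+1,i}(s')\bigr).
\end{align*}

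Setting $\Delta_h:=\max_{i\in\cN,\,s\in\cS}(\tilde{V}_{h,i}(s)-V_{h,i}(s))$ with $\Delta_{H+1}=0$, the recursion telescopes to $\Delta_h\leq \sum_{h'=h}^{H}\max_{s\in\cS}\epsilon_{h',s}$. Combined with the identity $V_{1,i}\equiv V_{1,i}^{\pi}$ from the first step, this yields the claimed bound at $h=1$; the same telescoping at every $h$ gives the Markov \emph{perfect} guarantee since each partial sum is dominated by $\sum_{h\in[H]}\max_{s\in\cS}\epsilon_{h,s}$. The argument is largely routine once set up; the only step that genuinely exploits the assumption---and hence the main obstacle---is the identification $Q_{h,i}(s,\ba)=\sum_{j}Q_{h,i,j}(s,a_i,a_j)$ via \Cref{def:canonical}, where the NMG structure (rather than an arbitrary MG) is crucial. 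A minor bookkeeping point is to handle the $\cN_C=\emptyset$ branch of \Cref{alg:meta-algorithm} in parallel with the $\cN_C\neq\emptyset$ branch, but the analysis is verbatim identical.
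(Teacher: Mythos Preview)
Your proposal is correct and follows essentially the same approach as the paper: a backward-induction error-propagation argument that splits the best-response advantage at step $h$ into a one-step Matrix-NE-Gap term (bounded by $\epsilon_{h,s}$ via the oracle guarantee) and a future-steps term that telescopes into $\textsf{NE-Gap}_{h+1}$. The only cosmetic difference is that the paper packages the recursion directly in terms of $\textsf{NE-Gap}_h$ by inserting the intermediate policy $((\pi'_{h,i},\pi_{h,-i}),\pi_{h+1:H})$, whereas you work with the best-response value $\tilde V_{h,i}$ and its Bellman optimality equation; the identification $V_{h,i}\equiv V_{h,i}^{\pi}$ that you prove explicitly is stated in the paper just before the proposition rather than inside the proof.
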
}

The proof of \Cref{prop:NEoracle-ne} is deferred to \Cref{appendix:OMWU}. In light of \Cref{prop:NEoracle-ne} and \Cref{table:MZNG-epsilon-NE}, we obtain \Cref{table:MZNMG-epsilon-NE}, which summarizes the iteration complexities required to find an $\epsilon$-NE for zero-sum NMGs, with different \textsf{NE-ORACLE} subroutines. {Note that the iteration complexities are all polynomial in $H,n,|\cS|$, and inherit the order of dependencies on $\epsilon$ from \Cref{table:MZNG-epsilon-NE} for the matrix-game case. In particular,  \Cref{alg:meta-algorithm} \arxiv{in conjunction} with the OMWU in \cite{ao2022asynchronous} yields the fast rate of $\tilde{\mathcal{O}}(1/\epsilon)$ for the last iterate.}  

\arxiv{\begin{theorem}
    \Cref{alg:meta-algorithm} with the \textsf{NE-ORACLE} subroutine being  \Cref{alg:polymatrix-game-QRE} or \Cref{alg:polymatrix-game-decreasing-regular} requires no more than $\tilde{\cO}({H^9n |\cS|}/{\epsilon^4})$ or $\tilde{\cO}(H^{19}  n^{3}|\cS|/\epsilon^6)$ iterations 
    to achieve an $\epsilon$-NE at the last iterate, respectively. 
\end{theorem}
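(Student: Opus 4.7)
The plan is to reduce the theorem to \Cref{prop:NEoracle-ne} together with the per-stage iteration complexities tabulated in \Cref{table:MZNG-epsilon-NE}, while carefully tracking how the size of the auxiliary-game payoffs $\bQ_h(s)$ scales with the horizon $H$. I would first observe that, by \Cref{prop:NEoracle-ne}, to produce an $\epsilon$-NE of the zero-sum NMG it suffices to return an $(\epsilon/H)$-NE from each call of the \textsf{NE-ORACLE} subroutine, since the additive error aggregates as $\sum_{h\in[H]}\max_{s\in\cS}\epsilon_{h,s} \le H\cdot(\epsilon/H) = \epsilon$. Hence the total iteration count is (number of oracle calls) $\times$ (iterations per call), where the number of oracle calls is exactly $H\cdot|\cS|$ because \Cref{alg:meta-algorithm} invokes \textsf{NE-ORACLE} once for each $(h,s)\in[H]\times\cS$.

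Next, I would bound the size of the stage-game payoffs. Using the reward bound $r_{h,i}\in[0,R]$ and the update rule in \Cref{eqn:Qupdates} (or \Cref{eqn:Qupdates2}), a backward induction on $h$ yields $\|V_{h,i}\|_\infty \le C\cdot HR$ and hence $\|\bQ_h(s)\|_{\max}\le C\cdot HR$ for some absolute constant $C$, where we also use $\sum_j \lambda_{i,j}(s)=1$ and $|\cE_{Q,i}|\le n$. Thus each stage zero-sum NG has payoff matrices of max-norm $\tilde{O}(HR)$, which is the value of $\|\br\|_{\max}$ that should be substituted into the hidden polynomial dependencies of \Cref{table:MZNG-epsilon-NE}.

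Now I would invoke the per-call bounds. For \Cref{alg:polymatrix-game-QRE}, the last-iterate rate of $\tilde{\cO}(1/\epsilon^4)$ carries polynomial dependencies on $n$ and $\|\br\|_{\max}$; tracking these through the regularized MWU analysis, the per-call bound to reach an $(\epsilon/H)$-NE becomes $\tilde{\cO}(H^{k_1}/\epsilon^4)$ with $k_1=8$ (four factors of $H$ from $(\epsilon/H)^{-4}$ and four from the polynomial in $\|\br\|_{\max}=\tilde{O}(HR)$), together with linear dependence on $n$. Multiplying by the $H\cdot|\cS|$ calls gives the claimed $\tilde{\cO}(H^9 n |\cS|/\epsilon^4)$ total. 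An identical argument with \Cref{alg:polymatrix-game-decreasing-regular} and its $\tilde{\cO}(1/\epsilon^6)$ per-call rate — which incurs a larger polynomial in $n$ and $\|\br\|_{\max}$ owing to the diminishing regularization schedule — gives $\tilde{\cO}(H^{18} n^3 / \epsilon^6)$ per call and hence $\tilde{\cO}(H^{19} n^3 |\cS|/\epsilon^6)$ overall.

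The main obstacle will be the careful bookkeeping of the polynomial-in-$H$ factors. Two subtleties need attention: (i) the stage-game payoff bound $\tilde{O}(HR)$ must be substituted correctly into each $\|\br\|_{\max}$ occurrence inside the $\tilde{\cO}(\cdot)$ of \Cref{table:MZNG-epsilon-NE}, and the exact exponent hidden there (which we trace back to the convergence analyses of \Cref{alg:polymatrix-game-QRE,alg:polymatrix-game-decreasing-regular}) is what dictates the final $H^9$ vs.\ $H^{19}$; (ii) there is no error amplification from the backward induction because \Cref{prop:NEoracle-ne} already gives an \emph{additive} (not multiplicative) aggregation of the per-stage errors — this is a direct consequence of the NMG polymatrix structure and the boundedness of $\pi_{h,i}(s)\in\Delta(\cA_i)$, and I would verify it by induction on $h$ mirroring the proof of that proposition. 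Once these two points are in place, the bound follows by simple arithmetic.
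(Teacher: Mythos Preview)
Your proposal is correct and follows essentially the same approach as the paper: invoke \Cref{prop:NEoracle-ne} to reduce to $(\epsilon/H)$-approximate NE in each of the $H|\cS|$ stage games, bound the stage-game payoffs by $\|\bQ_h(s)\|_{\max}\le HR$ via backward induction, and then substitute this bound and the target accuracy $\epsilon/H$ into the per-call complexities $\tilde{\cO}(nR^4/\epsilon^4)$ and $\tilde{\cO}(n^3R^{12}/\epsilon^6)$ from \Cref{thm:mwu-fixed} and \Cref{thm:mwu-diminish}, respectively. Your $H$-bookkeeping ($H^4\cdot H^4\cdot H=H^9$ and $H^{12}\cdot H^6\cdot H=H^{19}$) exactly matches the paper's Example~3 in \Cref{appendix:OMWU}.
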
}

\arxiv{ \numbering{
\begin{table}
\centering

\begin{tabular}{c|c|c} 
\hline  
& Regularization & Regularization-free   
\\
\hline
Optimism  & \Centerstack{\Cref{alg:meta-algorithm} + OMWU \cite{ao2022asynchronous}: \\ $\tilde{\cO}({1}/{\epsilon})$ last-iterate} & \Centerstack{\Cref{alg:meta-algorithm} + OMD \cite{anagnostides2022last}: \\ $\tilde{\cO}({1}/{\epsilon^2})$ best-iterate\\\\
\Cref{alg:meta-algorithm} + \cite{daskalakis2021near,anagnostides2022uncoupled,farina2022near,anagnostides2022last}:\\
$\tilde{\cO}({1}/{\epsilon})$ average-iterate + Marginalization}                 \\ 
\hline
Optimism-free & \Centerstack{ \Cref{alg:meta-algorithm} + \Cref{alg:polymatrix-game-QRE}: \\
$\tilde{\cO}({1}/{\epsilon^4})$ last-iterate \\\\ \Cref{alg:meta-algorithm} + \Cref{alg:polymatrix-game-decreasing-regular}: \\
$\tilde{\cO}({1}/{\epsilon^6})$ last-iterate}      &    \Centerstack{\Cref{alg:meta-algorithm} + Any no-regret 
 learning algorithm \\with {$\tilde{\cO}(1/\epsilon^2)$} average-iterate +  Marginalization}
\\
\hline
\end{tabular}
\caption{Iteration complexities for finding an $\epsilon$-NE for a zero-sum NMG with different \textsf{NE-ORACLE} subroutines. $\tilde{\cO}(\cdot)$ omits  polylog terms and polynomial dependencies on $n,H,|\cS|, R, \norm{\br}_{\max}$.} 
\label{table:MZNMG-epsilon-NE} 
\end{table}
}}

\tnumbering{
\begin{table}
    \centering
    \renewcommand{\arraystretch}{1.5}
    \newcolumntype{M}[1]{>{\centering\arraybackslash}m{#1}}
    
    \begin{tabular}{c|M{5.5cm}|M{5.5cm}}
        \hline
        & \textbf{Regularization} & \textbf{Regularization-free} \\
        \hline
        \textbf{Optimism} & 
        \Cref{alg:meta-algorithm} + OMWU \cite{ao2022asynchronous}: $\tilde{O}(1/\epsilon)$ last-iterate &
        \begin{tabular}{@{}c@{}}
            \Cref{alg:meta-algorithm} + OMD \cite{anagnostides2022last}: \\ $\tilde{O}(1/\epsilon^2)$ best-iterate \\
           \Cref{alg:meta-algorithm} + \cite{daskalakis2021near,anagnostides2022uncoupled,farina2022near,anagnostides2022last}: \\
            $\tilde{O}(1/\epsilon)$ average-iterate \\ + Marginalization
        \end{tabular} \\
        \hline
        \textbf{Optimism-free} & 
        \begin{tabular}{@{}c@{}}
            \Cref{alg:meta-algorithm} + \Cref{alg:polymatrix-game-QRE}: \\
            $\tilde{O}(1/\epsilon^4)$ last-iterate \\
            \Cref{alg:meta-algorithm} + \Cref{alg:polymatrix-game-decreasing-regular}: \\
            $\tilde{O}(1/\epsilon^6)$ last-iterate
        \end{tabular} &
                \begin{tabular}{@{}c@{}}
        \Cref{alg:meta-algorithm}  + Any no-regret 
        \\
       learning algorithm with 
        \\
        $\tilde{O}(1/\epsilon^2)$ average-iterate 
        \\+ Marginalization 
        \end{tabular}
        \\
        \hline
    \end{tabular}
    \caption{Iteration complexities for finding an $\epsilon$-NE for a zero-sum NMG with different NE-ORACLE subroutines. $\tilde{O}(\cdot)$ omits polylog terms and polynomial dependencies on $n,H,|S|, R, ||r||_{max}$.}
    \label{table:MZNMG-epsilon-NE}
\end{table}
}

\numbering{\section{Experimental Results}}
\tnumbering{\chapter{Experimental Results}}
\label{appendix:experiment}
We now present experimental results for the learning dynamics/algorithms investigated before.

\numbering{\subsection{Fictitious-play property of zero-sum NMGs}}
\tnumbering{\section{Fictitious-play property of zero-sum NMGs}}
We present an experiment for the fictitious-play property in \Cref{sec:fictitious-play}. We experimented  with an infinite-horizon $\gamma$-discounted  zero-sum NMG $(\cG = (\cN, \cE_Q), \cS, \cA, \PP, (r_{i})_{i \in \cN}, \gamma)$, where $\cN = \{0,1,2\}$, $\cS = \{0, 1\}$, $\cA_i = \{0, 1\}$ for all $i \in \cN$, $\PP(s' \given s, \ba) := \FF_{0}(s'\given s, a_0)$, $\FF_{0}(0\given 0, 1) = \FF_0(0\given 1, 0) = 0.8$, $\FF_0(0\given 1,1) = \FF_0(0\given 0,0) = 0.2$, $\br_{0,1}(0) = -\br_{1,0}(0)^\intercal =   \begin{bmatrix}
1 & 2 \\
4 & 3 
\end{bmatrix} $, $\br_{0,2}(0) = -\br_{2,0}(0)^\intercal =   \begin{bmatrix}
4 & 3 \\
2 & 1 
\end{bmatrix} $, $\br_{0,1}(1) = -\br_{1,0}(1)^\intercal =   \begin{bmatrix}
4 & 3 \\
2 & 1 
\end{bmatrix} $, $\br_{0,2}(1) = -\br_{2,0}(1)^\intercal =   \begin{bmatrix}
1 & 2 \\
4 & 3 
\end{bmatrix} $, $\alpha_t= \frac{1}{t^{0.55}}$, $\beta_t = \frac{1}{t^{0.75}}$, and $\gamma = 0.99$. We iterated $2^{28}$ times for the experiments.  
The result is demonstrated in  \Cref{fig:experiment} (a). Note that the gray and black lines indicate the sum of the values for states 0 and 1, which asymptotically go to 0. 
\arxiv{\numbering{\begin{figure}[h!] 
\centering
\begin{subfigure}{.5\textwidth}
  \centering
  \includegraphics[scale = 0.45]{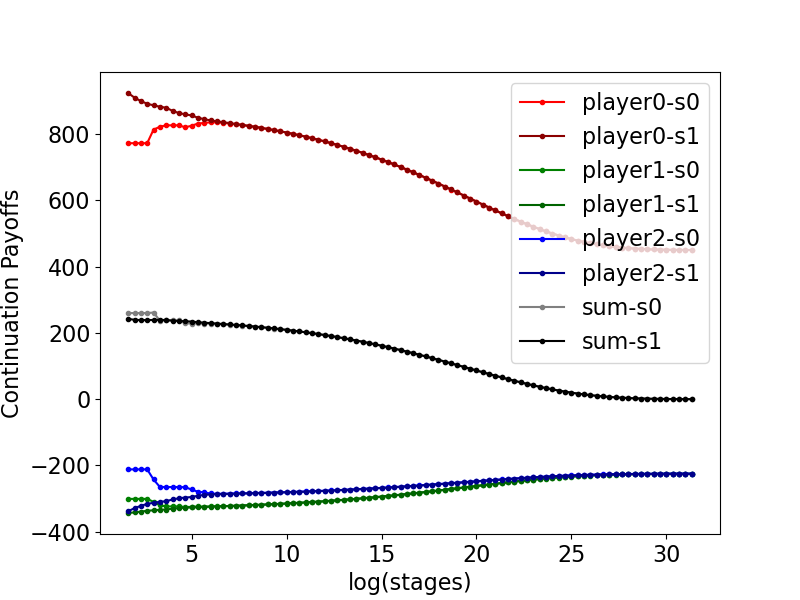}
  \caption{Value function estimates plot}
\end{subfigure}%
\begin{subfigure}{.5\textwidth}
  \centering
  \includegraphics[scale = 0.45]{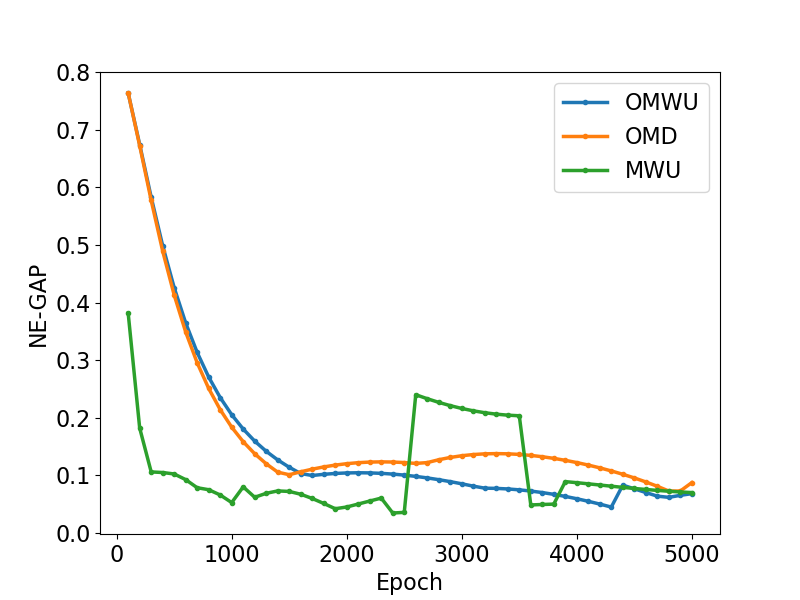}
  \caption{$\max_{i\in\cN, s \in \cS}\left(\max_{\pi_i'\in\Delta(\cA_i)} V_{1,i}^{\pi_i', \pi_{-i}}(s) - V_{1,i}^{\pi}(s)\right)$ plot}
\end{subfigure}%
\caption{(a) Fictitious play experiment. The red and dark red lines indicate player 0's value function estimates for states 0 and 1, respectively. The green and dark green lines indicate player 1's value function estimates for states 0 and 1, respectively. The blue and dark blue lines indicate player 2's value function estimates for states 0 and 1, respectively. The gray and black lines indicate the sum of each player's value function estimates for states 0 and 1, respectively. $x$-axis denotes the logarithm with base 2 of the number of iterates (stages) and $y$-axis denotes the value function estimates. (b) Value-iteration-based algorithms with (OMWU, OMD, MWU) \textsf{NE-ORACLE} subroutines. The blue, orange, and green lines indicate $\max_{i\in\cN, s \in \cS}\left(\max_{\pi_i'\in\Delta(\cA_i)} V_{1,i}^{\pi_i', \pi_{-i}}(s) - V_{1,i}^{\pi}(s)\right)$ value of the OMWU, OMD, and MWU \textsf{NE-ORACLE}, respectively. $x$-axis denotes  the number of iteration of \textsf{NE-ORACLE} subroutine, and $y$-axis denotes  the \textsf{NE-Gap}.}
\label{fig:experiment}
\end{figure}
}}
{\tnumbering{\begin{figure}[h!] 
\centering
\begin{subfigure}{.5\textwidth}
  \centering
  \includegraphics[scale = 0.35]{fictitious_play_fiture.png}
  \caption{Value function estimates plot}
\end{subfigure}%
\begin{subfigure}{.5\textwidth}
  \centering
  \includegraphics[scale = 0.35]{MZNMG-gap-per-algorithm3000.png}
  \caption{$\max_{i\in\cN, s \in \cS}\left(\max_{\pi_i'\in\Delta(\cA_i)} V_{1,i}^{\pi_i', \pi_{-i}}(s) - V_{1,i}^{\pi}(s)\right)$}
\end{subfigure}%
\caption{(a) Fictitious play experiment. The red and dark red lines indicate player 0's value function estimates for states 0 and 1, respectively. The green and dark green lines indicate player 1's value function estimates for states 0 and 1, respectively. The blue and dark blue lines indicate player 2's value function estimates for states 0 and 1, respectively. The gray and black lines indicate the sum of each player's value function estimates for states 0 and 1, respectively. $x$-axis denotes the logarithm with base 2 of the number of iterates (stages) and $y$-axis denotes the value function estimates. (b) Value-iteration-based algorithms with (OMWU, OMD, MWU) \textsf{NE-ORACLE} subroutines. The blue, orange, and green lines indicate $\max_{i\in\cN, s \in \cS}\left(\max_{\pi_i'\in\Delta(\cA_i)} V_{1,i}^{\pi_i', \pi_{-i}}(s) - V_{1,i}^{\pi}(s)\right)$ value of the OMWU, OMD, and MWU \textsf{NE-ORACLE}, respectively. $x$-axis denotes  the number of iteration of \textsf{NE-ORACLE} subroutine, and $y$-axis denotes  the \textsf{NE-Gap}.}
\label{fig:experiment}
\end{figure}
}}
\neurips{\begin{figure}[h!] 
\centering
\begin{subfigure}{.5\textwidth}
  \centering
  \includegraphics[scale = 0.35]{fictitious_play_fiture.png}
  \caption{Value function estimates plot}
\end{subfigure}%
\begin{subfigure}{.5\textwidth}
  \centering
  \includegraphics[scale = 0.35]{MZNMG-gap-per-algorithm3000.png}
  \caption{$\max_{i\in\cN, s \in \cS}\left( V_{1,i}^{\dagger, \pi_{-i}}(s) - V_{1,i}^{\pi}(s)\right)$ plot}
\end{subfigure}%
\caption{(a) Fictitious play experiment. The red and dark red lines indicate player 0's value function estimates for states 0 and 1, respectively. The green and dark green lines indicate player 1's value function estimates for states 0 and 1, respectively. The blue and dark blue lines indicate player 2's value function estimates for states 0 and 1, respectively. The gray and black lines indicate the sum of each player's value function estimates for states 0 and 1, respectively. $x$-axis denotes the logarithm with base 2 of the number of iterates (stages) and $y$-axis denotes the value function estimates. (b) Value-iteration-based algorithms with (OMWU, OMD, MWU) \textsf{NE-ORACLE} subroutines. The blue, orange, and green lines indicate $\max_{i\in\cN, s \in \cS}\left(\max_{\pi_i'\in\Delta(\cA_i)} V_{1,i}^{\pi_i', \pi_{-i}}(s) - V_{1,i}^{\pi}(s)\right)$ value of the OMWU, OMD, and MWU \textsf{NE-ORACLE}, respectively. $x$-axis denotes  the number of iteration of \textsf{NE-ORACLE} subroutine, and $y$-axis denotes  the \textsf{NE-Gap}.}
\label{fig:experiment}
\end{figure}
}

\numbering{\subsection{Value-iteration with  different \textsf{NE-ORACLE}s}}
\tnumbering{\section{Value-iteration with  different \textsf{NE-ORACLE}s}}
We present an experiment for zero-sum NMGs with different \textsf{NE-ORACLE}s in \Cref{section:OMWU}. We experimented with a zero-sum NMG $(\cG = (\cN, \cE_Q), \cS, \cA, H, (\PP_h)_{h \in [H]}, (r_{h, i})_{h \in [H], i \in \cN})$ where $\cN = \{0,1,2\}$, $\cE_Q = \{(1,2), (1,0), (2,0)\}$, $\cS$ = \{0,1\}, $\cA_i = \{0, 1\}$ for all $i \in \cN$, $H = 5$, $\PP_h(s' \given  s, \ba) := \sum_{i \in \cN}\FF_{i}(s'\given s, a_i)$, $\FF_0(s'\given s, a_0) = 0$, $\FF_{1}(s'\given s, a_1) = \frac{1}{3} \PP_{1} (s' \given s, a_1)$, $\FF_{2}(s'\given s, a_2) = \frac{2}{3} \PP_2(s'\given s, a_2)$, $\PP_1(0\given 0,1) = \PP_1(0\given 1, 0) = \PP_2(0\given 0,1) = \PP_2(0 \given 1, 0)$ is determined randomly, $\PP_1(0\given 1,1) = \PP_1(0\given 0, 0) = \PP_2(0\given 1,1) = \PP_2(0 \given 0, 0)$ is determined randomly, and $r_{h,i}$ is determined randomly such that it makes $(\cG, \cA, (r_{h,i}(s))_{i \in \cN})$ a zero-sum NG for every $h$. We set $\tau = 0.05$ for both OMWU and MWU. We set $\eta = 1/(36H)$ for both OMWU and OMD. We iterated the algorithm for $T = 5000$ times. The result is plotted in \Cref{fig:experiment} (b).

\arxiv{
\numbering{\section{Concluding Remarks}}
\tnumbering{\chapter{Concluding Remarks}}

We studied a new class  of  non-cooperative Markov games, i.e., multi-player zero-sum  Markov games with networked separable interactions.  We established the structural properties of reward and transition dynamics under this model,  and showed that marginalizing Markov CCE per state leads to Markov NE. Furthermore, we established the computational hardness of finding Markov stationary CCE in infinite-horizon discounted zero-sum NMGs, unless the underlying network has a star topology, which is in contrast to the tractability of CCE computation for the normal-form case of zero-sum NMGs{, i.e., zero-sum polymatrix games}. In light of this hardness result, we focused on: 1) star-shaped zero-sum NMGs, and then developed fictitious-play learning dynamics that provably converge to Markov stationary NE; 2) non-stationary Markov NE computation for general zero-sum NMGs,  with finite-iteration last-iterate convergence guarantees. 

Our work has opened up many venues  for future research.  
Firstly, further study of the fictitious-play property for general zero-sum NMGs beyond star-shaped cases would be interesting, as {\tt PPAD}-hardness does not necessarily  imply the impossibility of \emph{asymptotic} convergence of FP dynamics  in these cases. 
Additionally, it would be interesting to study model-free online and/or offline RL in zero-sum NMGs, with sample-complexity/regret guarantees, as well as to explore the networked structure beyond the zero-sum setting in non-cooperative Markov games. 
 }

\bibliography{cparkbib}
\bibliographystyle{unsrt}
\clearpage

\onecolumn

~\\
\centerline{{\fontsize{14}{14}\selectfont \textbf{Supplementary Materials}}}

\vspace{30pt}
In \Cref{appendix:literature-review}, we provide a detailed literature review. In \Cref{appendix:MPGdef}, we provide deferred proofs for the results on the MZNMG formulation in \Cref{sec:MPGdef}. In \Cref{appendix:PPADhard-proof}, we provide deferred proofs for the {\tt PPAD}-hardness of computing Markov stationary CCE in MZNMGs, in \Cref{subsection:PPAD}. In \Cref{appendix:fictitious}, we provide deferred proofs for the fictitious-play property results, in  \Cref{sec:fictitious-play}. In \Cref{appendix:stochastic-approx}, we provide a brief background on stochastic approximation. In \Cref{appendix:OMWU},  
we provide deferred proofs for the results regarding Markov non-stationary NE computation in \Cref{section:OMWU}. 

\vspace{10pt}

\appendix
\numbering{\section{Related Work}}
\tnumbering{\chapter{Related Work}}
\label{appendix:literature-review}

\paragraph{Tabular Markov game.}  
Markov games (MG), which are also referred to as stochastic games,  were initially introduced by \cite{shapley1953stochastic} and have since garnered significant attention within the multi-agent RL literature \cite{busoniu2008comprehensive,zhang2021multi}. Early research, such as \cite{littman1994markov, littman2001friend, littman2001value,hu2003nash},   established asymptotic  convergence of various Q-learning-based dynamics in solving MGs. In contrast, recent studies have mainly focused  on developing  more sample-efficient methods for learning equilibria in two-player zero-sum Markov games, as demonstrated by \cite{bai2020provable, sidford2020solving, xie2020learning, bai2020near, liu2021sharp, zhao2021provably,zhang2020model,li2022minimax}.

Substantial work has also been conducted on learning correlated equilibrium   and coarse correlated equilibrium  in Markov games, including model-based  \cite{liu2021sharp,subramanian2023robustness} and model-free approaches  \cite{song2021can, jin2021v, mao2022improving, mao2022provably}. 
A recently developed algorithm by \cite{daskalakis2022complexity} is able  to learn Markov non-stationary CCE while overcoming the curse of multi-agents, whose sample  complexity has recently been improved in \cite{cui2023breaking,wang2023breaking}. Other studies within the full-information feedback setting have focused on proving convergence to CE/CCE and sublinear individual regret \cite{erez2022regret}. 

\paragraph{Complexity of equilibrium computation.}  
{Computational challenges can occur for Nash equilibrium-finding in even matrix/normal-form games in general. 
Computing such equilibria has been proven  to be {\tt PPAD}-complete 
even for three/two-player general-sum normal-form games \cite{daskalakis2009complexity,chen2009settling}, which is believed to be computationally hard  \cite{papadimitriou1994complexity,rubinstein2017settling}. Nevertheless, linear programming enables the computation of Nash equilibria in {\it two-player zero-sum} games   and {\it zero-sum polymatrix} games \cite{cai2016zero}. Alternative solution concepts including  (coarse) correlated equilibria  are also more favorable than NE when it comes to  computational complexity, as they can also be efficiently computed  \cite{papadimitriou2008computing,cesa2006prediction}. More recently, \cite{daskalakis2022complexity,jin2022complexity} have shown that for infinite-horizon discounted Markov games, computing even the  coarse correlated equilibrium that is Markov stationary can be {\tt PPAD}-hard, which is in stark contrast to the stateless normal-form game case. For a recent  overview of the computational complexity for equilibrium computation,  we refer to \cite{daskalakis2022non}.}

\paragraph{Games with network structure.}
Network Games  \cite{jackson2015games} and Graphical Games \cite{kearns2013graphical} have been extensively studied in the literature {to model the networked interactions among agents}. \cite{kearns2013graphical} introduced treeNash, an algorithm for computing NE in tree-structured graphical games. The algorithm by \cite{kakade2003correlated} can find  correlated equilibrium  in graphical games. Polymatrix games, wherein edges represent two-player games, constitute a particularly intriguing type of network games.  \cite{bergman1998separable} introduced the concept of {\it separable} zero-sum games, where a player's payoff is the sum of their payoffs from pairwise  interactions with other players, and provided equilibrium-finding algorithms. \cite{daskalakis2009network} demonstrated that graphical games with edges representing zero-sum games (also called \emph{pairwise zero-sum polymatrix} games) can be reduced to two-person zero-sum games, streamlining the NE computation for this case. \cite{cai2011minmax} established that separable zero-sum multiplayer games can be transformed into pairwise constant-sum polymatrix games. \cite{cai2016zero} revealed properties of NE in separable zero-sum games, such as non-unique NE payoffs and the reduction of NE computation to CCE computation by  marginalizing the equilibria. 

More recently, researchers have proposed several NE-finding methods that do not depend on linear programming (LP). \cite{leonardos2021exploration} employed a continuous-time version of Q-learning to approximate NE in weighted zero-sum polymatrix games, \cite{anagnostides2022last} utilized optimistic mirror descent to find NE in constant-sum polymatrix games, and \cite{ao2022asynchronous} applied optimistic multiplicative weight updates to find NE in zero-sum polymatrix games.

{In the setting with state transitions,  the networked structure has also been exploited recently in multi-agent RL \cite{zhang2018fully,zhang2018networked,qu2022scalable,liu2022scalable,zhang2023global,zhou2023convergence}, where either the communication or interaction, in terms of reward or transition, were assumed to have some networked structure. However, most of  these results were focused on  the {\it cooperative}  setting (or more generally the {\it potential} game setting). We instead focus on a multi-player while {\it non-cooperative}, specifically, {\it zero-sum}, setting.} 

{In the extensive form games literature, \cite{piliouras2022fast} proved that optimistic gradient ascent provides $\cO(1/T)$ convergence rate to NE in the network zero-sum extensive form games. }

\paragraph{Entropy regularization.} 
Entropy regularization is a common  approach used in reinforcement learning  to foster  exploration and enable faster convergence. Recently, both empirical evidence and provable convergence rate guarantees for entropy-regularized MDPs have been established \cite{williams1991function, peters2010relative,neu2017unified, haarnoja2017reinforcement,mei2020global,cen2022fast}. In addition to its applications in single-agent RL, entropy regularization has been investigated in game-theoretic settings, including two-player zero-sum matrix games \cite{cen2022faster}, multi-player zero-sum games \cite{leonardos2021exploration, ao2022asynchronous},  potential games \cite{cen2022independent}, and extensive-form games \cite{liu2023the,sokota2023a}. 

\paragraph{Fictitious play.}
Fictitious play is a  classical learning dynamics in game theory introduced by \cite{brown1951iterative}, in which players develop a belief in their opponent's policy and use a greedy approach to the belief they hold about the opponent's policy. (Stochastic) fictitious-play property ((S-)FPP) is a property of a game that ensures the convergence of (stochastic) fictitious play to a Nash equilibrium of the game. In the case of static games, (S-)FPP holds for two-player zero-sum games \cite{robinson1951iterative}, 2x$n$ games \cite{miyasawa1961convergence, berger2005fictitious}, $n$-player potential games \cite{monderer1996fictitious}, zero-sum polymatrix games  \cite{ewerhart2020fictitious}. However, FPP normally does not hold for 3x3 games \cite{shapley1964some}. For stochastic games \cite{shapley1953stochastic}, (S-)FPP holds for zero-sum and identical payoff games \cite{sayin2022fictitious, leslie2020best, baudin2022fictitious, sayin2021decentralized, sayin2022global}. Recently, \cite{sayin2022global} proved that any stochastic game with turn-based controllers on state transitions has S-FPP, as long as the stage payoffs have S-FPP. For a more detailed overview of fictitious play in stochastic/Markov  games, we refer to \cite{ozdaglar2021independent}.

{
\paragraph{Comparison with independent work \cite{kalogiannis2023zero}.} While preparing our work, we noticed an independent preprint \cite{kalogiannis2023zero}, which also studied the polymatrix zero-sum structure in Markov games. Encouragingly, they also showed the collapse of Markov CCE to Markov NE and thus their computational tractability.  However, there are several key differences that may be summarized as follows. First, the model in \cite{kalogiannis2023zero} is \emph{defined} as a combination of zero-sum polymatrix {\it reward functions} and \emph{switching-controller} dynamics, under which the desired property of equilibria collapse holds; in contrast, we define the model based on the payoffs of the {\it auxiliary games} at each state, which, by our \Cref{prop:MPGcond}, is \emph{equivalent} to the reward being zero-sum polymatrix and the dynamics being \emph{ensemble} (c.f. \Cref{rem:implication}). Our ensemble dynamics covers the switching controller case, and our model is more general in this sense. Second, our proof for equilibria collapse is different from that in \cite{kalogiannis2023zero}, which is based on characterizing the solution to some nonlinear program. We instead directly exploit the property of ensemble transition dynamics in marginalizing the joint policies{, and its effect on dynamic programming in finding the equilibria}.  
Third, {in terms of equilibrium computation, we investigate a series of value-iteration-based algorithms, based on both existing and our new algorithms for solving zero-sum polymatrix games,} with finite-iteration last-iterate convergence guarantees, including an  $\tilde{\cO}(1/\epsilon)$ rate result. In comparison, \cite{kalogiannis2023zero} uses existing algorithms for {\it learning} Markov CCE due to equilibria collapse, i.e., \cite{daskalakis2022complexity}. Finally, we have additionally provided hardness results for \emph{stationary} equilibria computation in infinite-horizon discounted settings, fictitious-play dynamics with convergence guarantees, as well as several examples of our model.} 
\numbering{\section{Omitted Details in \Cref{sec:MPGdef}}}
\tnumbering{\chapter{Omitted Details in \Cref{sec:MPGdef}}}

\label{appendix:MPGdef}
    \numbering{\subsection{Omitted proof for \Cref{prop:MPGcond} and \Cref{def:canonical}}}
    \tnumbering{\section{Omitted proof for \Cref{prop:MPGcond} and \Cref{def:canonical}}}    
\MZNMGcond* 
\begin{proof} 
     Firstly, in the case with  $\cN_C \neq \emptyset$, we prove that if an MG satisfies the decomposability of the reward function $r_i(s, a_i, \cdot)$ and the transition {dynamics} $\PP(s'\mid s,\cdot)$, then $Q_{i,j}^V$ can be decomposed  as follows: 
\begin{align*}
    &Q_i^V(s, \ba) = r_i(s, \ba) + \gamma  \EE_{s' \sim \PP(\cdot \mid s, \ba)}V(s') 
     = \sum_{j\in\cE_{Q, i}} r_{i,j}(s, a_i,a_j) + \gamma \sum_{j\in\cN_C} \sum_{s' \in \cS} \FF_j(s' \mid s, a_j) V(s')
    \\
    & = \sum_{j\in\cE_{Q, i}}\left( r_{i,j}(s, a_i,a_j) + \gamma \sum_{s' \in \cS} \left(\lambda_{i,j}(s)  \pmb{1}(i \in \cN_C)\FF_i(s' \mid s, a_i) + \pmb{1}(j \in \cN_C)\FF_j(s' \mid s, a_j)\right) V(s')\right)\\
    & =: \sum_{j \in \cE_{Q, i}} Q_{i,j}^V(s, a_i, a_j),
\end{align*}
for any non-negative $(\lambda_{i,j}(s))_{(i,j)\in\cE_Q}$ such that $\sum_{j \in \cE_{Q, i}}\lambda_{i,j}(s) = 1$, since by definition  $\cN_C \subseteq \cE_{Q, i}$ for every $i \in \cN$. 

In the case when $\cN_C = \emptyset$, we prove that if the MG satisfies decomposability of $r_i(s, a_i, \cdot)$ and $\PP(\cdot|s,\ba) = \FF_o (\cdot|s)${, then} $Q_{i,j}^V$ can be decomposed as follows:
\begin{align*}
    &Q_i^V(s, \ba) = r_i(s, \ba) + \gamma  \EE_{s' \sim \PP(\cdot \mid s, \ba)}V(s') 
       = \sum_{j\in\cE_{Q, i}} \left(r_{i,j}(s, a_i,a_j) + \gamma \sum_{s' \in \cS}\lambda_{i,j}(s)  \FF_o(s'\mid s)  V(s')\right)\\
    &\quad
     =: \sum_{j \in \cE_{Q, i}} Q_{i,j}^V(s, a_i, a_j),
\end{align*} 
for any non-negative $(\lambda_{i,j}(s))_{(i,j)\in\cE_Q}$ such that $\sum_{j \in \cE_{Q, i}}\lambda_{i,j}(s) = 1$.

Next, we prove the necessary conditions for an MG to be an NMG. By definition, we have
\begin{align*}
    Q_i^V(s, \ba) = \sum_{j \in \cE_{Q, i}} Q_{i,j}^V(s, a_i, a_j) = r_{i}(s,\ba) + \gamma \langle \PP(\cdot \mid s, \ba), V(\cdot) \rangle
\end{align*}
{for any $V$, } 
which indicates that 
\begin{align}
    \sum_{j \in \cE_{Q, i}} (Q_{i,j}^V(s, a_i, a_j) - Q_{i,j}^{V'}(s, a_i, a_j)) = \gamma \langle \PP(\cdot \mid s, \ba), V(\cdot) - V'(\cdot) \rangle, \label{eqn:subtract}
\end{align} 
for any $V, V'$ and any $(s, \ba)$. 

For every $s \in \cS$, define $B_s:\cS \to \RR$ such that $B_s(s')= \pmb{1}(s= s')$. {We define $\mathbb{G}_{i,j}(s'\mid s,a_i,a_j) := 1/\gamma \sum_{j \in \cE_{Q, i}}(Q_{i,j}^{B_{s'}}(s, a_i, a_j) - Q_{i,j}^{\pmb{0}}(s, a_i, a_j)) $, then by plugging in $V= B_{s'}$ and $V'= \pmb{0}$, we can derive  $\PP(\cdot \mid s, \ba) = \sum_{j \in \cE_{Q, i}} \mathbb{G}_{i,j}(\cdot \mid s, a_i, a_j)$ for every $i$ from \Cref{eqn:subtract}. Alternatively,  we can take a functional derivative of \Cref{eqn:subtract} } 
with respect to $V - V'${, which} means that there exist some functions $\{\mathbb{G}_{i,j}\}_{j \in \cE_{Q, i}}$ such that $\PP(\cdot \mid s, \ba) = \sum_{j \in \cE_{Q, i}} \mathbb{G}_{i,j}(\cdot \mid s, a_i, a_j)$ for every $i$. {Note that the decomposability of $\PP$ with respect to $\cE_{Q,i}$ above has to hold for all $i\in\cN$.} 
Therefore, {when $\cN_C\neq \emptyset$,} if $j \notin \cN_C$, {then} there exists some $i\in\cN$ such that $(i,j) \notin \cE_Q$. {In this case,} $\PP$ is not {dependent on} {this} $j$. So $\PP$ should be a function of the players in $\cN_C$, which indicates that $\PP(\cdot \mid s, \ba) = \sum_{j \in \cN_C, j \neq i} \FF_{i,j}(\cdot \mid s, a_i, a_j)$ for every $i$ unless $\cN_C = \emptyset$. If $\cN_C = \emptyset$, then it directly concludes that $\PP(\cdot \mid s, \ba) = \FF_o(\cdot\mid s)$ for some $\FF_o$, since by the argument above, it should not depend on any player $j$. {Next, we focus on the case when $\cN_C\neq \emptyset$, 
and there are more than two players.}

Specifically, in this case, 
if there exist   some  $k_1 \neq k_2$ and $k_1\notin \cN_C$, such that $$\PP(\cdot|s, \ba) = \sum_{i \in \cN_C, i \neq k_1} \FF_{k_1, i}(\cdot | s, a_{k_1}, a_i) = \sum_{i \in \cN_C, i \neq k_2} \FF_{k_2, i}(\cdot | s, a_{k_2}, a_i),$$  then we  choose a fixed  $a_{k_1}$ but changing  $a_{k_2}$ arbitrarily. This preserves the equality, indicating  that  
\begin{align}
\sum_{i \in \cN_C, i \neq k_1} \FF_{k_1, i}(\cdot | s, a_{k_1, \text{fix}}, a_i) = \sum_{i \in \cN_C, i \neq k_2} \FF_{k_2, i}(\cdot | s, a_{k_2}, a_i)    \label{eqn:two-different-player}
\end{align}
for any $a_{k_2}$. Since $k_1 \notin \cN_C$, we can have the left-hand side of \eqref{eqn:two-different-player}  written as $\sum_{i\in\cN_C}\FF_{i}(\cdot|s, a_{i})$ by setting  $\FF_{i}(\cdot|s, a_{i}) := \FF_{k_1, i}(\cdot|s,a_{k_1, \text{fix}}, a_i)$. {Meanwhile, the right-hand  side of \eqref{eqn:two-different-player} is also $\PP(\cdot|s, \ba)$ by definition, which concludes that $\PP(\cdot|s, \ba) = \sum_{i \in \cN_C} \FF_i(\cdot|s, a_i)$, and  proves the theorem.}  
In other words, as long as at least one $k_1$ does not belong to $\cN_C$, we can conclude the theorem.

If no such a $k_1\notin\cN_C$ exists, then it means that all players are in $\cN_C$. 
In this case, for any $k_1\neq k_2$, for a fixed $a_{k_1, \text{fix}}$, we have
\begin{align}
\sum_{i \in \cN_C/\{k_1\}} \FF_{k_1, i}(\cdot | s, a_{k_1, \text{fix}}, a_i) = \sum_{i \in \cN_C/\{k_1, k_2\}} \FF_{k_2, i}(\cdot | s, a_{k_2}, a_i)  + \FF_{k_2, k_1}(\cdot | s, a_{k_2}, a_{k_1,\text{fix}}).  \label{eqn:two-different-player-2}
\end{align}
Therefore, if we define $\mathbb{G}_i(\cdot|s, a_i): =  \FF_{k_1, i}(\cdot | s, a_{k_1, \text{fix}}, a_i)$ for $i \in \cN_C/\{k_1, k_2\}$, and $\mathbb{G}_{k_2}(\cdot|s, a_{k_2}): =  \FF_{k_1, k_2}(\cdot | s, a_{k_1, \text{fix}}, a_{k_2}) - \FF_{k_2, k_1}(\cdot | s, a_{k_2}, a_{k_1,\text{fix}})$,  then we have 
\begin{align}
    \sum_{i \in \cN_C/\{k_1, k_2\}} \FF_{k_2, i}(\cdot | s, a_{k_2}, a_i) = \sum_{i \in \cN_C/\{k_1\}}\mathbb{G}_i(\cdot|s, a_i).
    \label{eqn:two-sum-decomp2}
\end{align}
Let $k_3 \in \cN_C$ such that $k_3 \neq k_1, k_2$. By definition of $\FF_{i,j}$, we have 
\begin{align*}
    \PP(\cdot| s,\ba) &= \sum_{i \in \cN_C / \{k_3\}} \FF_{k_3, i} (\cdot |s, a_{k_3}, a_i) = \sum_{i \in \cN_C / \{k_2\}} \FF_{k_2, i} (\cdot |s, a_{k_2}, a_i). 
\end{align*}
Plugging \eqref{eqn:two-sum-decomp2}, we have
\begin{align}
    \PP(\cdot| s,\ba) &= \sum_{i \in \cN_C / \{k_3\}} \FF_{k_3, i} (\cdot |s, a_{k_3}, a_i) = \sum_{i \in \cN_C / \{k_1\}} \mathbb{G}_{i} (\cdot |s, a_i) + \FF_{k_2, k_1}(\cdot|s, a_{k_2}, a_{k_1}) \label{eqn:two-sum-decomp3}
\end{align}
for any $\ba \in \cA$. 
If we now fix $a_{k_3}$ as $a_{k_3, \text{fix}}$, then from \eqref{eqn:two-sum-decomp3}  we know that 
\begin{align}
    \sum_{i \in \cN_C / \{k_3\}} \FF_{k_3, i} (\cdot |s, a_{k_3, \text{fix}}, a_i) = \sum_{i \in \cN_C / \{k_1, k_3\}} \mathbb{G}_{i} (\cdot |s, a_i) +  \mathbb{G}_{k_3} (\cdot |s, a_{k_3, \text{fix}}) + \FF_{k_2, k_1}(\cdot|s, a_{k_2}, a_{k_1}). \label{eqn:two-sum-decomp4}
\end{align}
Plugging \eqref{eqn:two-sum-decomp4} to \eqref{eqn:two-sum-decomp3}, we have 
\begin{align*}
    \PP(\cdot| s,\ba) & = \sum_{i \in \cN_C / \{k_1, k_3\}} \mathbb{G}_{i} (\cdot |s, a_i) + \FF_{k_2, k_1}(\cdot|s, a_{k_2}, a_{k_1}) + \mathbb{G}_{k_3} (\cdot |s, a_{k_3})
    \\
    &= \sum_{i \in \cN_C / \{k_3\}} \FF_{k_3, i} (\cdot |s, a_{k_3, \text{fix}}, a_i) -\mathbb{G}_{k_3} (\cdot |s, a_{k_3, \text{fix}}) + \mathbb{G}_{k_3} (\cdot |s, a_{k_3})
     =: \sum_{i \in \cN_C } \FF_i(\cdot|s, a_i)
\end{align*}
where $\FF_i(\cdot | s, a_i) := \FF_{k_3, i}(\cdot |s, a_{k_3, \text{fix}}, a_i)$ for $i \in \cN_C / \{k_3\}$ and $\FF_{k_3}(\cdot| s, a_{k_3}) := -\mathbb{G}_{k_3} (\cdot |s, a_{k_3, \text{fix}}) + \mathbb{G}_{k_3} (\cdot |s, a_{k_3})$, which concludes the decomposability of the transition dynamics. 
Finally, note that we can ensure  the non-negativity  of $\FF_i$, since we can iterate the following procedure:
\begin{algorithm}[H]
\caption{Procedure for constructing non-negative $\{\FF_i\}_{i \in \cN_C}$}
\label{procedure:non-negative}
\begin{algorithmic}
\WHILE{there exists $s, s' \in \cS$ and $i\in \cN_C$ such that $\min_{a_i \in \cA_i} \FF_i(s'\mid s, a_i) < 0$}
\STATE{Set $s, s' \in \cS$ and $i\in \cN_C$ that satisfying $\min_{a_i \in \cA_i} \FF_i(s'\mid s, a_i) < 0$}
\STATE{Sort $\cN_C$ according to  the descending  order of  $b_k: = \min_{a_k\in \cA_k} \FF_k(s'\mid s, a_k)$, and denote it as $\{j_1, j_2, \dots, j_{|\cN_C|}\}$}
\STATE{Define $\text{tmp1} = 0, t= 1$}
\WHILE{$\text{tmp1} <  -\min_{a_i \in \cA_i} \FF_i(s'\mid s, a_i)$}
\STATE{Define $\text{tmp2} =  -\min_{a_i \in \cA_i}\FF_i(s'\mid s, a_i) - \text{tmp1} > 0 $}
\STATE Define $\text{tmp3} = \min(\min_{ a_{j_t} \in \cA_{j_t}} \FF_{j_t}(s'\mid s, a_{j_t}), \text{tmp2}) > 0$;
\STATE Update $\FF_{j_t}(s' | s, a_{j_t}) \leftarrow  \FF_{j_t}(s'\mid s, a_{j_t}) -  \text{tmp3}$ for all $a_{j_t} \in \cA_{j_t}$
\STATE Update $\FF_i(s'\mid s, a_i) \leftarrow \FF_i(s'\mid s, a_i) + \text{tmp3} $ for all $a_i \in \cA_i$
\STATE Update $\text{tmp1} \leftarrow \text{tmp1} + \text{tmp3}$ and $t \leftarrow t + 1$
\ENDWHILE
\ENDWHILE
\end{algorithmic}
\end{algorithm}
In this context, the third and fourth lines of the inner-while loop ensure that $\PP(s' | s, \ba) = \sum_{j \in \cN_C} \FF_{j}(s'\mid s, a_j)$ always holds for $s, s' \in \cS$, $\ba \in \cA$. Furthermore, $\text{tmp}3$ remains greater than 0 within the inner-while loop. This is because if $\min_{a_{j_t} \in \cA_{j_t}} \FF_{j_t}(s'\mid s, a_{j_t}) \leq 0$, then it implies that $\min_{ a_{j_k} \in \cA_{j_k}} \FF_{j_k}(s'\mid s, a_{j_k}) = 0$ for all $k \in [t]$. Also, $\min_{ a_{j_k} \in \cA_{j_k}} \FF_{j_k}(s'\mid s, a_{j_k}) \leq \min_{ a_{j_t} \in \cA_{j_t}} \FF_{j_t}(s'\mid s, a_{j_t}) \leq 0$ for every $k \in [|\cN_C|]$. Meanwhile, since we assumed $\min_{a_i \in \cA_i} \FF_i(s'\mid s, a_i) < 0$, we can define a vector $\tilde{a}$ with elements $\tilde{a}_j \in \argmin_{a_j\in \cA_j} \FF_j(s'\mid s, a_j)$. This implies that $\PP(s'\mid s, \tilde{a}) =\sum_{i \in \cN_C} \min_{a_i \in \cA_i} \FF_i(s'\mid s, a_i) < 0 $, which contradicts the condition $\PP(s'\mid s, \tilde{a}) \geq 0$. 

Therefore, our procedure ensures that the number of pairs $(s, s') \in \cS \times \cS$ and indexes $i \in \cN_C$ for which $\min_{a_i \in \cA_i} \FF_i(s'\mid s, a_i) < 0$ is consistently reduced.
 
If there are only two players, then both of them belong to $\cN_C$. However, one cannot decompose the transition dynamics as above, as there is no such a $k_3\neq k_1,k_2$ to construct the aforementioned formula. Indeed, any two-player (zero-sum) MG satisfies our definition of zero-sum NMGs in \Cref{sec:def_MZNG}.

For the reward decomposition, if $\cN_C \neq \emptyset$, we have
\begin{align*}
     r_{i}(s,\ba)  = \sum_{j \in \cE_{Q, i}} \left( Q_{i,j}^V(s, a_i, a_j) - \gamma   \langle \pmb{1}(j \in \cN_C)\FF_j(\cdot \mid s, a_j), V(\cdot) \rangle\right)
\end{align*}
and if $\cN_C = \emptyset$, we have 
\begin{align*}
     r_{i}(s,\ba)  = \sum_{j \in \cE_{Q, i}} \left( Q_{i,j}^V(s, a_i, a_j) - \gamma   \langle \frac{1}{|\cE_{Q, i}|}\FF_o(\cdot \mid s), V(\cdot) \rangle\right).
\end{align*}
{Hence, $r_i(s, a_i, \cdot)$ can be represented as $r_i(s, \ba)  = \sum_{j \in \cE_{Q, i}} r_{i, j}(s, a_i, a_j)$ for some functions $(r_{i, j})_{(i,j)\in\cE_Q}$. The same procedure as  \Cref{procedure:non-negative} provides that we can ensure the non-negativity of $r_{i, j}$, so that $r_i(s, a_i, \cdot)$ is decomposable with respect to $\cE_{Q, i}$. {In fact, adding any large-enough constant to $r_{i,j}$ does not change the solution to the problem, while ensuring the non-negativity of $r_{i,j}$.}}
\end{proof}

\vspace{0.2in}
\noindent \textbf{Proposition 1 - Finite-horizon version. } For a given graph $\cG=(\cN, \cE_Q)$, an MG $(\cN, \cS, \cA, (\PP_h)_{h \in [H]}, (r_{h, i})_{i \in \cN, h \in [H]})$ with more than two players is an NMG with $\cG$ if and only if: (1) \textbf{$r_{h, i}(s, a_i, \cdot)$ is decomposable with respect to} $\cE_{Q, i}$ for each $i \in \cN\,, s \in \cS\,, a_i \in \cA_i,\, h \in [H]$, i.e., $r_{h, i}(s, \ba) = \sum_{j \in \cE_{Q,i}} r_{h, i,j}(s, a_i, a_j)$ for a set of functions $\{r_{h, i, j}(s, a_i ,\cdot)\}_{j \in \cE_{Q, i}}$ and (2)  \textbf{the transition dynamics $\PP_h(s'\mid s, \cdot)$ is decomposable with respect to}$\cN_C$ corresponding to this $\cG$ for all $h \in [H]$, {i.e.,} $\PP_h(s'\mid s, \ba) = \sum_{i \in \cN_C} \FF_{h,i}(s'\mid s, a_i)$ for a set of functions $\{\FF_{h,i}(s'\mid s,\cdot)\}_{i\in\cN_C}$ if $\cN_C \neq \emptyset$, or $\PP_h(s'\mid s, \ba) = \FF_{h, o}(s'\mid s)$ for some constant function (of $\ba$) $\FF_{h, o}(s'\mid s)$ {if $\cN_C = \emptyset$}. 
Moreover, an MG qualifies as a zero-sum NMG if and only if it satisfies an additional condition: the NG, characterized by $(\cG, \cA, (r_{h, i,j}(s))_{(i,j) \in \cE_Q})$, must be a zero-sum NG {for all $s\in\cS, h \in [H]$}. In the case of two players, every (zero-sum) Markov game becomes a (zero-sum) NMG.   

\vspace{0.2in}
\noindent \textbf{Proposition 2} (Decomposition of $(Q_i^V)_{i \in \cN}$)\textbf{.}
For {an infinite-horizon $\gamma$-discounted} NMG with $\cG = (\cN, \cE_Q)$ such that $\cN_C \neq \emptyset$, if we know that $\PP(s'\mid s, \ba) = \sum_{i \in \cN_C} \FF_i(s'\mid s, a_i)$, and $r_{i}(s, \ba) = \sum_{j \in \cE_{Q,i}} r_{i,j}(s, a_i, a_j)$ for some $\{\FF_i\}_{i \in \cN_C}$ and $\{r_{i,j}\}_{(i,j) \in \cE_Q}$,  then {the $Q_{i,j}^V$  given in \Cref{sec:def_MZNG} can be represented as}
{\neurips{\fontsize{9}{9}\selectfont}
\numbering{\begin{align*}
Q_{i,j}^V(s, a_i, a_j) = r_{i,j}(s,a_i, a_j) + \sum_{s' \in \cS} \gamma \left(\pmb{1}(j \in \cN_C) \FF_j(s'\mid s, a_j) + \pmb{1}(i \in \cN_C)\lambda_{i,j}(s)\FF_i(s'\mid s,a_i) \right) V(s')
\end{align*}}}
\tnumbering{\begin{align*}
Q_{i,j}^V(s, a_i, a_j) &= r_{i,j}(s,a_i, a_j) 
\\
&\quad+ \sum_{s' \in \cS} \gamma \left(\pmb{1}(j \in \cN_C) \FF_j(s'\mid s, a_j) + \pmb{1}(i \in \cN_C)\lambda_{i,j}(s)\FF_i(s'\mid s,a_i) \right) V(s')
\end{align*}}
for any {non-negative} {$(\lambda_{i,j}(s))_{(i,j) \in \cE_Q}$}  such that $\sum_{j \in \cE_{Q,i}} \lambda_{i,j}(s) = 1$ for all {$i\in\cN$ and $s\in \cS$}.
For {an infinite-horizon $\gamma$-discounted} NMG with $\cG = (\cN, \cE_Q)$ such that  $\cN_C = \emptyset$, if we know that $\PP(s'\mid s, \ba) = \FF_o(s'\mid s)$, and $r_{i}(s, \ba) = \sum_{j \in \cE_{Q,i}} r_{i,j}(s, a_i, a_j)$ for some $\FF_o$ and $\{r_{i,j}\}_{(i,j) \in \cE_Q}$, then
{the $Q_{i,j}^V$  given in \Cref{sec:def_MZNG} can be represented as}
{
\cpdelete{\fontsize{9}{9}\selectfont}
\begin{align*}
Q_{i,j}^V(s, a_i, a_j) = r_{i,j}(s,a_i, a_j) + \sum_{s' \in \cS} \gamma \left(\lambda_{i,j}(s)\FF_o(s'\mid s) \right) V(s')
\end{align*}}
\noindent for any {non-negative} {$(\lambda_{i,j}(s))_{(i,j) \in \cE_Q}$}  such that $\sum_{j \in \cE_{Q,i}} \lambda_{i,j}(s) = 1$ for all $i\in\cN$ and $s\in\cS$. We call it the \textit{canonical} decomposition of $\{Q_i^V\}_{i \in \cN}$  {when $Q_{i,j}^V$ can be represented as above with} $\lambda_{i,j}(s) = {1}/{|\cE_{Q,i}|}$ for $j \in \cE_{Q, i}$.

\Cref{def:canonical} naturally follows from the proof of \Cref{prop:MPGcond} (a).  

\vspace{0.2in}
\noindent \textbf{Proposition 2 - Finite-horizon version} (Decomposition of $(Q_{h,i}^V)_{i \in \cN}$)\textbf{.}
For {a finite-horizon} NMG with $\cG = (\cN, \cE_Q)$ with $\cN_C \neq \emptyset$, {we know } $\PP_h(s'\mid s, \ba) = \sum_{i \in \cN_C} \FF_{h,i}(s'\mid s, a_i)$, and $r_{h,i}(s, \ba) = \sum_{j \in \cE_{Q,i}} r_{h,i,j}(s, a_i, a_j)$ for some $\{\FF_{h,i}\}_{i \in \cN_C}$ and $\{r_{h, i,j}\}_{(i,j) \in \cE_Q}$, {then} {the $Q_{h, i,j}^V$  given in \Cref{sec:def_MZNG} can be represented as}
\arxiv{\numbering{
\begin{align*}
Q_{h, i,j}^V(s, a_i, a_j) = r_{h, i,j}(s,a_i, a_j) + \sum_{s' \in \cS}\left(\pmb{1}(j \in \cN_C) \FF_{h, j}(s'\mid s, a_j) + \pmb{1}(i \in \cN_C)\lambda_{h, i,j}(s)\FF_{h, i}(s'\mid s,a_i) \right) V_{h+1}(s')
\end{align*}}}
\tnumbering{\begin{align*}
Q_{h, i,j}^V&(s, a_i, a_j) = r_{h, i,j}(s,a_i, a_j) 
\\
&+ \sum_{s' \in \cS}\left(\pmb{1}(j \in \cN_C) \FF_{h, j}(s'\mid s, a_j) + \pmb{1}(i \in \cN_C)\lambda_{h, i,j}(s)\FF_{h, i}(s'\mid s,a_i) \right) V_{h+1}(s')
\end{align*}}
\neurips{
\begin{align*}
Q_{h, i,j}^V&(s, a_i, a_j) = r_{h, i,j}(s,a_i, a_j) 
\\
&+ \sum_{s' \in \cS}\left(\pmb{1}(j \in \cN_C) \FF_{h, j}(s'\mid s, a_j) + \pmb{1}(i \in \cN_C)\lambda_{h, i,j}(s)\FF_{h, i}(s'\mid s,a_i) \right) V_{h+1}(s')
\end{align*}}
for any {non-negative} {$(\lambda_{h, i,j}(s))_{(i,j) \in \cE_Q}$}  such that $\sum_{j \in \cE_{Q,i}} \lambda_{h, i,j}(s) = 1$ for all {$i\in\cN$, $s\in \cS$, and $h\in[H]$}. For {a finite-horizon} NMG with $\cG = (\cN, \cE_Q)$ such that  $\cN_C = \emptyset$, $\PP_h(s'\mid s, \ba) = \FF_{h,o}(s'\mid s)$, and $r_{h,i}(s, \ba) = \sum_{j \in \cE_{Q,i}} r_{h, i,j}(s, a_i, a_j)$ for some $\FF_{h,o}$ and $\{r_{h,i,j}\}_{(i,j) \in \cE_Q}$, 
{the $Q_{h,i,j}^V$  given in \Cref{sec:def_MZNG} can be represented as}
\begin{align*}
Q_{h, i,j}^V(s, a_i, a_j) = r_{h, i,j}(s,a_i, a_j) + \sum_{s' \in \cS} \left(\lambda_{h, i,j}(s)\FF_{h, o}(s'\mid s) \right) V_{h+1}(s')
\end{align*} 
for any {non-negative} {$(\lambda_{h, i,j}(s))_{(i,j) \in \cE_Q}$}  such that $\sum_{j \in \cE_{Q,i}} \lambda_{h, i,j}(s) = 1$ for all $i\in\cN$, $s\in\cS$, and $h\in[H]$. We call it the \textit{canonical} decomposition of $Q$-value functions {when $Q_{h, i,j}^V$ can be represented as above with} $\lambda_{h, i,j}(s) = {1}/{|\cE_{Q,i}|}$ for $j \in \cE_{Q, i}$.

\numbering{\subsection{An alternative definition of NMGs}}
\tnumbering{\section{An alternative definition of NMGs}}
\label{ssec:alt_def}}

\begin{definition}[An alternative definition of NMGs]\label{def:alternative_MNMG}
An infinite-horizon $\gamma$-discounted MG is called a \textit{Multi-player MG with Networked separable interactions (NMG)}  characterized by  a tuple $$(\cG = (\cN, \cE_Q), \cS, \cA, \PP, (r_{i})_{i \in \cN}, \gamma)$$  
{if for any \emph{policy}  $\pi$, there exist a {set of} functions $(Q_{i,j}^\pi)_{(i, j) \in \cE_Q}$ {and an undirected connected graph $\cG = (\cN, \cE_Q)$} such that $
    Q_{i}^\pi(s, \ba) = \sum_{j\in \cE_{Q, i}} Q_{i,j}^\pi(s, a_i, a_j)$ }
holds for every $i \in \cN$, $s \in \cS$, $\ba \in \cA$. A finite-horizon MG is called a \textit{Multi-player MG with Networked separable interactions} {if for any \emph{policy} $\pi$, there exist a set of functions $(Q_{h, i,j}^\pi)_{(i, j) \in \cE_Q,h\in[H]}$ such that $Q_{h, i}^\pi(s, \ba) = \sum_{j\in\cE_{Q, i}} Q_{h, i,j}^\pi(s, a_i, a_j)
$ holds for every $i\in \cN$, $s \in \cS$, $\ba \in \cA$, and $h\in[H]$.} 
\end{definition}

The ``if'' condition of \Cref{prop:MPGcond} also holds for this definition. However, the proof for the ``only if'' condition of \Cref{prop:MPGcond} uses the functional derivative argument. We cannot use the functional derivative here directly, since $\bV^\pi: =(V^\pi(s))_{s \in \cS}$ cannot represent {all vectors in $[0,R/(1-\gamma)]^{|\cS|}$}. To be specific, we have
\begin{align*}
    Q_i^\pi(s, \ba) = \sum_{j \in \cE_{Q, i}} Q_{i,j}^\pi(s, a_i, a_j) = r_{i}(s,\ba) + \gamma \langle \PP(\cdot \mid s, \ba), V_i^\pi(\cdot) \rangle
\end{align*}
{for any policies $\pi,\pi'$, }
which indicates
\begin{align}
    \sum_{j \in \cE_{Q, i}} (Q_{i,j}^\pi(s, a_i, a_j) - Q_{i,j}^{\pi'}(s, a_i, a_j)) = \gamma \langle \PP(\cdot \mid s, \ba), V_i^\pi(\cdot) - V_i^{\pi'}(\cdot) \rangle. \label{eqn:pi-decomposition-definition}
\end{align}

{If we can find a set of policies $(\pi^{(k)})_{k \in [|\cS|]}$ such that the vectors in $\{\bV_i^{\pi^{(k)}} - \bV_i^{\pi'}\}_{k \in [|\cS|]}$ are independent for some fixed $\pi'$, then we can concatenate the vectors $\{\bV_i^{\pi^{(k)}} - \bV_i^{\pi'}\}_{k \in [|\cS|]}$ together as a matrix of size $|\cS|\times|\cS|$ that is full-rank, and solve for $\PP(\cdot\given s,\ba)$ by solving the linear equations   \eqref{eqn:pi-decomposition-definition}, for some  fixed $(s,\ba)$. This way, we can show that $\PP(\cdot\given s,\ba) = \sum_{j \in \cE_{Q, i}} \FF_{i,j}(\cdot\given s, a_i, a_j)$, and the rest of the proof follows from that of \Cref{prop:MPGcond}. However, such a set of policies $(\pi^{(k)})_{k \in [|\cS|]}$ (and $\pi'$) may not exist in some degenerate cases, as to be detailed below.}  
\begin{definition}[Degenerate MG with respect to player $i$] 
\label{def:degnerateMG1}
    We call an MG {\it degenerate} with respect to player $i \in \cN$ if there exists some  $s \in \cS$ such that for any $\pi$,  $Q^\pi_i(s, \ba)$ is a constant function of $\ba \in \cA$. 
\end{definition}

\begin{definition}[Non-degenerate MG]
\label{def:non-dege_MG} 
    We call an MG {\it non-degenerate} if an MG is {not} degenerate with respect to {any} player $i \in \cN$. 
\end{definition}

{Now, we are ready to state the counterpart of \Cref{prop:MPGcond} in these non-degenerate  cases.} The proof for the ``if'' direction is exactly the same as that of \Cref{prop:MPGcond}. We focus on the proof of the ``only if'' statement.  

\vspace{7pt}
\noindent \textbf{Proposition 1 - An alternative definition version.} For a given graph  $\cG=(\cN, \cE_Q)$, a  {\it non-degenerate} MG $(\cN, \cS, \cA, \PP, (r_{i})_{i \in \cN}, \gamma)$ {(in the sense of  \Cref{def:non-dege_MG})} with more than two players is an NMG with $\cG$ if and only if:
(1) \textbf{$r_i(s, a_i, \cdot)$ is decomposable with respect to} $\cE_{Q, i}$ for each $i \in \cN, s \in \cS, a_i \in \cA_i $, {i.e.,} $r_i(s, \ba) = \sum_{j \in \cE_{Q,i}} r_{i,j}(s, a_i, a_j) $ for a set of functions $\{r_{i,j}(s, a_i, \cdot)\}_{j \in \cE_{Q, i}}$, and {(2) \textbf{the transition dynamics $\PP(s'\mid s, \cdot)$ is decomposable with respect to} the $\cN_C$ corresponding to this $\cG$, {i.e.,} $\PP(s'\mid s, \ba) = \sum_{i \in \cN_C} \FF_i(s'\mid s, a_i)$ for a set of functions $\{\FF_i(s'\mid s,\cdot)\}_{i\in\cN_C}$ if $\cN_C \neq \emptyset$, or $\PP(s'\mid s, \ba) = \FF_o(s'\mid s)$ for some constant function (of $\ba$) $\FF_o(s'\mid s)$ {if $\cN_C = \emptyset$}.}
\begin{proof}
Proofs of the claims are deferred to \Cref{ssec:deferred-proof-alternative-definition} to preserve the flow of the argument. 
\begin{restatable}{claim}{independent}
\label{claim:independent}
    For {player} $i \in \cN$, if there exists a policy $\pi$ such that for every $s\in\cS$, there exist $\ba_{s,1}, \ba_{s,2}$ that make  $Q_i^\pi(s, \ba_{s, 1}) \neq Q_i^\pi(s, \ba_{s, 2})$, then we can construct $|\cS|$ number of policies {$(\pi^{(k)})_{k\in[|\cS|]}$} such that  $\{\bV_i^{\pi^{(k)}} - \bV_i^{\pi}\}_{k \in [|\cS|]}$ are independent for any fixed $\pi$. 
\end{restatable}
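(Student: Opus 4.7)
The plan is to construct each $\pi^{(k)}$ as a small one-state perturbation of the reference policy $\pi$ at state $s_k$, apply the performance difference lemma to obtain an explicit formula for $\bV_i^{\pi^{(k)}} - \bV_i^{\pi}$, and then reduce linear independence to invertibility of the discounted state-occupancy matrix of $\pi$, which is standard.

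Index $\cS = \{s_1,\dots,s_{|\cS|}\}$ arbitrarily. For each $k$ I pick an action profile $\ba^{(k)}$ with $A_i^{\pi}(s_k,\ba^{(k)}) := Q_i^{\pi}(s_k,\ba^{(k)}) - V_i^{\pi}(s_k) \neq 0$; such $\ba^{(k)}$ exists because the hypothesis gives $\ba_{s_k,1},\ba_{s_k,2}$ with $Q_i^{\pi}(s_k,\ba_{s_k,1})\neq Q_i^{\pi}(s_k,\ba_{s_k,2})$, so at least one of these two values differs from the $\pi$-weighted average $V_i^{\pi}(s_k)=\EE_{\ba\sim\pi(s_k)}[Q_i^{\pi}(s_k,\ba)]$. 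Then define $\pi^{(k)}$ to agree with $\pi$ at every state except $s_k$, where $\pi^{(k)}(\cdot\mid s_k) = (1-\epsilon)\pi(\cdot\mid s_k) + \epsilon\,\pmb{1}(\cdot = \ba^{(k)})$ for a small $\epsilon>0$ to be fixed at the end.

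Next I would apply the performance difference lemma: for any $s_0\in\cS$,
\[
V_i^{\pi^{(k)}}(s_0) - V_i^{\pi}(s_0) = \sum_{s\in\cS} d^{\pi^{(k)}}(s\mid s_0) \sum_{\ba}\bigl(\pi^{(k)}(\ba\mid s)-\pi(\ba\mid s)\bigr) Q_i^{\pi}(s,\ba),
\]
where $d^{\pi^{(k)}}(s\mid s_0) := \sum_{t=0}^{\infty}\gamma^{t}\PP^{\pi^{(k)}}(s_t=s\mid s_0)$. Because $\pi^{(k)}$ coincides with $\pi$ outside $s_k$, only the $s=s_k$ summand survives, and the inner sum at $s_k$ equals $\epsilon A_i^{\pi}(s_k,\ba^{(k)})$. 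Collecting over $s_0$ yields
\[
\bV_i^{\pi^{(k)}} - \bV_i^{\pi} = \epsilon\, A_i^{\pi}(s_k,\ba^{(k)})\,\bd^{\pi^{(k)}}(s_k) \in \RR^{|\cS|},
\]
where $\bd^{\pi^{(k)}}(s_k) := \bigl(d^{\pi^{(k)}}(s_k\mid s_0)\bigr)_{s_0\in\cS}$. Since each scalar prefactor $\epsilon A_i^{\pi}(s_k,\ba^{(k)})$ is nonzero, it remains to show that $\{\bd^{\pi^{(k)}}(s_k)\}_{k=1}^{|\cS|}$ are linearly independent.

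The main obstacle is that the $k$-th vector comes from a \emph{different} policy $\pi^{(k)}$, so these columns do not all live in a single occupancy matrix. I would handle this by a continuity/perturbation argument: as $\epsilon\to 0$ every $\pi^{(k)}\to\pi$ simultaneously, so the matrix $M_{\epsilon}$ whose $k$-th column is $\bd^{\pi^{(k)}}(s_k)$ converges entrywise to the discounted state-occupancy matrix $D^{\pi}$ of $\pi$, whose $(s_0,s_k)$-entry is $[(I-\gamma P^{\pi})^{-1}]_{s_0,s_k}$ with $P^{\pi}$ the averaged transition matrix under $\pi$. Since $\gamma P^{\pi}$ has spectral radius strictly less than $1$, $D^{\pi}$ is invertible, and by continuity of the determinant $M_{\epsilon}$ is invertible for all sufficiently small $\epsilon>0$. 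Fixing any such $\epsilon$ produces the desired $|\cS|$ policies.
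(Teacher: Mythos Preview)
Your proof is correct and takes essentially the same approach as the paper: perturb $\pi$ at a single state $s_k$, identify the resulting value-difference vector (to leading order) with a nonzero scalar multiple of the $s_k$-th column of $(I-\gamma P^{\pi})^{-1}$, and conclude independence from invertibility of that matrix. The paper obtains this column via the directional derivative of $\bV_i^{\pi}$ along $e_{\ba_{s,1},s}-e_{\ba_{s,2},s}$ rather than the performance difference lemma followed by a continuity-in-$\epsilon$ argument, but the substance is identical; if anything, your convex-combination perturbation toward a vertex is slightly cleaner because it automatically stays in the simplex.
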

Therefore, for player $i \in \cN$, if the condition of \Cref{claim:independent} holds, then we can guarantee that there exist some functions $\{\mathbb{G}_{i,j}\}_{j \in \cE_{Q, i}}$ such that $\PP(\cdot|s, \ba) = \sum_{j \in \cE_{Q, i}} \mathbb{G}_{i,j}(\cdot|s, a_i, a_j)$, by the argument after \Cref{eqn:pi-decomposition-definition}. If we assume that the condition of \Cref{claim:independent} holds for every player $i \in \cN$, then there exist some functions $\{\mathbb{G}_{i,j}\}_{(i, j) \in \cE_Q}$ such that $\PP(\cdot|s, \ba) = \sum_{j \in \cE_{Q, i}} \mathbb{G}_{i,j}(\cdot|s, a_i, a_j)$ for every $i \in \cN$. Then, we can prove the decomposability of $\PP(s'\mid s, \cdot)$ with respect to $\cN_C$ with the same steps as Equations~\eqref{eqn:two-different-player} - \eqref{eqn:two-sum-decomp4}. 

Hence, by \Cref{claim:independent}, we only need to prove that if an MG is non-degenerate with respect to player $i$, then there exists a policy $\pi$ such that there exist $\ba_{s,1}, \ba_{s,2}$ that make $Q_i^\pi(s, \ba_{s, 1}) \neq Q_i^\pi(s, \ba_{s, 2})$ for every $s \in \cS$. 
If $|\cA| = 1$, then the transition dynamics are already decomposed, so we do not need to consider this case.

\begin{restatable}{claim}{univsforconstant}
 \label{claim:univ-s-for-constant}
Assume $|\cA| \geq 2$. For player $i \in \cN$, assume that for any policy $\pi$, there exists a state $s_\pi\in\cS$ such that $Q^\pi_i(s_\pi, \ba)$ is a constant function of $\ba$.  Then, there exists {a state} $s\in \cS$ such that {uniformly} for {any policy}  $\pi$, $Q^\pi_i(s, \ba)$ is a constant function of $\ba$.   
\end{restatable}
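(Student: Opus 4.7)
The plan is to prove the contrapositive: assuming that for every state $s\in\cS$ there exists some policy $\pi_s$ for which $Q_i^{\pi_s}(s,\cdot)$ is non-constant in $\ba$, I will construct a single policy $\pi^\star$ for which $Q_i^{\pi^\star}(s,\cdot)$ is non-constant for \emph{every} $s\in\cS$ simultaneously. This contradicts the hypothesis and therefore forces the existence of some universal state $s$ on which $Q_i^\pi(s,\cdot)$ is constant uniformly in $\pi$.

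The key observation is that $Q_i^\pi(s,\ba)$ is a \emph{rational} function of the entries of $\pi$. Indeed, regarding $\pi$ as an element of the (compact, full-dimensional) polytope $\Delta(\cA)^{|\cS|}$ (or the product simplex for product policies), the induced transition matrix $P^\pi$ and reward vector $r_i^\pi$ are polynomial in $\pi$, and for $\gamma\in(0,1)$ the matrix $I-\gamma P^\pi$ is invertible, so by Cramer's rule $V_i^\pi=(I-\gamma P^\pi)^{-1}r_i^\pi$ has entries that are rational in $\pi$. Hence the same is true for $Q_i^\pi(s,\ba)=r_i(s,\ba)+\gamma\sum_{s'}\PP(s'\mid s,\ba)V_i^\pi(s')$, and in particular for each fixed $(s,\ba_1,\ba_2)$ the map
\[
D_{s,\ba_1,\ba_2}(\pi) \;:=\; Q_i^\pi(s,\ba_1)-Q_i^\pi(s,\ba_2)
\]
is a rational function on the policy space.

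By the contrapositive assumption, for each $s\in\cS$ there exist actions $\ba_1^s,\ba_2^s$ and a policy $\pi_s$ such that $D_{s,\ba_1^s,\ba_2^s}(\pi_s)\neq 0$, so $D_{s,\ba_1^s,\ba_2^s}$ is not the identically-zero rational function. Its zero set $Z_s$ (restricted to the common domain where the denominator is nonzero, which is all of the policy space since $I-\gamma P^\pi$ is invertible) is thus a proper real-algebraic subset of the policy polytope and hence has Lebesgue measure zero. Because $\cS$ is finite, $\bigcup_{s\in\cS} Z_s$ is still a measure-zero set, and since the policy space has positive Lebesgue measure, its complement is non-empty. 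Picking any $\pi^\star$ in this complement yields $D_{s,\ba_1^s,\ba_2^s}(\pi^\star)\neq 0$ for every $s$, i.e., $Q_i^{\pi^\star}(s,\cdot)$ is non-constant in $\ba$ at every state, contradicting the hypothesis.

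The main subtlety I anticipate is justifying cleanly that $Q_i^\pi$ is rational and that the zero set of a non-identically-zero rational function on the policy polytope has measure zero; both are standard but must be stated carefully because the policy domain is a polytope rather than all of $\mathbb{R}^N$ (one can either invoke that a non-trivial real-analytic/polynomial function has measure-zero zero set in any open set, or equivalently clear denominators and use the classical fact that proper real algebraic varieties are Lebesgue null). A secondary detail to specify at the outset is whether the quantifier ``for any policy $\pi$'' ranges over Markov stationary joint or product policies; the argument above applies verbatim in either case since both domains are full-dimensional semi-algebraic sets.
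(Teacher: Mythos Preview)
Your proposal is correct and is essentially the same argument as the paper's, just phrased as the contrapositive: the paper observes that the sets $\Pi_s=\{\pi:Q_i^\pi(s,\cdot)\text{ constant}\}$ cover the policy polytope, so by pigeonhole some $\Pi_s$ has positive measure, and since each $Q_i^\pi(s,\ba_1)-Q_i^\pi(s,\ba_2)$ is a ratio of polynomials in $\pi$, a positive-measure zero set forces it to be identically zero. Both proofs rest on the same two ingredients you identified (rationality of $Q_i^\pi$ in $\pi$ and the measure-zero property of proper algebraic sets), so there is no substantive difference.
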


\Cref{claim:univ-s-for-constant} shows that if the assumption of \Cref{claim:independent} does not hold for player $i \in \cN$, then $$\cS_{\text{const}, i}:= \{s \mid \text{For any }\pi,  Q^\pi_i(s, \ba)\text{ is a constant function of } \ba\}$$ is not an empty set. %
By \Cref{def:degnerateMG1}, 
if an MG is {not} degenerate with respect to player $i$, then {$\cS_{\text{const}, i}$ is empty, and thus} the conditions of \Cref{claim:independent} always hold for player $i$. Therefore, if we assume the non-degeneracy of MG (\Cref{def:non-dege_MG}), by \Cref{claim:independent} and the arguments immediately following it, $\PP(s'\mid s, \cdot)$ is decomposable with respect to $\cN_C$. 
\end{proof}

{\begin{remark}[Degenerate MG with respect to player $i$]
The degeneracy of MGs as defined above 
 can indeed be rare. 
To illustrate, consider a seemingly degenerate scenario where for all $s\in\cS$, \(r_i(s, \ba)\) remains a constant function with respect to \(\ba\); even under such circumstances, it is possible for the MG to be non-degenerate with respect to player \(i\). {For example, assume that $\cN = [2], \gamma = \frac{1}{2}$, $\cS = \{-1, 1\}$ $\cA_i = \{-1, 1\}$, $\PP(s'\given s, a_1, a_2) = \pmb{1}(s' = sa_1a_2), r_1(s, \ba) =r_2(s, \ba) =  (1 + s)$. Let $\pi$ satisfy $\pi(\ba\given 1) = \pmb{1}(a_1 = 1, a_2 = -1), \pi(\ba\given -1) = \pmb{1}(a_1 = -1, a_2 = -1)$. Then, we have $V^\pi_1(1) = 2 + \frac{1}{2}V^\pi_1(-1)$ and $V^\pi_1(-1) = \frac{1}{2}V^\pi_1(-1)$,  which further means $V^\pi_1(-1) = 0$ and $V^\pi_1(1) = 2$, i.e., $V^\pi_1(s)=1+s$ always holds. As a result, $Q^\pi_1(1, \ba) = 1+ s  + \frac{1}{2}\left(sa_1a_2 + 1\right)$,  
which is not a constant function, and the game is thus non-degenerate with respect to agent $i$.} 
Moreover, suppose that the policy, transition dynamics, and reward functions  are randomly chosen. The measure of  the event that the existence of $s \in \cS$ such that $Q_i^\pi(s, \ba)$ is constant for all possible actions $\ba$ under this randomly chosen policy, transition dynamics, and reward function is 0.  This  is primarily because \(Q_i^\pi(s, \cdot) : \cA \to \RR\) must lie on a particular hyperplane in the overall value function space, which takes measure $0$. Oftentimes, different actions will transition to different states and yield different rewards, thereby generating almost unique \(Q_i^\pi(s, \ba)\) values. 
\end{remark}}

\begin{remark}
If we have more than two players that satisfy the condition of \Cref{claim:independent}, i.e., the MG is non-degenerate with respect to more than two players (instead of being non-degenerate with respect to all players), then we can still guarantee the decomposability of $\PP(s' \given s, \cdot)$ with respect to a set that is not necessarily the same as  $\cN_C$. As a byproduct of the decomposability of $\PP(s'\given s, \cdot)$, we can guarantee the decomposability of $r_i(s, a_i, \cdot)$, too. 
\end{remark}

\numbering{\subsection{Deferred proof of the claims in \Cref{ssec:alt_def}}}
\tnumbering{\section{Deferred proof of the claims in \Cref{ssec:alt_def}}}
\label{ssec:deferred-proof-alternative-definition}

 We define $\Pi:=  \begin{bmatrix}
\pi(s_1) & 0 & \cdots & 0 \\
0 & \pi(s_2) & \cdots & 0 \\
\vdots & \vdots & \ddots & 0 \\
0 & 0 & \cdots & \pi(s_{|\cS|})
\end{bmatrix}  \in \RR^{|\cS||\cA| \times |\cS|}$, where $\pi(s)$ is a column vector for the policy at state $s$, $\bP:= (\PP(s'\given s, \ba))_{((s, \ba), s')}$, $\bR := \begin{bmatrix}
\br(s_1)^\intercal & 0 & \cdots & 0 \\
0 & \br(s_2)^\intercal & \cdots & 0 \\
\vdots & \vdots & \ddots & 0 \\
0 & 0 & \cdots & \br(s_{|\cS|})^\intercal
\end{bmatrix}  \in \RR^{|\cS|\times |\cS||\cA|} $, then we have $\bV_i^\pi= (I - \gamma \Pi^\intercal \bP)^{-1} \bR \Pi \pmb{1} \in \RR^{|\cS|}$ where $\pmb{1} \in \RR^{|\cS|}$ is $(1, \dots, 1)^\intercal$. 

\begin{proof}[Proof of \Cref{claim:independent}]
If we differentiate $\bV_i^\pi$ with respect to $\pi$ along the direction $\Delta_{( s, \ba_1, \ba_2)}:= e_{\ba_1, s} - e_{\ba_2, s}{\in\RR^{|\cA||\cS|}}$ {for some $\ba_1,\ba_2\in\cA$},  i.e., the direction that increases (or decreases) $\pi(\ba_1\given s)$ and decreases (or increases) $\pi(\ba_2 \given s)$, respectively, then by computation, we can have the  following directional derivative:  
\begin{align}
     &\nabla_{\Delta_{( s, \ba_1, \ba_2)}} \bV_i^\pi\nonumber
      \\
      &=(I - \gamma \Pi^\intercal \bP)^{-1} \left(\sum_{s' \in \cS} \left(\PP(s'\mid s, \ba_1) V_i^\pi(s') - \PP(s'\given s, \ba_2) V_i^\pi(s')\right)+ r_i(s, \ba_1) - r_i(s, \ba_2)\right) e_s \nonumber
    \\ 
    &= \left(Q_i^\pi(s, \ba_1) - Q_i^\pi (s, \ba_2) \right) (I - \gamma \Pi^\intercal \bP)^{-1} e_s. \label{eqn:derivateive}
\end{align}
Therefore, if there exists a policy $\pi$ such that  for every $s$, there exist $\ba_{s,1}, \ba_{s,2}$ that make $Q_i^\pi(s, \ba_{s, 1}) \neq Q_i^\pi(s, \ba_{s, 2})$, then deviating from $\pi$ along the $\Delta_{(s, \ba_{s,1}, \ba_{s,2})}$  direction for every $s$ provides $\{\bV_i^{\pi^{(k)}} - \bV_i^{\pi}\}_{k \in [|\cS|]}$ vectors that are independent of each other,  since $(I - \gamma \Pi^\intercal \bP)^{-1}$ is an invertible matrix for any $\Pi$.  
\end{proof}
\begin{proof}[Proof of \Cref{claim:univ-s-for-constant}]  
Note that $\pi \in \Delta(\cA)^{|\cS|}$. For all $s \in \cS$, define $\Pi_s$ as 
\begin{align*}
    \Pi_s:= \{ \pi \given  \text{$Q^\pi_i(s, \ba)$ is a  constant function of  $\ba$ } \}, 
\end{align*}
where we omit the dependence on $i$ as we focus on the discussion on a specific $i$ here. 
For any $\pi$, there exists a state $s_\pi$ such that $Q_i^\pi(s_\pi, \ba)$ is a constant function of $\ba$, which means that $\pi \in \Pi_{s_\pi}$ and thus $\Pi_{s_\pi}\neq \emptyset$, which further yields $\sum_{s \in \cS} \left(\text{measure of }(\Pi_{s})  \right) \geq \left(\text{measure of the whole space of }(\Delta(\cA)^{|\cS|})\right) > 0$. By  the pigeonhole principle, we know that there exists some $s \in \cS$ such that $\left(\text{measure of }(\Pi_s)\right) > 0$ since $\cS$ is a finite set. 

Note that {for the above $s$ such that $\left(\text{measure of }(\Pi_s)\right) > 0$}, for any pair of $\ba_1, \ba_2 \in \cA$,  $Q^\pi_i(s, \ba_1) - Q^\pi_i(s, \ba_2)$  can  be represented by the ratio of polynomials of $\pi$ while it has a non-zero measure set of solution, we can conclude that  $Q^\pi_i(s, \ba_1) - Q^\pi_i(s, \ba_2) = 0$ for every $\pi \in \Delta(\cA)^{|\cS|}$, $\ba_1, \ba_2 \in \cA$ {for $s$ such that $\left(\text{measure of }(\Pi_s)\right) > 0$}. Therefore, {for the $s$ such that $\left(\text{measure of }(\Pi_s)\right) > 0$},  we have that for every $\pi$, $Q^\pi_i(s, \ba)$ is a constant function of $\ba$. 
\end{proof}

\numbering{\subsection{Counterexample for Alternative \Cref{def:alternative_MNMG}}}
\tnumbering{\section{Counterexample for Alternative \Cref{def:alternative_MNMG}}}
\label{ssec:counterexample_alternative}

We now show via a counterexample that the alternative definition given in \Cref{def:alternative_MNMG} may not even preserve the networked separable structure of the reward functions in general. 

Consider a Markov game with the following specifications: $\cN = [3]$, $\gamma = \frac{1}{2}$, $\cS = \{s_1,s_2,s_3\}$, $\cA_i = \{0, 1\}$, $\PP(s_3\given s_3, \ba) = \PP(s_2 \given s_2, \ba) = 1$ for any $\ba\in\cA$, $\PP(s_2 \given s_1, \ba) = \frac{1}{2} + \frac{1}{2} \cdot (-1)^{a_1 a_2 a_3}$, $\PP(s_3 \given s_1, \ba) = \frac{1}{2} - \frac{1}{2} \cdot (-1)^{a_1 a_2 a_3}$, $r_1(s_3, \ba) = 1$ and  $r_1(s_2, \ba) = 2$ for any $\ba\in\cA$, and  $r_1(s_1, \ba) =  \frac{1}{2} - \frac{1}{2}\cdot (-1)^{a_1a_2a_3}$. 
Then, we have by definition  that for any $\ba\in\cA$,  $Q_1^\pi(s_2, \ba) =V_1^\pi(s_2) =  \frac{1}{1-\gamma} r_1(s_2,\ba)  = 4$, $Q_1^\pi(s_3, \ba)=V_1^\pi(s_3) = \frac{1}{1-\gamma} r_1(s_3, \ba) = 2$, and  
\begin{align*}
 Q_1^\pi(s_1, \ba) &= r_1(s_1, \ba) + \gamma \PP(s_3 \given s_1, \ba) V^\pi_1(s_3) + \gamma \PP(s_2 \given s_1, \ba) V^\pi_1(s_2)
 \\
 & = r_1(s_1, \ba) + \frac{1}{2} \cdot (\frac{1}{2} - \frac{1}{2} \cdot (-1)^{a_1 a_2 a_3}) \cdot 2 + \frac{1}{2} \cdot (\frac{1}{2} + \frac{1}{2}\cdot  (-1)^{a_1 a_2 a_3}) \cdot 4
 = 2. 
\end{align*}
Hence, $Q^\pi_1(s, a_1, \cdot)$ is decomposable with respect to \{2, 3\}, however, the reward $r_1(s,a_1,\cdot)$ is not, due to $r_1(s_1,\ba)$. Note that this counterexample exactly exhibits the importance of the ergodicity of the Markov chain in removing the degeneracy.  

\neurips{\subsection{Examples of (zero-sum) NMGs} \label{ssec:ex}
\paragraph{Example 2 (Markov security games). }  Security games as described in  \cite{grossklags2008secure,cai2016zero} is  a {primary} example of zero-sum NGs/polymatrix games,  which features two types of players: \emph{attackers} who work as a group ($\mathfrak{a}$), and \emph{users} ($\mathfrak{u}$). Let $\mathfrak{U}$ denote the set of all users. We construct a star-shaped network (c.f. \Cref{fig:PPAD Hardness}) {with the attacker group including   $n_\mathfrak{a}$ number of attackers} sitting at the center, connected to each user. There is an IP address set $[C]$. {We define the action spaces for each user $\mathfrak{u}_i$ and the attacker group as $\cA_{\mathfrak{u}_i} = [C]$ and $\cA_{\mathfrak{a}} = \{T \mid T \subseteq [C],\, |T| = n_{\mathfrak{a}}\}$, respectively.}  Each user selects one IP address, while the attacker group selects a subset $I \subseteq [C]$. For each user whose IP address is attacked, the attacker group {gains one unit of payoff, and the attacked user loses one unit. Conversely, if a user's IP address is not attacked, the user earns one unit of payoff, and the attacker loses one unit.}

{We naturally extend} the security games to \emph{Markov security games} as follows: we define state $s \in \cS = \RR^C $ by setting $s_0 = \pmb{0}$ and $s_{t+1} \sim s_{t} + \text{Unif}((e_{a_{\mathfrak{u}_i, t}})_{\mathfrak{u}_i \in \mathfrak{U}})$, representing the vector of \emph{security level} for the IP addresses. {Specifically}, a vaccine program can improve the security level of each IP address if it has been attacked previously. {We define $X\in\RR^C$ as a vector such that each of its components, $X_c$, corresponds to a unique user's IP address, indexed by $c$. Each $X_c$ is defined by the random variable as $X_c \sim 2\text{Bern}(1- 1/(s_{t, c} + 1))-1$, indicating the outcome of a potential attack on IP address $c \in [C]$. Here, $s_{t, c}$ denotes the security level of each IP address $c$ at a given time $t$, i.e., the $c$-th component of $s_t$. The success probability of an attack on an IP address is inversely proportional to its security level, represented by $1/(s_{t, c} + 1)$. Therefore, higher security levels make an attack less likely to succeed. The term $2\text{Bern}({1}- 1/(s_{t, c} + 1))-{1}$ describes a Bernoulli distribution, typically taking values $0$ or $1$, that has been scaled and shifted to take values $-1$ or $1$ instead. Here, $-1$ represents an unsuccessful attack, while $1$ denotes  a successful attack on the IP address $c$. Therefore, each $X_c$ provides a probabilistic view of the failure of an attack on each IP address, given its security level.}
For each $(s,\ba)$, the reward functions for the users and the attacker group are defined as $r_{\mathfrak{u}_i} (s, \ba, I) = r_{\mathfrak{u}_i, \mathfrak{a}}(s, a_{\mathfrak{u}_i}, I) =  \pmb{1} (a_{\mathfrak{u}_i} \in I) X_{a_{\mathfrak{u}_i}} + \pmb{1} (a_{\mathfrak{u}_i} \notin I)$ and $r_{\mathfrak{a}}(s, \ba, I)= \sum_{\mathfrak{u}_i \in \mathfrak{U}}r_{\mathfrak{a}, \mathfrak{u}_i}(s, a_{\mathfrak{u}_i}, I) - \pmb{1} (a_{\mathfrak{u}_i} \notin I)$ {where} $r_{\mathfrak{a}, \mathfrak{u}_i}(s, a_{\mathfrak{u}_i}, I) = -\pmb{1} (a_{\mathfrak{u}_i} \in I)X_{a_{\mathfrak{u}_i}}$. The reward function of users can be interpreted as follows: if the user's action $a_{\mathfrak{u}_i}$ is in the set of attacked IP addresses $I$ and the attack failed (i.e., $X_{a_{\mathfrak{u}_i}} = 1$), then the user receives a reward equal to $1$. Otherwise, if the user's action is not in $I$, the user also receives a reward of $1$, likely representing a successful defense or evasion of an attack.
Since the reward is always zero-sum, this game is a zero-sum NMG with networked separable interactions.

\paragraph{Example 3 (Global {economy})}.
Macroeconomic dynamics may also be modeled through either zero-sum NMGs or NMGs. Trading between nations has been analyzed in game theory \cite{wilson1985game,wilson1989game}. 
We consider nations as players, each nation has an action space, $\cA_i = \RR$, and the actions decide their expenditure levels. We define the {\it state} of the global economy,  $s \in \RR$, such that $s_0 = 0$ and $s_{t+1} \sim s_{t} + \text{Unif}((a_{c, t})_{c \in \cC}) + Z_t$. Here, $Z_t$ is a random variable representing the unpredictable nature of global events (e.g., COVID-19), and $\cC$ represents the set of {\it powerful} nations, which models the fact that powerful nations' politics or military spending have a relatively significant impact on global economy  \cite{zhang2019economic,beckley2018power}. The aggregated (or ensemble) effect of the powerful nations on the economy is modeled by the term $\text{Unif}((a_{c, t})_{c \in \cC})$. 

During the global financial crisis in 2008-2009, many nations implemented significant fiscal stimulus measures to counteract the downturn \cite{freedman2010global, armingeon2012politics}. Conversely, in good economic conditions, the estimated government spending multipliers were less than one, suggesting that the increased government spending in such situations  might not have the intended positive effects on the economy \cite{gomes2022government}. {Such a state-dependence on reward functions may be modeled as follows.} 
First, we consider the reward being decomposable with respect to nations, as it can be interpreted as (1) the expenditure of each nation is related to the amount of {payment spent} on trading, and (2) we focus on the case with {\it bilateral}  trading, where the surplus from trading can be decomposed by the surplus from the {\it pairwise} trading with other nations. Second, as mentioned above, the relationship between government spending and the global economy can be seen {as {\it countercyclical} \cite{gomes2022government}, which we use the formula $s(a_j - a_i)$ to model explicitly, for nation $i$. Specifically, $s>0$ denotes a good economic condition, in which all the nations may choose to decrease the expenditure level (the $-a_i$ term). 
Hence, the reward function for nation $i$ can be written  as $r_{i}(s, \ba) = \sum_{j \in \cN} r_{i,j} (s, a_i, a_j)= \texttt{Const} + \sum_{j \in \cN}s (a_j - a_i)$, where the positive constant \texttt{Const}  represents  the net benefit out of the tradings. 
Hence, the game shares the characteristics of being a constant-sum NMG. Moreover, other alternative forms of the reward functions may exist to reflect the countercyclical phenomenon, and may not necessarily satisfy the zero-sum {(constant-sum)}  property, but the game would still qualify as an NMG.}}

\numbering{\subsection{Reviewing existing results for zero-sum NGs}}
\tnumbering{\section{Reviewing existing results for zero-sum NGs}}
In a zero-sum NG, i.e., a zero-sum polymatrix game, for each player $i\in\cN$, the reward $r_i\left(a_i, \pi_{-i}\right)$ of  using a pure strategy $a_i \in\cA_i$ is a linear function of $\pi_{-i}$. The following linear program, which involves variables $\pi \in \prod_{i \in \cN} \Delta(\cA_i)$ and $\bv = \left(v_i\right)_{i \in \cN}$, aims to minimize the sum of the variables $v_i$: 
$$
\begin{array}{ll}
\min_{\bv, \pi} & \sum_{i \in \cN} v_i \\
\text {subject to } & v_i \geq r_i\left(e_{a_i}, \pi_{-i}\right), \quad \text { for all } i \in \cN, a \in \cA_i, \\
& \pi \in \prod_{i \in \cN} \Delta(\cA_i).
\end{array}
$$ 
Reference \cite{cai2016zero} states that if $(\pi^\star, \bv^\star)$ is an optimal solution to the above linear program, then $\pi^\star$  is an NE of the zero-sum NG, {and the optimal value of the above linear program is 0.}  Conversely if $\pi^\star$ is an NE, then there exists a $\bv^\star$ such that $(\pi^\star, \bv^\star)$ is an optimal solution to the above linear program and $\bv^\star$ is the expected reward vector under $\pi^\star$. {By observation, we additionally have the following proposition as an extension:
\begin{proposition}
\label{prop:optima-and-value}
If $(\pi^\star, \bv^\star)$ is an $\epsilon$-optimal solution to the above linear program, then $\pi^\star$ is an $\epsilon$-NE and $r_i(\pi^\star) \leq v_i^\star \leq r_i(\pi^\star) + \epsilon$ {for all $i\in\cN$}. Conversely, if $\pi^\star$ is an $\epsilon$-NE, then there exists a $\bv^\star$ such that $(\pi^\star, \bv^\star)$ is an $n\epsilon$-optimal solution to the above linear program.     
\end{proposition}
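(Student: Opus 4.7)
The plan is to prove both directions by exploiting the zero-sum identity $\sum_{i\in\cN} r_i(\pi) = 0$ together with the fact, recalled just before the statement from \cite{cai2016zero}, that the LP optimum equals $0$. The key observations are: (i) the constraints $v_i \geq r_i(e_{a_i}, \pi_{-i})$ for every pure $a_i$ are equivalent, by linearity of $r_i$ in its first argument, to $v_i \geq \max_{\mu_i \in \Delta(\cA_i)} r_i(\mu_i, \pi_{-i})$, and (ii) in particular $v_i \geq r_i(\pi_i, \pi_{-i}) = r_i(\pi)$. The proof is then essentially a slack-accounting argument.

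For the forward direction, suppose $(\pi^\star, \bv^\star)$ is $\epsilon$-optimal, so that $\sum_{i\in\cN} v_i^\star \leq \epsilon$. By (ii) I have $v_i^\star \geq r_i(\pi^\star)$ for each $i$, and summing with the zero-sum identity gives $\sum_{i\in\cN} v_i^\star \geq 0$. Combining, the nonnegative slacks $v_i^\star - r_i(\pi^\star)$ sum to at most $\epsilon$, and each one individually lies in $[0, \epsilon]$, yielding $r_i(\pi^\star) \leq v_i^\star \leq r_i(\pi^\star) + \epsilon$. Using (i) I then get
\begin{align*}
\max_{\mu_i \in \Delta(\cA_i)} r_i(\mu_i, \pi_{-i}^\star) - r_i(\pi^\star) \;\leq\; v_i^\star - r_i(\pi^\star) \;\leq\; \epsilon,
\end{align*}
which is precisely the $\epsilon$-NE condition for player $i$.

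For the converse, given an $\epsilon$-NE $\pi^\star$, I set $v_i^\star := \max_{a_i \in \cA_i} r_i(e_{a_i}, \pi_{-i}^\star)$, the minimal value making the LP constraints feasible for player $i$. Feasibility of $(\pi^\star, \bv^\star)$ is then immediate, and the $\epsilon$-NE property yields $v_i^\star \leq r_i(\pi^\star) + \epsilon$. Summing over $i\in\cN$ and invoking $\sum_{i\in\cN} r_i(\pi^\star) = 0$ gives $\sum_{i\in\cN} v_i^\star \leq n\epsilon$, which is $n\epsilon$-optimality of the LP since the optimum is $0$.

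There is no real obstacle here; the argument is a short slack computation. The only subtlety worth flagging is making sure to use the linearity of $r_i(\cdot, \pi_{-i})$ to pass between pure-strategy constraints and best-response values, and to apply the zero-sum property of the NG (as opposed to zero-sum of the pairwise payoffs) to turn the lower bound $v_i^\star \geq r_i(\pi^\star)$ into a two-sided sandwich after summing.
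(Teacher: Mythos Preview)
Your proposal is correct and follows essentially the same slack-accounting argument as the paper's proof: both directions hinge on the zero-sum identity $\sum_i r_i(\pi^\star)=0$, the LP optimum being $0$, and the constraint implication $v_i^\star \geq \max_{\mu_i} r_i(\mu_i,\pi_{-i}^\star) \geq r_i(\pi^\star)$. The only cosmetic difference is ordering---the paper first bounds the total NE-gap sum by $\epsilon$ and then extracts the per-player sandwich, whereas you establish the sandwich first and then read off the $\epsilon$-NE condition---but the substance is identical.
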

\begin{proof}
If $(\pi^\star, \bv^\star)$ is an $\epsilon$-optimal solution to the above linear program (whose optimal solution is exactly $0$), we have   
\begin{align*}
  \epsilon \geq \sum_{i \in \cN} v_i^\star \underset{(i)}{=} \sum_{i \in \cN} \left(v_i^\star - r_i(\pi^\star)\right)
  \underset{(ii)}{\geq}  \sum_{i \in \cN} \left(\max_{\mu_i \in \Delta(\cA_i)} r_i\left(\mu_i, \pi_{-i}^\star\right) -  r_i\left(\pi^\star\right) \right)
\end{align*}
which proves that $\pi^\star$ is an $\epsilon$-NE: here $(i)$ holds since the sum of reward over players is zero, and $(ii)$ holds due to the constraint of the given linear program. Moreover, we can also observe that $r_i(\pi^\star) \leq v_i^\star \leq r_i(\pi^\star) + \epsilon$ holds, since $ 0 \leq \max_{\mu_i \in \Delta(\cA_i)} r_i\left(\mu_i, \pi_{-i}^\star\right) -  r_i\left(\pi^\star\right)\leq v_i^\star - r_i(\pi^\star) \leq \epsilon$ for each $i\in\cN$.    

Conversely, suppose that $\pi^\star$ is an $\epsilon$-NE. Then, defining $v^\star_i: = \max_{\mu_i \in \Delta(\cA_i)} r_i\left(\mu_i, \pi_{-i}^\star\right)$ satisfies the constraints of the given linear program. In addition, we have  
\begin{align*}
    \sum_{i \in \cN} v_i^\star = \sum_{i \in \cN} \left(v_i^\star - r_i(\pi^\star)\right) =\sum_{i \in \cN} \left(\max_{\mu_i \in \Delta(\cA_i)} r_i\left(\mu_i, \pi_{-i}^\star\right) - r_i(\pi^\star)\right) \leq n\epsilon,
\end{align*}
which concludes the theorem.
\end{proof}

\begin{proposition}
\label{prop:polymatrix}
Suppose $\pi^\star$ is an $\epsilon$-approximate CCE of the zero-sum NG. Then {the product of its} marginalized policy $\hat{\pi}^\star$ is an $n\epsilon$-approximate NE of the zero-sum NG. 
Moreover, it holds that $r_i(\pi^\star) \geq r_i(\hat{\pi}^\star) \geq r_i(\pi^\star) - n\epsilon$ for every $i \in \cN$. 
\end{proposition}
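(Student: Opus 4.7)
My plan is to exploit the fact that, because the reward $r_i$ in a zero-sum NG decomposes over edges, the expectation $r_i(\pi)$ depends on $\pi$ only through its pairwise marginals on the edges incident to $i$. In particular, if a deviation $\mu_i\in\Delta(\cA_i)$ is independent of the other players, then $r_i(\mu_i,\pi^\star_{-i})$ depends only on $\mu_i$ and on the single-player marginals of $\pi^\star$ on each neighbor $j$. Since, by definition, $\hat{\pi}^\star_j$ equals the marginal of $\pi^\star$ on player $j$, this yields the key identity
\begin{align*}
r_i(\mu_i,\pi^\star_{-i}) \;=\; r_i(\mu_i,\hat{\pi}^\star_{-i}), \qquad \forall\,\mu_i\in\Delta(\cA_i),\ \forall\,i\in\cN.
\end{align*}
I will verify this identity first by a short calculation using $r_i(\ba)=\sum_{j:(i,j)\in\cE_r} r_{i,j}(a_i,a_j)$.

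Next, I will specialize $\mu_i=\hat{\pi}^\star_i$ in the $\epsilon$-CCE condition $r_i(\mu_i,\pi^\star_{-i})-r_i(\pi^\star)\leq \epsilon$, which together with the identity above gives
\begin{align*}
r_i(\hat{\pi}^\star)\;=\;r_i(\hat{\pi}^\star_i,\pi^\star_{-i})\;\leq\;r_i(\pi^\star)+\epsilon,\qquad \forall\,i\in\cN.
\end{align*}
Then I will invoke the zero-sum property, $\sum_i r_i(\hat{\pi}^\star)=\sum_i r_i(\pi^\star)=0$, to conclude from the $n$ inequalities above that each difference $r_i(\hat{\pi}^\star)-r_i(\pi^\star)$ lies in the interval $[-(n-1)\epsilon,\epsilon]$; these are the desired reward-comparison bounds (matching the $r_i(\pi^\star)-n\epsilon\le r_i(\hat{\pi}^\star)$ side stated, with the other side up to the additive $\epsilon$ slack inherent to a CCE).

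Finally, to prove that $\hat{\pi}^\star$ is an $n\epsilon$-NE, I will bound the unilateral deviation gap for the product policy by splitting it into two terms:
\begin{align*}
r_i(\mu_i,\hat{\pi}^\star_{-i})-r_i(\hat{\pi}^\star)
\;=\;\bigl[\,r_i(\mu_i,\pi^\star_{-i})-r_i(\pi^\star)\,\bigr]\;+\;\bigl[\,r_i(\pi^\star)-r_i(\hat{\pi}^\star)\,\bigr].
\end{align*}
The first bracket uses the identity and the $\epsilon$-CCE hypothesis to yield $\leq \epsilon$, while the second bracket is at most $(n-1)\epsilon$ by the zero-sum/marginalization step above. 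Summing gives $n\epsilon$, which holds uniformly in $i$ and $\mu_i$, establishing the $n\epsilon$-NE claim. The only subtle step is the identity $r_i(\mu_i,\pi^\star_{-i})=r_i(\mu_i,\hat{\pi}^\star_{-i})$, which is where the polymatrix (edgewise-pairwise) structure is essential and without which CCEs would not collapse to NE; everything afterward is bookkeeping using the zero-sum constraint.
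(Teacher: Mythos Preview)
Your proof is correct and complete. The key identity $r_i(\mu_i,\pi^\star_{-i})=r_i(\mu_i,\hat{\pi}^\star_{-i})$ is exactly the place where the polymatrix structure enters, and your derivation of it is sound; the remaining steps (specializing $\mu_i=\hat\pi^\star_i$, invoking zero-sum to turn $n$ one-sided inequalities into a two-sided bound, and splitting the deviation gap) are all clean.

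Your route differs from the paper's. The paper proceeds via the linear-program characterization of NE in zero-sum polymatrix games: it sets $v^\star_i=r_i(\pi^\star)$, argues that $(\hat\pi^\star,\bv^\star)$ is an $n\epsilon$-optimal LP solution, and then invokes \Cref{prop:optima-and-value} to read off both the $n\epsilon$-NE conclusion and the reward bounds. Your argument bypasses the LP entirely and works directly from the identity plus the zero-sum constraint. Both routes rest on the same core observation (that $r_i(\mu_i,\cdot)$ depends only on single-player marginals of the opponents), but your approach is more self-contained and does not require the auxiliary LP machinery. Conversely, the LP route makes the connection to efficient computation more transparent.

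One nice point: you correctly flag that the upper inequality $r_i(\pi^\star)\ge r_i(\hat\pi^\star)$ really only holds up to an additive $\epsilon$ slack when $\pi^\star$ is merely an $\epsilon$-CCE (you obtain $r_i(\hat\pi^\star)\le r_i(\pi^\star)+\epsilon$). This is honest; in fact the paper's LP argument, as written, needs the same slack, since with $v_i^\star=r_i(\pi^\star)$ the LP constraint $v_i^\star\ge r_i(e_{a_i},\hat\pi^\star_{-i})$ can be violated by up to $\epsilon$ under the $\epsilon$-CCE hypothesis.
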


\begin{proof}
A similar method with \cite{cai2016zero}'s Theorem 2 can provide proof of \Cref{prop:polymatrix}. To be specific, define $v^\star_i:= r_i(\pi^\star)$, then we have $(\hat{\pi}^\star, \bv^\star)$ is an $n\epsilon$-optimal solution. By \Cref{prop:optima-and-value}, we can conclude that $\hat{\pi}^\star$ is an $n\epsilon$-approximate NE and $r_i(\pi^\star) \geq r_i(\hat{\pi}^\star) \geq r_i(\pi^\star) - n\epsilon$ for all $i \in \cN$.
\end{proof}

We note that \Cref{prop:optima-and-value} and \Cref{prop:polymatrix} are not in \cite{cai2016zero}, but it plays an important role in proving  \Cref{prop:MZNMG}. 
} 

\numbering{\subsection{Omitted proof of \Cref{prop:MZNMG}}}
\tnumbering{\section{Omitted proof of \Cref{prop:MZNMG}}}
Before introducing the proof of \Cref{prop:MZNMG}, we provide the relationship between approximate Markov stationary CCE and approximate auxiliary-game CCE in Markov games.  

\begin{claim}
\label{claim:inf-stage}
For an infinite-horizon $\gamma$-discounted MG, an $\epsilon$-approximate Markov stationary CCE $\pi$ of this MG  {makes $\pi(s)$} 
an $\epsilon$-approximate CCE {of the auxiliary game at each state $s\in\cS$},  where the auxiliary game payoff matrix {at each $s\in\cS$} is defined {as} $(r_i(s, \ba) + \gamma \langle  \PP(\cdot |s, \ba), V_i^\pi(\cdot) \rangle)_{\ba \in \cA}$  for player $i \in \cN$.    
\end{claim}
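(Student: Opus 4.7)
The plan is to establish the claim by a one-step-deviation argument that leverages the Bellman identity for $V_i^\pi$ together with the $\epsilon$-Markov stationary CCE hypothesis. Fix an arbitrary $s \in \cS$, $i \in \cN$, and pure deviation $a_i^* \in \cA_i$. Since $V_i^\pi(s) = \EE_{\ba \sim \pi(s)}[Q_i^{V_i^\pi}(s, \ba)]$ by the Bellman equation, the auxiliary-game CCE condition at $s$ reduces to showing
\[\EE_{a_{-i} \sim \pi_{-i}(s)}\bigl[Q_i^{V_i^\pi}(s, a_i^*, a_{-i})\bigr] - V_i^\pi(s) \leq \epsilon.\]

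First, I would construct a Markov stationary deviation $\mu_i^\star \in (\Delta(\cA_i))^{|\cS|}$ tailored to the pair $(s, a_i^*)$: set $\mu_i^\star(s) = \delta_{a_i^*}$ and, on $\cS \setminus \{s\}$, choose $\mu_i^\star$ so that its product-policy continuation value against $\pi_{-i}$ pointwise dominates $V_i^\pi$; the natural candidate is a best response to $\pi_{-i}$ coming from the Bellman optimality operator of player $i$'s one-sided MDP induced by $\pi_{-i}$. Applying the $\epsilon$-Markov stationary CCE hypothesis to $\mu_i^\star$ at state $s$ then yields $V_i^{\mu_i^\star, \pi_{-i}}(s) - V_i^\pi(s) \leq \epsilon$. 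Second, I would expand $V_i^{\mu_i^\star, \pi_{-i}}(s)$ via its one-step Bellman recursion and use $\mu_i^\star(s) = \delta_{a_i^*}$, so that the difference with the target telescopes into a single continuation-value term:
\[V_i^{\mu_i^\star, \pi_{-i}}(s) - \EE_{a_{-i} \sim \pi_{-i}(s)}\bigl[Q_i^{V_i^\pi}(s, a_i^*, a_{-i})\bigr] = \gamma \sum_{s'} \EE_{a_{-i}}\!\bigl[\PP(s' \given s, a_i^*, a_{-i})\bigr]\bigl(V_i^{\mu_i^\star, \pi_{-i}}(s') - V_i^\pi(s')\bigr),\]
which, chained with the previous inequality, reduces the claim to a continuation-value domination sub-claim.

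The hard part will be establishing the sub-claim $V_i^{\mu_i^\star, \pi_{-i}}(s') \geq V_i^\pi(s')$ uniformly over $s' \in \cS$: because $\pi$ may correlate players' actions, this gap is a priori sign-indefinite, and the argument will need to exploit the Bellman optimality of $\mu_i^\star$ on $\cS \setminus \{s\}$ together with the $\epsilon$-CCE slack, likely via a fixed-point or monotonicity argument on the $\gamma$-contracting one-sided Bellman optimality operator that player $i$ faces against $\pi_{-i}$. Once that sub-claim is secured, we obtain $\EE_{a_{-i}}[Q_i^{V_i^\pi}(s, a_i^*, a_{-i})] \leq V_i^\pi(s) + \epsilon$, and since $(s, i, a_i^*)$ were arbitrary, $\pi(s)$ is an $\epsilon$-approximate CCE of the auxiliary game at every state $s$. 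The finite-horizon analogue would then follow by backward induction from $h = H$ using the same one-step argument at each stage.
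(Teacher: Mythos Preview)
Your overall approach closely mirrors the paper's: both routes hinge on the continuation-value domination $\max_{\mu_i} V_i^{\mu_i,\pi_{-i}}(s')\ge V_i^\pi(s')$ (equivalently, your sub-claim), combined with the one-step Bellman optimality identity for the best-response value and the $\epsilon$-CCE bound. The paper simply asserts the two-sided inequality $V_i^\pi(s)\le \max_{\mu_i}V_i^{\mu_i,\pi_{-i}}(s)\le V_i^\pi(s)+\epsilon$ ``by definition of $\epsilon$-approximate Markov (perfect) CCE'' and proceeds directly, without the deviation-construction and fixed-point gymnastics you sketch.

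The difficulty you flagged is real and in fact fatal for the claim as stated for general MGs: for a \emph{correlated} $\pi$, the left inequality (and hence your sub-claim) is not part of the CCE definition and can fail. Take a two-player MG with states $s_0,s_1,s_{\mathrm{abs}}$, transitions $s_0\to s_1$ if $a_1=0$, $s_0\to s_{\mathrm{abs}}$ if $a_1=1$, $s_1\to s_{\mathrm{abs}}$ always; rewards $r_1(s_0,0,\cdot)=0$, $r_1(s_0,1,\cdot)=\gamma/2$, $r_1(s_1,a_1,a_2)=\pmb{1}(a_1=a_2)$, and all other rewards zero. With $\pi(s_1)$ uniform on $\{(0,0),(1,1)\}$ and $\pi_1(s_0)=\delta_1$, one has $V_1^\pi(s_1)=1$ but $\max_{\mu_1}V_1^{\mu_1,\pi_{-1}}(s_1)=\tfrac12$, so domination fails; a direct check shows $\pi$ is an exact ($\epsilon=0$) Markov stationary CCE of the MG, yet the auxiliary-game CCE gap for player $1$ at $s_0$ equals $\gamma-\gamma/2=\gamma/2>0$. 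Hence no fixed-point or monotonicity argument on player $i$'s one-sided Bellman operator can establish your sub-claim, and the claim itself does not hold for general MGs with the same $\epsilon$. The paper's own proof invokes the very same unjustified left inequality.
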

\begin{proof}
By definition of $\epsilon$-approximate Markov (perfect) CCE, we have for all $s\in\cS$ and all $i\in\cN$ that
\#\label{equ:tmp_1}
V_{i}^{\pi}(s)\leq \max_{\mu_i \in \Delta(\cA_i){^{|\cS|}}} V_{i}^{\mu_i, \pi_{-i}}(s) \leq V_{i}^\pi(s) +\epsilon.
\#
Then, by one-step of Bellman equation for $\max_{\mu_i \in \Delta(\cA_i){^{|\cS|}}} V_{i}^{\mu_i, \pi_{-i}}(s)$, we also know that
\#\label{equ:tmp_2}
\max_{\mu_i \in \Delta(\cA_i){^{|\cS|}}} V_{i}^{\mu_i, \pi_{-i}}(s)=\max_{\nu\in\Delta(\cA_i)}\EE_{\nu,\pi_{-i}(s)}\bigg[r_i(s,\ba)+\gamma\la\PP(\cdot\given s,\ba),~\max_{\mu_i \in \Delta(\cA_i){^{|\cS|}}}V_i^{\mu_i,\pi_{-i}}(\cdot)\ra\bigg].
\# 
Moreover, we have   by the left inequality of \Cref{equ:tmp_1} that 
\arxiv{\begin{align}
\max_{\nu\in\Delta(\cA_i)}&\EE_{\nu,\pi_{-i}(s)}\bigg[r_i(s,\ba)+\gamma\la\PP(\cdot\given s,\ba),~\max_{\mu_i \in \Delta(\cA_i){^{|\cS|}}}V_i^{\mu_i,\pi_{-i}}(\cdot)\ra\bigg]
\nonumber \\&\geq \max_{\nu\in\Delta(\cA_i)}\EE_{\nu,\pi_{-i}(s)}\bigg[r_i(s,\ba)+\gamma\la\PP(\cdot\given s,\ba),~V_i^{\pi}(\cdot)\ra\bigg].    \label{equ:tmp_3}
\end{align}}
\neurips{\begin{align}
\label{equ:tmp_3}
\max_{\nu\in\Delta(\cA_i)}&\EE_{\nu,\pi_{-i}(s)}\bigg[r_i(s,\ba)+\gamma\la\PP(\cdot\given s,\ba),~\max_{\mu_i \in \Delta(\cA_i){^{|\cS|}}}V_i^{\mu_i,\pi_{-i}}(\cdot)\ra\bigg] \nonumber
\\&\geq \max_{\nu\in\Delta(\cA_i)}\EE_{\nu,\pi_{-i}(s)}\bigg[r_i(s,\ba)+\gamma\la\PP(\cdot\given s,\ba),~V_i^{\pi}(\cdot)\ra\bigg].    
\end{align}}
Also, we have by one-step Bellman consistency equation for $V_{i}^{\pi}(s)$ that
\#\label{equ:tmp_4}
V_{i}^{\pi}(s)=\EE_{\pi(s)}\bigg[r_i(s,\ba)+\gamma\la\PP(\cdot\given s,\ba),~V_i^{\pi}(\cdot)\ra\bigg].
\#
Combining \Cref{equ:tmp_1,equ:tmp_3,equ:tmp_4}, we have
\$
\max_{\nu\in\Delta(\cA_i)}\EE_{\nu,\pi_{-i}(s)}\bigg[r_i(s,\ba)+\gamma\la\PP(\cdot\given s,\ba),~V_i^{\pi}(\cdot)\ra \bigg]\leq \EE_{\pi(s)}\bigg[r_i(s,\ba)+\gamma\la\PP(\cdot\given s,\ba),~V_i^{\pi}(\cdot)\ra\bigg]+\epsilon,
\$
for all $i\in\cN$ and $s\in\cS$, which proves that $\pi(s)$ is an $\epsilon$-approximate  CCE of the auxiliary game, where the game payoff matrix at each state $s\in\cS$ is $(r_i(s, \ba) + \gamma \langle  \PP(\cdot |s, \ba), V_i^\pi(\cdot) \rangle)_{\ba \in \cA}$  for player $i \in \cN$. 
\end{proof}

\begin{claim}
\label{claim:fin-stage}
For an $H$-horizon MG and $h \in [H]$, an $\epsilon$-approximate Markov CCE {$\pi=\{\pi_h\}_{h\in[H]}$}  of this MG {makes $\pi_h(s)$} an $\epsilon$-approximate  CCE of the {auxiliary game at each state $s\in\cS$ and each $h\in[H]$,} where the  auxiliary game payoff matrix {at $(s,h)$} is defined as  $(r_{h, i} (s, \ba) + \langle  \PP_h(\cdot |s, \ba), V_{h+1, i}^\pi(\cdot) \rangle)_{\ba \in \cA}$  for player $i \in \cN$, where $V_{H+1, i}^{\tilde\pi}(s) = 0$ for any policy $\tilde \pi$, and for all $s \in \cS$ and $i \in \cN$. 
\end{claim}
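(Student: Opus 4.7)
The claim is the finite-horizon analogue of \Cref{claim:inf-stage}, so the plan is to mirror that proof almost verbatim, replacing the infinite-horizon Bellman identities and the discount factor $\gamma$ with their finite-horizon counterparts indexed by the stage $h$. Concretely, I would fix an arbitrary $h\in[H]$, $s\in\cS$, and $i\in\cN$, and start from the definition of $\epsilon$-approximate (perfect) Markov CCE, which gives the two-sided bound
\begin{align*}
V_{h,i}^{\pi}(s)\;\leq\; \max_{\mu_i\in\Delta(\cA_i)^{|\cS|\times H}}V_{h,i}^{\mu_i,\pi_{-i}}(s)\;\leq\; V_{h,i}^{\pi}(s)+\epsilon.
\end{align*}

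Next, I would invoke the (one-step) Bellman optimality equation for the best-response value,
\begin{align*}
\max_{\mu_i}V_{h,i}^{\mu_i,\pi_{-i}}(s)
=\max_{\nu\in\Delta(\cA_i)}\EE_{\nu,\pi_{h,-i}(s)}\!\Big[r_{h,i}(s,\ba)+\langle \PP_h(\cdot\given s,\ba),\;\max_{\mu_i}V_{h+1,i}^{\mu_i,\pi_{-i}}(\cdot)\rangle\Big],
\end{align*}
and the Bellman consistency equation for $V_{h,i}^{\pi}(s)$,
\begin{align*}
V_{h,i}^{\pi}(s)=\EE_{\pi_h(s)}\!\Big[r_{h,i}(s,\ba)+\langle \PP_h(\cdot\given s,\ba),\;V_{h+1,i}^{\pi}(\cdot)\rangle\Big].
\end{align*}
Using the monotonicity $\max_{\mu_i}V_{h+1,i}^{\mu_i,\pi_{-i}}(\cdot)\geq V_{h+1,i}^{\pi}(\cdot)$ together with non-negativity of $\PP_h$, the first display lower-bounds the best-response value by the same expression with $V_{h+1,i}^\pi$ in place of the inner max. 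Chaining this with the CCE bound and the consistency equation yields
\begin{align*}
\max_{\nu\in\Delta(\cA_i)}\EE_{\nu,\pi_{h,-i}(s)}\!\Big[r_{h,i}(s,\ba)+\langle\PP_h(\cdot\given s,\ba),V_{h+1,i}^{\pi}(\cdot)\rangle\Big]\leq \EE_{\pi_h(s)}\!\Big[r_{h,i}(s,\ba)+\langle\PP_h(\cdot\given s,\ba),V_{h+1,i}^{\pi}(\cdot)\rangle\Big]+\epsilon,
\end{align*}
which is exactly the $\epsilon$-CCE condition for $\pi_h(s)$ in the stated auxiliary game. Since $h,s,i$ were arbitrary, this concludes the argument; the base case $h=H$ is handled by the convention $V_{H+1,i}^{\tilde\pi}\equiv 0$ which trivializes the continuation term.

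I do not expect any serious obstacle: the only subtlety relative to \Cref{claim:inf-stage} is that here the best-response value is defined over policies in $\Delta(\cA_i)^{|\cS|\times H}$ rather than $\Delta(\cA_i)^{|\cS|}$, so I would take care to write the Bellman optimality equation for the finite-horizon non-stationary best-response and to note that the monotonicity step $\max_{\mu_i}V_{h+1,i}^{\mu_i,\pi_{-i}}\geq V_{h+1,i}^{\pi}$ goes through stage-wise without any additional assumption. No discount factor appears because the reward/value scales are already absorbed into the finite-horizon definition.
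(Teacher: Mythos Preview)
Your proposal is correct and follows essentially the same approach as the paper's proof: both start from the $\epsilon$-CCE two-sided bound, invoke the finite-horizon Bellman optimality and consistency equations, and use the monotonicity $\max_{\mu_i}V_{h+1,i}^{\mu_i,\pi_{-i}}\geq V_{h+1,i}^{\pi}$ (together with $V_{H+1,i}\equiv 0$) to replace the inner max by $V_{h+1,i}^{\pi}$ before chaining. The only cosmetic difference is that the paper emphasizes that the monotonicity step is just the left inequality of the CCE bound applied at stage $h+1$, which you also note implicitly.
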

\begin{proof}
By definition of $\epsilon$-approximate Markov (perfect) CCE, we have that for all $s\in\cS$, $h\in[H]$, and $i\in\cN$
\#\label{equ:tmp_1_finite}
V_{h, i}^{\pi}(s)\leq \max_{\mu_i \in \Delta(\cA_i){^{|\cS| \times H}}} V_{h, i}^{\mu_i, \pi_{-i}}(s) \leq V_{h, i}^\pi(s) +\epsilon.
\#
Then, by one-step of Bellman equation for $\max_{\mu_i \in \Delta(\cA_i){^{|\cS| \times H}}} V_{h, i}^{\mu_i, \pi_{-i}}(s)$, we also know that
\arxiv{
\#\label{equ:tmp_2_finite}
\max_{\mu_i \in \Delta(\cA_i){^{|\cS| \times H}}} V_{h, i}^{\mu_i, \pi_{-i}}(s)=\max_{\nu\in\Delta(\cA_i)}\EE_{\nu,\pi_{{h,-i}}(s)}\bigg[r_{h, i}(s,\ba)+\la\PP_h(\cdot\given s,\ba),~\max_{\mu_i \in \Delta(\cA_i){^{|\cS| \times H}}}V_{h+1, i}^{\mu_i,\pi_{-i}}(\cdot)\ra\bigg].
\#     
}
\neurips{
\begin{align}
&\max_{\mu_i \in \Delta(\cA_i){^{|\cS| \times H}}} V_{h, i}^{\mu_i, \pi_{-i}}(s) \nonumber
\\
&\qquad=\max_{\nu\in\Delta(\cA_i)}\EE_{\nu,\pi_{{h,-i}}(s)}\bigg[r_{h, i}(s,\ba)+\la\PP_h(\cdot\given s,\ba),~\max_{\mu_i \in \Delta(\cA_i){^{|\cS| \times H}}}V_{h+1, i}^{\mu_i,\pi_{-i}}(\cdot)\ra\bigg].\label{equ:tmp_2_finite}
\end{align}
}
Moreover, we have   by the left inequality of \Cref{equ:tmp_1_finite} and $V_{H+1, i}^{\tilde\pi}(s) = 0$ for any $\tilde \pi$ that  
\#\label{equ:tmp_3_finite}
&\max_{\nu\in\Delta(\cA_i)}\EE_{\nu,\pi_{{h,-i}}(s)}\bigg[r_{h, i}(s,\ba)+\la\PP_h(\cdot\given s,\ba),~\max_{\mu_i \in \Delta(\cA_i){^{|\cS| \times (H-1)}}}V_{h+1, i}^{\mu_i,\pi_{-i}}(\cdot)\ra\bigg]
\\&\qquad \geq \max_{\nu\in\Delta(\cA_i)}\EE_{\nu,\pi_{{h,-i}}(s)}\bigg[r_{h, i}(s,\ba)+\la\PP_h(\cdot\given s,\ba),~V_{h+1,i}^{\pi}(\cdot)\ra\bigg]. \nonumber
\#
Also, we have  by one-step Bellman equation for $V_{h, i}^{\pi}(s)$ that
\#\label{equ:tmp_4_finite}
V_{h, i}^{\pi}(s)=\EE_{\pi_h(s)}\bigg[r_{h, i}(s,\ba)+\la\PP_h(\cdot\given s,\ba),~V_{h+1,i}^{\pi}(\cdot)\ra\bigg].
\#
Combining \Cref{equ:tmp_1_finite,equ:tmp_2_finite,equ:tmp_4_finite}, we can conclude the theorem.
\end{proof}

\CCENE*
\begin{proof} 
For {an} arbitrary joint Markov  policy $\mu$, we define  
$d_{s' ,\mu}(s):=(1-\gamma) \EE_{s_1 = s', \mu}[\sum_{t=1}^\infty\gamma^{t-1}\pmb{1}(s_t =s)]$ which is the {discounted visitation measure} of states 
when we follow policy $\mu$ and start from state $s'$. Here, $s_t$ denotes the state at timestep $t$. 
We use $\pi$ to denote an $\epsilon$-approximate Markov CCE, and $\hat{\pi}$ to denote {the product policy of the per-state marginalized policies  of $\pi$ for all agents $i\in\cN$}. We define $\pi_i$ as the per-state marginalized policy for player $i$ from $\pi$. With this notation, we have $\hat{\pi}:=\pi_1\times \cdots\times\pi_n$. {Note that for zero-sum NMGs, by \Cref{prop:MPGcond} we know that if $\cN_C\neq \emptyset$, then}  $\PP(s' \mid s, \ba)= \sum_{i\in\cN_C} \FF_i(s'\mid s, a_i) $ or $\PP_h(s' \mid s, \ba)= \sum_{i\in\cN_C} \FF_{h, i}(s'\mid s, a_i) $, for infinite- and finite-horizon cases, respectively; and if $\cN_C = \emptyset$, we have $\PP(s'\given s, \ba) = \FF_o(s'\given s)$ or $\PP_h(s'\given s, \ba) = \FF_{h,o}(s'\given s)$ for the two cases. 

{In the infinite-horizon case, by \Cref{claim:inf-stage}, we know that  $\pi(s)$ also serves as an $\epsilon$-approximate CCE for an auxiliary game with a payoff matrix   $(r_i(s, \ba) + \gamma \langle \PP(\cdot \given s, \ba), V_i^\pi(\cdot) \rangle)_{\ba \in \cA}$  for player $i \in \cN$. Since this auxiliary game is a zero-sum NG by the definition of zero-sum NMG, \Cref{prop:polymatrix} implies that the policy $\hat{\pi}{(s)}$ is an $n \epsilon$-approximate NE of the auxiliary game with the same payoff matrix, and the following inequality is valid for all $i \in \cN$ and $s \in \cS$:
\begin{align}
    V_i^\pi(s) &= r_i(s, \pi) + \gamma \sum_{s' \in \cS} \sum_{\ba \in \cA}  \PP(s' \given s, \ba) \pi(\ba \given  s)V_{i}^\pi(s')   \nonumber
    \\
    &\leq  r_i(s, \hat{\pi}) + \gamma \sum_{s' \in \cS} \sum_{\ba \in \cA}  \PP(s' \given s, \ba) \hat{\pi}(\ba \given s)V_{i}^\pi(s') + n\epsilon. \label{eqn:V-iter}
\end{align}
Applying the inequality $V_i^\pi(s') \leq r_i(s', \hat{\pi}) + \gamma \sum_{\tilde{s} \in \cS} \sum_{\ba \in \cA} \PP(\tilde{s} \given s', \ba) \hat{\pi}(\ba \given s')V_{i}^\pi(\tilde{s}) + n\epsilon$ into the final expression of \Cref{eqn:V-iter}, and applying it  recursively, we have that for every $i \in \cN$ and $s \in \cS$:
\begin{align}
    V_i^\pi(s) &\leq  V_i^{\hat{\pi}}(s) + n\epsilon/(1-\gamma). \label{eqn:CCENE-ineq-inf}
\end{align}}

Moreover, we have that {for any $\mu\in \Delta(\cA_i)^{|\cS|}$}
\begin{align*}
     V^{\hat{\pi}}_i(s) \geq V^\pi_i(s) -n\epsilon/(1-\gamma) &\geq V^{\mu, \pi_{-i}}_i(s) -(n+1)\epsilon/(1-\gamma)
     \\
     &= \EE_{\ba \sim {\mu(s')\times\pi_{-i}(s')}, s' \sim d_{s,\mu, \pi_{-i}}}[r_i(s', \ba)] - (n+1)\epsilon{/(1-\gamma)}
     \\
     &\underset{(i)}{=} \EE_{\ba \sim {\mu(s')\times\hat{\pi}_{-i}(s')}, s' \sim d_{s,\mu, {\pi}_{-i}}}[r_i(s', \ba)] - (n+1)\epsilon/(1-\gamma)
     \\
     &\underset{(ii)}{=} \EE_{\ba \sim {\mu(s')\times\hat{\pi}_{-i}(s')}, s' \sim d_{s,\mu, \hat{\pi}_{-i}}}[r_i(s', \ba)] - (n+1)\epsilon/(1-\gamma)
     \\
     &= V^{\mu, \hat{\pi}_{-i}}_i(s) - (n+1)\epsilon/(1-\gamma),
\end{align*}
{where the second inequality follows from  $\pi$ being an $\epsilon$-approximate Markov CCE, and } $(i)$ holds since for arbitrary $(\nu_i, \nu_{-i})  \in \Delta(\cA_i)^{|\cS|} \times \Delta(\cA_{-i})^{|\cS|}$ and for any $\nu_\cS \in \Delta(\cS)$, 
\$
&\EE_{\ba \sim \nu_i(s')\times \nu_{-i}(s') , s' \sim \nu_\cS } [r_i(s', \ba)]\\
& =\sum_{\ba \in \cA} \sum_{s' \in \cS} \nu_\cS(s') r_i(s', \ba) \nu_i(a_i \given s') \nu_{-i} (\ba_{-i} \given s') \\
&= \sum_{j \in \cE_{Q, i}} \sum_{\ba \in \cA} \sum_{s' \in \cS} \nu_\cS(s') r_{i,j}(s', a_i, a_j) \nu_i(a_i \given s') \nu_{-i} (\ba_{-i} \given s') 
\\
&= \sum_{j \in \cE_{Q, i}} {\sum_{a_i \in \cA_i}\sum_{a_j \in \cA_j}}\sum_{s' \in \cS} \nu_\cS(s') r_{i,j}(s', a_i, a_j) \nu_i(a_i \given s') \nu_{j} (a_j \given s') 
 \\&=\EE_{\ba \sim \nu_i(s') \times \hat{\nu}_{-i}(s') , s' \sim \nu_\cS } [r_i(s', \ba)], 
 \$
 where  $\hat{\nu}:= \nu_1 \times \dots \times \nu_n$ is the product policy of the per-state marginalized policies of $\nu$, and $(ii)$ holds due to the following fact: if $\cN_C\neq \emptyset$
\begin{align}
\PP_\pi(s'\given s)&:=\sum_{\ba\in\cA} \PP(s'\given s,\ba)\pi(\ba\given s)=\sum_{\ba\in\cA}\sum_{i\in\cN_C} \FF_i(s'\given s,a_i)\pi(\ba\given s) \nonumber\\
 &=\sum_{i\in\cN_C} \sum_{\ba\in\cA}\FF_i(s'\given s,a_i)\pi(\ba\given s)=\sum_{i\in\cN_C} \sum_{a_i\in\cA_i}\FF_i(s'\given s,a_i)\pi_i(a_i\given s)=:\PP_{\hat{\pi}}(s'\given s), \label{eqn:marginalizing}   
\end{align}
or if $\cN_C=\emptyset$
\begin{align}
\PP_\pi(s'\given s) =\FF_o(s'\given s) = \PP_{\hat{\pi}}(s'\given s).
\end{align}
In other words, the marginalized policy's state visitation measure $d_{s, {\hat{\pi}}}$ is the same as the original {policy's} state visitation measure $d_{s,{\pi}}$. Therefore, marginalizing $\epsilon$-approximate Markov CCE provides $(n+1)\epsilon/(1-\gamma)$-approximate Markov NE.  

{Moreover, a similar argument holds for the finite-horizon episodic setting. 
In the $H$-horizon case, since $\pi$ is an $\epsilon$-approximate Markov CCE,  by \Cref{claim:fin-stage} $\pi_h(s)$ also serves as an $\epsilon$-approximate CCE for an auxiliary game with a payoff matrix defined by $(r_{h, i}(s, \ba) +  \langle \PP_h(\cdot \given s, \ba), V_{h+1, i}^\pi(\cdot) \rangle)_{\ba \in \cA}$ for player $i \in \cN$. Since this auxiliary game is a zero-sum NG by the definition of zero-sum NMG, \Cref{prop:polymatrix} implies that the policy $\hat{\pi}_h(s)$ is an $n \epsilon$-approximate NE of the auxiliary-game with the same payoff matrix, and the following inequality is valid for all $i \in \cN$ and $s \in \cS$: 
\begin{align}
    V_{h, i}^\pi(s) & = r_{h, i}(s, \pi) + \sum_{s' \in \cS} \sum_{\ba \in \cA}  \PP_h(s' \given s, \ba) \pi{_h}(\ba \given s)V_{h+1, i}^\pi(s')  \notag\\
    & \leq  r_{h, i}(s, \hat{\pi}) + \sum_{s' \in \cS} \sum_{\ba \in \cA}  \PP_h(s' \given s, \ba) \hat{\pi}{_h}(\ba \given  s)V_{h+1, i}^\pi(s') + n\epsilon. \label{eqn:V-iter-fin}
\end{align}
Applying the inequality 
\begin{align*}
    V_{h+1, i}^\pi(s') \leq r_{h+1, i}(s', \hat{\pi}) +  \sum_{\tilde{s} \in \cS} \sum_{\ba \in \cA} \PP_{h+1}(\tilde{s} \given s', \ba) \hat{\pi}{_{h+1}}(\ba \given s')V_{h+2, i}^\pi(\tilde{s}) + n\epsilon
\end{align*}  into  \eqref{eqn:V-iter-fin} continually, iterating this procedure from $h+1$ to $H$, yields that for every $i \in \cN$ and $s \in \cS$:
\begin{align}
    V_{h, i}^\pi(s) &\leq  V_{h, i}^{\hat{\pi}}(s) + hn\epsilon. \nonumber
\end{align}
Moreover, we have that {for any $\mu\in \Delta(\cA_i)^{|\cS| \times H}$}
\begin{align*}
     V^{\hat{\pi}}_{h, i}(s) \geq V^\pi_{h, i}(s) -n\epsilon H &\geq V^{\mu, \pi_{-i}}_{h, i}(s) -(n+1)\epsilon H = V^{\mu, \hat{\pi}_{-i}}_{h, i}(s) - (n+1)\epsilon h,
\end{align*}
with a similar observation on the visitation measure under $\pi$ and $\hat{\pi}$, which concludes the proof.}  
\end{proof}

\numbering{\section{Omitted Details  in \Cref{subsection:PPAD}}}
\tnumbering{\chapter{Omitted Details  in \Cref{subsection:PPAD}}}
\label{appendix:PPADhard-proof}
\PPADHardmain*
\begin{proof} 
We separate the proof for the two cases as follows. 

\paragraph{Case 1. $\cE_Q$ contains a triangle subgraph.}
We will show that for \emph{any} general-sum two-player \emph{turn-based} MG \textbf{(A)}, the problem of computing its Markov stationary CCE, which is inherently a {\tt PPAD}-hard problem \cite{daskalakis2022complexity}, can be reduced to computing the Markov stationary CCE of a three-player zero-sum MG with a triangle structure networked separable interactions  \textbf{(B)}. Consider an MG \textbf{(A)} with two players, players 1 and 2, and a reward function $r_1(s, a_1, a_2)$ and $r_2(s, a_2, a_1)$, where $a_i$ is the action of the $i$-th player and $r_i$ is the reward function of the $i$-th player. The transition dynamics is given by $\PP(s'\,|\,s, a_1, a_2)$. In even {rounds}, player 2's action space is limited to \textsf{Noop2}, and in odd {rounds}, player 1's action space is limited to \textsf{Noop1}, where \textsf{Noop} is an abbreviation of {``no-operation'', i.e., the player does not affect the transition dynamics {nor reward functions} in that round}. We denote player 1's action space in even rounds as $\cA_{1, \text{even}}$ and player 2's action space in odd {rounds} as $\cA_{2, \text{odd}}$.

Now, we construct a {three-player}  zero-sum NMG. We set the reward function as $\tilde{r}_i(s, \ba) = \sum_{j \neq i}\tilde{r}_{i,j}(s, a_i, a_j) $ and $\tilde{r}_{i,j}(s,a_i,a_j)= -\tilde{r}_{j,i}(s,a_j,a_i)$, where the reward functions are designed so that $\tilde{r}_{i,j} = -\tilde{r}_{j,i}$ for all $i, j$, $\tilde{r}_{1,2} + \tilde{r}_{1,3} = r_1$, and $\tilde{r}_{2,1} + \tilde{r}_{2,3} = r_2$, {by introducing} a dummy player{, player 3}. {Here $r_1,r_2$ are the reward functions in game \textbf{(A)}}. In even rounds, player 2's action space is limited to \textsf{Noop2}, and in odd rounds, player 1's action space is limited to \textsf{Noop1}. Player 3's action space is always limited to \textsf{Noop3} in all rounds. The transition dynamics is defined as $\tilde\PP(s'\,|\,s, a_1, a_2, a_3) = \PP(s'\,|\,s, a_1, a_2)$,  since $a_3$ is always \textsf{Noop3}. {In other words, player 3's action does not affect the rewards of the other two players, nor the transition dynamics, and players 1 and 2 will receive the reward as in the two-player turn-based MG. Also, }note that due to the turn-based structure of the game \textbf{(A)}, the transition dynamics satisfy the decomposable condition in our \Cref{prop:MPGcond}, and it is thus a zero-sum NMG. {In fact, every turn-based dynamics can be represented as an ensemble of single-controller dynamics, as we have discussed in \Cref{sec:definition}.}   
We set the reward function values as follows:
\begin{align*}
    &\tilde{r}_{1,3}(s, a_1, \textsf{Noop3}) =-\tilde{r}_{3,1}(s, \textsf{Noop3},a_1) = r_1(s, a_1, \textsf{Noop2}) + r_2(s, \textsf{Noop2}, a_1)
    \\
    &\tilde{r}_{1,3}(s, \textsf{Noop1}, \textsf{Noop3}) = -\tilde{r}_{3,1}(s, \textsf{Noop3}, \textsf{Noop1}) =0 
    \\
    &\tilde{r}_{2,3}(s, a_2, \textsf{Noop3}) = -\tilde{r}_{3,2}(s, \textsf{Noop3}, a_2) = r_1(s, \textsf{Noop1}, a_2) + r_2(s, a_2, \textsf{Noop1})
    \\
    &\tilde{r}_{2,3}(s, \textsf{Noop2}, \textsf{Noop3}) = -\tilde{r}_{3,2}(s, \textsf{Noop3}, \textsf{Noop2}) =0 
    \\
    &\tilde{r}_{1,2}(s, a_1, \textsf{Noop2}) = -\tilde{r}_{2,1}(s, \textsf{Noop2}, a_1) = -r_2(s, \textsf{Noop2}, a_1)
    \\
    &\tilde{r}_{1,2}(s, \textsf{Noop1}, a_2) = -\tilde{r}_{2,1}(s, a_2, \textsf{Noop1}) = r_1(s, \textsf{Noop1}, a_2).
\end{align*}

Note that the new game \textbf{(B)} is still a turn-based game, and thus the Markov stationary CCE is the same as the Markov stationary NE. Also, note that by construction, we know that the equilibrium policies of players $1$ and $2$ at the   Markov stationary CCE of the game \textbf{(B)} constitute a Markov stationary CCE of the game \textbf{(A)}. {If the underlying network is more general and contains a triangle subgraph, we can specify the reward and transition dynamics of these three players as above, and specify all other players to be dummy players, whose reward functions are all zero, and do not affect the reward functions of these three players, nor the transition dynamics.}

\paragraph{Case 2. $\cE_Q$ contains a 3-path subgraph.}
We will show that for \emph{any} general-sum two-player \emph{turn-based} MG \textbf{(A)}, the problem of computing its Markov stationary CCE can also be reduced to computing the Markov stationary CCE of a four-player zero-sum MG with 3-path networked separable interactions  \textbf{(B)}. Consider an MG \textbf{(A)} with two players, players 1 and 2, and a reward function $r_1(s, a_1, a_2)$ and $r_2(s, a_2, a_1)$, where $a_i$ is the action of the $i$-th player and $r_i$ is the reward function of the $i$-th player. The transition dynamics is given by {$\PP(s'\,|\,s, a_1, a_2)$}. In even {rounds}, player 2's action space is limited to \textsf{Noop2}, and in odd {rounds}, player 1's action space is limited to \textsf{Noop1}, where \textsf{Noop} is an abbreviation of {``no-operation'', i.e., the player does not affect the transition dynamics {nor the  reward functions} in that round}. We denote player 1's action space in even rounds as $\cA_{1, \text{even}}$ and player 2's action space in odd {rounds} as $\cA_{2, \text{odd}}$.

Now, we construct a {four-player} zero-sum NMG with a 3-path network structure. We set the reward function as {$\tilde{r}_1(s, \ba) = \tilde{r}_{1,2}(s, a_1, a_2) +  \tilde{r}_{1,3}(s, a_1, a_3) $ and $\tilde{r}_2(s, \ba) = \tilde{r}_{2, 1}(s, a_2, a_1) + \tilde{r}_{2, 4}(s, a_2, a_4)$}  and $\tilde{r}_{i,j}(s,a_i,a_j)= -\tilde{r}_{j,i}(s,a_j,a_i)$. The reward functions  are designed so that 
$\tilde{r}_{1,2} + \tilde{r}_{1,3} = r_1$, and $\tilde{r}_{2,1} + \tilde{r}_{2,4} = r_2$, {where $r_1,r_2$ are the reward functions in game \textbf{(A)},} {by introducing} dummy players{, player 3 and player 4}. In even rounds, player 2's action space is limited to \textsf{Noop2}, and in odd rounds, player 1's action space is limited to \textsf{Noop1}. Player 3's action space is always limited to \textsf{Noop3} in all rounds. Player 4's action space is always limited to \textsf{Noop4} in all rounds. The transition dynamics is defined as {$\tilde{\PP}(s'\,|\,s, a_1, a_2, a_3, a_4) = \PP(s'\,|\,s, a_1, a_2)$}, since $a_3$ is always \textsf{Noop3} and $a_4$ is always \textsf{Noop4}. {In other words, player 3 and player 4's actions do not affect the rewards of the other two players, nor the transition dynamics, and players 1 and 2 will receive the reward as in the two-player turn-based MG. Also, }note that due to the turn-based structure of the game \textbf{(A)}, the transition dynamics satisfy the decomposable condition in our \Cref{prop:MPGcond}, and it is thus a zero-sum NMG. {In fact, every turn-based dynamics can be represented as an ensemble of single-controller dynamics, as we have discussed in \Cref{sec:definition}.}   
We set the reward function values as follows:
\begin{align*}
    &\tilde{r}_{1,3}(s, a_1, \textsf{Noop3}) =-\tilde{r}_{3,1}(s, \textsf{Noop3},a_1) = r_1(s, a_1, \textsf{Noop2}) + r_2(s, \textsf{Noop2}, a_1)
    \\
    &\tilde{r}_{1,3}(s, \textsf{Noop1}, \textsf{Noop3}) = -\tilde{r}_{3,1}(s, \textsf{Noop3}, \textsf{Noop1}) =0 
    \\
    &\tilde{r}_{2,4}(s, a_2, \textsf{Noop4}) = -\tilde{r}_{4,2}(s, \textsf{Noop4}, a_2) = r_1(s, \textsf{Noop1}, a_2) + r_2(s, a_2, \textsf{Noop1})
    \\
    &\tilde{r}_{2,4}(s, \textsf{Noop2}, \textsf{Noop4}) = -\tilde{r}_{4,2}(s, \textsf{Noop4}, \textsf{Noop2}) =0 
    \\
    &\tilde{r}_{1,2}(s, a_1, \textsf{Noop2}) = -\tilde{r}_{2,1}(s, \textsf{Noop2}, a_1) = -r_2(s, \textsf{Noop2}, a_1)
    \\
    &\tilde{r}_{1,2}(s, \textsf{Noop1}, a_2) = -\tilde{r}_{2,1}(s, a_2, \textsf{Noop1}) = r_1(s, \textsf{Noop1}, a_2).
\end{align*}

Note that the new game \textbf{(B)} is still a turn-based game, and thus the Markov stationary CCE is the same as the Markov stationary NE. Also, note that by construction, we know that the equilibrium policies of players $1$ and $2$ at the   Markov stationary CCE of the game \textbf{(B)} constitute a Markov stationary CCE of the game \textbf{(A)}. {If the underlying network is more general and contains a 3-path subgraph, we can specify the reward and transition dynamics of these {four players in the subgraph} as above, and specify all other players to be dummy players, whose reward functions are all zero, and do not affect the reward functions of these three players, nor the transition dynamics. This completes the proof.} 
\end{proof}

\begin{proposition}
\label{prop:star-shaped-only}
A connected graph that does not contain a subgraph of a triangle or a 3-path must be a star-shaped graph.
\end{proposition}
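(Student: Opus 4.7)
The plan is to show that the two forbidden subgraph conditions force the graph to have a very restricted structure, and then identify it as a star. First I would interpret ``3-path'' as the path $P_4$ on four vertices with three edges, consistent with the four-player reduction used in Case~2 of the proof of \Cref{thm:PPAD-hard-main}. The key observation is that forbidding $P_4$ as a subgraph is equivalent to bounding the diameter: if some pair of vertices $u,v$ satisfied $d(u,v) \geq 3$, then a shortest path between them would contain four distinct vertices joined in sequence, giving a $P_4$ subgraph. Hence $\operatorname{diam}(\cG) \leq 2$.

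Next I would dispose of the trivial case $\operatorname{diam}(\cG) \leq 1$: here $\cG$ is complete, and forbidding triangles forces $|\cN| \leq 2$, which is a (degenerate) star. So assume $\operatorname{diam}(\cG) = 2$. I would pick a vertex $v$ of maximum degree and let $N(v)$ denote its open neighborhood. The triangle-free assumption immediately gives that $N(v)$ is an independent set. The main step is to argue that $N(v)$ exhausts the rest of $\cN$, i.e., $v$ is a universal vertex.

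Suppose for contradiction that some $w \notin \{v\} \cup N(v)$ exists. By connectedness and $\operatorname{diam}(\cG) = 2$, there is $u \in N(v)$ with $u \sim w$. For any other neighbor $u' \in N(v) \setminus \{u\}$, the four vertices $u', v, u, w$ are distinct; I would check that $u' \not\sim u$ (since $N(v)$ is independent), so unless $u' \sim w$ we obtain the induced path $u' - v - u - w$, a forbidden $P_4$. Therefore every vertex of $N(v)$ is adjacent to $w$. Now pick any two distinct $u_1, u_2 \in N(v)$ (which exist because $v$ has maximum degree and the diameter-2 assumption forces some vertex to have degree at least two). Then $v - u_1 - w - u_2$ is a $P_4$ on four distinct vertices (using again that $u_1 \not\sim u_2$ by triangle-freeness of $N(v)$), contradicting the hypothesis.

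Thus $v$ is adjacent to every other vertex, and $N(v)$ is independent, so $\cG$ is exactly the star with center $v$. The main obstacle is really just the case analysis to rule out a non-neighbor $w$; once one carefully uses both forbidden subgraphs together (triangle-freeness to keep $N(v)$ independent, and $P_4$-freeness to propagate adjacencies to $w$), the contradiction drops out immediately. No deep combinatorial machinery is needed, only the diameter bound and a short extremal argument on a maximum-degree vertex.
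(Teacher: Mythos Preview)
Your proof is correct and follows essentially the same route as the paper's: bound the diameter by $2$ from the no-$P_4$ hypothesis, dispose of the trivial small-diameter case, and in the diameter-$2$ case exhibit a universal vertex with an independent neighborhood. The only difference is the choice of candidate center---you take a maximum-degree vertex, the paper takes the midpoint $c$ of a diameter-$2$ path $l\text{--}c\text{--}r$---and either choice works.

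One small remark: since the paper forbids $P_4$ as a (not necessarily induced) \emph{subgraph}, your first path $u'\text{--}v\text{--}u\text{--}w$ is already a forbidden $P_4$ regardless of whether $u'\sim w$; the ``unless $u'\sim w$'' branch and the second path $v\text{--}u_1\text{--}w\text{--}u_2$ are therefore superfluous. In particular the word ``induced'' in your write-up is a slip (under the induced reading the proposition would actually be false, with $C_4$ a counterexample, and your second path would not be induced either since $v\sim u_2$). Once you drop ``induced'', the contradiction appears one step earlier and the proof matches the paper's almost verbatim.
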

\begin{proof}
If the diameter of a connected graph is exactly 1, then there are only two nodes, which form a star-shaped network. If the diameter of a connected graph is greater than 2, it contradicts the non-existence of a 3-path subgraph. If the diameter of a connected graph is exactly 2, 
we denote the middle node as $c$, and the leftmost and rightmost nodes as $l$ and $r$. {If either $l$ or $r$ has another neighbor other than $c$, it implies the existence of a 3-path subgraph, which contradicts the assumption. Therefore, the additional nodes other than $l,c,r$, if exist, have to be connected to $c$. 
If two neighbors of $c$ are directly connected, then 
it contradicts the non-existence of a triangle  subgraph. Hence, all nodes except $c$ have to be connected to $c$ while not being connected to each other, which leads to a star-shaped graph.} 
\end{proof}

\numbering{\section{Omitted Details in \Cref{sec:fictitious-play}}}
\tnumbering{\chapter{Omitted Details in \Cref{sec:fictitious-play}}}
\label{appendix:fictitious}
We refer to \Cref{appendix:stochastic-approx} for the existing relevant result regarding stochastic approximation. The proof structure for \Cref{appendix:fictitious} follows three steps: (1) find the continuous{-time} dynamics of the fictitious-play learning dynamics, (2) identify a Lyapunov function for the continuous{-time} version of the fictitious  play ($V(\pi)$ or $L(\pi)$), and (3) since the discrete version can be viewed as a perturbed version of the continuous{-time} dynamics  (\Cref{thm:interpolated=perturbation}), the limit point of fictitious play is contained in the level set of a Lyapunov function (\Cref{thm:levelset}). \Cref{thm:interpolated=perturbation} and \Cref{thm:levelset} are stated in \Cref{appendix:stochastic-approx}, and these theorems are restatements of \cite{benaim2005stochastic}. {In this section, with a slight abuse of notation, we interchangeably use $a_i$ to refer to either an action in $\cA_i$, or a pure strategy $\pi_i \in \Delta(\cA_i)$, where $\pi_i(a_i) = 1$ and $\pi_i(a'_i) = 0$ for all $a'_i \neq a_i$.}

\numbering{\subsection{Matrix game case}}
\tnumbering{\section{Matrix game case}}
\label{ssec:Matrix-game-fictitious-play}

\numbering{\subsubsection{Fictitious-play in zero-sum NGs}}
\tnumbering{\subsection{Fictitious-play in zero-sum NGs}}

We first introduce the fictitious-play  dynamics for zero-sum NGs with $\cG = (\cN, \cE)$, i.e., zero-sum polymatrix games \cite{cai2011minmax,cai2016zero}, the very same one as in \cite{brown1951iterative,robinson1951iterative}: at iteration, $k$, each player $i$  maintains a belief of the opponents' policies, $(\hat{\pi}_{-i}^{(k)})$; 
she then takes action by best responding to the belief, and then updates the belief as: 
\$
\text{Take~action:~~}a_{i}^{(k)} \in \argmax_{a_i\in\cA_i} ~~r_i(e_{a_i},\hat{\pi}_{-i}^{(k)}),\quad~~ \text{Update belief:~~}\hat{\pi}_{-i}^{(k+1)} = \hat{\pi}_{-i}^{(k)}  + \alpha^{(k)} (e_{a_{-i}^{(k)}} - \hat{\pi}_{-i}^{(k)}) 
\$
where $r_i(\pi)$ is the expected payoff under joint policy $\pi$ (see \Cref{equ:expected_reward}), and $\alpha^{(k)}\geq 0$ is the stepsize. 
The overall procedure is summarized in \Cref{alg:polymatrix-game-fictitious}. 
\begin{algorithm}[H]
\caption{Fictitious Play in zero-sum NGs  ($i$-th player)}
\label{alg:polymatrix-game-fictitious}
\begin{algorithmic}
\STATE{Choose $\hat{\pi}_{j}^{(0)}$ as a uniform distribution for all $j \in \cN/\{i\}$}
\FOR{each timestep $k = 0, 1, \dots$}
\STATE {Take action $a_{i}^{(k)} \in \argmax_{a_i\in\cA_i} r_i(e_{a_i}, \hat{\pi}_{-i}^{(k)} )$}
\STATE {Observe other players' action $a_{-i}^{(k)}$}
\STATE { Update the policy belief as $\hat{\pi}_{-i}^{(k+1)} = \hat{\pi}_{-i}^{(k)}  + \alpha^{(k)} (e_{a_{-i}^{(k)}} - \hat{\pi}_{-i}^{(k)}) $}
\ENDFOR
\end{algorithmic} 
\end{algorithm} 

We provide the convergence guarantee of the FP dynamics as follows, showing that zero-sum NGs, i.e., zero-sum polymatrix games \cite{cai2011minmax,cai2016zero}, possess the fictitious-play property \cite{monderer1996fictitious}.

\begin{restatable}{theorem}{fppMZNG}
\label{thm:fppMZNG}
Assuming that $ \sum_{k=0}^{\infty}\alpha^{(k)} \to \infty$ and $\alpha^{(k)} \to 0$ as $k\to\infty$, then the limit points of $(\hat{\pi}^{(k)})_{k\geq 0}$  {are} the  NE of the zero-sum NG. 
\end{restatable}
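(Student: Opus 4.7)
The plan is to analyze \Cref{alg:polymatrix-game-fictitious} as a Robbins--Monro-type stochastic approximation scheme and reduce its long-run behavior to that of a continuous-time best-response differential inclusion on $\prod_{i\in\cN}\Delta(\cA_i)$. Concretely, I would rewrite the belief update in the form $\hat{\pi}_{-i}^{(k+1)} - \hat{\pi}_{-i}^{(k)} = \alpha^{(k)}\bigl( b^{(k)}_{-i} - \hat{\pi}_{-i}^{(k)}\bigr)$, where $b^{(k)}_{-i}$ is a profile of current best responses drawn from $\mathrm{BR}_{-i}(\hat{\pi}^{(k)}):=\prod_{j\neq i}\mathrm{BR}_j(\hat{\pi}^{(k)}_{-j})$. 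Under the stepsize conditions $\sum_k \alpha^{(k)} = \infty$ and $\alpha^{(k)}\to 0$, the piecewise-affine interpolation of $(\hat\pi^{(k)})$ is an asymptotic pseudo-trajectory (\Cref{thm:interpolated=perturbation}) of the best-response differential inclusion $\dot{\pi}_i \in \mathrm{BR}_i(\pi_{-i}) - \pi_i$, $i\in\cN$. This reduces the problem to locating the attractor of this inclusion.

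To identify that attractor, I would exhibit the Lyapunov function
\[
V(\pi) \;:=\; \sum_{i\in\cN} \max_{\mu_i\in\Delta(\cA_i)} r_i(\mu_i,\pi_{-i}).
\]
The zero-sum property $\sum_{i\in\cN} r_i(\pi)=0$ makes $V(\pi) = \sum_{i}\bigl[\max_{\mu_i} r_i(\mu_i,\pi_{-i}) - r_i(\pi)\bigr] \geq 0$, with equality if and only if $\pi$ is an NE of the zero-sum NG. The key computation is to show strict decay along the flow. Picking any measurable selection $b_i(t)\in\mathrm{BR}_i(\pi_{-i}(t))$ attaining the inner maximum and using Danskin's envelope theorem together with the polymatrix decomposition $r_i(b_i,\pi_{-i}) = \sum_{j\in\cE_i} b_i^\intercal \br_{i,j}\pi_j$, a short calculation yields $\frac{d}{dt}V(\pi(t)) = \sum_i r_i(b(t)) - V(\pi(t)) = -V(\pi(t))$, where the cancellation uses $\sum_i r_i(b)=0$ for every pure profile $b$. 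Hence $V(\pi(t))=V(\pi(0))e^{-t}\to 0$, so $V$ is a strict Lyapunov function and its zero-level set is exactly the NE set.

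Combining the two ingredients, \Cref{thm:levelset} implies that the limit set of the interpolated trajectory, and therefore of $(\hat\pi^{(k)})_{k\geq 0}$, is contained in $\{\pi : V(\pi) = 0\}$, which is the NE set of the zero-sum NG. The main obstacle is the envelope step: best responses can be set-valued and $V$ is not everywhere differentiable, so one has to justify the derivative identity for arbitrary measurable selections $b_i(\cdot)$ and verify that the inclusion $\dot\pi \in \mathrm{BR}(\pi)-\pi$ admits the required regularity (upper hemicontinuity, non-empty convex compact values) to apply the Benaïm--Hofbauer--Sorin machinery used in \Cref{thm:interpolated=perturbation,thm:levelset}. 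Once that is in place, the zero-sum polymatrix structure drives the identity $\dot V = -V$ through the telescoping $\sum_i r_i(b)=0$, which is precisely the generalization of the two-player zero-sum Brown--Robinson argument to the networked separable setting.
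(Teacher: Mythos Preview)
Your proposal is correct and follows essentially the same route as the paper: reduce the discrete iteration to the best-response differential inclusion via the Bena\"im--Hofbauer--Sorin stochastic approximation framework, use the Lyapunov function $V(\pi)=\sum_i\max_{\mu_i}r_i(\mu_i,\pi_{-i})$ (which equals the paper's $\sum_i[\max_{a_i}r_i(e_{a_i},\pi_{-i})-r_i(\pi)]$ by the zero-sum identity), and compute $\dot V=-V$ via the envelope theorem together with $\sum_i r_i(b)=0$. The paper performs exactly this computation in \Cref{claim:lyapunov-polymatrix} and then invokes \cite[Proposition~3.27]{benaim2005stochastic}; your added remarks about set-valued best responses and the regularity needed for the differential-inclusion machinery are the right technical caveats, which the paper leaves implicit.
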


\begin{proof}[Proof of \Cref{thm:fppMZNG}]
To prove the fictitious-play property, we consider a continuous version of \Cref{alg:polymatrix-game-fictitious}. Assuming that $\sum_{k=0}^\infty\alpha^{(k)} \to \infty$ and $\alpha^{(k)} \to 0$, \cite[Proposition 3.27] {benaim2005stochastic} states that we can characterize the limit set of $(\hat{\pi}^{(k)})_{k\geq 0}$ by considering the following dynamics:
\begin{align}\label{eqn:dynamics-fictitious-play}
\pi_i + \frac{d\pi_i}{dt} \in \argmax_{a_i\in\cA_i}~~ r_i(e_{a_i}, \pi_{-i}). 
\end{align}
We define a Lyapunov function as 
\begin{align}
V(\pi) &=\sum_{i \in \cN}  \left(\max_{a_i\in\cA_i} ~~r_i(e_{a_i}, {\pi}_{-i})- r_i(\pi) \right). \label{eqn:dynamics-fictitious-play-lya}
\end{align} 
\begin{restatable}{claim}{Lyapoly}
\label{claim:lyapunov-polymatrix}
$V(\pi(t))$ is a Lyapunov function for \eqref{eqn:dynamics-fictitious-play}.    
\end{restatable}
\begin{proof}
Let $\argmax_{a_i \in \cA_i} r_i(e_{a_i}, \pi_{-i})$ in the formula be ${a_i}^\star$, then we have 
\begin{align*}
    \frac{dV(\pi(t))}{dt} &= \sum_{i \in \cN} \left(\sum_{j \in \cE_i} e_{a_i^\star}^{\intercal} r_{i,j}\pi_j' \right)
     =\sum_{i \in \cN} \left(\sum_{j \in \cE_i } e_{a_i^\star}^{\intercal} r_{i,j}(e_{a_j^\star} - \pi_j) \right)  
    \\&
    =\sum_{i \in \cN}\left(\sum_{j \in \cE_i } -e_{a_i^\star}^{\intercal} r_{i,j}\pi_j \right) = -V(\pi(t))
\end{align*}
where {we use $\pi_j'$ to denote $\frac{d\pi_j}{dt}$, and} the first equality is derived from the envelope theorem. Since $\max_{a_i} r_i(a_i, {\pi}_{-i})\geq r_i(\pi)$, $V$ is guaranteed to  be non-negative. We can thus express $V(t) = V(0)  e^{-t}$, indicating that it is decreasing with a linear rate in continuous time.  
\end{proof}
Consequently, \cite[Proposition 3.27]{benaim2005stochastic} implies 
\begin{align*}
    \lim_{k \to \infty} \left( \sum_{i \in \cN} \left( \max_{a_i\in\cA_i} ~~r_i(e_{a_i}, \hat{\pi}_{-i}^{(k)}) - r_i(\hat{\pi}^{(k)})\right) \right) = 0
\end{align*}
which concludes that every limit point of $(\hat{\pi}^{(k)})_{k\geq 0}$ is an NE.
\end{proof}

{Note that the fictitious-play learning dynamics for zero-sum polymatrix games have also been proposed and analyzed in \cite{ewerhart2020fictitious}, and our result above is a reproduction of it.}

\numbering{\subsubsection{Smooth fictitious play in zero-sum NGs}}
\tnumbering{\subsection{Smooth fictitious play in zero-sum NGs}}
\label{ssec:stochastic-matrix-game}
\label{ssec:sfppqre}
We can also provide guarantees for the learning dynamics of \emph{smooth} fictitious play (may also be referred to as \emph{stochastic} fictitious play later) \cite{ref:Fudenberg93}, with convergence to the quantal response equilibrium (QRE) of the game  \cite{mckelvey1995quantal,mckelvey1998quantal}.

\begin{restatable}{definition}{QREnormalform}
\label{def:QRE}
A policy $\pi_\tau^{\star}=\left(\pi_{\tau, 1}^{\star}, \cdots, \pi_{ \tau,n}^{\star}\right)$ is a quantal response equilibrium of the game  with regularization coefficient ${\tau}$ if the following  condition holds
$$
\pi_{\tau, i}^{\star}(a_i)=\frac{\exp \left(\left[\br_i \pi_\tau^{\star}\right]_{a_i} / \tau \right)}{\sum_{a_i'
\in \cA_i} \exp \left(\left[\br_i \pi_\tau^{\star}\right]_{a_i'} / \tau\right)}
$$
for all $i\in\cN$ and $a_i \in \cA_i$ \cite{mckelvey1995quantal}. 
\end{restatable}

A QRE always exists in finite games. Moreover, a QRE has an equivalent notion as finding the Nash equilibrium of the game with  entropy-regularized payoffs: i.e., $\pi_\tau^{\star}$ satisfies that 
$$r_{\tau, i}\left(\pi_i^{\prime}, \pi_{\tau, -i}^{\star}\right) \leq r_{\tau,i}\left(\pi_{\tau}^{\star}\right),$$ where 
\begin{align}
 r_{\tau, i}(\pi) := r_i(\pi) + \tau \cH(\pi_i) - \sum_{j \in \cE_{r, i}}\frac{\tau}{|\cE_{r, j}|}\cH(\pi_j)   \label{eqn:rtaudef}
\end{align}
and $\cH(\pi_i):=-\sum_{a_i\in\cA_i}\pi_i(a_i)\log(\pi_i(a_i))$ is the Shannon entropy function  
\cite{mertikopoulos2016learning}. Reference \cite{leonardos2021exploration} provided a novel analysis showing that a unique NE  exists for zero-sum NGs {with entropy regularization (thus the QRE for the unregularized zero-sum NG)}. 

\begin{remark}
    In most existing literature \cite{ao2022asynchronous, leonardos2021exploration}, the entropy regularized reward is defined as $r_i(\pi) + \tau \cH(\pi_i)$. Indeed, note that 
    $$\argmax_{\pi_i \in \Delta(\cA_i)}\left(r_i(\pi) + \tau \cH(\pi_i)\right) = \argmax_{\pi_i \in \Delta(\cA_i)}\left(r_i(\pi) + \tau \cH(\pi_i) - \sum_{j \in \cE_{r, i}}\frac{\tau}{|\cE_{r, j}|}\cH(\pi_j)\right)$$
    for any $i\in\cN$, so it does not affect the equilibria. Moreover, by defining $r_{\tau, i}(\pi) := r_i(\pi) + \tau \cH(\pi_i) - \sum_{j \in \cE_{r, i}}\frac{\tau}{|\cE_{r, j}|}\cH(\pi_j)$, we can  have that $\sum_{i \in \cN} r_{\tau, i}(\pi) = 0$ holds for any joint product policy $\pi$. 
\end{remark}

\begin{algorithm}[H]
\caption{Stochastic fictitious play in zero-sum NGs  ($i$-th player)}
\label{alg:polymatrix-game-fictitious-stochastics}
\begin{algorithmic}
\STATE{Choose $\hat{\pi}_{j}^{(0)}$ as a uniform distribution for all $j \in \cN/\{i\}$}
\FOR{each timestep $k = 0, 1, \dots$}
\STATE {Take action $a_{i}^{(k)} \sim \argmax_{\mu_i\in\Delta(\cA_i)} r_{\tau, i}(\mu_i, \hat{\pi}_{-i}^{(k)} )$}
\STATE {Observe other players'  action $a_{-i}^{(k)}$}
\STATE { Update the policy belief as $\hat{\pi}_{-i}^{(k+1)} = \hat{\pi}_{-i}^{(k)}  + \alpha^{(k)} (e_{a_{-i}^{(k)}} - \hat{\pi}_{-i}^{(k)}) $}
\ENDFOR
\end{algorithmic}
\end{algorithm}

In \Cref{alg:polymatrix-game-fictitious-stochastics}, players initialize their beliefs for other players $(\hat{\pi}_{-i})$ as a uniform distribution. They sample from the best-response policy with respect to the entropy-regularized reward, given the beliefs of other players' policies. 
Subsequently, each player   observes other players' actions and updates her beliefs.

\begin{theorem}
\label{thm:sfppMZNG}
Assuming that $ \sum_{k=0}^\infty\alpha^{(k)} \to \infty$ and $\lim_{k\to \infty}\alpha^{(k)} \to 0$, $(\hat{\pi}^{(k)})_{k\geq 0}$ converges to a QRE of the zero-sum NG with probability 1.  
\end{theorem}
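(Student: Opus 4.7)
The plan is to follow the same three-step stochastic-approximation template used for \Cref{thm:fppMZNG}: (i) identify the limit ODE that the discrete iterates asymptotically track, (ii) exhibit a strict Lyapunov function for that ODE, and (iii) invoke \Cref{thm:interpolated=perturbation} and \Cref{thm:levelset} to conclude that the limit set of $(\hat{\pi}^{(k)})_{k\geq 0}$ lies inside the zero set of the Lyapunov function, which in the smooth-FP case is exactly the set of QREs. Since $a_i^{(k)}$ is sampled from the softmax best response, we have $\EE[e_{a_i^{(k)}}\given \cF_k]=b_i(\hat\pi_{-i}^{(k)})$, where $b_i(\pi_{-i})=\argmax_{\mu_i\in\Delta(\cA_i)} (r_i(\mu_i,\pi_{-i})+\tau\cH(\mu_i))$ is the logit best response. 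Under the stated stepsize conditions, the update rule becomes a Robbins–Monro stochastic approximation with bounded martingale noise, so its mean-field ODE is
\begin{equation*}
\dot{\pi}_i \;=\; b_i(\pi_{-i})-\pi_i,\qquad i\in\cN.
\end{equation*}

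For the Lyapunov function I propose
\begin{equation*}
L(\pi)\;:=\;\sum_{i\in\cN}\Phi_i(\pi_{-i})-\tau\sum_{i\in\cN}\cH(\pi_i),\qquad \Phi_i(\pi_{-i}):=\tau\log\!\sum_{a_i\in\cA_i}\exp\!\left([\br_i\bpi]_{a_i}/\tau\right).
\end{equation*}
Non-negativity of $L$ follows by recognizing $\Phi_i(\pi_{-i})\geq r_i(\pi)+\tau\cH(\pi_i)$ from the definition of the softmax maximum, and then summing and using $\sum_i r_i(\pi)=0$ (zero-sum polymatrix). The zeros of $L$ are characterized exactly by the QRE condition $\pi_i=b_i(\pi_{-i})$ for all $i$, and by the uniqueness result of \cite{leonardos2021exploration} this zero set is a singleton.

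The key computation is to show $\dot L\leq -L$ along the ODE. By the envelope theorem, $\dot\Phi_i=b_i^{\!\intercal}\sum_j r_{i,j}\dot\pi_j$ and $\tfrac{d}{dt}\cH(\pi_i)=-\dot\pi_i^{\!\intercal}\log\pi_i$. Substituting $\dot\pi_i=b_i-\pi_i$ and exploiting the zero-sum polymatrix identity $r_{i,j}=-r_{j,i}^{\!\intercal}$ to kill the double-sum term $\sum_{i,j}b_i^{\!\intercal}r_{i,j}b_j=0$, one reaches
\begin{equation*}
\dot L \;=\; -\sum_{i\in\cN}\bigl[\,r_i(b_i,\pi_{-i})-\tau(b_i-\pi_i)^{\!\intercal}\log\pi_i\,\bigr].
\end{equation*}
Using $r_i(b_i,\pi_{-i})=\Phi_i-\tau\cH(b_i)$ together with the KL identity $-\tau\cH(b_i)-\tau b_i^{\!\intercal}\log\pi_i=-\tau\mathrm{KL}(b_i,\pi_i)$ rewrites this as $\dot L=-L-\tau\sum_i\mathrm{KL}(b_i,\pi_i)\leq -L$, so $L(\pi(t))\leq L(\pi(0))e^{-t}$ and $\dot L=0$ forces both $L(\pi)=0$ and $b_i=\pi_i$, i.e.\ $\pi$ is the (unique) QRE.

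With the strict Lyapunov function in hand, \Cref{thm:interpolated=perturbation} identifies the interpolated process of $(\hat\pi^{(k)})$ as an asymptotic pseudotrajectory of the ODE, and \Cref{thm:levelset} then forces every limit point of $(\hat\pi^{(k)})$ into the level set $\{\pi:L(\pi)=0\}$, which equals $\{\pi_\tau^\star\}$; convergence with probability one follows. The main technical obstacle is the cancellation argument in the derivation of $\dot L$: the pairwise-payoff identity $r_{i,j}=-r_{j,i}^{\!\intercal}$ is essential for killing the quadratic cross terms, and without it $L$ would not be monotone. A secondary technical point is confirming the standard regularity hypotheses of Benaïm's framework — Lipschitzness of $b$ (which holds because the softmax is smooth for $\tau>0$) and square-integrability of the martingale noise (which is immediate since the iterates live in a product of simplices).
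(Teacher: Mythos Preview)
Your approach is essentially identical to the paper's: your Lyapunov function $L(\pi)=\sum_i\Phi_i(\pi_{-i})-\tau\sum_i\cH(\pi_i)$ coincides with the paper's $V_\tau(\pi)=\sum_i\bigl(\max_{\mu_i}r_{\tau,i}(\mu_i,\pi_{-i})-r_{\tau,i}(\pi)\bigr)$ once one uses $\sum_i r_i(\pi)=0$, and the decay $\dot L\le -L$ is obtained the same way (envelope theorem, cancel the bilinear cross term, then the Gibbs/KL step). One correction worth noting: the cancellation $\sum_{i,j}b_i^{\intercal}r_{i,j}b_j=0$ does \emph{not} follow from a pairwise identity $r_{i,j}=-r_{j,i}^{\intercal}$, which is false for general zero-sum polymatrix games; it follows instead from the global zero-sum condition $\sum_i r_i(b)=\sum_i\sum_{j\in\cE_i} b_i^{\intercal}r_{i,j}b_j=0$, which is exactly how the paper justifies the analogous step.
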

\begin{proof}[Proof of \Cref{thm:sfppMZNG}]
To prove the fictitious-play property, we consider a continuous-time version of the learning dynamics in \Cref{alg:polymatrix-game-fictitious-stochastics}. Assuming $\sum_{k=0}^\infty\alpha^{(k)} \to \infty$ and $\alpha^{(k)} \to 0$, \cite[Proposition 3.27]{benaim2005stochastic} states that we can characterize the limit set of $(\hat{\pi}^{(k)})_{k\geq 0}$ by considering the following dynamics 
\begin{align}
\pi_i + \frac{d\pi_i}{dt} =  \argmax_{\mu_i\in\Delta(\cA_i)} ~~r_{\tau, i}(\mu_i, \pi_{-i}). \label{eqn:dynamics-stochastic-fictitious-play}
\end{align}
We define a Lyapunov function as 
\begin{align*}
V_\tau(\pi) &=\sum_{i \in \cN}  \left(\max_{\mu_i\in\Delta(\cA_i)}~~r_{\tau, i}(\mu_i, {\pi}_{-i})- r_{\tau, i}(\pi) \right) .
\end{align*}
\begin{claim}
\label{claim:lyapunov-polymatrix-tau}
$V_\tau (\pi(t))$ is a Lyapunov function for \eqref{eqn:dynamics-stochastic-fictitious-play}.    
\end{claim}
\begin{proof}
Let the maximizer of $ r_{\tau, i}(\mu_i, \pi_{-i})$ in the formula be $\mu_i^\star${, which we know is unique due to the regularization}. Thus,  we have 
\small
\begin{align*} 
    \frac{dV_\tau(\pi(t))}{dt} &= \sum_{i \in \cN} \left(\left(\sum_{j \in \cE_i} \mu_i^{\star\intercal} r_{i,j}\pi_j' \right) - \left(\tau(\cH(\pi_i))'\right) \right)
     \\&=\sum_{i \in \cN} \left(\left(\sum_{j \in \cE_i} \mu_i^{\star\intercal} r_{i,j}(\mu_j^\star - \pi_j) \right) - \left(\tau(\cH(\pi_i))'\right) \right)
    \\
    &=\sum_{i \in \cN} \left(\left(\sum_{j \in \cE_i} -\mu_i^{\star\intercal} r_{i,j}\pi_j \right) + \tau (1 + \log \pi_i)^\intercal \pi_i' \right)
    \\&=\sum_{i \in \cN} \left(\left(\sum_{j \in \cE_i} -\mu_i^{\star\intercal} r_{i,j}\pi_j \right) + \tau (1 + \log \pi_i)^\intercal (\mu^\star_i - \pi_i) \right)
    \\
    &=\sum_{i \in \cN} \left(\left(\sum_{j \in \cE_i} -\mu_i^{\star\intercal} r_{i,j}\pi_j \right) + \tau(\cH(\pi_i)) + \tau (\log \pi_i)^\intercal (\mu^\star_i) \right)
     \\&\leq \sum_{i \in \cN} \left(\left(\sum_{j \in \cE_i} -\mu_i^{\star\intercal} r_{i,j}\pi_j \right) + \tau(\cH(\pi_i)) - \tau (\cH(\mu_i)) \right) = -V_\tau(\pi(t))
\end{align*}
\normalsize
where the first equality is derived from the envelope theorem and the last inequality is from Gibbs' inequality. 
Since $\max_{\mu_i\in\Delta(\cA_i)} r_{\tau, i}(\mu_i, {\pi}_{-i})\geq r_{\tau,i}(\pi)$, $V$ is guaranteed to be non-negative. Therefore, we have $0 \leq V(t) \leq V(0)  e^{-t}$, indicating that it is decreasing.  
\end{proof}
Consequently, \cite[Proposition 3.27]{benaim2005stochastic} implies that 
\begin{align*}
    \lim_{k \to \infty} \left( \sum_{i \in \cN} \left( \max_{\mu_i\in\Delta(\cA_i)}~~ r_{\tau, i}(\mu_i, \hat{\pi}_{-i}^{(k)}) - r_{\tau,i}(\hat{\pi}^{(k)})\right) \right) = 0
\end{align*}
which concludes that every limit point is a QRE. Since the QRE is unique for zero-sum NGs, we conclude that $(\pi^{(k)})_{k\geq 0}$ converges to the QRE of the zero-sum NG.
\end{proof}

\paragraph{Remark.} \Cref{alg:polymatrix-game-fictitious} converges to an NE, and \Cref{alg:polymatrix-game-fictitious-stochastics} converges to a QRE. Since the QRE is unique in zero-sum NG for a fixed $\tau$, we can identify the converging point in \Cref{alg:polymatrix-game-fictitious-stochastics}, while we cannot determine which NE is the converging point in \Cref{alg:polymatrix-game-fictitious}.

\numbering{\subsection{Fictitious-play property of  infinite-horizon zero-sum NMGs of a star-shape}}
\tnumbering{\section{Fictitious-play property of  infinite-horizon zero-sum NMGs of a star-shape}}
\label{sec:FP_inf_case}

\neurips{
\begin{algorithm}[!t]
\caption{Fictitious play in zero-sum NMGs {of a star-shape} ($i$-th player)}
\label{alg:markov-polymatrix-game-fictitious}
\begin{algorithmic}
\STATE{{Choose} $\hat{\pi}_{j}^{(0)}(s)$ to be a uniform distribution for all $j \in \cN/\{i\}$ and  $s \in \cS$}
\STATE{{Choose} $\hat{Q}_{i}^{(0)}(s, \ba)$ to be an arbitrary value for all $s \in \cS$ and $\ba \in \cA$}
\STATE{{Choose} $N(s) = 0$ for all $s \in \cS$}
\FOR{each timestep $k = 0, 1, \dots$}
\STATE {Observe the current state $s^{(k)}$ and update the visitation number as $N(s^{(k)}) = N(s^{(k)})+1$}
\STATE {Take action $a_{i}^{(k)} \in \argmax_{a_i\in\cA_i}\hat{Q}_i^{(k)}(s^{(k)}, e_{a_i}, \hat{\pi}_{-i}^{(k)}(s^{(k)}))$}
\STATE {Update ${V}_{i}$-belief for all $i \in \cN$ and $s \in \cS$ as 
\begin{align}
    \hat{V}_{i}^{(k)}(s) = \max_{a_i\in\cA_i} \hat{Q}_i^{(k)}(s, e_{a_i}, \hat{\pi}_{-i}^{(k)}(s)) \label{eqref:V-def}
\end{align} } 
\STATE {Observe other players'  action $a_{-i}^{(k)}$} 
\STATE {Update the belief as $\hat{\pi}_{-i}^{(k+1)}(s) = \hat{\pi}_{-i}^{(k)}(s)  +  \pmb{1}(s = s^{(k)}) \alpha^{N(s)} (e_{a_{-i}^{(k)}} - \hat{\pi}_{-i}^{(k)}(s)) $ for all $s \in \cS$}
\IF {player $i$ = 1}
\STATE {Update the $Q_{1,j}$-belief for all $j \in \cN/\{1\}$, $s \in \cS$, and $\ba\in\cA$ as 
$$
{\begin{aligned}
  &\hat{Q}_{1,j}^{(k+1)}(s, a_1, a_j) = \hat{Q}_{1,j}^{(k)}(s,a_1, a_j)  
  \\
  &+ \pmb{1}(s = s^{(k)}) \beta^{N(s)}\Bigl( r_{1,j}(s, a_1, a_j) 
  + \gamma \sum_{{s}' \in \cS} \frac{1}{n-1} \PP_1(s' \mid s, a_1) \hat{V}_{1}^{(k)}({s}') - \hat{Q}_{1,j}^{(k)}(s,a_1, a_j) \Bigr)   
\end{aligned}}
$$
}
\STATE { Update the $Q_{1}$-belief for all $s \in \cS$ and $\ba\in\cA$ as 
$$
\begin{aligned}
  \hat{Q}_{1}^{(k+1)}(s, \ba) &= \sum_{j \in \cN/\{1\} } \hat{Q}_{1,j}^{(k+1)}(s,a_1, a_j)
\end{aligned}
$$
}
\ELSE
\STATE {Update the $Q_{i}$-belief for all $s \in \cS$ and $\ba\in\cA$ as $$
\begin{aligned}
  \hat{Q}_i^{(k+1)}(s, \ba) = &\hat{Q}_{i,1}^{(k+1)}(s, a_i, a_1) = \hat{Q}_{i,1}^{(k)}(s,a_i, a_1)  +  \pmb{1}(s = s^{(k)}) \beta^{N(s)}\Bigl( r_{i, 1}(s, a_i, a_1) 
\\
 &\qquad\qquad \qquad + \gamma \sum_{{s}' \in \cS} \PP_1({s}' \mid s, a_1) \hat{V}_{i}^{(k)}({s}') - \hat{Q}_{i,1}^{(k)}(s,a_i, a_1) \Bigr)   
\end{aligned}
$$
}
\ENDIF
\STATE{State transitions~~{$s^{(k+1)}\sim \PP_1(\cdot\given s^{(k)},a^{(k)}_1)$}}
\ENDFOR
\end{algorithmic}
\end{algorithm}
 }

Before presenting the results, 
we examine some properties of a star-shaped zero-sum NG (i.e., the polymatrix case). We define player 1 as the center player without loss of generality. 
First, we can view a star-shaped zero-sum NG as a constant-sum separable star-shaped game, as detailed below. 

\begin{proposition}\label{prop:prop_star_MZNG}
There exist some $\{c_i\}_{i\in\cN/\{1\}}$ with $c_i\in\RR$ such that a star-shaped zero-sum NG satisfies the following identities: 
\begin{align*}
&r_{i,1}^\intercal + r_{1,i} = c_i\pmb{1}\pmb{1}^\intercal \qquad \text{for every } i \in \cN/\{1\}, \qquad \sum_{i \in \cN / \{1\}} c_i = 0.
\end{align*}
\end{proposition}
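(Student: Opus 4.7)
The plan is to combine the zero-sum identity with the freedom inherent in the polymatrix decomposition at the center of the star. Placing player $1$ at the center, the star-shape structure gives $r_1(\ba) = \sum_{i \neq 1} r_{1,i}(a_1, a_i)$ and $r_i(\ba) = r_{i,1}(a_i, a_1)$ for each leaf $i$, so the zero-sum condition $\sum_{i \in \cN} r_i(\ba) \equiv 0$ rearranges to
\begin{equation*}
\sum_{i \in \cN/\{1\}} f_i(a_1, a_i) = 0 \quad \text{for every } \ba \in \cA, \qquad \text{where } f_i(a_1, a_i) := r_{1,i}(a_1, a_i) + r_{i,1}(a_i, a_1).
\end{equation*}

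The first step is to argue that each $f_i$ is essentially univariate. Fixing $a_1$ and all $a_j$ with $j \neq 1, i$ and varying only $a_i$ leaves every summand other than $f_i(a_1, a_i)$ unchanged, so the $a_i$-dependence of $f_i$ must be trivial, giving $f_i(a_1, a_i) = g_i(a_1)$ for some $g_i : \cA_1 \to \RR$, with the residual constraint $\sum_{i \neq 1} g_i(a_1) \equiv 0$. The second step is to absorb the remaining $a_1$-dependence using the nonuniqueness of the decomposition $r_1 = \sum_i r_{1,i}$: replacing $r_{1,i}(a_1, a_i)$ by $r_{1,i}(a_1, a_i) + h_i(a_1)$ for any collection $(h_i)_{i \neq 1}$ with $\sum_i h_i \equiv 0$ leaves every observable payoff and the star-shape intact. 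Setting $h_i(a_1) := c_i - g_i(a_1)$ for any prescribed constants $\{c_i\}_{i \neq 1}$ with $\sum_i c_i = 0$ turns the new $f_i$ into the constant $c_i$, giving $r_{i,1}^\intercal + r_{1,i} = c_i \pmb{1}\pmb{1}^\intercal$; the admissibility condition $\sum_i h_i \equiv 0$ reduces to $\sum_i c_i = \sum_i g_i \equiv 0$ and is automatic.

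The main conceptual hurdle is to make precise the claim that this reparameterization is innocuous: only the player-indexed payoffs $r_i$ are intrinsic to the game, whereas the edge-indexed $r_{1,i}$ at the center carry a gauge freedom $(h_i)_{i \neq 1}$ with $\sum_i h_i \equiv 0$, and neither the observable dynamics nor any equilibrium concept depends on the specific representative chosen. Once this is granted, the argument above yields the claim, with the simplest concrete instance being $c_i \equiv 0$, under which each edge becomes a genuine two-player zero-sum matrix game; the latitude to take general constants summing to zero is retained purely because it is convenient for downstream computations that build on this decomposition.
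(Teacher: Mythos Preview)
Your first step coincides with the paper's: writing $f_i(a_1,a_i) := r_{1,i}(a_1,a_i) + r_{i,1}(a_i,a_1)$, the zero-sum identity $\sum_{i\neq 1} f_i \equiv 0$ together with freezing all coordinates except $a_i$ forces $f_i(a_1,a_i) = g_i(a_1)$ with $\sum_{i\neq 1} g_i \equiv 0$.

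Your second step genuinely differs, and is in fact more careful. The paper asserts that each $g_i$ is already a constant $c_i$ ``by a similar argument'' (now varying $\pi_1$), but since $a_1$ appears in \emph{every} summand this cannot isolate a single $g_i$; for $n\ge 3$ the given edge data need not satisfy the identity at all. A witness: with $\cA_1=\{0,1\}$ take $r_{1,2}(a_1,a_2)=a_1$, $r_{1,3}(a_1,a_3)=-a_1$, $r_{2,1}\equiv r_{3,1}\equiv 0$; this is a star-shaped zero-sum NG for which $r_{2,1}^\intercal + r_{1,2}$ has distinct rows. You repair this by invoking the gauge freedom $r_{1,i}\mapsto r_{1,i}+h_i(a_1)$ with $\sum_i h_i\equiv 0$, which leaves every player payoff $r_j$ (hence every equilibrium notion) intact, and choosing $h_i = c_i - g_i$. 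Under this reading---the identity holds for \emph{some} polymatrix representation of the same game---the proposition is true and your argument proves it. The paper's downstream use (reducing the center's NE value to a single bilinear max--min) in fact only needs the row-constancy $f_i = g_i(a_1)$, so neither route damages anything that follows.
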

\begin{proof}
    By the definition of a zero-sum NG, for arbitrary $\pi_1 \in \Delta(\cA_1), \{\pi_i \}_{i \in \cN/\{1\}} \in \prod_{i \in \cN/\{1\}} \Delta(\cA_i)$, the following holds:
    \begin{align}
        \sum_{i \in \cN/\{1\}} \left(\pi_1^\intercal r_{1,i} + \pi_1^\intercal r_{i,1}^\intercal \right) \pi_i = 0 \label{eqn:star-MZNG-eq}
    \end{align}
    which implies $\pi_1^\intercal (r_{1,i} + r_{i,1}^\intercal) = c_i \pmb{1}^\intercal $  for some constant $c_i$, since 
    \Cref{eqn:star-MZNG-eq} holds for any $\pi_i \in \Delta(\cA_i)$. {To be specific, $\pi_1^\intercal (r_{1,i} + r_{i,1}^\intercal) \pi_i$ 
should be the same when we plugging $\pi_i = e_{a_i}$ for any $a_i \in \cA_i$,  so that every element of $\pi_1^\intercal (r_{1,i} + r_{i,1}^\intercal)$ is the same, i.e., there exists some $c_i\in\RR$ such that $\pi_1^\intercal (r_{1,i} + r_{i,1}^\intercal) = c_i \pmb{1}^\intercal $.} Plugging to \Cref{eqn:star-MZNG-eq}, we have $\sum_{i \in \cN/\{1\}} c_i = 0$. Moreover, this again implies $r_{1,i} + r_{i,1}^\intercal = c_i\pmb{1}\pmb{1}^\intercal$, since   $\pi_1^\intercal (r_{1,i} + r_{i,1}^\intercal) = c_i \pmb{1}^\intercal $ always holds for any $\pi_1 \in \Delta(\cA_1)$ {by a similar argument as above.} 
\end{proof}

Second, we define the Nash equilibrium value for the center player in a star-shaped zero-sum NG, which is different from the general zero-sum NG case, where there may not exist a unique Nash value  \cite{cai2016zero}.

\begin{proposition}\label{prop:NE_unique}
There exists a unique Nash equilibrium value for the center player $1$ in a star-shaped zero-sum NG (i.e., $r_{i,j} = 0$ if $i \neq 1$ and $j \neq 1$).
\end{proposition}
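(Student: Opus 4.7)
The plan is to reduce the star-shaped zero-sum NG to a two-player zero-sum bilinear game between the center player $1$ and the ``team'' consisting of the leaf players $\{2,\dots,n\}$, and then apply the minimax theorem. Throughout, let $f(\pi_1,\pi_{-1}) := r_1(\pi) = \sum_{j\in\cN/\{1\}} \pi_1^\intercal r_{1,j}\pi_j$, viewed as a function on the convex compact product $\Delta(\cA_1)\times \prod_{j\in\cN/\{1\}}\Delta(\cA_j)$. Note that $f$ is linear in $\pi_1$ for each fixed $\pi_{-1}$, and linear in $\pi_{-1}$ for each fixed $\pi_1$ (since it is a sum of terms, each depending on a single $\pi_j$).

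The first step is to use \Cref{prop:prop_star_MZNG} to rewrite each leaf player's best-response condition in terms of $f$. For $i\in\cN/\{1\}$, the identity $r_{1,i} = -r_{i,1}^\intercal + c_i\pmb{1}\pmb{1}^\intercal$ gives $r_i(\pi) = \pi_i^\intercal r_{i,1}\pi_1 = -\pi_1^\intercal r_{1,i}\pi_i + c_i$, since $\pi_1^\intercal\pmb{1}=\pmb{1}^\intercal\pi_i=1$. Hence maximizing $r_i$ over $\pi_i$ is equivalent to minimizing $\pi_1^\intercal r_{1,i}\pi_i$. Because only the $i$-th summand of $f$ depends on $\pi_i$, and because the product constraint $\prod_{j\neq 1}\Delta(\cA_j)$ decouples across $j$, the joint minimum of $f(\pi_1^\star,\cdot)$ over $\pi_{-1}$ is attained precisely when every leaf player individually minimizes their own summand. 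Combined with the best-response of player $1$, this shows that every Nash equilibrium $\pi^\star=(\pi_1^\star,\pi_{-1}^\star)$ is a saddle point of $f$ on $\Delta(\cA_1)\times\prod_{j\neq 1}\Delta(\cA_j)$.

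The second step is standard minimax duality: since $f$ is bilinear and the strategy sets are convex and compact, von Neumann's minimax theorem yields
\begin{align*}
V^\star \;:=\; \max_{\pi_1}\min_{\pi_{-1}} f(\pi_1,\pi_{-1}) \;=\; \min_{\pi_{-1}}\max_{\pi_1} f(\pi_1,\pi_{-1}),
\end{align*}
and every saddle point of $f$ attains this common value $V^\star$. Consequently, $r_1(\pi^\star) = f(\pi_1^\star,\pi_{-1}^\star) = V^\star$ for every Nash equilibrium $\pi^\star$ of the star-shaped zero-sum NG, which is the desired uniqueness of the NE value for the center player.

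The only step requiring care is the decoupling argument in the first paragraph: one must verify that an individual best-response of each leaf player is equivalent to a \emph{joint} best-response of the team (in the sense of a global minimizer of $f(\pi_1^\star,\cdot)$). This follows because $f$ decomposes additively across the leaves and the team's feasible set is a Cartesian product, so coordinate-wise optima aggregate into a joint optimum; without both of these structural properties (which are exactly what ``star-shape'' plus zero-sum give us), the reduction to a two-player minimax problem would fail, as is indeed the case for general zero-sum NGs where multiple NE values can coexist~\cite{cai2016zero}.
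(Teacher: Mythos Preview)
Your proposal is correct and follows essentially the same approach as the paper: both use \Cref{prop:prop_star_MZNG} to convert each leaf player's best-response into a minimization of $\pi_1^\intercal r_{1,i}\pi_i$, then cast the Nash equilibrium as a saddle point of the bilinear function $f(\pi_1,\pi_{-1})=\sum_{i\neq 1}\pi_1^\intercal r_{1,i}\pi_i$ and invoke the minimax theorem on the compact convex product of simplices. Your write-up is a bit more explicit about the decoupling step (additive separability plus product constraints), but the argument is the same.
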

\begin{proof}
    Player 1 aims to maximize $\sum_{i \in \cN/\{1\}} \pi_1^\intercal r_{1,i} \pi_i $ while player $i\neq 1$  aims to maximize $\pi_1^\intercal r_{i,1}^\intercal \pi_i = \pi_1^\intercal (c_i \pmb{1}\pmb{1}^\intercal - r_{1,i}) \pi_i = c_i - \pi_1^\intercal r_{1,i} \pi_i $, with $c_i$ given in \Cref{prop:prop_star_MZNG}.  We can solve these problems simultaneously by the following maxmin problem:
    \begin{align*}
        \maximize_{\pi_1 \in \Delta(\cA_1)}\minimize_{(\pi_i)_{i \in \cN/\{1\}} \in \prod_{i \in \cN/ \{1\}} \Delta(\cA_i)}\sum_{i \in \cN/\{1\}} \pi_1^\intercal r_{1,i} \pi_i.
    \end{align*}
    Since $\Delta(\cA_1)$ and $\prod_{i \in \cN/\{1\}} \Delta(\cA_i)$ are compact {and convex} sets, we can use the minimax theorem to show that  $\maximize_{\pi_1 \in \Delta(\cA_1)}\minimize_{(\pi_i)_{i \in \cN/\{1\}} \in \prod_{i \in \cN/\{1\}} \Delta(\cA_i)}\sum_{i \in \cN/\{1\}} \pi_1^\intercal r_{1,i} \pi_i$ is unique. 
\end{proof}

Note that there can be multiple Nash equilibrium values for {each} non-center player, but the {\it sum} of Nash equilibrium values for non-center players is always unique {(see the proof of \Cref{prop:NE_unique})}. 

Now, we start to prove that fictitious play dynamics given in \Cref{alg:markov-polymatrix-game-fictitious} converges to a Markov stationary NE in infinite-horizon zero-sum NMGs {with a star-shaped  network}. 

\fppMZNMG*
\begin{proof}
To prove the result, we consider a continuous-time version of \Cref{alg:markov-polymatrix-game-fictitious}. Using standard two-timescale stochastic approximation techniques \cite[Proposition 3.27]{benaim2005stochastic}, we can show that the limit set of our FP dynamics can be captured by that of a continuous-time differential inclusion. Before, we define several notation: 
\begin{align}
     &\bQ_{ i}(s) := (Q_{i,1}(s), \dots, Q_{ i,i-1 }(s), \bm{0}, Q_{ i,i+1}(s) \dots, Q_{ i, n}(s)) \in  \RR^{|\cA_i| \times \sum_{i \in \cN} |\cA_i| }   \nonumber
     \\     
     &\bQ(s) := ((\bQ_{ 1}(s))^\intercal, (\bQ_{2}(s))^\intercal, \dots, (\bQ_{n}(s))^\intercal)^\intercal \in \RR^{\sum_{i \in \cN} |\cA_i| \times \sum_{i \in \cN} |\cA_i| } \nonumber
     \\
     &h(\bQ(s)):= \max_{\mu\in \prod_{i \in \cN} \Delta (\cA_i)} \left|\left( \sum_{i \in \cN/\{1\}} \mu_1^\intercal Q_{1,i}(s) \mu_i + \sum_{i \in\cN/\{1\} }\mu_i^\intercal Q_{i,1}(s) \mu_1 \right)\right|. \label{eqn:h-function} 
\end{align}
Then, we consider the following  differential inclusion for each $s\in\cS$:
\begin{align}
&\pi_1(s) + \frac{d\pi_1(s)}{dt} \in \argmax_{a_1\in\cA_1} \left(\sum_{i\in\cN/\{1\}} e_{a_1}^\intercal Q_{1,i}(s) \pi_i(s)\right),  \, \nonumber
\\
& \pi_i(s) + \frac{d\pi_i(s)}{dt} \in \argmax_{a_i\in\cA_i} \left(e_{a_i}^\intercal Q_{i,1}(s) \pi_1(s) \right), \qquad  \frac{d Q_{1,i}(s)}{dt} = \frac{d Q_{i,1}(s)}{dt} = \bm{0}, \label{eqn:dynamics-fp-MZNMG}
\end{align}
with a Lyapunov function candidate being  
\numbering{\arxiv{\begin{align*}
L_\lambda (\pi, \bQ, s) &= \Biggl( \max_{a_1\in\cA_1}\left(\sum_{i \in \cN/\{1\}} e_{a_1}^\intercal Q_{1,i}(s) \pi_i(s)\right) + \sum_{i \in \cN/\{1\}}\max_{a_i\in\cA_i} \left(e_{a_i}^\intercal Q_{i,1}(s) \pi_1(s) \right) - \lambda h(\bQ(s)) \Biggr)_+
\end{align*}}}
\tnumbering{
\begin{align*}
L_\lambda (\pi, \bQ, s) &= \Biggl( \max_{a_1\in\cA_1}\left(\sum_{i \in \cN/\{1\}} e_{a_1}^\intercal Q_{1,i}(s) \pi_i(s)\right) 
\\
&+ \sum_{i \in \cN/\{1\}}\max_{a_i\in\cA_i} \left(e_{a_i}^\intercal Q_{i,1}(s) \pi_1(s) \right) - \lambda h(\bQ(s)) \Biggr)_+
\end{align*}
}
\neurips{\begin{align*}
&L_\lambda (\pi, \bQ, s) \\&\quad= \left( \max_{a_1\in\cA_1}\left(\sum_{i \in \cN/\{1\}} e_{a_1}^\intercal Q_{1,i}(s) \pi_i(s)\right) + \sum_{i \in \cN/\{1\}}\max_{a_i\in\cA_i} \left(e_{a_i}^\intercal Q_{i,1}(s) \pi_1(s) \right) - \lambda h(\bQ(s)) \right)_+
\end{align*}}
where $h$ is defined before,  
and $\lambda$ is chosen as {$1 < \lambda < 1/\gamma $. } Then, \Cref{claim:lyapunov-MZNMG}  below proves that $L_\lambda(\pi, \bQ, s)$ is a Lyapunov function for \eqref{eqn:dynamics-fp-MZNMG}.
\begin{claim}
\label{claim:lyapunov-MZNMG} For every $1< \lambda < 1/\gamma$, $L_\lambda (\pi, \bQ, s)$ is a Lyapunov function of \eqref{eqn:dynamics-fp-MZNMG} for the set $\Lambda = \{(\pi, \bQ): L_\lambda(\pi, \bQ, s) = 0\}$.
\end{claim}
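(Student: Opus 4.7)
The plan is to exploit the fact that along any solution of \eqref{eqn:dynamics-fp-MZNMG}, the matrices $Q_{1,i}(s)$ and $Q_{i,1}(s)$ are frozen (their derivatives vanish), so only $\pi(s)$ evolves and the quantity $\lambda h(\bQ(s))$ inside $L_\lambda$ is constant. Writing $V(\pi,\bQ,s)$ for the sum of the two $\max$-terms inside the $(\cdot)_+$, it therefore suffices to show the contraction $\dot V \le h(\bQ(s)) - V$ and then use $\lambda>1$ to absorb the $h$-term.

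First, I will differentiate $V$ along a trajectory by the envelope theorem. At a.e.\ time $t$ the outer $\argmax$'s in $V$ are singletons; let $a_1^\star\in\argmax_{a_1\in\cA_1}\sum_{i\neq 1}e_{a_1}^\intercal Q_{1,i}(s)\pi_i(s)$ and $a_i^\star\in\argmax_{a_i\in\cA_i}e_{a_i}^\intercal Q_{i,1}(s)\pi_1(s)$ be the (generically unique) maximizers, which by \eqref{eqn:dynamics-fp-MZNMG} can be chosen as the targets $e_{a_1^\star},e_{a_i^\star}$ of $\dot\pi_1(s),\dot\pi_i(s)$. Substituting $\dot\pi_1(s)=e_{a_1^\star}-\pi_1(s)$ and $\dot\pi_i(s)=e_{a_i^\star}-\pi_i(s)$, and using $\dot Q_{1,i}(s)=\dot Q_{i,1}(s)=\bm{0}$, the envelope theorem yields
\[
\frac{dV}{dt}=\sum_{i\neq 1}e_{a_1^\star}^\intercal Q_{1,i}(s)\bigl(e_{a_i^\star}-\pi_i(s)\bigr)+\sum_{i\neq 1}e_{a_i^\star}^\intercal Q_{i,1}(s)\bigl(e_{a_1^\star}-\pi_1(s)\bigr)=\sum_{i\neq 1}e_{a_1^\star}^\intercal\bigl(Q_{1,i}(s)+Q_{i,1}^\intercal(s)\bigr)e_{a_i^\star}-V.
\]

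Second, I will bound the cross term via \eqref{eqn:h-function}. The tuple $\mu_1=e_{a_1^\star}, \mu_i=e_{a_i^\star}$ is a feasible product in $\prod_{i\in\cN}\Delta(\cA_i)$, so $\bigl|\sum_{i\neq 1}e_{a_1^\star}^\intercal(Q_{1,i}(s)+Q_{i,1}^\intercal(s))e_{a_i^\star}\bigr|\le h(\bQ(s))$, and in particular the signed quantity is $\le h(\bQ(s))$. Combining with the previous display gives the key inequality $\dot V\le h(\bQ(s))-V$ a.e.\ along every solution. Third, I will transfer this contraction to $L_\lambda$. On $\{L_\lambda>0\}$ one has $V>\lambda h(\bQ(s))$ and $L_\lambda=V-\lambda h(\bQ(s))$, and since $\lambda h(\bQ(s))$ is frozen, $\dot L_\lambda=\dot V\le h(\bQ(s))-V$. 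If $h(\bQ(s))>0$, then $\lambda>1$ together with $V>\lambda h(\bQ(s))$ yields $\dot L_\lambda<(1-\lambda)h(\bQ(s))<0$; if $h(\bQ(s))=0$, then $V>0$ and $\dot L_\lambda\le-V<0$. Hence $L_\lambda$ is strictly decreasing whenever $L_\lambda>0$, and combined with the immediate facts that $L_\lambda$ is continuous, non-negative (by the $(\cdot)_+$), and vanishes exactly on $\Lambda$ by definition, this establishes the Lyapunov property.

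The one subtlety is justifying the envelope-theorem step for a differential inclusion rather than a smooth ODE: the argument only requires $\dot V$ to be computed at times where the outer $\argmax$'s in $V$ are singletons, which by piecewise-linearity of $V$ holds for a.e.\ $t$ along any absolutely continuous solution, and at such points the dynamics' best-response selections can be taken to agree with the envelope selections. The hypothesis $\lambda>1$ is used exactly in the sign of $(1-\lambda)h(\bQ(s))$; the upper bound $\lambda<1/\gamma$ is not invoked in this intra-state Lyapunov argument and is reserved for the slower $Q$-belief dynamics in the subsequent two-timescale analysis.
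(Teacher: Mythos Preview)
Your proof is correct and follows essentially the same approach as the paper's. The paper defines $V_\lambda$ as the full expression inside $(\cdot)_+$ (including the $-\lambda h(\bQ(s))$ term) and shows $\dot V_\lambda < -V_\lambda$ via the same envelope computation and the same bound of the cross term by $h(\bQ(s))$; since $h(\bQ(s))$ is frozen along trajectories, your choice to split off this constant and write $\dot V\le h(\bQ(s))-V$ is an equivalent reformulation, and your explicit case split $h>0$ versus $h=0$ actually handles the degenerate case more cleanly than the paper's inequality $h<\lambda h$.
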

\begin{proof}
First, we define $V_\lambda (\pi, \bQ, s)$ as below: 
\begin{align*}
V_\lambda (\pi, \bQ, s) &=  \max_{a_1\in\cA_1}\left(\sum_{i \in \cN/\{1\}} e_{a_1}^\intercal Q_{1,i}(s) \pi_i(s)\right) \tnumbering{\\&\qquad}+ \sum_{i \in \cN/\{1\}}\max_{a_i\in\cA_i} \left(e_{a_i}^\intercal Q_{i,1}(s) \pi_1(s) \right) - \lambda h(\bQ(s)).
\end{align*}
Then, we have $L_\lambda(\pi, \bQ ,s) = (V_\lambda (\pi, \bQ, s))_+$.  Moreover, let the maximizer of $\sum_{i \in \cN/\{1\}} e_{a_1}^\intercal Q_{1,i}(s) \pi_i(s)$ be $a_1^\star$ and let the maximizer of $e_{a_i}^\intercal Q_{i,1}(s) \pi_1(s)$ as $a_i^\star$ for $i \in \cN/\{1\}$. Then, we have
\begin{align*}
    \frac{dV_\lambda (\pi, \bQ, s)}{dt} &= \left(\sum_{i \in \cN/\{1\}} e_{a_1^\star}^{\intercal} Q_{1,i}(s) \pi_i(s)'\right) + \sum_{i \in \cN/\{1\}  }\left(e_{a_i^\star}^{\intercal} Q_{i,1}(s) \pi_1'(s) \right) 
    \\
    &= \left(\sum_{i \in \cN/ \{1\}} e_{a_1^\star}^{\intercal} Q_{1,i}(s) (e_{a_i^\star} - \pi_i(s))\right) + \sum_{i \in \cN/\{1\}   }\left(e_{a_i^\star}^{\intercal} Q_{i,1}(s) (e_{a_1^\star} -\pi_1(s)) \right)  
     \\
     &< - V_\lambda (\pi, \bQ, s) 
\end{align*}
since $\sum_{i \in \cN/\{1\}} e_{a_1^\star}^{\intercal} Q_{1, i}(s) e_{a_i^\star} + \sum_{i \in \cN/\{1\}} e_{a_i^\star}^{\intercal} Q_{i,1}(s) e_{a_1^\star} \leq h(\bQ(s)) < \lambda h(\bQ(s))$ holds by the definition of $h(\bQ(s))$. Therefore, $V_\lambda (\pi, \bQ, s )$ is strictly decreasing with respect to time when $V_\lambda (\pi, \bQ, s) \geq 0$. {To emphasize the time dependence of $V_\lambda$ and $L_\lambda$, we will write $V_\lambda(\pi, \bQ, s, t)$ and $L_\lambda(\pi, \bQ, s, t)$.}

If $V_\lambda (\pi, \bQ, s,t) \geq 0$, then $L_\lambda (\pi, \bQ, s, t) = V_\lambda (\pi, \bQ, s, t)$ is strictly decreasing if $L_\lambda (\pi, \bQ , s, t)>0$. Therefore, we can see  that if $L_\lambda(\pi, \bQ, s,t) > 0$ so that $V_\lambda (\pi, \bQ, s,t)>0$, then $L_\lambda(\pi, \bQ, s, t') < L_\lambda(\pi, \bQ, s, t)$ for all $t' > t$, i.e., $L_\lambda(\pi, \bQ, s, t)$ keeps strictly decreasing in this case.

If $V_\lambda (\pi,\bQ, s, t) < 0$, then $L_\lambda (\pi, \bQ, s, t) = 0$ always holds. Assume that there exists $t_1< t_2$ such that $V_\lambda (\pi, \bQ, s, t_1) < 0$ and $V_\lambda(\pi, \bQ, s, t_2)>0$. Due to the continuity of $V_\lambda$, there exists some $t\in(t_1,t_2)$ such that $V_\lambda(\pi, \bQ, s, t) = 0$. Then, $\frac{dV_\lambda (\pi, \bQ, s,t )}{dt} < -V_\lambda (\pi, \bQ, s, t) = 0$, so it is strictly negative, which prevents it from becoming a positive value, so it is a contradiction. Therefore, if $L_\lambda (\pi,\bQ, s, t) = 0$, then $L_\lambda(\pi,\bQ, s, t') = 0$ for all $t' > t$ in this case. 
\end{proof}
Therefore, \cite[Proposition 3.27]{benaim2005stochastic} implies
\numbering{\arxiv{\begin{align}
    \lim_{k \to \infty} \Biggl( \max_{a_1\in\cA_1}\left(\sum_{i \in \cN/\{1\}} e_{a_1}^\intercal \hat{Q}^{(k)}_{1,i} (s)\hat{\pi}^{(k)}_i(s)\right) + \sum_{i \in \cN/\{1\} }\max_{a_i\in\cA_i}   \left(e_{a_i}^\intercal \hat{Q}^{(k)}_{i,1}(s) \hat{\pi}^{(k)}_1(s) \right) - \lambda h(\hat{\bQ}^{(k)}(s)) \Biggr)_+ = 0   \label{eqn:stochastic-approx-MZNMG}
\end{align}}}
\tnumbering{{\begin{align}
    \lim_{k \to \infty} \Biggl( \max_{a_1\in\cA_1}&\left(\sum_{i \in \cN/\{1\}} e_{a_1}^\intercal \hat{Q}^{(k)}_{1,i} (s)\hat{\pi}^{(k)}_i(s)\right) \nonumber
    \\
    &+ \sum_{i \in \cN/\{1\} }\max_{a_i\in\cA_i}   \left(e_{a_i}^\intercal \hat{Q}^{(k)}_{i,1}(s) \hat{\pi}^{(k)}_1(s) \right) - \lambda h(\hat{\bQ}^{(k)}(s)) \Biggr)_+ = 0   \label{eqn:stochastic-approx-MZNMG}
\end{align}}}
\neurips{\begin{align}
    \lim_{k \to \infty} \Biggl( &\max_{a_1\in\cA_1}\left(\sum_{i \in \cN/\{1\}} e_{a_1}^\intercal \hat{Q}^{(k)}_{1,i} (s)\hat{\pi}^{(k)}_i(s)\right) \nonumber\\
    &\qquad+ \sum_{i \in \cN/\{1\} }\max_{a_i\in\cA_i}   \left(e_{a_i}^\intercal \hat{Q}^{(k)}_{i,1}(s) \hat{\pi}^{(k)}_1(s) \right) - \lambda h(\hat{\bQ}^{(k)}(s)) \Biggr)_+ = 0   \label{eqn:stochastic-approx-MZNMG}
\end{align}}
for every $s \in \cS$. 

In \Cref{claim:almost-zero-sum}, we will prove that an NG with $(\cG =(\cN, \cE_Q), \cA = (\cA_i)_{i \in \cN}, (\hat{Q}^{(k)}(s))_{(i, j) \in \cE_Q})$   asymptotically becomes  a zero-sum NG as $k \to \infty$ for all $s\in\cS$. 
Indeed, we have 
 \begin{align*}
     h(\hat{\bQ}^{(k)}(s)) &=  \max_{\mu\in \prod_{i \in \cN} \Delta (\cA_i)} \left|\left( \sum_{i \in \cN/\{1\}} \mu_1^\intercal \hat{\bQ}^{(k)}_{1,i}(s) \mu_i + \sum_{i \in\cN/\{1\} }\mu_i^\intercal \hat{\bQ}^{(k)}_{i,1}(s) \mu_1 \right)\right|, 
 \end{align*}
 we can conclude that for arbitrary policy $\mu\in \prod_{i \in \cN} \Delta (\cA_i)$, the sum of payoffs in NG with $(\cG =(\cN, \cE_Q), \cA = (\cA_i)_{i \in \cN}, (\hat{Q}^{(k)}(s))_{(i, j) \in \cE_Q})$ goes to 0 as $k \to \infty$. Therefore, $h(\hat{\bQ}^{(k)}(s)) \to 0$ implies that the NG is zero-sum NG, where $h$ is defined as \Cref{eqn:h-function}.

Before stating that $\hat{Q}^{(k)}$ asymptotically becomes zero-sum NGs, we state a lemma from \cite{sayin2022fictitious}.  

\begin{lemma}[\cite{sayin2022fictitious}]
\label{lem:sayincontraction}
Suppose the {sequence  of random variables $(y_k)_{k\geq 0}$ with $y_k\in\RR^d$ satisfies}  
\begin{align*}
    &y_{k+1}[n] \leq\left(1-\beta_{n, k}\right) y_k[n]+\beta_{n, k}\left(\gamma\left\|y_k\right\|_{\infty}+\bar{\epsilon}_k+\omega_{n, k}\right) 
    \\
    &y_{k+1}[n] \geq\left(1-\beta_{n, k}\right) y_k[n]+\beta_{n, k}\left(-\gamma\left\|y_k\right\|_{\infty}+\underline{\epsilon}_k+\omega_{n, k}\right)
\end{align*}
for all $k \geq 0$, where {$y_{k}[n]$ denotes the $n$-th element in $y_k$,} $\gamma \in (0,1)$, $\sum_{k=0}^{\infty} \beta_{n, k}=\infty, \lim _{k \rightarrow \infty} \beta_{n, k}=0$ for each $n$ with probability 1, the error sequence $(\epsilon_k)_{k\geq 0}$ satisfies  $\limsup _{k \rightarrow \infty}\left|\bar{\epsilon}_k\right| \leq c ~~\text {and} ~~ \limsup _{k \rightarrow \infty}\left|\underline{\epsilon}_k\right| \leq c$ {for some $c\geq 0$} with probability 1. Here, $\omega_{n,k}$ is a stochastic approximation term that is zero-mean and has finite variance conditioned on the history. Suppose that $\left\|y_k\right\|_{\infty}$ is bounded for all $k$. Then, we have
$
\limsup _{k \rightarrow \infty}\left\|y_k\right\|_{\infty} \leq \frac{c}{1-\gamma}
$ 
with probability 1, provided that either $\omega_{n, k}=0$ for all $n, k$ or $\sum_{k=0}^{\infty} \beta_{n, k}^2<\infty$ for each $n$ with probability 1.    
\end{lemma}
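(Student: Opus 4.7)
The result is a standard asynchronous stochastic approximation bound in the spirit of Tsitsiklis (1994) and Bertsekas--Tsitsiklis (1996): the recursion is a noisy convex combination between the current iterate and an operator that is a $\gamma$-pseudo-contraction in $\|\cdot\|_\infty$ with a bounded ``drift'' $c$. The plan is to (i) kill the martingale noise, (ii) argue coordinatewise via a ``level shrinking'' argument, and (iii) iterate to push the bound down to the exact fixed point $c/(1-\gamma)$.

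\emph{Step 1 (noise averaging).} Fix a coordinate $n$. Define $M_k^{(n)} := \sum_{\ell=0}^{k-1} \beta_{n,\ell}\,\omega_{n,\ell}\,\prod_{j=\ell+1}^{k-1}(1-\beta_{n,j})$, which is the accumulated noise seen by $y_k[n]$. Since $\{\omega_{n,\ell}\}$ is a martingale difference with finite conditional variance and $\sum_\ell \beta_{n,\ell}^2 < \infty$ (when $\omega\neq 0$), a direct application of the martingale convergence theorem gives $M_k^{(n)} \to M_\infty^{(n)}$ almost surely, so the contribution of the noise to $y_k[n]$ is asymptotically a finite, measurable random variable; because $\beta_{n,k}\to 0$, its forward \emph{increments} tend to $0$. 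Consequently, for the purpose of bounding $\limsup_k \|y_k\|_\infty$, we may replace the stochastic recursion by its deterministic counterpart up to an $o(1)$ additive perturbation. (The $\omega\equiv 0$ case is trivial.)

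\emph{Step 2 (base contraction step).} Let $L := \limsup_{k\to\infty} \|y_k\|_\infty$, which is finite by assumption. Assume toward contradiction that $L > c/(1-\gamma)$, and fix $\delta>0$ small enough that $\gamma L + c + 3\delta < L - \delta$. By the hypotheses on $\bar\epsilon_k,\underline\epsilon_k$ and by Step 1, there is $K_\delta$ such that for all $k\ge K_\delta$ and all $n$: $|\bar\epsilon_k|\vee|\underline\epsilon_k|\le c+\delta$; $\|y_k\|_\infty \le L+\delta$; and the noise contribution accumulated past time $K_\delta$ is bounded by $\delta$ in magnitude. For any such $k$ and any $n$, combining the two inequalities in the hypothesis yields
\[
|y_{k+1}[n]| \;\le\; (1-\beta_{n,k})\,|y_k[n]| \;+\; \beta_{n,k}\bigl(\gamma(L+\delta)+c+\delta\bigr) \;+\; \beta_{n,k}\,\omega_{n,k},
\]
i.e.\ a convex combination pulling $|y_k[n]|$ toward the ceiling $\gamma L + c + O(\delta)$.

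\emph{Step 3 (level shrinking).} Because $\sum_{k\ge K_\delta}\beta_{n,k}=\infty$ and the noise tail is eventually negligible, a standard ``convex combination toward a constant'' argument (e.g.\ Lemma 1 of Bertsekas--Tsitsiklis, or a direct computation with $\prod(1-\beta_{n,j})\to 0$) gives
\[
\limsup_{k\to\infty} |y_k[n]| \;\le\; \gamma L + c + 3\delta
\]
for every coordinate $n$. Taking the maximum over the finitely many $n$ yields $\limsup_k \|y_k\|_\infty \le \gamma L + c + 3\delta$. Letting $\delta\downarrow 0$ gives $L \le \gamma L + c$, which contradicts $L > c/(1-\gamma)$. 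Hence $L\le c/(1-\gamma)$, as required.

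\emph{Main obstacle.} The delicate part is Step 3: the updates are \emph{asynchronous} in the sense that different coordinates use different step-size sequences $\beta_{n,k}$, so one cannot reduce the problem to a single scalar recursion. The $\|y_k\|_\infty$ appearing on the right-hand side couples all coordinates together, but the left-hand side updates a single coordinate at its own rate. The fix is to do the contraction argument coordinate-by-coordinate using only $\sum_k\beta_{n,k}=\infty$ (which drives the convex combination to the ``ceiling'' independent of the actual schedule), and only then take the maximum over the finite index set $\{n\}$ to recover the infinity-norm bound. The noise terms $\omega_{n,k}$ do not interfere with this step because they have been controlled uniformly in $n$ by Step 1.
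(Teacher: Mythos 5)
The paper does not prove this lemma at all---it is imported verbatim from \cite{sayin2022fictitious}---so there is no in-paper proof to compare against. Your plan is the standard Tsitsiklis-style asynchronous stochastic-approximation argument, which is also how the cited source establishes the result: absorb the martingale noise, then run a coordinatewise ``level-shrinking'' contraction using only $\sum_k \beta_{n,k}=\infty$, and close with $L\le \gamma L + c$. Steps 2 and 3 are sound (the contradiction framing is unnecessary but harmless, and the index set is finite since $y_k\in\RR^d$, so taking the max over $n$ is legitimate).

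The one place where your argument as written does not close is Step 1. The weighted noise process $M_k^{(n)}=\sum_{\ell}\beta_{n,\ell}\,\omega_{n,\ell}\prod_{j=\ell+1}^{k-1}(1-\beta_{n,j})$ is \emph{not} a martingale in $k$ (the weight attached to $\omega_{n,\ell}$ changes with $k$), so the martingale convergence theorem does not apply to it directly; and more importantly, ``converges to a finite random variable with vanishing increments'' is not enough---if $M_k^{(n)}\to M_\infty^{(n)}\neq 0$ the final bound would be inflated by $|M_\infty^{(n)}|$. What you actually need, and what you implicitly invoke in Step 2, is that the noise process restarted at time $K$, i.e.\ $W^{(n)}_{k+1}=(1-\beta_{n,k})W^{(n)}_k+\beta_{n,k}\omega_{n,k}$ with $W^{(n)}_K=0$, can be made uniformly small over $k\ge K$ by taking $K$ large. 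This follows from the a.s.\ convergence of the genuine martingale $\sum_\ell \beta_{n,\ell}\omega_{n,\ell}$ (which uses $\sum_\ell\beta_{n,\ell}^2<\infty$ and the bounded conditional variances) together with a comparison/summation-by-parts argument; it is the content of the standard noise lemma in Tsitsiklis (1994) and Bertsekas--Tsitsiklis. With that substitution---subtract $W^{(n)}_k$ from $y_k[n]$ before taking absolute values, rather than carrying the signed $\omega_{n,k}$ inside the absolute-value recursion---your Steps 2 and 3 go through and the proof is complete.
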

\begin{claim}
 \label{claim:almost-zero-sum} $h(\hat{\bQ}^{(k)}(s))$ converges to 0 for all $s \in \cS$. In other words, an NG with $(\cG =(\cN, \cE_Q), \cA = (\cA_i)_{i \in \cN}, (\hat{Q}^{(k)}(s))_{(i, j) \in \cE_Q})$   asymptotically becomes a zero-sum NG as $k \to \infty$ for all $s\in\cS$.    
\end{claim}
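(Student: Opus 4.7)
} The plan is to track a single auxiliary quantity that measures ``how far the stage game is from being zero-sum'' and show that it satisfies a $\gamma\lambda$-contractive stochastic recursion, so that \Cref{lem:sayincontraction} yields the desired convergence to $0$. Concretely, for each $s\in\cS$, $a_1\in\cA_1$, and $\{a_j\}_{j\neq 1}\in\prod_{j\neq 1}\cA_j$, I define
\begin{align*}
Y_k(s,a_1,\{a_j\}) \,:=\, \sum_{j\in\cN/\{1\}}\Bigl[\hat Q_{1,j}^{(k)}(s,a_1,a_j)+\hat Q_{j,1}^{(k)}(s,a_j,a_1)\Bigr].
\end{align*}
Because the expression appearing inside the $\max$ in the definition of $h$ is multilinear in $(\mu_1,\{\mu_j\})$, the maximum over product simplices is attained at pure strategies, so $h(\hat{\bQ}^{(k)}(s))=\max_{a_1,\{a_j\}}|Y_k(s,a_1,\{a_j\})|$. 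It thus suffices to show that this quantity tends to $0$ for every $s$.

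Next, I would derive the recursion for $Y_k$. The update happens only at $s=s^{(k)}$; plugging the $\hat Q_{1,j}$ and $\hat Q_{j,1}$ rules of \Cref{alg:markov-polymatrix-game-fictitious} and summing the weight $\tfrac{1}{n-1}$ on player $1$'s side yields, for $s=s^{(k)}$,
\begin{align*}
Y_{k+1}(s,a_1,\{a_j\}) &= (1-\beta^{N(s)})\,Y_k(s,a_1,\{a_j\}) \\
&\quad+\beta^{N(s)}\Bigl[\sum_{j\neq 1}\bigl(r_{1,j}(s,a_1,a_j)+r_{j,1}(s,a_j,a_1)\bigr) + \gamma\sum_{s'}\PP_1(s'\mid s,a_1)\sum_{i\in\cN}\hat V_i^{(k)}(s')\Bigr].
\end{align*}
The stage-wise analogue of \Cref{prop:prop_star_MZNG} applied per state (which follows since the auxiliary game at each $s$ is a zero-sum NG of a star-shape) gives $\sum_{j\neq 1}[r_{1,j}(s,a_1,a_j)+r_{j,1}(s,a_j,a_1)]=\sum_{j\neq 1}c_j(s)=0$, eliminating the reward term. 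For $s\neq s^{(k)}$ the recursion reduces to $Y_{k+1}=Y_k$, which fits the asynchronous form required by \Cref{lem:sayincontraction} because every state is assumed to be visited infinitely often.

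The third step is to control $\sum_{i\in\cN}\hat V_i^{(k)}(s')$ by $h(\hat{\bQ}^{(k)}(s'))$. The upper bound follows from \eqref{eqn:stochastic-approx-MZNMG}, which, after unpacking $L_\lambda$, gives $\sum_{i}\hat V_i^{(k)}(s')\leq \lambda\,h(\hat{\bQ}^{(k)}(s'))+\epsilon_k$ with $\epsilon_k\to 0$ uniformly in $s'$. The lower bound is elementary: replacing each $\max_{a_i}$ by an evaluation at $\hat\pi_i^{(k)}(s')$ and then using the definition of $h$,
\begin{align*}
\sum_{i}\hat V_i^{(k)}(s') \,\geq\, \hat\pi_1^\intercal(s')\!\!\sum_{j\neq 1}\!\hat Q_{1,j}^{(k)}(s')\hat\pi_j(s')+\sum_{i\neq 1}\hat\pi_i^\intercal(s')\hat Q_{i,1}^{(k)}(s')\hat\pi_1(s') \,\geq\, -h(\hat{\bQ}^{(k)}(s')).
\end{align*}
Hence $|\sum_i \hat V_i^{(k)}(s')|\leq \lambda\,h(\hat{\bQ}^{(k)}(s'))+\epsilon_k=\lambda\,\|Y_k\|_\infty+\epsilon_k$.

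Combining, the recursion above matches the hypothesis of \Cref{lem:sayincontraction} with the vector $y_k=Y_k$ indexed by $(s,a_1,\{a_j\})$, step-sizes $\beta_{n,k}=\beta^{N(s)}\pmb{1}(s=s^{(k)})$, contraction factor $\gamma\lambda\in(0,1)$ (guaranteed by the choice $\lambda<1/\gamma$), and error sequences $\bar\epsilon_k,\underline\epsilon_k=\pm\gamma\epsilon_k\to 0$, with no stochastic noise $\omega_{n,k}$ in the model-based setting. The conclusion $\limsup_{k\to\infty}\|Y_k\|_\infty\leq 0$ then yields $h(\hat{\bQ}^{(k)}(s))\to 0$ for every $s$, as claimed. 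The main obstacle I anticipate is the \emph{asynchronous} nature of the update (only one state updates at each iteration) and verifying that the step-size schedule $\beta^{N(s)}$ per state-visitation satisfies the Robbins--Monro-type conditions of \Cref{lem:sayincontraction} uniformly across states, which is where the assumption that every state is visited infinitely often with probability $1$ becomes essential.
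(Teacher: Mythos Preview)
Your proposal is correct and follows essentially the same route as the paper: both track the aggregate $\sum_{i\in\cN}\hat Q_i^{(k)}(s,\ba)$ (your $Y_k$), use the zero-sum reward structure to kill the drift term, bound $\sum_{i}\hat V_i^{(k)}$ by $\lambda\,h(\hat{\bQ}^{(k)})$ via \eqref{eqn:stochastic-approx-MZNMG} plus the trivial lower bound, and then invoke \Cref{lem:sayincontraction} with contraction factor $\gamma\lambda<1$. The only point you leave implicit that the paper makes explicit is the uniform boundedness of $\|Y_k\|_\infty$ (needed as a hypothesis of \Cref{lem:sayincontraction}), which follows by a routine induction on the convex-combination form of the $Q$-updates.
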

\begin{proof} 
Rewriting \Cref{eqn:stochastic-approx-MZNMG} with the belief of the value function, we have  that for all $s\in\cS$
\begin{align}
    \lim_{k \to \infty} \left( \sum_{i\in\cN} \hat{V}_i^{(k)}(s)  - \lambda h(\hat{\bQ}^{(k)}(s)) \right)_+ = 0.  
    \label{eqn:v-bound}
\end{align}
By the definition of $\hat{V}_i^{(k)}(s)$ and \Cref{eqn:v-bound}, we have 
\begin{align*}
    -\lambda h(\hat{\bQ}^{(k)}(s)) \leq -h(\hat{\bQ}^{(k)}(s)) \leq \sum_{i \in \cN} \hat{V}_i^{(k)}(s) \leq \lambda h(\hat{\bQ}^{(k)}(s)) + \bar{\epsilon}_k(s)
\end{align*} 
for all $s \in \cS$ and $k \geq 0$ for some $(\bar{\epsilon}_k(s))_{k\geq 0}$, where $\bar{\epsilon}_k(s) \to 0$. Moreover, summing over  all the $Q$-belief estimates over $i$, we have
    $$\sum_{i \in\cN} \hat{Q}_{i}^{(k+1)}(s, \ba) = ( 1- \bar{\beta}_{k}(s) ) \sum_{i \in \cN} \hat{Q}_{i}^{(k)}(s,\ba)  + \gamma\bar{\beta}_{k}(s) \left(   \sum_{s' \in S}\PP(s' \mid s, \ba) \sum_{i \in \cN} \hat{V}_{i}^{(k)}(s') \right) $$
    where $\bar{\beta}_k(s) := \pmb{1}(s = s^{(k)}) \beta^{(N(s))}$. Thus, we have 
    \begin{align*}
    &\sum_{i \in \cN} \hat{Q}_{i}^{(k+1)}(s, \ba) \leq ( 1- \bar{\beta}_{k} \kzedit{(s)}) \sum_{i \in \cN}  \hat{Q}_{i}^{(k)}(s,\ba)  + \bar{\beta}_{k}\kzedit{(s)}\left(  \bar{\gamma}  \max_{s' \in \cS} h(\hat{\bQ}^{(k)}(s')) + \kzedit{\gamma}\bar{\epsilon}^{(k)}\right)          
    \\
    &\sum_{i \in \cN} \hat{Q}_{i}^{(k+1)}(s, \ba) \geq ( 1- \bar{\beta}_{k}\kzedit{(s)}) \sum_{i \in \cN}  \hat{Q}_{i}^{(k)}(s,\ba)  - \bar{\beta}_{k}\kzedit{(s)}\left(  \bar{\gamma}  \max_{s' \in \cS} h(\hat{\bQ}^{(k)}(s'))\right)
    \end{align*}
where $\bar{\gamma} = \gamma \lambda \in (0,1)$. Since $\max_{s' \in \cS} h(\hat{\bQ}^{(k)}(s'))$ is the  maximal value of $\big| \sum_{i \in \cN} \hat{Q}_{i}^{(k)}(s, \ba) \big|$, we can apply 
\Cref{lem:sayincontraction} to this situation. {Let $\cZ:=\cS \times \cA$ be the set of all possible state-action pairs.} Then, we can view this problem as $y_k[z]:=\sum_{i \in \cN} \hat{Q}_{i}^{(k)}(s, \ba)$, for all $k \geq 0$. Note that $y_k$ is always bounded by $2Rn/(1-\gamma)$, since the reward function is bounded by $R$, and, every timestep we update the sum of $Q_i$-value estimates over $i \in \cN$ with a convex combination of the previous sum of $Q_i$-value estimates over $i \in \cN$ and $\left(\sum_{i}r_i(s, \ba) + \gamma \sum_{s' \in \cS} \PP(s'\mid s, \ba) V(s')\right) \leq \left(2Rn + \gamma \sum_{s' \in \cS} \PP(s'\mid s, \ba) \max_{\ba'} Q(s', \ba')\right) \leq 2Rn/(1-\gamma)$, so we can recursively show all the sum of $Q_i$-value estimates over $i \in \cN$ iterates are bounded. Therefore, \Cref{lem:sayincontraction} yields that $\max_{s' \in \cS} h(\hat{\bQ}^{(k)}(s')) = \|y_k\|_{\infty} \rightarrow 0$ as $k \rightarrow \infty$. As a byproduct, we also have  $
     \lim_{k \to \infty} \big|\sum_{i\in\cN} \hat{V}^{(k)}_i (s)\big|  = 0,$ 
completing the proof.
\end{proof}

\kzedit{For any $s\in\cS$,} 
for {a given} $\bQ(s) \in \RR^{\sum_{i \in \cN} |\cA_i| \times \sum_{i \in \cN} |\cA_i|}$  
as defined in \eqref{eqn:h-function}, we {define} $\bQ_1(s):= (Q_{1,2}(s), \dots, Q_{1, n}(s))$  and $\bQ_{-1}(s):= (Q_{2,1}^\intercal(s), \dots, Q_{n,1}^\intercal(s) )^\intercal$ for all $s \in \cS$. Then, we define $\text{Val}_1$ and $\text{Val}_{-1}$, which are the maxmin operators  with respect to $\bQ_1(s)$ and $\bQ_{-1}(s)$, respectively, as follows:
\begin{align*}
    &\text{Val}_1(\bQ_1(s)) = \max_{\mu_1 \in \Delta(\cA_1)}  \min_{\mu_2 \in \Delta(\cA_2), \dots, \mu_{n} \in \Delta(\cA_n) } \sum_{i \in \cN / \{1\}}\mu_1^\intercal {Q}_{1,i}(s) \mu_i 
    \\
    &\text{Val}_{-1}(\bQ_{-1}(s)) = \max_{\mu_2 \in \Delta(\cA_2), \dots, \mu_{n} \in \Delta(\cA_n)}\min_{\mu_1\in \Delta(\cA_1)}   \sum_{i \in \cN/\{1\}}\mu_1^\intercal {Q}_{i,1}^\intercal(s) \mu_i.
\end{align*}
Note that the $\text{Val}_1$ and $\text{Val}_{-1}$ operators can be viewed as the maxmin operator in the two-player zero-sum case, and it is indeed the star-shaped topology that enables us to write out a value iteration operator based on it, whose fixed point corresponds to the NE of the game. In general, it is hard to define value-iteration operators induced by such $\text{Val}_1$ and $\text{Val}_{-1}$ for other network structures. Also, note that since  the maxmin formulas  in $\text{Val}_1$ and $\text{Val}_{-1}$ are by definition \emph{non-expansive}, the induced value iteration operator is  \emph{contracting} (due to the $\gamma\in(0,1)$ discount factor), which is key in showing the convergence of our FP dynamics. 

\begin{claim}
\label{claim:Vtovalhat} 
$|\hat{V}_1^{(k)}(s) - \text{Val}_1(\hat{\bQ}_1^{(k)}(s)) |$  and $|\sum_{i\in\cN / \{1\}}\hat{V}_i^{(k)}(s) - \text{Val}_{-1}(\hat{\bQ}_{-1}^{(k)}(s)) |$ converge  to 0 {for all $s\in\cS$}. 
\end{claim}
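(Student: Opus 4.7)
The plan is to sandwich each of the two quantities between a non-negative lower bound and an upper bound that tends to zero. Fix an arbitrary $s\in\cS$. First, I will show $\hat V_1^{(k)}(s) + \sum_{i\in\cN/\{1\}}\hat V_i^{(k)}(s) \to 0$. The upper bound $\hat V_1^{(k)}(s) + \sum_{i\in\cN/\{1\}}\hat V_i^{(k)}(s) \le \lambda h(\hat{\bQ}^{(k)}(s)) + o(1)$ follows directly from the Lyapunov/stochastic-approximation conclusion \eqref{eqn:stochastic-approx-MZNMG}, combined with $h(\hat{\bQ}^{(k)}(s))\to 0$ from \Cref{claim:almost-zero-sum}. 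For the matching lower bound, each $\hat V_i^{(k)}(s)$ is a maximum over $\cA_i$ and hence at least the value attained by plugging in $\hat\pi_i^{(k)}(s)$; summing the $n$ resulting inequalities collapses the right-hand side to $\sum_{i\in\cN/\{1\}}\hat\pi_1^{(k)}(s)^\intercal \bigl(\hat Q_{1,i}^{(k)}(s)+\hat Q_{i,1}^{(k)}(s)^\intercal\bigr)\hat\pi_i^{(k)}(s)$, which by the definition of $h$ is at least $-h(\hat{\bQ}^{(k)}(s))\to 0$.

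Next, I will establish the one-sided comparisons $\text{Val}_1(\hat{\bQ}_1^{(k)}(s)) \le \hat V_1^{(k)}(s)$ and $\text{Val}_{-1}(\hat{\bQ}_{-1}^{(k)}(s)) \le \sum_{i\in\cN/\{1\}}\hat V_i^{(k)}(s)$. The first follows by plugging $\mu_{-1}=\hat\pi_{-1}^{(k)}(s)$ into the inner $\min$ in the definition of $\text{Val}_1$; the second follows by plugging $\mu_1=\hat\pi_1^{(k)}(s)$ into the inner $\min$ in $\text{Val}_{-1}$ and observing that the remaining $\max$ over $\mu_{-1}\in\prod_{i\in\cN/\{1\}}\Delta(\cA_i)$ decouples across $i$, since each $\mu_i$ appears in only one bilinear term. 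Consequently the two gaps
\[
A_k := \hat V_1^{(k)}(s) - \text{Val}_1(\hat{\bQ}_1^{(k)}(s)), \qquad B_k := \sum_{i\in\cN/\{1\}}\hat V_i^{(k)}(s) - \text{Val}_{-1}(\hat{\bQ}_{-1}^{(k)}(s))
\]
are both non-negative.

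The main step is to show $|\text{Val}_1(\hat{\bQ}_1^{(k)}(s)) + \text{Val}_{-1}(\hat{\bQ}_{-1}^{(k)}(s))| \le h(\hat{\bQ}^{(k)}(s))$. Regarding $\mu_{-1}\in\prod_{i\in\cN/\{1\}}\Delta(\cA_i)$ as a single stacked variable on a compact convex product polytope, the map $f(\mu_1,\mu_{-1}) := \sum_{i\in\cN/\{1\}}\mu_1^\intercal \hat Q_{1,i}^{(k)}(s)\mu_i$ is bilinear, since each term couples only one $\mu_i$ with $\mu_1$ and there are no cross-terms among the $\mu_i$'s. Sion's minimax theorem therefore gives $\max_{\mu_1}\min_{\mu_{-1}} f = \min_{\mu_{-1}}\max_{\mu_1} f$. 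Writing $g(\mu_1,\mu_{-1}) := \sum_{i\in\cN/\{1\}}\mu_1^\intercal \hat Q_{i,1}^{(k)}(s)^\intercal \mu_i$, the definition of $h$ yields $|f+g|\le h(\hat{\bQ}^{(k)}(s))$ uniformly in $\mu$. Substituting $g=-f+\epsilon$ with $|\epsilon|\le h(\hat{\bQ}^{(k)}(s))$ into $\text{Val}_{-1}=\max_{\mu_{-1}}\min_{\mu_1} g$ and invoking the swap yields $\text{Val}_{-1}(\hat{\bQ}_{-1}^{(k)}(s)) = -\text{Val}_1(\hat{\bQ}_1^{(k)}(s)) + O(h(\hat{\bQ}^{(k)}(s)))$.

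Combining the three ingredients, $0 \le A_k + B_k = \bigl[\hat V_1^{(k)}(s) + \sum_{i\in\cN/\{1\}}\hat V_i^{(k)}(s)\bigr] - \bigl[\text{Val}_1(\hat{\bQ}_1^{(k)}(s)) + \text{Val}_{-1}(\hat{\bQ}_{-1}^{(k)}(s))\bigr] \to 0$, and non-negativity forces both $A_k\to 0$ and $B_k\to 0$, which is precisely the claim. The one delicate point I anticipate is the minimax swap in the main step; however, thanks to the star shape there are no couplings among the non-central $\mu_i$'s, so $f$ is linear in the stacked variable $\mu_{-1}$ on a product polytope and Sion's theorem applies directly.
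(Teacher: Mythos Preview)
Your argument is correct. You sandwich each gap between zero and a vanishing sum, just as the paper does, but you organize the pieces differently. The paper's proof is a single chain of inequalities: starting from $\hat V_1^{(k)}(s)\ge \text{Val}_1(\hat{\bQ}_1^{(k)}(s))$, it plugs the specific policy $\hat\pi_1^{(k)}(s)$ into the outer $\max$ of $\text{Val}_1$, splits $\hat Q_{1,i}^{(k)} = -(\hat Q_{i,1}^{(k)})^\intercal + (\hat Q_{1,i}^{(k)} + (\hat Q_{i,1}^{(k)})^\intercal)$, and bounds the second summand by $-h(\hat{\bQ}^{(k)}(s))$ directly from the definition of $h$. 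This yields $0\le \hat V_1^{(k)}(s)-\text{Val}_1(\hat{\bQ}_1^{(k)}(s))\le \sum_{i\in\cN}\hat V_i^{(k)}(s)+h(\hat{\bQ}^{(k)}(s))\to 0$ without any appeal to a minimax theorem; the $(-1)$ case is handled symmetrically. Your route instead isolates the identity $|\text{Val}_1+\text{Val}_{-1}|\le h$ as a separate lemma, proved via Sion on the bilinear function $f(\mu_1,\mu_{-1})$, and then kills both gaps simultaneously through $A_k+B_k\to 0$. What you gain is a clean conceptual decomposition that makes explicit why the star shape matters (linearity of $f$ in the stacked $\mu_{-1}$); what the paper's argument buys is that it is entirely elementary, needing only the definition of $h$ and the specific beliefs $\hat\pi^{(k)}$, with no minimax swap at all.
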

\begin{proof}
The definition of $\hat{V}_i^{(k)}$ gives 
\begin{align*}
    &\hat{V}_1^{(k)}(s) = \max_{a_i \in \cA_i} \EE_{a_{-i} \sim \hat{\pi}_{-1}^{(k)}}\{\hat{Q}_1^{(k)}(s, \ba)\} \geq \text{Val}_1(\hat{\bQ}_1^{(k)}(s)) \geq \min_{\mu_2, \dots, \mu_{n}}\sum_{i \in \cN/\{1\}}(\hat{\pi}_1^{(k)})^\intercal \hat{Q}_{1,i}^{(k)}(s) \mu_i 
    \\
    &\quad\geq \min_{\mu_2, \dots, \mu_{n}}\sum_{i \in \cN/\{1\}}(\hat{\pi}_1^{(k)})^\intercal (-(\hat{Q}_{i,1}^{(k)}(s))^\intercal) \mu_i  + \min_{\mu_2, \dots, \mu_{n}}\sum_{i \in \cN/\{1\}}(\hat{\pi}_1^{(k)})^\intercal (\hat{Q}_{1,i}^{(k)}(s) + (\hat{Q}_{i,1}^{(k)}(s))^\intercal) \mu_i
    \\
    &\quad\geq -\max_{\mu_2, \dots, \mu_{n}}\sum_{i \in \cN/\{1\}}(\hat{\pi}_1^{(k)})^\intercal (\hat{Q}_{i,1}^{(k)}(s))^\intercal \mu_i  + \min_{\mu_2, \dots, \mu_{n}}\sum_{i \in \cN/\{1\}}(\hat{\pi}_1^{(k)})^\intercal (\hat{Q}_{1,i}^{(k)}(s) + (\hat{Q}_{i,1}^{(k)}(s))^\intercal) \mu_i
    \\
    &\quad\geq -\max_{\mu_2, \dots, \mu_{n}}\sum_{i \in \cN/\{1\}}(\hat{\pi}_1^{(k)})^\intercal (\hat{Q}_{i,1}^{(k)}(s))^\intercal \mu_i  - h(\hat{\bQ}^{(k)}{(s)}) 
     = - \sum_{i\in\cN / \{1\}} \hat{V}_i^{(k)}(s)   - h(\hat{\bQ}^{(k)}{(s)}),
\end{align*} 
where the third inequality is due to the summation of minimization being no greater than the minimization, and the fifth inequality is from the definition of $h$. The above inequality further implies 
\begin{align*}
    \hat{V}_1^{(k)}(s) + \sum_{i\in \cN/\{1\} }\hat{V}_i^{(k)}(s)  + h(\hat{\bQ}^{(k)}{(s)}) \geq \hat{V}_1^{(k)}(s) - \text{Val}_1({\hat{\bQ}_1^{(k)}(s)}) \geq 0.
\end{align*}
The left-hand side goes to zero when $k \to \infty$, so the lemma is proved for player $1$. The other direction can be proved in the same way. 
\end{proof}

{
Then, we define the  value-iteration operators  $\cT_1: \RR^{|\cS| \times (|\cA_1| \times\sum_{i \in \cN/\{1\})} |\cA_i|)} \to  \RR^{|\cS| \times (|\cA_1| \times\sum_{i \in \cN/\{1\})} |\cA_i|)}$ and $\cT_{-1}:  \RR^{|\cS| \times (|\cA_1| \times \sum_{i \in \cN/\{1\}}|\cA_i|) }\to  \RR^{|\cS| \times (|\cA_1| \times \sum_{i \in \cN/\{1\}}|\cA_i|)}$ as in a two-player zero-sum Markov game \cite{shapley1953stochastic} as follows:
\begin{align*}
    &\left(\cT_{1} \bQ_1\right)(s, a_1, a_i) = r_{1, i}(s,a_1, a_i) + \gamma \sum_{s' \in \cS} \frac{1}{n-1}\PP_1 (s'  \mid s, a_1) \text{Val}_1(\bQ_1(s'))
    \\
    &\left(\cT_{-1} \bQ_{-1}\right)(s, a_1, a_i) =r_{i, 1}(s,a_i, a_1) + \gamma \sum_{s' \in \cS} \frac{1}{n-1}\PP_1(s'  \mid s, a_1) \text{Val}_{-1}(\bQ_{-1}(s')). 
\end{align*} 
Also, we define several norms:
\begin{align*}
    &\norm{\cdot}_{\max}: \RR^{m \times n} \to \RR \text{ such that } \norm{A}_{\max} = \max_{i \in [m], j \in [n]} |A_{i,j}|
    \\
    &\norm{\cdot}_{\max, 1} :  \RR^{|\cA_1| \times \sum_{i \in \cN/\{1\}} |\cA_i|} \to \RR \text{ such that } \norm{X_s}_{\max, 1}:= \sum_{i \in \cN/\{1\}}\norm{X_{s, i}}_{\max}
    \\
    &\norm{\cdot}_{\max, 1, \max} : \RR^{|\cS| \times (|\cA_1| \times\sum_{i \in \cN/\{1\})} |\cA_i|)} \to \RR \text{ such that } \norm{X}_{\max, 1}:= \norm{(\norm{X_s}_{\max,1})_{s \in \cS}}_{\max}.
\end{align*}
\begin{claim}
\label{claim:contraction}
    $\cT_1$ and $\cT_{-1}$ are contracting with respect to the norm $\norm{\cdot}_{\max, 1, \max}$. 
\end{claim}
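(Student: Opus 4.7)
The plan is to reduce the contraction of $\cT_1$ (and by symmetry $\cT_{-1}$) to a Lipschitz estimate on $\mathrm{Val}_1$ (resp. $\mathrm{Val}_{-1}$) in the block-wise max-sum norm $\|\cdot\|_{\max,1}$, and then propagate that estimate through the single Bellman backup, where the discount factor $\gamma$ plus the factor $1/(n-1)$ absorbs the $(n-1)$ coming from summing over the blocks.

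First I would establish the key Lipschitz bound
$$
\bigl|\mathrm{Val}_1(\bQ_1(s)) - \mathrm{Val}_1(\tilde\bQ_1(s))\bigr| \;\le\; \|\bQ_1(s) - \tilde\bQ_1(s)\|_{\max,1}
$$
for every $s\in\cS$, and the analogous bound for $\mathrm{Val}_{-1}$. To see this, fix any policy profile $\mu=(\mu_1,\ldots,\mu_n)$ with $\mu_j\in\Delta(\cA_j)$; since $\|\mu_1\|_1=\|\mu_i\|_1=1$,
$$
\Bigl| \sum_{i\in\cN/\{1\}} \mu_1^\intercal \bigl(Q_{1,i}(s)-\tilde Q_{1,i}(s)\bigr)\mu_i \Bigr| \;\le\; \sum_{i\in\cN/\{1\}} \|Q_{1,i}(s)-\tilde Q_{1,i}(s)\|_{\max} \;=\; \|\bQ_1(s)-\tilde\bQ_1(s)\|_{\max,1}.
$$
Applying the standard fact that $\max_{\mu_1}\min_{\mu_{-1}}(\cdot)$ is $1$-Lipschitz in the payoff (which here applies termwise because $\min_{\mu_2,\ldots,\mu_n}\sum_i \mu_1^\intercal Q_{1,i}\mu_i=\sum_i\min_{\mu_i}\mu_1^\intercal Q_{1,i}\mu_i$ decouples) yields the claim. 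The proof for $\mathrm{Val}_{-1}$ is identical with maximizer and minimizer swapped.

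Next I would feed this into the Bellman backup. Since the reward term $r_{1,i}(s,a_1,a_i)$ does not depend on $\bQ_1$, it cancels, and for every $(s,a_1,a_i)$
$$
\bigl|(\cT_1\bQ_1)(s,a_1,a_i)-(\cT_1\tilde\bQ_1)(s,a_1,a_i)\bigr| \;\le\; \frac{\gamma}{n-1}\sum_{s'\in\cS}\PP_1(s'\given s,a_1)\,\|\bQ_1(s')-\tilde\bQ_1(s')\|_{\max,1}.
$$
Bounding the right-hand side by $\tfrac{\gamma}{n-1}\|\bQ_1-\tilde\bQ_1\|_{\max,1,\max}$, then taking $\max_{a_1,a_i}$ (no increase) and summing over the $n-1$ blocks $i\in\cN/\{1\}$ gives
$$
\|(\cT_1\bQ_1)_s-(\cT_1\tilde\bQ_1)_s\|_{\max,1} \;\le\; \gamma\,\|\bQ_1-\tilde\bQ_1\|_{\max,1,\max}.
$$
Taking the maximum over $s$ yields $\|\cT_1\bQ_1-\cT_1\tilde\bQ_1\|_{\max,1,\max}\le\gamma\|\bQ_1-\tilde\bQ_1\|_{\max,1,\max}$, and by the symmetric argument the same contraction factor $\gamma$ applies to $\cT_{-1}$.

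The only real bookkeeping obstacle will be the cancellation of the $(n-1)$ terms summed in $\|\cdot\|_{\max,1}$ against the $1/(n-1)$ weight that the canonical decomposition (\Cref{def:canonical}) assigns at the star-center agent; this is precisely the motivation for using the canonical form rather than an arbitrary decomposition, and verifying it is a matter of careful indexing rather than any deeper difficulty. Once contraction is in hand, Banach's fixed-point theorem supplies unique fixed points of $\cT_1$ and $\cT_{-1}$, which will then be identified in the next step of the argument with the $Q$-values of the Markov stationary NE that $(\hat\bQ^{(k)})_{k\ge0}$ is shown (via \Cref{claim:almost-zero-sum,claim:Vtovalhat}) to approach.
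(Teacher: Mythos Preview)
Your proposal is correct and follows essentially the same approach as the paper: cancel the reward terms, bound the difference of $\mathrm{Val}_1$ values by the block-wise max-sum norm $\|\cdot\|_{\max,1}$, push the $\tfrac{\gamma}{n-1}$ factor through the transition kernel, and then sum over the $n-1$ blocks so that the $(n-1)$ and $1/(n-1)$ cancel to leave the contraction modulus $\gamma$. Your separation of the Lipschitz estimate for $\mathrm{Val}_1$ as an explicit preliminary step (and the observation that the inner $\min$ decouples across $i$) is a slightly cleaner presentation than the paper's inline bound, but the argument is the same.
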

\begin{proof}
By definition of the $\text{Val}_1$  operator, we have that for any $\bQ_1 := (\bQ_1(s))_{s \in \cS} \in \RR^{|\cS| \times (|\cA_1| \times \sum_{j \in \cN / \{1\}}|\cA_j|)}$ where $\bQ_1(s) = (Q_{1,2}(s), \dots, Q_{1, n}(s)) \in \RR^{|\cA_1| \times \sum_{j \in \cN / \{1\}}|\cA_j|}$ and $\bQ_1'(s) = (\bQ_1'(s))_{s \in \cS} \in \RR^{|\cS| \times (|\cA_1| \times \sum_{j \in \cN / \{1\}}|\cA_j|)}$ where $\bQ_1'(s) =(Q_{1,2}'(s), \dots, Q_{1, n}'(s))  \in \RR^{|\cA_1| \times \sum_{j \in \cN / \{1\}}|\cA_j|}$ for all $s$, and for any $i \in \cN/\{1\}$, $s \in \cS$, 
\begin{align*}
    &\norm{\left((\cT_{1} \bQ_1)(s, a_1, a_i) - (\cT_{1} \bQ_1')(s, a_1, a_i)\right)_{a_1 \in \cA_1, a_i \in \cA_i}}_{\max}
    \\
    &\qquad\leq  \max_{a_1} \left| \frac{\gamma}{n-1} \sum_{s' \in \cS} \PP_1(s'  \mid s, a_1) ( \text{Val}_1(\bQ_1(s'))- \text{Val}_1(\bQ_1'(s')))\right| 
    \\
    &\qquad\leq  \frac{\gamma}{n-1} \sum_{s' \in \cS} \PP_1(s'  \mid s, a_1) \max_{a_1} \left| \max_{a_2, \dots, a_{n}} \sum_{i \in \cN/\{1\}} (Q_{1,i} - Q_{1,i}')(s', a_1, a_i) \right| 
    \\
    &\qquad\leq \frac{\gamma}{n-1}\norm{\bQ_1 - \bQ_1'}_{\max, 1, \max},
\end{align*}
so for any $s \in \cS$, $\norm{\cT_1 \bQ_1(s) - \cT_1 \bQ_1'(s)}_{\max, 1} = \sum_{i \in \cN/\{1\}} \norm{\left(\cT_{1} \bQ_1(s, a_1, a_i) - \cT_{1} \bQ_1'(s, a_1, a_i)\right)_{a_1 \in \cA_1, a_i \in \cA_i}}_{\max} \leq \gamma \norm{\bQ_1 - \bQ_1'}_{\max, 1, \max}$ holds and therefore $ \norm{\cT_1 \bQ_1 - \cT_1 \bQ_1'}_{\max, 1, \max} \leq \gamma \norm{\bQ_1 - \bQ_1'}_{\max, 1, \max}$.

Similarly,  we have that for  any $\bQ_{-1}:= (\bQ_{-1}(s))_{s \in \cS} \in \RR^{|\cS| \times (|\cA_1| \times \sum_{j \in \cN / \{1\}}|\cA_j|)}$  where $\bQ_{-1}(s) = (Q_{2,1}^\intercal(s), \dots, Q_{n, 1}^\intercal(s) ) \in \RR^{|\cA_1| \times \sum_{j \in \cN / \{1\}}|\cA_j| }$,  and $\bQ_{-1}':= (\bQ_{-1}'(s))_{s \in \cS} \in \RR^{|\cS| \times (|\cA_1| \times \sum_{j \in \cN / \{1\}}|\cA_j|)}$ where $\bQ_{-1}'(s) = (Q_{2,1}^{'\intercal}(s), \dots, Q_{n, 1}^{'\intercal}(s)) \in \RR^{|\cA_1| \times \sum_{j \in \cN / \{1\}}|\cA_j| }$
\begin{align*}
    &\norm{\left((\cT_{-1} \bQ_{-1})(s, a_1, a_i) - (\cT_{-1} \bQ_{-1}')(s, a_1, a_i)\right)_{a_1 \in \cA_1, a_i \in \cA_i}}_{\max}
    \\
    &\qquad\leq  \max_{a_1} \left|\frac{\gamma}{n-1} \sum_{s' \in \cS} \PP_1(s'  \mid s, a_1) ( \text{Val}_{-1}(\bQ_{-1}(s'))- \text{Val}_{-1}(\bQ_{-1}'(s')))\right|
    \\
    &\qquad\leq  \frac{\gamma}{n-1} \sum_{s' \in \cS} \PP_1(s'  \mid s, a_1)\max_{a_1} \left|  \max_{a_2, \dots, a_{n}}  \sum_{i \in \cN/\{1\}} (Q_{i,1} - Q_{i,1}')(s', a_i, a_1) \right| 
    \\
    &\qquad\leq \frac{\gamma}{n-1}\norm{\bQ_{-1} - \bQ_{-1}'}_{\max, 1, \max},
\end{align*}
so for any $s \in \cS$, $ \norm{\cT_{-1} \bQ_{-1}(s) - \cT_{-1} \bQ_{-1}'(s)}_{\max, 1} = \sum_{i \in \cN/\{1\}} \norm{\left(\cT_{-1} \bQ_{-1}(s, a_1, a_i) - \cT_{-1} \bQ_{-1}'(s, a_1, a_i)\right)_{a_1 \in \cA_1, a_i \in \cA_i}}_{\max} \leq \gamma \norm{\bQ_{-1} - \bQ_{-1}'}_{\max, 1, \max}$ holds and therefore $ \norm{\cT_{-1} \bQ_{-1} - \cT_{-1} \bQ_{-1}'}_{\max, 1, \max} \leq \gamma \norm{\bQ_{-1} - \bQ_{-1}'}_{\max, 1, \max}$.
\end{proof}}

Since the operators $\cT_1$ and $\cT_{-1}$ are contracting, they each have a unique fixed point denoted by $\bQ^\star_1$ and $\bQ^\star_{-1}$, respectively. Then, by the definition of  fixed point, we have 
\begin{align*}
    \sum_{i \in \cN/\{1\}}  Q_{1, i}^\star(s, a_1, a_i) &= \sum_{i \in \cN/\{1\}} r_{1, i}(s, a_1, a_i) + \gamma \sum_{s' \in \cS} \PP_1(s'\mid s, a_1) \max_{\mu_1 \in \Delta(\cA_1)}  \min_{\mu_2 \in \Delta(\cA_2), \dots, \mu_{n} \in \Delta(\cA_n) } \sum_{i \in \cN/\{1\}} \mu_1^\intercal   {Q}_{1,i}^\star(s') \mu_i
    \\
    \sum_{i \in \cN/\{1\}}  Q_{i, 1}^\star(s, a_i, a_1) &= \sum_{i \in \cN/\{1\}} r_{i, 1}(s, a_i, a_1) + \gamma \sum_{s' \in \cS} \PP_1(s'\mid s, a_1) \max_{\mu_2 \in \Delta(\cA_2), \dots, \mu_{n} \in \Delta(\cA_n)}\min_{\mu_1\in \Delta(\cA_1)}   \sum_{i \in \cN/\{1\}}\mu_i^\intercal {Q}^\star_{i,1}(s') \mu_1, 
\end{align*}
and one can check that
\begin{align*}
\max_{\mu_2 \in \Delta(\cA_2), \dots, \mu_{n} \in \Delta(\cA_n)}&\min_{\mu_1\in \Delta(\cA_1)}   \sum_{i \in \cN/\{1\}}\mu_i^\intercal {Q}^\star_{i,1}(s) \mu_1 = \min_{\mu_1\in \Delta(\cA_1)} \max_{\mu_2 \in \Delta(\cA_2), \dots, \mu_{n} \in \Delta(\cA_n)}  \sum_{i \in \cN/\{1\}}\mu_i^\intercal {Q}^\star_{i,1}(s) \mu_1    
\\&= \min_{\mu_1\in \Delta(\cA_1)} \max_{\mu_2 \in \Delta(\cA_2), \dots, \mu_{n} \in \Delta(\cA_n)}  -\sum_{i \in \cN/\{1\}}\mu_i^\intercal (-{Q}^\star_{i,1}(s)) \mu_1
\\&= -\max_{\mu_1\in \Delta(\cA_1)} \min_{\mu_2 \in \Delta(\cA_2), \dots, \mu_{n} \in \Delta(\cA_n)}  \sum_{i \in \cN/\{1\}}\mu_i^\intercal (-{Q}^\star_{i,1}(s)) \mu_1
\end{align*}
by the minimax theorem. Thus, we have 
\begin{align*}
&\left| \sum_{i \in \cN / \{1\}} \left(Q^\star_{1, i}(s, a_1, a_i) + Q_{i, 1}^\star(s, a_i, a_1)\right)\right| 
\\
&\leq \gamma \sum_{s' \in \cS} \PP_1(s'|s, a_1) \bigg| \max_{\mu_1\in \Delta(\cA_1)} \min_{\mu_2 \in \Delta(\cA_2), \dots, \mu_{n} \in \Delta(\cA_n)}\sum_{i \in \cN / \{1\}} \mu_1^\intercal   {Q}_{1,i}^\star(s') \mu_i \\
&\qquad\qquad -\max_{\mu_1\in \Delta(\cA_1)} \min_{\mu_2 \in \Delta(\cA_2), \dots, \mu_{n} \in \Delta(\cA_n)} \sum_{i \in \cN / \{1\}}\mu_i^\intercal   (-{Q}_{i,1}^\star(s')) \mu_1  \bigg| 
\\
&\leq \gamma \sum_{s' \in \cS} \PP_1(s'|s, a_1)  \max_{\mu_1\in \Delta(\cA_1), \mu_2 \in \Delta(\cA_2), \dots, \mu_{n} \in \Delta(\cA_n)}\left|\sum_{i \in \cN / \{1\}}\left( \mu_1^\intercal   {Q}_{1,i}^\star(s') \mu_i + \mu_i^\intercal   ({Q}_{i,1}^\star(s')) \mu_1  \right)\right| 
\\
&\leq \gamma \max_{s, a_1, a_{-1}} \left|\sum_{i \in \cN / \{1\} } \left( Q^\star_{1, i}(s, a_1, a_i) + Q_{i, 1}^\star(s, a_i, a_1)\right)\right|
\end{align*}
Therefore, we conclude that  
$\sum_{i \in \cN/\{1\}}Q^\star_{1, i} (s, a_1, a_i ) + \sum_{i \in \cN/\{1\}}Q^\star_{i, 1}(s, a_i, a_1) = 0$  for every $(s, \ba)$ and $i \in \cN/\{1\}$,  {by iteratively unrolling the inequality above.}

{Moreover,} the update of beliefs on the $Q$-function can be written as 
\numbering{\arxiv{\begin{align*}
    \hat{Q}_{1}^{(k+1)}(s, \ba) &= (1-\bar{\beta}_{k}(s)) \hat{Q}_{1}^{(k)}(s,\ba)  + \bar{\beta}_{k}(s)\left( \sum_{i \in \cN/\{1\}} \cT_{1} \hat{\bQ}_1^{(k)} (s, a_1, a_i) + \cE_1^{(k)}(s, \ba)  \right)
    \\
    \sum_{i \in \cN/\{1\}} \hat{Q}_{i}^{(k+1)}(s, \ba) &= (1 - \bar{\beta}_{k}(s))\sum_{i \in \cN/\{1\}}\hat{Q}_{i}^{(k)}(s,\ba)  + \bar{\beta}_{k}(s)\left(  \sum_{i \in \cN /\{1\}} \cT_{-1} \hat{\bQ}_{i}^{(k)}(s,a_1, a_i) + \cE_{-1}^{(k)}(s, \ba)  \right).
\end{align*}}}
Here, $\cE_1^{(k)}(s,\ba)$ and $\cE_{-1}^{(k)}(s,\ba)$ are defined as 
\begin{align*}
    &\cE_1^{(k)}(s,\ba) = \gamma \sum_{s' \in \cS} \PP_1(s'\mid s, a_1)\left[\hat{V}_1^{(k)}(s') -  \text{Val}_1(\hat{\bQ}_1^{(k)}(s'))\right]
    \\
    &\cE_{-1}^{(k)}(s,\ba) = \gamma \sum_{s' \in \cS} \PP_1(s'\mid s, a_1)\left[\sum_{i \in\cN/\{1\}}\hat{V}_{i}^{(k)}(s') -  \text{Val}_{-1}(\hat{\bQ}_{-1}^{(k)}(s'))\right]
\end{align*}
where the two values go to 0 by \Cref{claim:Vtovalhat}. {For each $s\in\cS$, we further define}  $\hat{\bQ}_1^{(k)}(s):= \left(\left(\hat{Q}_1^{(k)}(s, a_1, a_i)\right)_{a_1 \in \cA_1, a_i \in \cA_i}\right)_{i \in \cN/\{1\}}${, and similarly define}   $\hat{\bQ}_{-1}^{(k)}(s)$ as well. 

\begin{claim}
\label{claim:hatQtoQstar}
$|\hat{Q}_1^{(k)}(s,\ba) - \sum_{i \in \cN/\{1\}} Q_{1, i}^\star(s,a_1, a_i)|$ and $|\hat{Q}_{-1}^{(k)}(s,\ba) - \sum_{i \in \cN/\{1\}} Q_{i, 1}^\star(s,a_i, a_1)|$ converge to 0 as $k \to \infty$ {for all $s\in\cS$ and $\ba\in\cA$}. 
\end{claim}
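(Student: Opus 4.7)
The plan is to cast the Q-belief recursion as an asynchronous, perturbed fixed-point iteration for the contracting operators $\cT_1$ and $\cT_{-1}$ introduced in \Cref{claim:contraction}, and then invoke \Cref{lem:sayincontraction} with a vanishing error term supplied by \Cref{claim:Vtovalhat}.

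First, I define the ``sum-level'' error variables $z_k^{(1)}[s,\ba] := \hat{Q}_1^{(k)}(s,\ba) - \sum_{j\in\cN/\{1\}} Q_{1,j}^\star(s,a_1,a_j)$ and $z_k^{(-1)}[s,\ba] := \hat{Q}_{-1}^{(k)}(s,\ba) - \sum_{i\in\cN/\{1\}} Q_{i,1}^\star(s,a_i,a_1)$, where $\bQ_1^\star$ and $\bQ_{-1}^\star$ are the unique fixed points of $\cT_1$ and $\cT_{-1}$. Summing the per-edge updates of $\hat{Q}_{1,j}^{(k+1)}$ (which carry the canonical $1/(n-1)$ weight) and using $Q_{1,j}^\star=(\cT_1\bQ_1^\star)_j$, $z_k^{(1)}$ obeys
\[
z_{k+1}^{(1)}[s,\ba] = \bigl(1-\bar\beta_k(s)\bigr)z_k^{(1)}[s,\ba] + \bar\beta_k(s)\Bigl(\gamma\sum_{s'}\PP_1(s'\mid s,a_1)\bigl(\text{Val}_1(\hat\bQ_1^{(k)}(s'))-\text{Val}_1(\bQ_1^\star(s'))\bigr) + \epsilon_k^{(1)}[s,\ba]\Bigr),
\]
with residual $\epsilon_k^{(1)}[s,\ba] := \gamma\sum_{s'}\PP_1(s'\mid s,a_1)\bigl(\hat V_1^{(k)}(s')-\text{Val}_1(\hat\bQ_1^{(k)}(s'))\bigr)\to 0$ by \Cref{claim:Vtovalhat}. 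An analogous identity is obtained for $z_k^{(-1)}$ by summing the player-$i\neq 1$ updates (which aggregate over $i$ to reproduce the $1/(n-1)$ factor appearing in $\cT_{-1}$), with error $\epsilon_k^{(-1)}\to 0$ again by \Cref{claim:Vtovalhat}.

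Second, I bound the Val difference by $\norm{z_k}_\infty$. Because $\sum_j Q_{1,j}^\star(s',a_1,a_j)$ and $\sum_j\hat Q_{1,j}^{(k)}(s',a_1,a_j)$ share the same maximizers in the $\text{Val}_1$ computation up to a common shift, I obtain $|\text{Val}_1(\hat\bQ_1^{(k)}(s'))-\text{Val}_1(\bQ_1^\star(s'))|\leq \max_{\ba'}|z_k^{(1)}[s',\ba']| \leq \norm{z_k^{(1)}}_\infty$, and symmetrically $|\text{Val}_{-1}(\hat\bQ_{-1}^{(k)}(s'))-\text{Val}_{-1}(\bQ_{-1}^\star(s'))|\leq \norm{z_k^{(-1)}}_\infty$. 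Plugging these bounds back produces the two-sided envelope required by \Cref{lem:sayincontraction}:
\[
(1-\bar\beta_k(s))z_k^{(\star)}[s,\ba] + \bar\beta_k(s)\bigl(-\gamma\norm{z_k^{(\star)}}_\infty + \underline\epsilon_k\bigr) \leq z_{k+1}^{(\star)}[s,\ba] \leq (1-\bar\beta_k(s))z_k^{(\star)}[s,\ba] + \bar\beta_k(s)\bigl(\gamma\norm{z_k^{(\star)}}_\infty + \bar\epsilon_k\bigr),
\]
for $\star\in\{1,-1\}$, where $\bar\epsilon_k,\underline\epsilon_k\to 0$.

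Third, I apply \Cref{lem:sayincontraction} with index $n=(s,\ba)$, stepsize $\beta_{n,k}=\bar\beta_k(s)$, contraction factor $\gamma\in(0,1)$, noise $\omega_{n,k}=0$, and $c=0$. The stepsize assumption $\sum_k\beta^k=\infty$ in \Cref{assumption:stepsize2} together with the hypothesis that every state is visited infinitely often gives $\sum_k\bar\beta_k(s)=\infty$ for each $s$; uniform boundedness of $\hat Q^{(k)}$ was already established in the proof of \Cref{claim:almost-zero-sum}. Hence $\limsup_k\norm{z_k^{(1)}}_\infty\leq 0$ and $\limsup_k\norm{z_k^{(-1)}}_\infty\leq 0$ almost surely, which is exactly the desired conclusion. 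The main technical hurdle is the asynchronicity---only the sampled state $s^{(k)}$ is updated at each step---and so the lemma must be invoked with the per-state effective stepsize $\bar\beta_k(s)$ rather than $\beta^k$, and the Val-contraction bound must be justified uniformly across states so that a single $\norm{\cdot}_\infty$ suffices to close the recursion.
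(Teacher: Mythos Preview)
Your argument is correct and follows essentially the same route as the paper: define the sum-level error, rewrite the $Q$-belief updates as an asynchronous contraction around the fixed point of $\cT_1$ (resp.\ $\cT_{-1}$) with a vanishing perturbation coming from \Cref{claim:Vtovalhat}, and close with \Cref{lem:sayincontraction}. One small correction: your justification that the two sum-level functions ``share the same maximizers \ldots up to a common shift'' is not the right reason for the bound $|\text{Val}_1(\hat\bQ_1^{(k)}(s'))-\text{Val}_1(\bQ_1^\star(s'))|\leq \|z_k^{(1)}\|_\infty$; the correct argument is simply the non-expansiveness of the max--min value in the sup-norm of the payoff array, which is exactly what the paper's \Cref{claim:contraction} exploits (there via the $\|\cdot\|_{\max,1,\max}$ norm).
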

\begin{proof}
    Define  $\tilde{Q}_1^{(k)}(s,\ba) :=\hat{Q}_1^{(k)}(s,\ba) - \sum_{i \in \cN/\{1\}} Q_{1, i}^\star(s,a_1, a_i)$ and $\tilde{Q}_{-1}^{(k)}(s,\ba) :=\hat{Q}_{-1}^{(k)}(s,\ba) - \sum_{i \in \cN/\{1\}} Q_{i, 1}^\star(s,a_i, a_1)$. Then, by the fact that $\bQ_1^\star$ and $\bQ_{-1}^\star$ are  the fixed point of $\cT_1$ and $\cT_{-1}$, respectively, we have that for each $s\in\cS$ and $\ba\in\cA$: 
\numbering{\arxiv{\begin{align*}
        &\tilde{Q}_{1}^{(k+1)}(s, \ba) = (1-\bar{\beta}_{k}(s)) \tilde{Q}_{1}^{(k)}(s,\ba)  + \bar{\beta}_{k}(s)\left(  \sum_{i \in \cN/\{1\}} \cT_1 \hat{\bQ}_{1}^{(k)}(s, a_1, a_i) -  \sum_{i \in \cN/\{1\}}\cT_1 \bQ_1^\star(s, a_1,  a_i) + \cE_1^{(k)}(s, \ba)  \right)
        \\
        &\tilde{Q}_{-1}^{(k+1)}(s, \ba) = (1 - \bar{\beta}_{k}(s))\tilde{Q}_{-1}^{(k)}(s, \ba)  + \bar{\beta}_{k}(s)\left( \sum_{i \in \cN/\{1\}} \cT_{-1}  \hat{\bQ}_{-1}^{(k)}(s,a_1, a_i) -  \sum_{i \in \cN/\{1\}} \cT_{-1}  \bQ_{-1}^\star(s, a_1, a_i) + \cE_{-1}^{(k)}(s, \ba)  \right)
    \end{align*}}}
    and \Cref{claim:contraction} implies that  
    \begin{align*}
    &\tilde{Q}_{1}^{(k+1)}(s, \ba) \leq ( 1- \bar{\beta}_{k}(s) ) \tilde{Q}_{1}^{(k)}(s,\ba)  + \bar{\beta}_{k}(s)\left(  {\gamma}  \norm{\tilde{Q}_1}_{\max, 1 ,\max} + \bar{\epsilon}^{(k)}\right)         
    \\
    &\tilde{Q}_{1}^{(k+1)}(s, \ba) \geq ( 1- \bar{\beta}_{k}(s) ) \tilde{Q}_{1}^{(k)}(s,\ba)  + \bar{\beta}_{k}(s)\left(  -{\gamma}  \norm{\tilde{Q}_1}_{\max, 1 ,\max}  - \bar{\epsilon}^{(k)}\right)        
    \end{align*}
    which yields  $\tilde{Q}_{1}^{(k+1)}(s, \ba) \to 0$ and also $\tilde{Q}_{-1}^{(k+1)}(s, \ba) \to 0$ as $k\to\infty$ by \Cref{lem:sayincontraction}. 
\end{proof}
Therefore, we verified that $\hat{V}_1^{(k)}(s) - \text{Val}_1(Q^\star_1 (s)) \to 0$ and $\hat{V}_{-1}^{(k)}(s) - \text{Val}_{-1}(Q^\star_{-1} (s)) \to 0$ for every $s\in \cS$. Therefore, the beliefs on the opponents' policies converge to a  (perfect) {Nash} equilibrium of the underlying zero-sum NMG. 
\end{proof}

\begin{remark}
    This can be also done with the stochastic fictitious-play dynamics, in a similar way as the argument in  \Cref{ssec:sfppqre}. 
\end{remark}

{\begin{remark}[Stationary equilibrium computation via value iteration]
 By \Cref{claim:contraction}, we know that with a star-shaped topology, we   can formulate a contracting value iteration operator, which plays an important role in showing the convergence of fictitious play. In fact, iterating such a contracting operator, which leads to the {\it value iteration}  algorithm, can lead to efficient NE computation in this star-shaped case also, with a fixed constant $\gamma$. This folklore result supplements the hardness results in \Cref{{thm:PPAD-hard-main}}, where stationary equilibria computation in cases other than the star-shaped ones are computationally intractable. This completes the landscape of \emph{stationary}  equilibria computation in zero-sum NMGs. We provide the value-iteration process in \Cref{alg:Value-iteration}.   
One can guarantee that $Q_1(s, \ba):= \sum_{i \in \cN/\{1\}} Q_{1, i}(s, a_1, a_i)$ converges to the $Q_1^\star(s,\ba)$, which corresponds to the Nash equilibrium values of the zero-sum NMG. {Also, by solving the maxmin problem in \eqref{eqn:minmax}, $Q_{1,i}(s)$ provides an approximate NE policy.  }
\end{remark}
}
    
\begin{algorithm}[H]
\caption{Value iteration for zero-sum NMGs of a star-shape}
\label{alg:Value-iteration}
\begin{algorithmic}
\STATE{Initialize $Q_{1,i}(s, a_1, a_i) = 0, Q_{i,1}(s, a_i, a_1) = 0$ for all $s \in \cS, \ba \in \cA, i \in \cN/\{1\}$ and $V_{i}(s) = 0$ for all $s \in \cS$, $i \in \cN$}
\FOR{each iteration $t = 0, 1, \dots$}
\STATE { Find $\mu$ for each  $s \in \cS$ such that 
\begin{align}
\mu(s) \in \argmax_{\mu_1 \in \Delta(\cA_1)}  \argmin_{\mu_2 \in \Delta(\cA_2), \dots, \mu_{n} \in \Delta(\cA_n) } \sum_{i \in \cN / \{1\}}\mu_1^\intercal {Q}_{1,i}(s) \mu_i  \label{eqn:minmax}
\end{align}
}
\STATE {Update $V_1(s) =\sum_{i \in \cN / \{1\}} \mu_1^\intercal(s) {Q}_{1,i}(s) \mu_i(s)$ for all $s \in \cS$} 
\STATE {Update $Q_{1, i}(s, a_1, a_i) = r_{1, i}(s, a_1, a_i) + \gamma \sum_{s' \in \cS} \frac{1}{n-1}\PP_1(s'\mid s, a_1) V_1(s')$ for all $i \in \cN/\{1\}$, $s \in \cS$, $\ba \in \cA$}
\ENDFOR
\end{algorithmic}
\end{algorithm}

\numbering{\section{Background on Stochastic Approximation and Differential Inclusions}}
\tnumbering{\chapter{Background on Stochastic Approximation and Differential Inclusions}}
\label{appendix:stochastic-approx} 
\Cref{appendix:stochastic-approx} introduces the theorem statement of \cite{benaim2005stochastic}. 
Let $F: \RR^m \rightrightarrows \RR^m$ be a set-valued function. Assume that $F$ satisfies the following properties:
\begin{enumerate}
\item $F$ is a closed set-valued map, meaning that its graph $\operatorname{Graph}(F) = \{(x, y) : y \in F(x)\}$ is a closed subset of $\mathbb{R}^m \times \mathbb{R}^m$.
\item $F(x)$ is a non-empty, compact, and convex subset of $\mathbb{R}^m$ for all $x \in \mathbb{R}^m$.
\item There exists a constant $c > 0$ such that for all $x \in \mathbb{R}^m$, we have $\sup_{z \in F(x)} |z| \leq c(1 + |x|)$.
\end{enumerate}
The differential inclusion problem involves finding a solution vector function $\bx: \RR \to \RR^m$ that satisfies the initial condition $\bx(0) = x \in \RR^m$ and the following relationship for almost all $t \in \mathbb{R}$:
$$
\frac{d \bx(t)}{d t} \in F(\bx(t)).
$$

\begin{definition}[Perturbed solutions]
A perturbed solution to $F$ refers to a continuous function $\by: \mathbb{R}_{+}=[0, \infty) \rightarrow \mathbb{R}^m$ that meets the following requirements:
\begin{itemize}
    \item $\by$ is absolutely continuous.
    \item There is a locally integrable function $t \mapsto U(t)$ that satisfies:
    \begin{itemize}
        \item For all $T>0$, the supremum of $\left|\int_t^{t+v} U(s) d s\right|$ over the interval $0 \leq v \leq T$ converges to zero as $t \rightarrow \infty$.
        \item For almost every $t>0$, the expression $\frac{d \by(t)}{d t}-U(t)$ belongs to $F^{\delta(t)}(\by(t))$, where $\delta: [0, \infty) \rightarrow \mathbb{R}$ is a function such that $\delta(t) \rightarrow 0$ as $t \rightarrow \infty$.
    \end{itemize}
\end{itemize}
\end{definition}

\begin{definition}[Stochastic approximations]
A discrete-time process $\left\{x_n\right\}_{n \in \mathbb{N}}$ is a stochastic approximation if it satisfies the following relationship:
$$ x_{n+1}-x_n-\gamma_{n+1} U_{n+1} \in \gamma_{n+1} F\left(x_n\right),$$
where the characteristics $\gamma$ and $U$ meet the following conditions:
\begin{itemize}
\item The sequence $(\gamma_n)_{n \geq 1}$ consists of non-negative numbers such that $\sum_{n=1}^\infty \gamma_n = \infty$ and $\lim _{n \rightarrow \infty} \gamma_n=0$.
\item The elements $U_n \in \mathbb{R}^m$ can be either deterministic or random perturbations.
\end{itemize}
\end{definition}
A continuous-time process can be associated with such a process as follows:
\begin{definition}[Affine interpolated process]\label{def:interpolated_process}
Define the following: 
$\tau_0=0  \text { and }  \tau_n=\sum_{i=1}^n \gamma_i \text { for } n \geq 1.
$ 
The continuous-time {\it affine interpolated process} $\bw: \mathbb{R}_{+} \to \mathbb{R}^m$ is defined as:
$$
\bw\left(\tau_n+s\right):=x_n+s \frac{x_{n+1}-x_n}{\tau_{n+1}-\tau_n}, \quad s \in\left[0, \gamma_{n+1}\right).
$$   
\end{definition}
We define $\Phi_t(x) = \{\bx(t): \bx$ is a solution to $\frac{d\bx(t)}{dt} \in F(\bx(t))$ with $\bx(0) = x\}$. 
\begin{definition}[Lyapunov function]
Lyapunov function for a set $\cS$ is a continuous function $V: \RR^m \to \RR$ if $V(y)< V(x)$ for all $x \in \cS\subseteq \RR^m$, $y \in \Phi_t(x), t>0$, and $V(y) \leq V(x)$ for all $x\in\cS$, $y \in \Phi_t(x), t>0$. 
\end{definition}
\begin{theorem} \label{thm:interpolated=perturbation}
Assume that the following hold:
\begin{itemize}
    \item For all $T>0$, the supremum of $\left\|\sum_{i=n}^{k-1} \gamma_{i+1} U_{i+1}\right\|$ for $k=n+1, \ldots, m\left(\tau_n+T\right)$ converges to zero as $n\to \infty$, where $$
m(t)=\sup \left\{k \geq 0: t \geq \tau_k\right\}.
$$ 
    \item $\sup _n\left\|x_n\right\|=M<\infty$.
\end{itemize}
Then the affine interpolated process (c.f. \Cref{def:interpolated_process}) is a perturbed solution.    
\end{theorem}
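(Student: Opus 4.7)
The plan is to exhibit an explicit candidate for the noise function $U$ required by the definition of a perturbed solution, and then to verify each clause of that definition in turn. Define $U:[0,\infty)\to\RR^m$ by $U(t):=U_{n+1}$ for $t\in[\tau_n,\tau_{n+1})$, and observe that by construction the affine interpolated process $\bw$ is piecewise linear, hence absolutely continuous, with derivative $\dot{\bw}(t)=(x_{n+1}-x_n)/\gamma_{n+1}$ on each open interval $(\tau_n,\tau_{n+1})$. Using the stochastic approximation relation, write
\[
x_{n+1}-x_n-\gamma_{n+1}U_{n+1}=\gamma_{n+1}v_{n+1}, \qquad v_{n+1}\in F(x_n),
\]
so that $\dot{\bw}(t)-U(t)=v_{n+1}$ on $(\tau_n,\tau_{n+1})$. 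This identification is the scaffold on which the rest of the argument rests.

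Next I would verify the integral smallness condition. Fix $T>0$ and pick $t\geq 0$; let $n=m(t)$ and, for $v\in[0,T]$, let $k=m(t+v)$. Splitting the integral gives
\[
\int_t^{t+v}U(s)\,ds=(\tau_{n+1}-t)\,U_{n+1}+\sum_{i=n+1}^{k-1}\gamma_{i+1}U_{i+1}+(t+v-\tau_k)\,U_{k+1},
\]
so the middle bulk sum is controlled uniformly in $v\in[0,T]$ by the first hypothesis. To handle the endpoint fragments, note that the bound $\sup_n\|x_n\|\leq M$ together with the linear growth hypothesis on $F$ yields $\|v_{n+1}\|\leq c(1+M)$, and hence
\[
\gamma_{n+1}\|U_{n+1}\|\leq \|x_{n+1}-x_n\|+\gamma_{n+1}\|v_{n+1}\|\leq \|x_{n+1}-x_n\|+\gamma_{n+1}c(1+M).
\]
Applying the first hypothesis to the single-term case $k=n+1$ forces $\|\gamma_{n+1}U_{n+1}\|\to 0$; since $\gamma_{n+1}\to 0$ and $\sup_n\|x_n\|\leq M$ controls the remaining difference through the SA relation iteratively, both endpoint contributions vanish as $t\to\infty$ uniformly in $v\in[0,T]$.

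For the set-valued inclusion, I would define $\delta(t):=\|\bw(t)-x_{m(t)}\|+\rho(t)$ where $\rho(t)$ is a vanishing modulus capturing how close a member of $F(x_{m(t)})$ is to being a member of $F(\bw(t))$; here $\|\bw(t)-x_{m(t)}\|\leq\|x_{m(t)+1}-x_{m(t)}\|\leq\gamma_{m(t)+1}(\|U_{m(t)+1}\|+c(1+M))\to 0$ by the previous paragraph, so $\delta(t)\to 0$. Since $F$ is closed, convex-valued and satisfies the linear growth condition, it is upper semicontinuous on the bounded set containing $\{x_n\}\cup\{\bw(t)\}$; the standard enlargement $F^{\delta}$ in the sense of Benaim et al.\ therefore contains $v_{m(t)+1}\in F(x_{m(t)})$, giving $\dot{\bw}(t)-U(t)=v_{m(t)+1}\in F^{\delta(t)}(\bw(t))$ almost everywhere.

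The main obstacle I anticipate is the careful treatment of the endpoint fragments $(\tau_{n+1}-t)U_{n+1}$ and $(t+v-\tau_k)U_{k+1}$: these are not directly covered by the hypothesis, and controlling them requires combining the boundedness of $\{x_n\}$, the linear growth of $F$, and the stochastic approximation relation to bound $\gamma_{n+1}\|U_{n+1}\|$. Once this is handled, the remaining verifications are routine consequences of upper semicontinuity of $F$ and the piecewise-linear structure of $\bw$.
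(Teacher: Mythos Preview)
Your approach is sound and is essentially the standard argument from Bena\"im, Hofbauer, and Sorin (2005), but note that the paper itself does \emph{not} prove this theorem: it is stated purely as background material quoted from \cite{benaim2005stochastic}, with no accompanying proof. There is therefore nothing in the paper to compare against; your proposal supplies a proof where the paper simply cites one.

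On the substance of your argument: the scaffold is correct. A couple of places deserve a slightly tighter touch. First, in controlling the endpoint fragments you invoke the hypothesis at $k=n+1$ to get $\gamma_{n+1}\|U_{n+1}\|\to 0$; this is valid, but you should remark that for $n$ large enough $\gamma_{n+1}\leq T$ so that $k=n+1$ is indeed in the admissible range $\{n+1,\ldots,m(\tau_n+T)\}$. Second, for the far endpoint $(t+v-\tau_k)U_{k+1}$ you need $k=m(t+v)\to\infty$ uniformly in $v\in[0,T]$ as $t\to\infty$, which follows from $\sum\gamma_i=\infty$; and since the middle sum you wrote starts at $i=n+1$ rather than $i=n$ as in the hypothesis, you are implicitly absorbing one extra term $\gamma_{n+1}U_{n+1}$ into the endpoint estimate, which is harmless. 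Finally, for the inclusion step, the auxiliary modulus $\rho(t)$ is unnecessary: with the standard definition of the enlargement $F^\delta$, taking any $\delta(t)>\|\bw(t)-x_{m(t)}\|$ that tends to zero (for instance $\delta(t)=\|\bw(t)-x_{m(t)}\|+\gamma_{m(t)+1}$) already gives $v_{m(t)+1}\in F(x_{m(t)})\subseteq F^{\delta(t)}(\bw(t))$ directly, without any appeal to upper semicontinuity.
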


\begin{theorem}
\label{thm:levelset}
Suppose that $V$ is a Lyapunov function for a set $\Lambda$. Assume that $V(\Lambda)$ has an empty interior. For every bounded perturbed solution $\by$, define $L(\by)=\bigcap_{t \geq 0} \overline{\{\by(s): s \geq t\}}$, then $L(\fy)$  is contained in $\Lambda$ and $V (L(\fy))$  is constant.
\end{theorem}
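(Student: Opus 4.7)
The plan is to prove the theorem in two broad parts: first, establish the structural properties of the limit set $L(\by)$ (non-empty, compact, connected, and invariant under the set-valued semiflow $\Phi_t$); second, combine these with the Lyapunov hypothesis and the empty-interior condition to force $L(\by) \subseteq \Lambda$ and $V|_{L(\by)}$ constant.

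For the structural part, non-emptiness and compactness follow directly from $\by$ being bounded together with the nested-intersection definition of $L(\by)$, and connectedness follows from $L(\by)$ being an intersection of a decreasing family of connected compact sets $\overline{\{\by(s):s\ge t\}}$. The real work is invariance under $\Phi_t$. Given $x \in L(\by)$, I would pick $t_n \to \infty$ with $\by(t_n) \to x$ and consider the shifts $\by_n(\cdot) := \by(t_n + \cdot)$. Because $\by$ is bounded and the perturbed-solution relation $\frac{d\by}{dt}-U(t) \in F^{\delta(t)}(\by(t))$ combined with the linear-growth assumption on $F$ and the local integrability of $U$ controls derivatives, the $\by_n$ are uniformly bounded and asymptotically equicontinuous on compact intervals. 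Arzelà–Ascoli then produces a subsequence converging uniformly on compacts to some $\bx$ with $\bx(0)=x$. Passing to the limit in the perturbed-solution inclusion (using $\delta(t) \to 0$, the vanishing-integral condition on $U$, and closedness of $\operatorname{Graph}(F)$) shows $\bx$ solves the unperturbed differential inclusion, and by construction $\bx(s) \in L(\by)$ for every $s \in \mathbb{R}$.

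With invariance in hand, for each $x \in L(\by)$ I pick a complete solution $\bx:\mathbb{R}\to L(\by)$ through $x$. The map $t \mapsto V(\bx(t))$ is monotone non-increasing (Lyapunov hypothesis), so it has limits $c^- := \lim_{t \to -\infty} V(\bx(t))$ and $c^+ := \lim_{t \to +\infty} V(\bx(t))$ with $c^- \ge V(x) \ge c^+$. The $\omega$- and $\alpha$-limit sets of $\bx$ are non-empty compact invariant subsets of $L(\by)$, and a standard argument forces $V$ to be constant on each ($c^+$ on $\omega$, $c^-$ on $\alpha$) and each to be contained in $\Lambda$: any point $y$ in an $\omega$-limit set outside $\Lambda$ would have $V$ strictly decreasing along its forward orbit, contradicting $V$'s constancy on the invariant $\omega$-limit set. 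Hence $c^{\pm} \in V(\Lambda)$. By continuity of $V$, the full image $V(\bx(\mathbb{R}))$ is the interval $[c^+, c^-] \subseteq V(L(\by))$.

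Putting it together, $V(L(\by))$ is a compact interval (continuous image of the connected compact set $L(\by)$), say $[a,b]$. The main obstacle is to show $[a,b] \subseteq V(\Lambda)$, which is where the hypothesis that $V(\Lambda)$ has empty interior delivers the punch: an interval with more than one point contained in a set with empty interior is impossible, forcing $a=b$. The cleanest route to $V(L(\by)) \subseteq V(\Lambda)$ goes through the deeper structural fact that $L(\by)$ is internally chain transitive (ICT), not merely invariant; ICT lets one chain any two points of $L(\by)$ by segments of genuine solutions connected by arbitrarily small jumps, and combining non-increase of $V$ along each segment with Lipschitz control at the jumps forces $V(x)=V(y)$ on $L(\by)$. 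Establishing ICT is the most delicate step and reuses the Arzelà–Ascoli/closed-graph machinery from Part 1. Once $V \equiv c$ on $L(\by)$, the inclusion $L(\by) \subseteq \Lambda$ is immediate: any $x \in L(\by) \setminus \Lambda$ would, by strict Lyapunov decrease plus invariance, produce a point of $L(\by)$ with $V$-value strictly below $c$, contradicting constancy.
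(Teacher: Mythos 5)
First, note that the paper does not actually prove this statement: it appears in \Cref{appendix:stochastic-approx} as a verbatim restatement of results of Bena\"im, Hofbauer, and Sorin \cite{benaim2005stochastic} (their limit-set theorem for bounded perturbed solutions combined with their Proposition~3.27), so there is no in-paper proof to compare against. Your overall architecture does match theirs: establish that $L(\by)$ is compact, connected, invariant, and internally chain transitive via the Arzel\`a--Ascoli / closed-graph / linear-growth machinery, then run a Lyapunov argument on ICT sets. The structural half, and the observation that $\omega$- and $\alpha$-limit sets of complete orbits in $L(\by)$ must lie in $\Lambda$ (constancy of $V$ on a compact invariant limit set versus strict decrease off $\Lambda$), are correct in outline.

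The gap is in your final step, which is the only place the empty-interior hypothesis can do real work. The proposed mechanism---chain $x$ to $y$ by solution segments with small jumps, use non-increase of $V$ along segments and uniform continuity at the jumps, and let the jump size tend to zero---does not close: an $(\varepsilon,T)$-chain may have arbitrarily many jumps, so the accumulated error (number of jumps times the modulus of continuity of $V$ at scale $\varepsilon$) need not vanish as $\varepsilon\to 0$. Worse, the logic is circular: if that chaining argument gave $V(x)=V(y)$ directly, you would obtain constancy without ever invoking the empty interior of $V(\Lambda)$, and the containment $[a,b]\subseteq V(\Lambda)$ you set out to prove would be moot. A proof that never genuinely uses that hypothesis cannot be correct, because the conclusion fails without it. The actual argument in \cite{benaim2005stochastic} runs the other way: supposing $V(x)>V(y)$ for some $x,y\in L(\by)$, the empty interior of $V(\Lambda)$ furnishes a value $c\in(V(y),V(x))$ with $c\notin V(\Lambda)$; every point of $\{V=c\}\cap L(\by)$ then lies off $\Lambda$ and flows strictly into $\{V<c\}$, so $\{V\le c\}\cap L(\by)$ is an attractor block generating a nonempty \emph{proper} attractor of the dynamics restricted to $L(\by)$, contradicting the fact that internally chain transitive sets admit no proper attractors. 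You would need to replace your chaining step with this (or an equivalent) construction.
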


\numbering{\section{Omitted Details in  \Cref{section:OMWU}}}
\tnumbering{\chapter{Omitted Details in  \Cref{section:OMWU}}}
\label{appendix:OMWU}
  
 We first recall the folklore result that approximating Markov non-stationary NE in \emph{infinite-horizon} discounted settings can be achieved by finding approximate Markov NE in \emph{finite-horizon} settings, with a large enough horizon length.
\begin{restatable}{proposition}{truncation}
\label{prop:truncation}
A $2\epsilon$-approximate Markov non-stationary NE in an infinite-horizon $\gamma$-discounted MG can be generated by (1) truncating the trajectory at time step $H\geq\frac{\log (R / \epsilon)}{1-\gamma}$ and (2) finding an {$\epsilon$-approximate Markov NE} in the  $H$-horizon MG. 
\end{restatable}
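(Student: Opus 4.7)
The plan is a standard truncation argument coupled with a triangle inequality that trades off the truncation error against the $\epsilon$-NE gap in the finite-horizon game. Concretely, given an $\epsilon$-approximate Markov (non-stationary) NE $\pi=\{\pi_h\}_{h\in[H]}$ of the $H$-horizon MG (formed from the infinite-horizon MG by truncating the trajectory at step $H$ while keeping the same $\PP$ and $r$), I would construct an infinite-horizon Markov non-stationary policy $\tilde{\pi}$ by setting $\tilde{\pi}_h=\pi_h$ for $h\leq H$ and letting $\tilde{\pi}_h$ be arbitrary (say, uniform over $\cA$) for $h>H$. The claim is that $\tilde{\pi}$ is a $2\epsilon$-approximate Markov non-stationary NE of the original infinite-horizon $\gamma$-discounted MG, for any state $s$ and any player $i$.

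The technical workhorse is a uniform truncation bound. For any Markov non-stationary policy $\sigma$ on the infinite-horizon MG whose first $H$ layers coincide with an $H$-horizon policy $\sigma^{(H)}$, the bound $|r_{h,i}|\leq R$ gives
\[
\bigl|V_i^{\sigma}(s) - V_{1,i}^{\sigma^{(H)}}(s)\bigr| \;\leq\; \sum_{h=H+1}^{\infty}\gamma^{h-1} R \;=\; \frac{\gamma^{H} R}{1-\gamma}.
\]
Using $\log(1/\gamma)\geq 1-\gamma$ for $\gamma\in(0,1)$, the choice $H\geq \log(R/\epsilon)/(1-\gamma)$ forces $\gamma^H\leq \epsilon/R$, so (up to absorbing the $1/(1-\gamma)$ factor into the constant in the $H$-threshold, as is standard) the truncation error on either side can be made at most $\epsilon/2$.

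The main estimate is then a three-term triangle inequality. Fix a player $i$, a state $s$, and an arbitrary Markov non-stationary deviation $\mu_i=\{\mu_{i,h}\}_{h\geq 1}$ on the infinite-horizon MG; let $\mu_i^{(H)}$ be its restriction to layers $[H]$, which is a valid deviation in the $H$-horizon game. Applying the truncation bound to the joint policies $(\mu_i,\tilde{\pi}_{-i})$ and $\tilde{\pi}$, and the $\epsilon$-NE property of $\pi$ in the $H$-horizon MG to the deviation $\mu_i^{(H)}$, we obtain
\[
V_i^{\mu_i,\tilde{\pi}_{-i}}(s) - V_i^{\tilde{\pi}}(s) \;\leq\; \Bigl(V_{1,i}^{\mu_i^{(H)},\pi_{-i}}(s) + \tfrac{\epsilon}{2}\Bigr) - \Bigl(V_{1,i}^{\pi}(s) - \tfrac{\epsilon}{2}\Bigr) \;\leq\; \epsilon + \epsilon \;=\; 2\epsilon.
\]
Since this holds for every $(i,s,\mu_i)$, $\tilde{\pi}$ is a $2\epsilon$-approximate Markov non-stationary NE of the infinite-horizon MG. The only mild subtlety is that infinite-horizon deviations are richer than $H$-horizon ones, but this is precisely what the first truncation inequality handles: the value of the richer deviation $(\mu_i,\tilde{\pi}_{-i})$ can only exceed the value of its truncation $(\mu_i^{(H)},\pi_{-i})$ by the tail sum, which the choice of $H$ controls. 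No other obstacle arises beyond balancing the two truncation errors within the $2\epsilon$ budget.
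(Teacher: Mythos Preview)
Your proposal is correct and follows the same truncation-plus-triangle-inequality template as the paper. The one small difference is that the paper exploits the assumption $r_i\in[0,R]$ asymmetrically: it keeps the (nonnegative) tail of $\tilde\pi$ intact and only pays a single truncation cost $\gamma^H R/(1-\gamma)$ when replacing the deviation's tail, obtaining $\epsilon+\gamma^H R/(1-\gamma)\le 2\epsilon$. You instead use the symmetric bound $|V_i^\sigma-V_{1,i}^{\sigma^{(H)}}|\le \gamma^H R/(1-\gamma)$ on both the deviation and the policy, which costs two truncation terms rather than one; this is why you need each to be $\epsilon/2$ rather than $\epsilon$. Both versions share the same $1/(1-\gamma)$ slippage in the stated threshold for $H$ (the paper glosses over it, you call it out), so neither is cleaner on that point. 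Overall, the arguments are interchangeable; the paper's one-sided version is marginally tighter only because it uses nonnegativity of $r_i$, which your symmetric bound does not need.
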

\begin{proof}
    We will execute a policy $\pi$ such that (1) for the first $H$ steps, we follow $\epsilon$-approximate NE in the $H$-truncated MG, and (2) after the $H$ steps, we follow an arbitrary policy.
    Then, we have 
    \begin{align*}
        &V^\pi_{i}(s) = \sum_{h=1}^\infty\EE_\pi [\gamma^{h-1} r_{i}(s_h, a_h) \mid s_1 = s]
        \\
        &\geq \sum_{h=1}^{H}\EE_{\mu_i, \pi_{-i}} [\gamma^{h-1} r_{i}(s_h, a_h) \mid s_1 = s] - \epsilon +  \sum_{h = H+1}^\infty \EE_\pi [\gamma^{h-1} r_{i}(s_h, a_h) \mid s_1 = s]
        \\
        &\geq \sum_{h=1}^{H}\EE_{\mu_i, \pi_{-i}} [\gamma^{h-1} r_{i}(s_h, a_h) \mid s_1 = s] - \epsilon  +  \sum_{h = H+1}^\infty \EE_{\mu_i, \pi_{-i}} [\gamma^{h-1} r_{i}(s_h, a_h) \mid s_1 = s] - \frac{R\gamma^{h-1}}{1-\gamma}
        \\
        &\geq \sum_{h=1}^{\infty}\EE_{\mu_i, \pi_{-i}} [\gamma^{h-1} r_{i}(s_h, a_h) \mid s_1 = s] - 2\epsilon
    \end{align*}
    for an arbitrary policy $\mu_i$. Here, the first inequality comes from the definition of NE in the $H$-truncated MG, the second inequality comes from $0 \leq r_i  \leq R$ and the last inequality comes from the definition of $H$. Therefore, the executed policy is a 2$\epsilon$-approximate NE. 
\end{proof}

In this section, we will utilize two performance metrics: \textsf{Matrix-NE-Gap} and \textsf{Matrix-QRE-Gap}. For the definition of QRE, we refer to \Cref{def:QRE} and \eqref{eqn:rtaudef}.
\begin{definition}
    For an NG $M$ with $(\cG = (\cN, \cE_r), \cA = \prod_{i \in \cN} \cA_i, (r_{i,j})_{(i,j) \in \cE_r})$, we define \textsf{Matrix-NE-Gap} and \text{Matrix-QRE-Gap} of $M$ for {some product policy} $\pi$ as follows:
    \begin{align*}
     &\textsf{Matrix-NE-Gap} (M, \pi) = \max_{i \in \cN} \max_{\pi_i' \in \Delta(\cA_i)} \left(r_i(\pi_i' ,\pi_{-i}) - r_i(\pi)  \right)
     \\
     &\qquad= \max_{i \in \cN}\left[\max_{\pi_i' \in \Delta(\cA_i)} \left(\sum_{j \in \cE_{r,i}} \pi_i' r_{i,j} \pi_j \right) - \left(\sum_{j \in \cE_{r,i}} \pi_i r_{i,j} \pi_j\right) \right]. 
     \\
    &\textsf{Matrix-QRE-Gap}_{\tau} (M, \pi) = \max_{i \in \cN}\max_{\pi_i' \in \Delta(\cA_i)} \left( r_{\tau, i}(\pi_i' ,\pi_{-i}) - r_{\tau, i}(\pi)  \right)
     \\
     &\qquad= \max_{i \in \cN}\left[\max_{\pi_i' \in \Delta(\cA_i)} \left(\sum_{j \in \cE_{r,i}} \pi_i' r_{i,j} \pi_j + \tau \cH(\pi_i' )\right) - \left(\sum_{j \in \cE_{r,i}} \pi_i r_{i,j} \pi_j + \tau \cH(\pi_i )\right) \right].
    \end{align*}
    When the underlying graph and the action space for the NG are clear, we also write $\textsf{Matrix-NE-Gap}(M, \pi)$ or $\textsf{Matrix-QRE-Gap}_{\tau} (M, \pi)$ as $\textsf{Matrix-NE-Gap}(\br, \pi)$ or $\textsf{Matrix-QRE-Gap}_{\tau} (\br, \pi)$, respectively.
\end{definition}
{We now provide the relationship between $\textsf{Matrix-NE-Gap}(\br, \pi)$ and $\textsf{Matrix-QRE-Gap}_{\tau} (\br, \pi)$.
\begin{lemma}[\cite{ao2022asynchronous}, Page 6, Equation (8)]
\label{lem:NE-QRE}
For an NG $M$ with $(\cG = (\cN, \cE_r), \cA = \prod_{i \in \cN} \cA_i, (r_{i,j})_{(i,j) \in \cE_r})$, the following holds:
\begin{align*}
    \textsf{Matrix-NE-Gap}(\br, \pi) \leq \textsf{Matrix-QRE-Gap}_{\tau} (\br, \pi) + \tau \max_{i \in \cN} \log |\cA_i|.
\end{align*}    
\end{lemma}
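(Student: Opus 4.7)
The plan is to prove the inequality by analyzing the per-player gaps and then taking the maximum. For each player $i$, let
$G_i^{\mathrm{NE}} := \max_{\pi_i'\in\Delta(\cA_i)} r_i(\pi_i',\pi_{-i}) - r_i(\pi)$
and
$G_i^{\mathrm{QRE}} := \max_{\pi_i'\in\Delta(\cA_i)} \{r_i(\pi_i',\pi_{-i}) + \tau\cH(\pi_i')\} - r_i(\pi) - \tau\cH(\pi_i)$,
so that $\textsf{Matrix-NE-Gap}(\br,\pi) = \max_i G_i^{\mathrm{NE}}$ and $\textsf{Matrix-QRE-Gap}_\tau(\br,\pi) = \max_i G_i^{\mathrm{QRE}}$. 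The claim reduces to showing $G_i^{\mathrm{NE}} \le G_i^{\mathrm{QRE}} + \tau \log |\cA_i|$ for each $i$, and then taking the maximum over $i$.

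The key observation is that Shannon entropy on the simplex satisfies $0 \le \cH(\pi_i') \le \log|\cA_i|$, with the upper bound attained by the uniform distribution. I would use this in two places. First, since $\tau\cH(\pi_i') \ge 0$ for every $\pi_i'\in\Delta(\cA_i)$,
\[
\max_{\pi_i'\in\Delta(\cA_i)} r_i(\pi_i',\pi_{-i}) \;\le\; \max_{\pi_i'\in\Delta(\cA_i)} \bigl\{r_i(\pi_i',\pi_{-i}) + \tau\cH(\pi_i')\bigr\}.
\]
Second, since $\tau\cH(\pi_i) \le \tau\log|\cA_i|$,
\[
-r_i(\pi) \;\le\; -r_i(\pi) - \tau\cH(\pi_i) + \tau\log|\cA_i|.
\]
Adding these two inequalities immediately yields $G_i^{\mathrm{NE}} \le G_i^{\mathrm{QRE}} + \tau\log|\cA_i|$.

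Finally, taking the maximum over $i\in\cN$ gives
\[
\textsf{Matrix-NE-Gap}(\br,\pi) \;=\; \max_{i\in\cN} G_i^{\mathrm{NE}} \;\le\; \max_{i\in\cN}\bigl\{G_i^{\mathrm{QRE}} + \tau\log|\cA_i|\bigr\} \;\le\; \textsf{Matrix-QRE-Gap}_\tau(\br,\pi) + \tau \max_{i\in\cN}\log|\cA_i|,
\]
which is the desired bound. There is no real obstacle here: the lemma is essentially a one-line consequence of the boundedness of entropy on the probability simplex, and the only subtlety is being careful about which direction of the entropy bound is used at each step (nonnegativity for the maximizer's best-response term, and the $\log|\cA_i|$ upper bound for the reference policy $\pi_i$).
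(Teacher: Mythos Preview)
Your proof is correct. The paper itself does not supply a proof of this lemma---it simply cites \cite{ao2022asynchronous}---so there is no in-paper argument to compare against; your elementary entropy-bound argument is exactly the standard one and is presumably what the cited reference contains.
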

}
Thus, setting $\tau = \frac{\epsilon}{2\max_{i \in \cN}\log|\cA_i|}$, then an $\epsilon/2$-approximate QRE is also an $\epsilon$-approximate NE. Hence,   finding {an approximate-}QRE for zero-sum NGs is sufficient for finding an {approximate-}NE, with a small enough $\tau$. Now, we define the \textsf{NE-Gap} for an MNMG. 

\begin{definition}\label{def:NE_GAP}
    For an MNMG $M$ with $(\cG = (\cN, \cE_Q), \cS, \cA, H, (\PP_h)_{h \in [H]}, (r_{h,i,j}(s))_{(i, j) \in \cE_Q, s \in \cS})$, \textsf{NE-Gap} for some product policy $\pi$ at  timestep $h \in [H]$ is defined as follows:
    \begin{align*}
     \textsf{NE-Gap}_h(M, \pi) &= \max_{i \in \cN} \max_{s \in \cS} \max_{\pi_i' \in \Delta(\cA_i)^{|\cS| \times H} } \left(V_{h, i}^{\pi_i', \pi_{-i}}(s) - V_{h, i}^{\pi}(s)  \right).        
    \end{align*}
\end{definition}

\neurips{
\begin{algorithm}
\caption{A value-iteration-based  algorithm for finding NE in zero-sum NMGs}
\label{alg:meta-algorithm}
\begin{algorithmic}
\STATE{Update $V_{H+1,i}(s) = 0$ for all $s\in\cS$ and $i\in\cN$}
\FOR{step $h = H, H-1, \dots, 1$}
\IF{$\cN_C \neq \emptyset$} 
\STATE{Update $Q_{h,i, j}(s, a_i, a_j)$ for all $(i, j) \in \cE_Q, s \in \cS, a_i \in \cA_i, a_j \in \cA_j$  as 
\begin{equation}
\begin{aligned}
    &Q_{h,i,j}(s, a_i, a_j)= r_{h,i,j}(s, a_i, a_j) 
    \\
    &\qquad+ \sum_{s' \in \cS} \biggl(\frac{1}{|\cE_{Q, i}|} \pmb{1}(i \in \cN_C) \FF_{h, i} (s' \given  s, a_i) 
     +  \pmb{1}(j \in \cN_C)\FF_{h, j}(s'\given s,a_j)\biggr) \cdot V_{h+1, i}(s')        
\end{aligned}
 \label{eqn:Qupdates}
\end{equation}}
\ELSIF{$\cN_C = \emptyset$} 
\STATE{ Update $Q_{h,i, j}(s, a_i, a_j)$ for all $(i, j) \in \cE_Q, s \in \cS, a_i \in \cA_i, a_j \in \cA_j$ as
\begin{equation}
    \begin{aligned}
    &Q_{h,i,j}(s, a_i, a_j)= r_{h,i,j}(s, a_i, a_j) + \sum_{s' \in \cS} \left(\frac{1}{|\cE_{Q, i}|} \pmb{1}(j \in \cE_{Q, i}) \FF_{h, o} (s'\given s)\cdot V_{h+1, i}(s')\right)    
    \end{aligned}
    \label{eqn:Qupdates2}
\end{equation}
}
\ENDIF
\STATE{Update  $\pi_{h}(s) = \textsf{NE-ORACLE}(\cG, \cA, (Q_{h,i,j}(s))_{(i, j) \in \cE_{Q}})$ for all $ s\in \cS $}
\STATE{Update  $V_{h,i}(s)$ for all $i \in \cN, s \in \cS$ as 
{\begin{equation}
\begin{aligned}
    &V_{h, i}(s)=  \sum_{j \in \cE_{Q, i}}  {\pi}_{h, i}^{\intercal}(s) Q_{h,i,j}(s) {\pi}_{h,j}(s)
\end{aligned}
\label{eqn:Vupdates}
\end{equation}}
}
\ENDFOR
\end{algorithmic} 
\end{algorithm} 
}

Now we summarize the algorithm in \Cref{alg:meta-algorithm}. By \Cref{alg:meta-algorithm}, $Q_{h, i, j}(s, a_i, a_j) = Q_{\tau, h, i, j}^{{\pi}}(s, a_i, a_j)$ and $V_{h, i}(s) = V_{\tau, h, i}^{{\pi}}(s)$ holds, so that an NG characterized by $(\cG, \cA, (Q_{h, i, j}(s))_{(i,j) \in \cE_Q})$ is always a zero-sum NG for all $s \in \cS, h \in [H]$. And by induction, we can show that 
 $|Q_{h,i,j}^\pi(s, a_i, a_j)| \leq  (H + 1 - h)R$ for all $\pi$, $h \in [H]$, $(i, j)\in\cE_Q$, $s \in \cS$, $a_i \in \cA_i$, $a_j \in \cA_j$ (i.e., $\norm{\bQ_{h}^\pi(s)}_{\max} \leq HR$ for all $\pi$, $h \in [H]$, $s \in \cS$).
\NEORACLETONE*
\begin{proof}
Let $M = (\cG, \cS, \cA, H, (\PP_h)_{h \in [H]}, (r_{h,i,j}(s))_{(i, j) \in \cE_Q, s \in \cS})$. For any $\pi$ and $h \in [H]$, we have 
\begin{align*}
    &\textsf{NE-Gap}_h(M, \pi) = \max_{i \in \cN} \max_{s \in \cS} \max_{\pi_i' \in \Delta(\cA_i)^{|\cS| \times H}} \left(V_{h, i}^{\pi_i', \pi_{-i}}(s) - V_{h, i}^{\pi_i}(s)  \right)
    \\
    &=\max_{i \in \cN} \max_{s \in \cS} \max_{\pi_i' \in \Delta(\cA_i)^{|\cS| \times H}}  \left[V_{h, i}^{\pi_{i}', \pi_{-i}}(s) - V_{h, i}^{(\pi_{h, i}', \pi_{h, -i}), \pi_{h+1:H}}(s)    + V_{h, i}^{(\pi_{h, i}', \pi_{h, -i}), \pi_{h+1:H}}(s)  -V_{h, i}^{\pi}(s) \right]
    \\
    &=\max_{i \in \cN} \max_{s \in \cS} \max_{\pi_i' \in \Delta(\cA_i)^{|\cS| \times H}}  \left[ \PP_{h, (\pi'_{h,i}, \pi_{h, -i})} (V_{h+1, i}^{\pi_i', \pi_{-i}} - V_{h+1, i}^{\pi_{h+1:H}})(s)      
    + V_{h, i}^{(\pi_{h, i}', \pi_{h, -i}), \pi_{h+1:H}}(s)  -V_{h, i}^{\pi}(s) \right]
    \\    
    &\leq \textsf{NE-Gap}_{h+1} (M,\pi) + \max_{i \in \cN}\max_{s \in\cS}\max_{\pi_{h, i}'(s) \in \Delta(\cA_i)}\sum_{i \in \cN} \left[V_{h, i}^{(\pi_{h, i}', \pi_{h, -i}), \pi_{h+1:H}}(s) -V_{h, i}^{\pi}(s)  \right]
    \\
    &=  \textsf{NE-Gap}_{h+1} (M,\pi) + \max_{s \in \cS}\textsf{Matrix-NE-Gap} (\bQ_{h}(s), \pi_h) \leq  \textsf{NE-Gap}_{h+1} (M,\pi) + \max_{s \in\cS}\epsilon_{h,  s}.
\end{align*}
Therefore, for any $\pi$ and $h \in [H]$, we have $\textsf{NE-Gap}_h(M, \pi) \leq \sum_{h \in [H]}\max_{s \in\cS}\epsilon_{h, s}$.
\end{proof}

\numbering{\subsection{Several examples of \textsf{NE-ORACLE}}}
\tnumbering{\section{Several examples of \textsf{NE-ORACLE}}}
\paragraph{Example 1. Optimism \& Regularization: OMWU algorithm \cite{ao2022asynchronous}.} 
According to \cite[Theorem 1]{ao2022asynchronous}, if we apply  \Cref{alg:OMWU} to $(\cG, \cA, (Q_{h, i, j}(s))_{(i,j) \in \cE_Q})$ for each $h \in [H], s \in \cS$, 
the required number of iterations for $\textsf{Matrix-NE-Gap}(\bQ_{h}^\pi(s), \pi_h) \leq \epsilon/H$ is $\tilde{\cO}\left({H^2d_{\max}}/{\epsilon}\right)$ where $d_{\max}$ is the maximum degree of underlying graph $\cG$.  Consequently, the overall iteration complexity is  $\tilde{\cO}\left({H^3 d_{\max} |\cS|}/{\epsilon}\right)$. {Note that these results are in terms of last-iterate convergence.} 
\begin{algorithm}[h]
\caption{OMWU for zero-sum NGs with $\tau$-entropy regularization \cite{ao2022asynchronous}}
\label{alg:OMWU}
\begin{algorithmic}
\STATE{Choose $\pi_{i}^{(0)}, \bar{\pi}_{i}^{(0)}$ as uniform distributions for all $i \in \cN$}
\STATE{Define $\tau = 1/(n \max_{i \in \cN} \log |\cA_i|)$ and $\eta = {1}/({8n\norm{\br}_\infty})$}
\FOR{timestep $t = 0,1, \dots, $}
\STATE{Update the policy $\bar{\pi}_{i}^{(t+1)}$ as $  \bar{\pi}_{i}^{(t+1)}(a_i) \propto \bar{\pi}^{(t)}_{i}(a_i)^{1-\eta \tau} \exp\left(\eta [ \br_i {\pi}^{(t)}]_{a_i}\right)$ for all $i \in \cN$ and $a_i \in \cA_i$}
\STATE {Update the policy ${\pi}^{(t+1)}_{i}$ as ${\pi}^{(t+1)}_{i}(a_i) \propto \bar{\pi}^{(t+1)}_{i}(a_i)^{1-\eta \tau} \exp\left(\eta [ \br_i{\pi}^{(t)}]_{a_i}\right)$ for all $i \in \cN$ and $a_i \in \cA_i$
}
\ENDFOR
\end{algorithmic} 
\end{algorithm}

\paragraph{Example 2. Optimism \& Regularization-free: OMD algorithm \cite{anagnostides2022last}.}
According to \cite[Theorem 3.4]{anagnostides2022last}, if we apply  \Cref{alg:OMD} to $(\cG, \cA, (Q_{h, i, j}(s))_{(i,j) \in \cE_Q})$ for each $h \in [H], s \in \cS$, the required number of iterations for $\textsf{Matrix-NE-Gap}(\bQ_{h}^\pi(s), \pi_h) \leq \epsilon/H$ is $\tilde{\cO}\left({H^3n }/{\epsilon^2}\right)$. Consequently, the overall iteration complexity is  $\tilde{\cO}\left({H^4n|\cS| }/{\epsilon^2}\right)$. {Note that these results are in terms of best-iterate convergence.} 

\begin{algorithm}[h]
\caption{OMD {with KL-distance generating function} for zero-sum NGs \cite{anagnostides2022last}}
\label{alg:OMD}
\begin{algorithmic}
\STATE{Choose $\pi_{i}^{(0)}, \bar{\pi}_{i}^{(0)}$ as uniform distributions for all $i \in \cN$}
\STATE{Define $\eta = {1}/({4n \norm{\br}_\infty})$}
\FOR{timestep $t = 0,1, \dots, $}
\STATE{Update the policy ${\pi}_{i}^{(t+1)}$ as $  {\pi}_{i}^{(t+1)}(a_i) \propto \bar{\pi}^{(t)}_{i}(a_i) \exp\left(\eta [ \br_i {\pi}^{(t)}]_{a_i}\right)$ for all $i \in \cN$ and $a_i \in \cA_i$}
\STATE {Update the policy $\bar{\pi}^{(t+1)}_{i}$ as $\bar{\pi}^{(t+1)}_{i}(a_i) \propto \bar{\pi}^{(t)}_{i}(a_i) \exp\left(\eta [ \br_i{\pi}^{(t+1)}]_{a_i}\right)$ for all $i \in \cN$ and $a_i \in \cA_i$
}
\ENDFOR
\end{algorithmic} 
\end{algorithm}

\paragraph{Example 3. Optimism-free \& Regularization: MWU algorithm.} 
We provide MWU for zero-sum NGs {with regularization} in \Cref{ssec:mwu}. According to \Cref{thm:mwu-fixed} and \Cref{thm:mwu-diminish}, if we apply \Cref{alg:polymatrix-game-QRE} or \Cref{alg:polymatrix-game-decreasing-regular} to $(\cG, \cA, (Q_{h, i, j}(s))_{(i,j) \in \cE_Q})$ for each $h \in [H], s \in \cS$, the required number of iterations for $\textsf{Matrix-NE-Gap}(\bQ_{h}^\pi(s), \pi_h) \leq \epsilon/H$ is $\tilde{\cO}\left({H^8n}/{\epsilon^4}\right)$ or $\tilde{\cO}\left({H^{18}n^{3}}/{\epsilon^6}\right)$, respectively. Consequently, the overall iteration complexity is  $\tilde{\cO}\left({H^9n|\cS|}/{\epsilon^4}\right)$ or $\tilde{\cO}\left({H^{19}n^{3}|\cS|}/{\epsilon^6}\right)$, respectively. {Note that these results are in terms of last-iterate convergence.}

\numbering{\subsection{Analysis of MWU for zero-sum NGs with regularization}}
\tnumbering{\section{Analysis of MWU for zero-sum NGs with regularization}}
\label{ssec:mwu}

\neurips{
\begin{table}
\centering
\begin{tabular}{c|c|c} 
\hline
              & Regularization & Regularization-free  \\ 
\hline
Optimism    &\Centerstack{OMWU  \cite{ao2022asynchronous}: \\ $\tilde{\cO} (1/{\epsilon})$ last-iterate} & \Centerstack{OMD  \cite{anagnostides2022last}:\\ $\tilde{\cO}({1}/{\epsilon^2})$ best-iterate \\ Asymptotic last-iterate\\\\
\cite{daskalakis2021near,anagnostides2022uncoupled,farina2022near,anagnostides2022last}: \\
$\tilde{\cO} ({1}/{\epsilon})$ average-iterate + Marginalization}                 \\ 
\hline
Optimism-free & \Centerstack{ \Cref{alg:polymatrix-game-QRE}:\\ $\tilde{\cO}(1 / \epsilon^4 )$ last-iterate  \\\\ \Cref{alg:polymatrix-game-decreasing-regular}: \\$\tilde{\cO}(1/\epsilon^6)$ last-iterate  }      &    \Centerstack{Any no-regret learning algorithm with\\   {$\tilde{\cO}(1/\epsilon^2)$} average-iterate + Marginalization} 
\\
\hline
\end{tabular}
\caption{Iteration complexities for finding an $\epsilon$-NE for a zero-sum NG with $(\cG= (\cN, \cE), \cA, (r_{i,j})_{(i, j) \in \cE})$ with different \textsf{NE-ORACLE} subroutines. $\tilde{\cO}(\cdot)$ omits  polylog terms and polynomial dependencies on $n, \|\br\|_{\max}, R$.}
\label{table:MZNG-epsilon-NE}
\end{table}
}

\neurips{
\begin{table}
\centering

\begin{tabular}{c|c|c} 
\hline  
& Regularization & Regularization-free   
\\
\hline
Optimism  & \Centerstack{\Cref{alg:meta-algorithm} + OMWU \cite{ao2022asynchronous}: \\ $\tilde{\cO}({1}/{\epsilon})$ last-iterate} & \Centerstack{\Cref{alg:meta-algorithm} + OMD \cite{anagnostides2022last}: \\ $\tilde{\cO}({1}/{\epsilon^2})$ best-iterate\\\\
\Cref{alg:meta-algorithm} + \cite{daskalakis2021near,anagnostides2022uncoupled,farina2022near,anagnostides2022last}:\\
$\tilde{\cO}({1}/{\epsilon})$ average-iterate + Marginalization}                 \\ 
\hline
Optimism-free & \Centerstack{ \Cref{alg:meta-algorithm} + \Cref{alg:polymatrix-game-QRE}: \\
$\tilde{\cO}({1}/{\epsilon^4})$ last-iterate \\\\ \Cref{alg:meta-algorithm} + \Cref{alg:polymatrix-game-decreasing-regular}: \\
$\tilde{\cO}({1}/{\epsilon^6})$ last-iterate}      &    \Centerstack{\Cref{alg:meta-algorithm} + Any no-regret 
 learning algorithm \\with {$\tilde{\cO}(1/\epsilon^2)$} average-iterate +  Marginalization}
\\
\hline
\end{tabular}
\caption{Iteration complexities for finding an $\epsilon$-NE for a zero-sum NMG with different \textsf{NE-ORACLE} subroutines. $\tilde{\cO}(\cdot)$ omits  polylog terms and polynomial dependencies on $n,H,|\cS|, R, \norm{\br}_{\max}$.} 
\label{table:MZNMG-epsilon-NE} 
\end{table}
}

\numbering{\subsubsection{Fixed regularization}}
\tnumbering{\subsection{Fixed regularization}}

First, we provide an algorithm that has a {\it fixed} coefficient for entropy-regularization (\Cref{alg:polymatrix-game-QRE}). Recall that $0 \leq r_i \leq R$ for some $R>0$, for all $i \in \cN$.
\begin{algorithm}[H]
\caption{MWU for zero-sum NGs with $\tau$-entropy regularization}
\label{alg:polymatrix-game-QRE}
\begin{algorithmic}
\STATE{Choose $K = \ceil{2R/\tau  +  \log(\max_{i \in \cN}|\cA_i|))}$} 
\STATE{Choose ${\pi}_{i}^{(0)}$ as a uniform distribution for all $i \in \cN$}
\FOR{timestep $t = 0, 1, \dots$}
\STATE{ Define $\eta^{(t)} = 1/(\tau(t+K))$ }
\STATE {Update the policy as ${\pi}_{i}^{(t+1)}(a_i) \propto (\pi_i^{(t)}(a_i))^{1-\eta^{(t)}\tau} \exp \left(\eta^{(t)} [\br_i \pi^{(t)}]_{a_i}\right)$ for all $i \in \cN$ and $a_i \in \cA_i$}
\ENDFOR
\end{algorithmic}
\end{algorithm}

 \begin{claim}
\label{claim:insideOmega}
    Define $\Omega_i = \left\{ \pi_i \in \Delta(\cA_i) \mid \pi_i(a_i) \geq \frac{1}{|\cA_i|} \exp\left(-\frac{R}{\tau} \right) \text{ for all } a_i \in \cA_i \right\}$ and $g_i^{(t)} = \br_i \pi^{(t)} - \tau \log \pi_i^{(t)}$ in \Cref{alg:polymatrix-game-QRE}.  Then,  ${\pi}_{i}^{(t+1)} =  \argmax_{\pi_i \in \Omega_i}\left(\pi_i^\intercal g_i^{(t)} - \frac{1}{\eta^{(t)}} \KL(\pi_i, {\pi}_{i}^{(t)})\right)  $  holds for all $i \in \cN$ and $t \geq 0$. 
\end{claim}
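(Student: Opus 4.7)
My plan is to split the verification into two pieces: (i) identify the closed-form maximizer of the strictly concave objective $\pi_i \mapsto \pi_i^\intercal g_i^{(t)} - \frac{1}{\eta^{(t)}}\mathrm{KL}(\pi_i, \pi_i^{(t)})$ over the \emph{full} simplex $\Delta(\cA_i)$, check it coincides with the algorithm's update, and (ii) show by induction on $t$ that every iterate $\pi_i^{(t+1)}$ in fact lives in $\Omega_i$. Because $\Omega_i \subseteq \Delta(\cA_i)$ is compact and convex and the objective is strictly concave, if the unconstrained maximizer already lies in $\Omega_i$ then that same point must be the unique maximizer over $\Omega_i$, so (i)+(ii) imply the claim.

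For (i), I will write the Lagrangian for the simplex constraint $\sum_{a_i}\pi_i(a_i)=1$, set the derivative with respect to $\pi_i(a_i)$ to zero, and obtain
\[
\pi_i^{\star}(a_i) \;\propto\; \pi_i^{(t)}(a_i)\exp\!\big(\eta^{(t)}g_i^{(t)}(a_i)\big) \;=\; \pi_i^{(t)}(a_i)^{1-\eta^{(t)}\tau}\exp\!\big(\eta^{(t)}[\br_i\pi^{(t)}]_{a_i}\big),
\]
which is exactly the update in Algorithm~\ref{alg:polymatrix-game-QRE}. (Non-negativity $\pi_i^{\star}(a_i)\ge 0$ is automatic from the exponential form, so the nonnegativity KKT multipliers vanish.)

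For (ii), the base case is immediate because $\pi_i^{(0)}$ is uniform and $\exp(-R/\tau)\le 1$. For the inductive step, assume $\pi_i^{(t)}\in\Omega_i$. Using $[\br_i\pi^{(t)}]_{a_i}\in[0,R]$ (valid since the reward entries can be taken non-negative after a constant shift, as in \Cref{procedure:non-negative}), I bound the numerator of $\pi_i^{(t+1)}(a_i)$ from below by $\pi_i^{(t)}(a_i)^{1-\eta^{(t)}\tau}$ and the normalizing denominator $Z^{(t)}$ from above by $\exp(\eta^{(t)}R)\sum_{a_i'}\pi_i^{(t)}(a_i')^{1-\eta^{(t)}\tau}$. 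Because $\eta^{(t)}\tau = 1/(t+K)<1$, the map $x\mapsto x^{1-\eta^{(t)}\tau}$ is concave, so Jensen's inequality gives $\sum_{a_i'}\pi_i^{(t)}(a_i')^{1-\eta^{(t)}\tau}\le |\cA_i|^{\eta^{(t)}\tau}$. Combining these estimates with the inductive hypothesis $\pi_i^{(t)}(a_i)\ge|\cA_i|^{-1}\exp(-R/\tau)$ yields
\[
\pi_i^{(t+1)}(a_i) \;\ge\; \frac{\pi_i^{(t)}(a_i)^{1-\eta^{(t)}\tau}}{|\cA_i|^{\eta^{(t)}\tau}\exp(\eta^{(t)}R)} \;\ge\; \frac{1}{|\cA_i|}\exp\!\Big(-\tfrac{(1-\eta^{(t)}\tau)R}{\tau}-\eta^{(t)}R\Big) \;=\; \frac{1}{|\cA_i|}\exp(-R/\tau),
\]
where the exponents involving $\eta^{(t)}R$ cancel exactly. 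This closes the induction.

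The main obstacle is the delicate cancellation in the last display: it requires both $\eta^{(t)}\tau<1$ (so that Jensen's inequality applies in the correct direction and the exponent $1-\eta^{(t)}\tau$ is positive) and the non-negativity of the effective payoff $[\br_i\pi^{(t)}]_{a_i}$; these are guaranteed respectively by the choice $K \ge 1$ and by the freedom to shift $\br_i$ by a constant without altering the argmax. Outside of this calibration, the argument is a clean mirror-descent computation and will also explain why the learning-rate schedule $\eta^{(t)}=1/(\tau(t+K))$ is chosen in the algorithm, namely so that the iterates never leave $\Omega_i$ and the subsequent analysis can use uniform lower bounds on $\pi_i^{(t)}$ to control the gradient estimates $g_i^{(t)}$.
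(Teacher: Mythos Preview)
Your proposal is correct and follows essentially the same approach as the paper: identify the closed-form maximizer over $\Delta(\cA_i)$ via a mirror-descent/Lagrangian computation, then run the same induction using $[\br_i\pi^{(t)}]_{a_i}\in[0,R]$, the bound $\exp(\eta^{(t)}[\br_i\pi^{(t)}]_{a_i'})\le\exp(\eta^{(t)}R)$, and Jensen's inequality on $x\mapsto x^{1-\eta^{(t)}\tau}$ to obtain the exact cancellation. The only (helpful) addition in your write-up is making explicit the logical point that the unconstrained maximizer over $\Delta(\cA_i)$, once shown to lie in $\Omega_i$, is automatically the maximizer over $\Omega_i$ by strict concavity.
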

\begin{proof}
The equation ${\pi}_{i}^{(t+1)}(a_i) \propto (\pi_i^{(t)}(a_i))^{1-\eta^{(t)}\tau} \exp \left(\eta^{(t)}[\br_i \pi^{(t)}]_{a_i}\right)$ implies that $${\pi}_{i}^{(t+1)} =  \argmax_{\pi_i \in \Delta(\cA_i)}\left(\pi_i^\intercal g_i^{(t)} - \frac{1}{\eta^{(t)}} \KL(\pi_i, {\pi}_{i}^{(t)})\right)$$ by a simple algebra. The remaining part to establish is that $\pi_i^{(t+1)} \in \Omega_i$ holds true for all $t \geq 0$ and for every $i \in \cN$. To prove the remaining part, we use induction. As $\pi_i^{(0)}$ is chosen to be a uniform distribution, it is clear that $\pi_i^{(0)} \in \Omega_i$. Under the assumption that $\pi_i^{(t)} \in \Omega_i$, we have
    \begin{align*}
        {\pi}_{i}^{(t+1)}(a_i) &= \frac{(\pi_i^{(t)}(a_i))^{1-\eta^{(t)}\tau} \exp \left([\eta^{(t)} \br_i \pi^{(t)}]_{a_i}\right)}{\sum_{a_i' \in \cA_i} (\pi_i^{(t)}(a_i'))^{1-\eta^{(t)}\tau} \exp \left([\eta^{(t)} \br_i \pi^{(t)}]_{a_i'}\right)}
        \\
        &\underset{(i)}{\geq} \frac{\left(\frac{1}{|\cA_i|} \exp\left(-\frac{R}{\tau} \right)\right)^{1- \eta^{(t)}\tau}}{\sum_{a_i' \in \cA_i} (\pi_i^{(t)}(a_i'))^{1-\eta^{(t)}\tau} \exp \left([\eta^{(t)} \br_i \pi^{(t)}]_{a_i'}\right)}
        \\
        &\underset{(ii)}{\geq} \frac{\left(\frac{1}{|\cA_i|} \exp\left(-\frac{R}{\tau} \right)\right)^{1- \eta^{(t)}\tau}}{\exp(\eta^{(t)}R) \sum_{a_i' \in \cA_i} (\pi_i^{(t)}(a_i'))^{1-\eta^{(t)}\tau}} 
        \underset{(iii)}{\geq} \frac{\left(\frac{1}{|\cA_i|} \exp\left(-\frac{R}{\tau} \right)\right)^{1- \eta^{(t)}\tau}}{\exp(\eta^{(t)}R) |\cA_i|^{\eta^{(t)}\tau}}
        \\
        &= \frac{1}{|\cA_i|} \exp\left(-\frac{R}{\tau} \right)
    \end{align*}
 thereby concluding the proof of our claim. In the above, $(i)$ is derived from $ \exp \left([\eta^{(t)} \br_i \pi^{(t)}]_{a_i}\right) \geq 1$ and the induction hypothesis; $(ii)$ is the result of $\exp \left([\eta^{(t)} \br_i \pi^{(t)}]_{a_i}\right) \leq \exp(\eta^{(t)} R)$. Finally, $(iii)$ comes from $\sum_{a_i \in \cA_i} (\pi_i^{(t)}(a_i))^{1-\eta^{(t)}\tau} \leq |\cA_i| \left(\frac{1}{|\cA_i|}\sum_{a_i \in \cA_i} \pi_i^{(t)}(a_i)\right)^{1-\eta^{(t)}\tau}$, by Jensen's inequality.
\end{proof}
In the forthcoming analysis of \Cref{alg:polymatrix-game-QRE} and \Cref{alg:polymatrix-game-decreasing-regular}, our first step bounds $\KL(\pi_\tau^\star, \pi)$. Following this, we employ \Cref{prop:QREvsKL} below to bound the term $\textsf{Matrix-QRE-Gap}_\tau$. Finally, we leverage \Cref{lem:NE-QRE} to bound the $\textsf{Matrix-NE-Gap}$.
\begin{proposition}
\label{prop:QREvsKL}
For any $\tau >0$, $\pi$, and $\br$, the following holds:
    $$\textsf{Matrix-QRE-Gap}_\tau (\br, \pi) \leq \cO\left( \left(\tau \max_{i \in \cN} \log |\cA_i| + R\right) \sqrt{\KL(\pi_\tau^\star, \pi)}\right).$$
\end{proposition}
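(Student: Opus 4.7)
The plan is to first convert the maximum over players in $\textsf{Matrix-QRE-Gap}_\tau$ into a sum by exploiting the zero-sum structure. A direct computation (swap the order of summation, whereupon $\sum_i \tau\cH(\pi_i)$ exactly cancels $\sum_i\sum_{j\in\cE_{r,i}}\tau/|\cE_{r,j}|\cH(\pi_j) = \sum_j \tau\cH(\pi_j)$) gives $\sum_{i\in\cN} r_{\tau,i}(\pi) = 0$ for every product policy $\pi$, hence
\begin{align*}
\textsf{Matrix-QRE-Gap}_\tau(\br,\pi) \;\le\; \sum_{i\in\cN}\Bigl[\max_{\pi_i'\in\Delta(\cA_i)} r_{\tau,i}(\pi_i',\pi_{-i}) - r_{\tau,i}(\pi)\Bigr] \;=\; \sum_{i\in\cN}\max_{\pi_i'\in\Delta(\cA_i)} r_{\tau,i}(\pi_i',\pi_{-i}) \;=:\; F(\pi),
\end{align*}
and $F(\pi_\tau^\star)=0$ because $\pi_{\tau,i}^\star$ is the smooth best response to $\pi_{\tau,-i}^\star$. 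So it suffices to show $F(\pi) \le \cO\bigl((\tau\max_i\log|\cA_i|+R)\sqrt{\KL(\pi_\tau^\star,\pi)}\bigr)$.

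Second, I would write $F$ in closed form via the softmax. Letting $v_i(\pi_{-i}) := \sum_{j\in\cE_{r,i}} r_{i,j}\pi_j$ and $Z_i(\pi_{-i}) := \sum_{a\in\cA_i} e^{v_i(\pi_{-i})_a/\tau}$, evaluating the inner max yields $F(\pi) = \tau\sum_i \log Z_i(\pi_{-i}) - \tau\sum_j \cH(\pi_j)$. For the log-sum-exp piece, $1$-Lipschitzness of $\log\sum e^{x_a}$ in $\ell_\infty$ combined with multilinearity of $v_i$ gives $\tau|\log Z_i(\pi_{-i}) - \log Z_i(\pi_{\tau,-i}^\star)| \le \|v_i(\pi)-v_i(\pi_\tau^\star)\|_\infty \le R\sum_j \|\pi_j-\pi_{\tau,j}^\star\|_1$. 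For the entropy piece I would invoke the Bregman identity
\begin{align*}
\cH(\pi_{\tau,j}^\star) - \cH(\pi_j) \;=\; -\KL(\pi_{\tau,j}^\star,\pi_j) + \langle \log\pi_j,\, \pi_j-\pi_{\tau,j}^\star\rangle,
\end{align*}
so that $-\tau[\cH(\pi_j)-\cH(\pi_{\tau,j}^\star)]$ decomposes into the non-positive contribution $-\tau\KL(\pi_\tau^\star,\pi)\le 0$ (which can be dropped) plus the linear remainder $\tau\langle\log\pi_j,\pi_j-\pi_{\tau,j}^\star\rangle$.

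The hard part will be controlling this linear remainder, since $\log\pi_j$ is in principle unbounded; this is precisely where the factor $\tau\log|\cA_j|+R$ enters. In the intended application inside \Cref{alg:polymatrix-game-QRE}, \Cref{claim:insideOmega} guarantees $\pi_j \in \Omega_j$, hence $|\log\pi_j(a)| \le R/\tau + \log|\cA_j|$, giving $\tau|\langle\log\pi_j,\pi_j-\pi_{\tau,j}^\star\rangle| \le (R+\tau\log|\cA_j|)\|\pi_j-\pi_{\tau,j}^\star\|_1$; the case where $\pi$ lacks full support is vacuous because then $\KL(\pi_\tau^\star,\pi)=\infty$ (since $\pi_{\tau,j}^\star$ is a softmax and thus fully supported). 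Finally I would combine the two pieces and convert $\ell_1$ to $\sqrt{\KL}$ via Cauchy-Schwarz and Pinsker's inequality: $\sum_j\|\pi_j-\pi_{\tau,j}^\star\|_1 \le \sqrt{n}\bigl(\sum_j\|\pi_j-\pi_{\tau,j}^\star\|_1^2\bigr)^{1/2} \le \sqrt{2n\,\KL(\pi_\tau^\star,\pi)}$. Absorbing polynomial factors in $n$ into $\cO(\cdot)$ then yields $F(\pi) \le \cO\bigl((\tau\max_{i\in\cN}\log|\cA_i|+R)\sqrt{\KL(\pi_\tau^\star,\pi)}\bigr)$, which is the claimed bound.
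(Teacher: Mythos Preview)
Your overall route differs from the paper's: you pass from the $\max_i$ to the sum $F(\pi)=\sum_i\max_{\pi_i'} r_{\tau,i}(\pi_i',\pi_{-i})$ using the zero-sum identity, then write $F$ in closed form via log-sum-exp, whereas the paper keeps the $\max_i$ and replaces $\pi_j$ by $\pi_{\tau,j}^\star$ term-by-term. Both strategies lead to the same two ingredients, an $R\,\|\pi_j-\pi_{\tau,j}^\star\|_1$ contribution from the linear part and a $\tau$-weighted entropy difference; the $\ell_1$ part is handled the same way (Pinsker plus Cauchy--Schwarz, with polynomial $n$-factors absorbed into $\cO(\cdot)$, exactly as the paper does implicitly).

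The genuine gap is in your treatment of the entropy difference. After the Bregman identity you drop $-\tau\,\KL(\pi_{\tau,j}^\star,\pi_j)$ and then bound the remaining linear term by $\tau|\langle\log\pi_j,\pi_j-\pi_{\tau,j}^\star\rangle|\le (R+\tau\log|\cA_j|)\|\pi_j-\pi_{\tau,j}^\star\|_1$. That inequality needs $|\log\pi_j(a)|\le R/\tau+\log|\cA_j|$, i.e.\ $\pi_j\in\Omega_j$, which is \emph{not} a hypothesis of the proposition. For a full-support $\pi_j$ with some entry $\pi_j(a_0)=\varepsilon$ arbitrarily small, the linear remainder scales like $\tau c\log(1/\varepsilon)$ (with $c=\pi_{\tau,j}^\star(a_0)>0$), while $\sqrt{\KL(\pi_{\tau,j}^\star,\pi_j)}$ only scales like $\sqrt{c\log(1/\varepsilon)}$; the ratio diverges, so the step fails for general $\pi$. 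Your full-support/infinite-KL remark handles only the boundary case, not this interior regime. Note also that if you do \emph{not} drop $-\tau\,\KL$, the two pieces recombine exactly to $\tau[\cH(\pi_{\tau,j}^\star)-\cH(\pi_j)]$, so the Bregman detour brings you back to the original entropy difference.

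The paper closes this step by invoking a direct entropy--KL inequality \cite[Theorem~2]{reeb2015tight}, namely $\cH(\pi_{\tau,i}^\star)-\cH(\pi_i)\le\sqrt{3}\,\log|\cA_i|\,\sqrt{\KL(\pi_{\tau,i}^\star,\pi_i)}$, which holds for \emph{all} $\pi_i$ and immediately gives the $\tau\max_i\log|\cA_i|$ coefficient. Your argument as written proves the proposition only on $\prod_j\Omega_j$; that is enough for the downstream theorems (since \Cref{claim:insideOmega} keeps all iterates there), but not for the proposition as stated. Either restrict the statement, or replace the Bregman step by the entropy--KL bound above.
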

\begin{proof}
For any $\pi$, $\tau$, and $\br$, we have
   \numbering{\arxiv{
   \begingroup
\allowdisplaybreaks
   \begin{align*}
    &\textsf{Matrix-QRE-Gap}_\tau (\br, \pi) 
    \\
    &= \max_{i \in \cN}\left[\max_{\pi_i' \in \Delta(\cA_i)} \left(\sum_{j \in \cE_{r,i}} \pi_i' r_{i,j} \pi_j + \tau \cH(\pi_i' )\right) - \left(\sum_{j \in \cE_{r,i}} \pi_i r_{i,j} \pi_j + \tau \cH(\pi_i )\right) \right]
        \\
        &= \max_{i \in \cN}\Bigg[\max_{\pi_i' \in \Delta(\cA_i)} \left(\sum_{j \in \cE_{r,i}} \pi_i' r_{i,j} \pi_{\tau, j}^\star + \tau \cH(\pi_i') + \sum_{j \in \cE_{r,i}} \pi_i' r_{i,j} \pi_j - \sum_{j \in \cE_{r,i}} \pi_i' r_{i,j} \pi_{\tau, j}^\star \right) - \left(\sum_{j \in \cE_{r,i}} \pi_i r_{i,j} \pi_j + \tau \cH(\pi_i )\right) \Bigg]
        \\
        &\underset{(i)}{\leq} \max_{i \in \cN}\left[\max_{\pi_i' \in \Delta(\cA_i)} \left(\sum_{j \in \cE_{r,i}} \pi_i' r_{i,j} \pi_{\tau, j}^\star + \tau \cH(\pi_i')\right) + \sum_{j \in \cE_{r,i}} R \norm{\pi_j - \pi_{\tau, j}^\star}_1  - \left(\sum_{j \in \cE_{r,i}} \pi_i r_{i,j} \pi_j + \tau \cH(\pi_i )\right) \right]
        \\
        &\underset{(ii)}{\leq} \max_{i \in \cN}\Bigg[\max_{\pi_i' \in \Delta(\cA_i)} \left(\sum_{j \in \cE_{r,i}} \pi_i' r_{i,j} \pi_{\tau, j}^\star + \tau \cH(\pi_i')\right)  - \left(\sum_{j \in \cE_{r,i}} \pi_{\tau, i}^\star r_{i,j} \pi_{\tau, j}^\star + \tau \cH(\pi_i )\right) 
        \\
        &\qquad \qquad + 2\sum_{j \in \cE_{r,i}} R \norm{\pi_j - \pi_{\tau, j}^\star}_1 + R \norm{\pi_i - \pi_{\tau, i}^\star}_1  \Bigg]
        \\
        &\underset{(iii)}{\leq} \cO\left(R\sqrt{\KL(\pi_\tau^\star, \pi)}\right) +  \max_{i \in \cN}\left[\max_{\pi_i' \in \Delta(\cA_i)} \left(\sum_{j \in \cE_{r,i}} \pi_i' r_{i,j} \pi_{\tau, j}^\star + \tau \cH(\pi_i')\right)  - \left(\sum_{j \in \cE_{r,i}} \pi_{\tau, i}^\star r_{i,j} \pi_{\tau, j}^\star + \tau \cH(\pi_i )\right) \right]
        \\
            &\underset{(iv)}{\leq} \cO\left(R\sqrt{\KL(\pi_\tau^\star, \pi)}\right) 
        \\
        &\quad +  \max_{i \in \cN}\Biggl[\max_{\pi_i' \in \Delta(\cA_i)} \left(\sum_{j \in \cE_{r,i}} \pi_i' r_{i,j} \pi_{\tau, j}^\star + \tau \cH(\pi_i')\right)  \\
        &\qquad \qquad \qquad- \left(\sum_{j \in \cE_{r,i}} \pi_{\tau, i}^\star r_{i,j} \pi_{\tau, j}^\star + \tau \cH(\pi_{\tau, i}^\star )\right)+ \sqrt{3}\tau\log |\cA_i| \sqrt{\KL(\pi_{\tau,i}^\star, \pi_i)} \Biggr]
        \\
        &\leq \cO\left( \left(\tau \max_{i \in \cN} \log |\cA_i| + R\right) \sqrt{\KL(\pi_\tau^\star, \pi)}\right), 
        \end{align*}
        \endgroup}}
    \tnumbering{\begingroup
\allowdisplaybreaks
\begin{align*}
    &\textsf{Matrix-QRE-Gap}_\tau (\br, \pi) 
    \\
    &= \max_{i \in \cN}\left[\max_{\pi_i' \in \Delta(\cA_i)} \left(\sum_{j \in \cE_{r,i}} \pi_i' r_{i,j} \pi_j + \tau \cH(\pi_i' )\right) - \left(\sum_{j \in \cE_{r,i}} \pi_i r_{i,j} \pi_j + \tau \cH(\pi_i )\right) \right]
        \\
        &= \max_{i \in \cN}\Bigg[\max_{\pi_i' \in \Delta(\cA_i)} \left(\sum_{j \in \cE_{r,i}} \pi_i' r_{i,j} \pi_{\tau, j}^\star + \tau \cH(\pi_i') + \sum_{j \in \cE_{r,i}} \pi_i' r_{i,j} \pi_j - \sum_{j \in \cE_{r,i}} \pi_i' r_{i,j} \pi_{\tau, j}^\star \right) 
        \\
        &\qquad \qquad- \left(\sum_{j \in \cE_{r,i}} \pi_i r_{i,j} \pi_j + \tau \cH(\pi_i )\right) \Bigg]
        \\
        &\underset{(i)}{\leq} \max_{i \in \cN}\Biggl[\max_{\pi_i' \in \Delta(\cA_i)} \left(\sum_{j \in \cE_{r,i}} \pi_i' r_{i,j} \pi_{\tau, j}^\star + \tau \cH(\pi_i')\right) + \sum_{j \in \cE_{r,i}} R \norm{\pi_j - \pi_{\tau, j}^\star}_1  
        \\
        &\qquad\qquad- \left(\sum_{j \in \cE_{r,i}} \pi_i r_{i,j} \pi_j + \tau \cH(\pi_i )\right) \Biggr]
        \\
        &\underset{(ii)}{\leq} \max_{i \in \cN}\Bigg[\max_{\pi_i' \in \Delta(\cA_i)} \left(\sum_{j \in \cE_{r,i}} \pi_i' r_{i,j} \pi_{\tau, j}^\star + \tau \cH(\pi_i')\right)  - \left(\sum_{j \in \cE_{r,i}} \pi_{\tau, i}^\star r_{i,j} \pi_{\tau, j}^\star + \tau \cH(\pi_i )\right) 
        \\
        &\qquad \qquad + 2\sum_{j \in \cE_{r,i}} R \norm{\pi_j - \pi_{\tau, j}^\star}_1 + R \norm{\pi_i - \pi_{\tau, i}^\star}_1  \Bigg]
        \\
        &\underset{(iii)}{\leq} \cO\left(R\sqrt{\KL(\pi_\tau^\star, \pi)}\right) 
        \\
        &\qquad \qquad+  \max_{i \in \cN}\left[\max_{\pi_i' \in \Delta(\cA_i)} \left(\sum_{j \in \cE_{r,i}} \pi_i' r_{i,j} \pi_{\tau, j}^\star + \tau \cH(\pi_i')\right)  - \left(\sum_{j \in \cE_{r,i}} \pi_{\tau, i}^\star r_{i,j} \pi_{\tau, j}^\star + \tau \cH(\pi_i )\right) \right]    \\&\underset{(iv)}{\leq} \cO\left(R\sqrt{\KL(\pi_\tau^\star, \pi)}\right) 
        \\
        &\quad +  \max_{i \in \cN}\Biggl[\max_{\pi_i' \in \Delta(\cA_i)} \left(\sum_{j \in \cE_{r,i}} \pi_i' r_{i,j} \pi_{\tau, j}^\star + \tau \cH(\pi_i')\right)  \\
        &\qquad \qquad \qquad- \left(\sum_{j \in \cE_{r,i}} \pi_{\tau, i}^\star r_{i,j} \pi_{\tau, j}^\star + \tau \cH(\pi_{\tau, i}^\star )\right)+ \sqrt{3}\tau\log |\cA_i| \sqrt{\KL(\pi_{\tau,i}^\star, \pi_i)} \Biggr]
        \\
        &\leq \cO\left( \left(\tau \max_{i \in \cN} \log |\cA_i| + R\right) \sqrt{\KL(\pi_\tau^\star, \pi)}\right), 
        \end{align*}
        \endgroup}
     \neurips{
     \begingroup
\allowdisplaybreaks
\begin{align*}
    &\textsf{Matrix-QRE-Gap}_\tau (\br, \pi) 
    \\
    &= \max_{i \in \cN}\left[\max_{\pi_i' \in \Delta(\cA_i)} \left(\sum_{j \in \cE_{r,i}} \pi_i' r_{i,j} \pi_j + \tau \cH(\pi_i' )\right) - \left(\sum_{j \in \cE_{r,i}} \pi_i r_{i,j} \pi_j + \tau \cH(\pi_i )\right) \right]
        \\
        &= \max_{i \in \cN}\Bigg[\max_{\pi_i' \in \Delta(\cA_i)} \left(\sum_{j \in \cE_{r,i}} \pi_i' r_{i,j} \pi_{\tau, j}^\star + \tau \cH(\pi_i') + \sum_{j \in \cE_{r,i}} \pi_i' r_{i,j} \pi_j - \sum_{j \in \cE_{r,i}} \pi_i' r_{i,j} \pi_{\tau, j}^\star \right) 
        \\
        &\qquad \qquad- \left(\sum_{j \in \cE_{r,i}} \pi_i r_{i,j} \pi_j + \tau \cH(\pi_i )\right) \Bigg]
        \\
        &\underset{(i)}{\leq} \max_{i \in \cN}\Biggl[\max_{\pi_i' \in \Delta(\cA_i)} \left(\sum_{j \in \cE_{r,i}} \pi_i' r_{i,j} \pi_{\tau, j}^\star + \tau \cH(\pi_i')\right) + \sum_{j \in \cE_{r,i}} R \norm{\pi_j - \pi_{\tau, j}^\star}_1  
        \\
        &\qquad\qquad- \left(\sum_{j \in \cE_{r,i}} \pi_i r_{i,j} \pi_j + \tau \cH(\pi_i )\right) \Biggr]
        \\
        &\underset{(ii)}{\leq} \max_{i \in \cN}\Bigg[\max_{\pi_i' \in \Delta(\cA_i)} \left(\sum_{j \in \cE_{r,i}} \pi_i' r_{i,j} \pi_{\tau, j}^\star + \tau \cH(\pi_i')\right)  - \left(\sum_{j \in \cE_{r,i}} \pi_{\tau, i}^\star r_{i,j} \pi_{\tau, j}^\star + \tau \cH(\pi_i )\right) 
        \\
        &\qquad \qquad + 2\sum_{j \in \cE_{r,i}} R \norm{\pi_j - \pi_{\tau, j}^\star}_1 + R \norm{\pi_i - \pi_{\tau, i}^\star}_1  \Bigg]
        \\
        &\underset{(iii)}{\leq} \cO\left(R\sqrt{\KL(\pi_\tau^\star, \pi)}\right) 
        \\
        &\qquad \qquad+  \max_{i \in \cN}\left[\max_{\pi_i' \in \Delta(\cA_i)} \left(\sum_{j \in \cE_{r,i}} \pi_i' r_{i,j} \pi_{\tau, j}^\star + \tau \cH(\pi_i')\right)  - \left(\sum_{j \in \cE_{r,i}} \pi_{\tau, i}^\star r_{i,j} \pi_{\tau, j}^\star + \tau \cH(\pi_i )\right) \right]    \\&\underset{(iv)}{\leq} \cO\left(R\sqrt{\KL(\pi_\tau^\star, \pi)}\right) 
        \\
        &\quad +  \max_{i \in \cN}\Biggl[\max_{\pi_i' \in \Delta(\cA_i)} \left(\sum_{j \in \cE_{r,i}} \pi_i' r_{i,j} \pi_{\tau, j}^\star + \tau \cH(\pi_i')\right)  \\
        &\qquad \qquad \qquad- \left(\sum_{j \in \cE_{r,i}} \pi_{\tau, i}^\star r_{i,j} \pi_{\tau, j}^\star + \tau \cH(\pi_{\tau, i}^\star )\right)+ \sqrt{3}\tau\log |\cA_i| \sqrt{\KL(\pi_{\tau,i}^\star, \pi_i)} \Biggr]
        \\
        &\leq \cO\left( \left(\tau \max_{i \in \cN} \log |\cA_i| + R\right) \sqrt{\KL(\pi_\tau^\star, \pi)}\right), 
        \end{align*}
        \endgroup}  
        
    \noindent where $(i)$ and $(ii)$ are due to the definition of $R$. Meanwhile, $(iii)$ holds by the Pinsker inequality, and $(iv)$ holds by bounding the difference in Shannon entropy via KL divergence, \cite[Theorem 2]{reeb2015tight}. 
\end{proof}
\begin{lemma}[\cite{leonardos2021exploration, ao2022asynchronous}]
\label{lem:sumdiff}
    For any zero-sum NG with $( \cG = (\cN, \cE_r), \cA, (r_{i, j})_{(i, j)  \in \cE_r})$, for any {joint product} policies $\mu, \nu$, the following holds:
    \begin{align*}
        \sum_{i \in \cN}\left(r_i(\mu_i, \nu_{-i}) + r_i(\nu_i, \mu_{-i})\right) = 0.
    \end{align*}
\end{lemma}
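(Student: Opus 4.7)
The plan is to leverage the fact that in a zero-sum NG, the global identity $\sum_{i \in \cN} r_i(\pi) = 0$ holds for \emph{every} product policy $\pi \in \prod_{i \in \cN}\Delta(\cA_i)$, together with the bilinear expansion of $r_i$ afforded by the polymatrix structure in \eqref{equ:expected_reward}. The key trick is to evaluate the zero-sum identity at a judiciously chosen product policy that mixes $\mu$ and $\nu$, so that the cross terms $r_i(\mu_i,\nu_{-i})$ and $r_i(\nu_i,\mu_{-i})$ appear automatically in the expansion.

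Concretely, I would introduce the averaged product policy $\pi = (\pi_i)_{i\in\cN}$ defined by $\pi_i := \tfrac{1}{2}\mu_i + \tfrac{1}{2}\nu_i$, which is a legitimate element of $\prod_{i\in\cN}\Delta(\cA_i)$ by convexity of each simplex. Applying $\sum_{i\in\cN} r_i(\pi) = 0$ and the decomposition $r_i(\pi) = \sum_{j \in \cE_{r,i}} \pi_i^\intercal r_{i,j}\pi_j$, bilinearity in both slots yields
$$0 \;=\; \tfrac{1}{4}\sum_{i\in\cN}\sum_{j\in\cE_{r,i}} \bigl[\mu_i^\intercal r_{i,j}\mu_j + \mu_i^\intercal r_{i,j}\nu_j + \nu_i^\intercal r_{i,j}\mu_j + \nu_i^\intercal r_{i,j}\nu_j\bigr].$$
The first and fourth groups of terms collapse to $\sum_{i\in\cN} r_i(\mu)$ and $\sum_{i\in\cN} r_i(\nu)$, both of which vanish by applying the zero-sum property once to $\mu$ and once to $\nu$. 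The two remaining groups are precisely $\sum_{i\in\cN} r_i(\mu_i,\nu_{-i})$ and $\sum_{i\in\cN} r_i(\nu_i,\mu_{-i})$, so clearing the $\tfrac{1}{4}$ factor produces the claim. There is no real obstacle here; the only thing to verify is that the mixture $\pi_i$ remains a valid product-policy component so the zero-sum hypothesis can be invoked, which is immediate.
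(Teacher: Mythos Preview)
Your argument is correct. The averaging trick $\pi_i = \tfrac{1}{2}(\mu_i+\nu_i)$, combined with the bilinearity from \eqref{equ:expected_reward} and three applications of the global zero-sum identity $\sum_{i\in\cN} r_i(\cdot)=0$ (at $\pi$, at $\mu$, and at $\nu$), cleanly isolates the two cross terms and yields the claim. The identification of the second and third groups with $\sum_i r_i(\mu_i,\nu_{-i})$ and $\sum_i r_i(\nu_i,\mu_{-i})$ is exactly right, since $j\in\cE_{r,i}$ forces $j\neq i$.

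The paper does not supply its own proof of this lemma; it is quoted from \cite{leonardos2021exploration,ao2022asynchronous}. Your polarization-style argument is a clean, self-contained proof that avoids any edgewise bookkeeping (e.g., trying to pair $r_{i,j}$ with $r_{j,i}$, which would not work directly since a zero-sum polymatrix game need not be pairwise zero-sum). It is a nice way to see why only the \emph{global} zero-sum property is needed.
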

By \Cref{lem:sumdiff} and the definition of $r_{\tau, i}$ (\Cref{eqn:rtaudef}), we can derive 
\begin{align}
    \sum_{i \in \cN}\left( r_{\tau, i} (\pi_i^{(t)}, \pi^{\star}_{\tau, -i}) +  r_{\tau, i} (\pi^{\star}_{\tau,i}, \pi_{-i}^{(t)})\right) = 0. \label{eqn:sumzero}
\end{align}
Analogous to the method used in \Cref{claim:insideOmega}, we can show that the QRE, $\pi^\star_\tau$, belongs to $\prod_{i \in \cN} \Omega_i$. Furthermore, we can bound $g_i^{(t)}(a_i)$ for all $i \in \cN$ and $a_i \in \cA_i$ by
\begin{align*}
    g_i^{(t)}(a_i) &\leq R- \tau \log {\pi}_{i}^{(t)}(a_i) \leq R - \tau \log \left( \frac{1}{|\cA_i|} \exp\left(-\frac{R}{\tau} \right)\right) 
    \\
    &= 2R + \tau \log(|\cA_i|)
\end{align*}
so that $\eta^{(t)} g_i^{(t)}(a_i) \leq 1$ for all $i \in  \cN$, $a_i \in \cA_i$, and $t \geq 0$. Therefore, we can apply \Cref{lem:reg-ineq} below. 

\begin{lemma}[\cite{luo2022online}, Theorem 2]
\label{lem:reg-ineq}
    For a convex set $\Omega \subseteq \Delta(\cA)$, $\eta \mu 
\preceq \pmb{1} \in  \RR^{|\cA|}$, and any $\pi \in \Omega$, define 
$$\pi' = \argmax_{\tilde\pi \in \Omega}\left({\tilde\pi}^{\intercal} \mu - \frac{1}{\eta}\KL(\tilde\pi, \pi)\right).$$ 
Then, for any $\nu \in \Omega$, the following holds:
    \begin{align*}
        (\nu - \pi)^\intercal g \leq \frac{\KL(\nu, \pi) - \KL(\nu, \pi')}{\eta} + \eta \sum_{ a \in \cA} \pi(a) g(a)^2. 
    \end{align*}
\end{lemma}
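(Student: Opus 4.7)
The plan is to prove this in the standard three-step mirror-descent style. I take it that the $\mu$ in the definition of $\pi'$ and the $g$ in the bound are the same vector; writing $g$ throughout, the strategy is to split $(\nu-\pi)^{\intercal} g = (\nu-\pi')^{\intercal} g + (\pi'-\pi)^{\intercal} g$ and control each piece separately.

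First, I would invoke the first-order optimality condition for $\pi'$. Since $\tilde\pi \mapsto \tilde\pi^{\intercal} g - \frac{1}{\eta}\KL(\tilde\pi,\pi)$ is concave in $\tilde\pi$ and $\Omega$ is convex, the optimality of $\pi' \in \Omega$ yields
\begin{align*}
(\nu-\pi')^{\intercal} g \;\leq\; \frac{1}{\eta}\,(\nu-\pi')^{\intercal}(\log \pi' - \log \pi + \pmb{1})
\end{align*}
for every $\nu \in \Omega$, where I used $\nabla_{\tilde\pi}\KL(\tilde\pi,\pi) = \log\tilde\pi - \log\pi + \pmb{1}$. The $\pmb{1}$ term drops because both $\nu,\pi' \in \Delta(\cA)$ sum to one, and the rest rearranges via the three-point identity $(\nu-\pi')^{\intercal}(\log\pi' - \log\pi) = \KL(\nu,\pi) - \KL(\nu,\pi') - \KL(\pi',\pi)$. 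This gives
\begin{align*}
(\nu-\pi')^{\intercal} g \;\leq\; \frac{1}{\eta}\bigl[\KL(\nu,\pi) - \KL(\nu,\pi') - \KL(\pi',\pi)\bigr].
\end{align*}

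The remaining task is to show $(\pi'-\pi)^{\intercal} g - \frac{1}{\eta}\KL(\pi',\pi) \leq \eta \sum_{a \in \cA}\pi(a)\,g(a)^2$; adding this to the previous display yields exactly the claim. Because $\pi'$ is the maximizer over $\Omega$ and $\Omega \subseteq \Delta(\cA)$, I can upper-bound this quantity by the maximum over the whole simplex: $\max_{p \in \Delta(\cA)}\{(p-\pi)^{\intercal} g - \frac{1}{\eta}\KL(p,\pi)\}$. This unconstrained problem has the closed-form maximizer $p^*(a) \propto \pi(a)\exp(\eta g(a))$, with optimal value $\frac{1}{\eta}\log\bigl(\sum_a \pi(a)\exp(\eta g(a))\bigr) - \pi^{\intercal} g$.

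The main obstacle, as I see it, is converting this log-sum-exp into the quadratic bound $\eta \sum_a \pi(a)g(a)^2$. This is where the assumption $\eta g \preceq \pmb{1}$ enters: the elementary inequality $e^x \leq 1 + x + x^2$ valid for all $x \leq 1$ gives $\sum_a \pi(a)\exp(\eta g(a)) \leq 1 + \eta\,\pi^{\intercal} g + \eta^2 \sum_a \pi(a)g(a)^2$, and then $\log(1+y) \leq y$ produces
\begin{align*}
\tfrac{1}{\eta}\log\Bigl(\textstyle\sum_a \pi(a)\exp(\eta g(a))\Bigr) - \pi^{\intercal} g \;\leq\; \eta \sum_a \pi(a)g(a)^2,
\end{align*}
which is exactly the desired stability bound. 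Combining the two inequalities finishes the proof. The bookkeeping is routine; the only place any nontrivial choice is made is in the two-term Taylor bound on $e^x$, which is precisely what the hypothesis $\eta g \preceq \pmb{1}$ is tailored for.
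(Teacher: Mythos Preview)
The paper does not prove this lemma itself; it is quoted as a black box from \cite{luo2022online} (Theorem 2 there). Your argument is the standard mirror-descent proof and is correct: the first-order optimality condition for $\pi'$ combined with the three-point identity handles $(\nu-\pi')^{\intercal}g$, and the log-partition bound via $e^{x}\le 1+x+x^{2}$ for $x\le 1$ (which is exactly what $\eta g\preceq \pmb{1}$ supplies) handles $(\pi'-\pi)^{\intercal}g-\tfrac{1}{\eta}\KL(\pi',\pi)$. You also correctly noted the notational slip in the statement, where $\mu$ and $g$ are meant to be the same vector.
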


Now, we state the theorem on the iteration complexity of \Cref{alg:polymatrix-game-QRE} to obtain an $\epsilon$-NE. 
\begin{theorem}
\label{thm:mwu-fixed}
The last iterate of \Cref{alg:polymatrix-game-QRE} requires no more than $\tilde{\cO}\left(nR^4 / \epsilon^4 \right)$ iterations to achieve an $\epsilon$-NE of $(\cG, \cA, (r_{i,j})_{(i, j) \in \cE_{r}})$.
\end{theorem}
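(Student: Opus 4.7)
The plan is to use $V^{(t)} := \sum_{i\in\cN}\KL(\pi^\star_{\tau,i},\pi_i^{(t)})$ as a Lyapunov function, where $\pi^\star_\tau$ is the unique QRE of the $\tau$-regularized zero-sum NG; I would first show $V^{(t)}=\tilde{\cO}(1/t)$, and then chain \Cref{prop:QREvsKL} with \Cref{lem:NE-QRE} to convert this KL decay into an $\epsilon$-NE guarantee after tuning $\tau$. The first step is to apply \Cref{lem:reg-ineq} in per-player form with $\nu=\pi^\star_{\tau,i}$, $\pi=\pi_i^{(t)}$, $\pi'=\pi_i^{(t+1)}$, $g=g_i^{(t)}$, and $\Omega=\Omega_i$. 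Here \Cref{claim:insideOmega} provides three things simultaneously: the identification of the MWU update with the proximal step, the invariance $\pi_i^{(t+1)}\in\Omega_i$, and (combined with $\pi^\star_\tau\in\prod_i\Omega_i$) the admissibility condition $\eta^{(t)} g_i^{(t)}(a_i)\leq 1$ coming from the choice $K\geq 2R/\tau+\log\max_i|\cA_i|$. Summing over players then yields
\begin{align*}
\sum_{i\in\cN}(\pi^\star_{\tau,i}-\pi_i^{(t)})^\intercal g_i^{(t)} \;\leq\; \frac{V^{(t)}-V^{(t+1)}}{\eta^{(t)}} + \eta^{(t)}\, n\, G^2,
\end{align*}
with $G^2=\cO((R+\tau\log\max_i|\cA_i|)^2)$ uniformly bounding $\sum_{a_i}\pi_i^{(t)}(a_i)(g_i^{(t)}(a_i))^2$.

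The core step is a short algebraic identity. Using $g_i^{(t)}=\br_i\pi^{(t)}-\tau\log\pi_i^{(t)}$ together with the relation $\tau\cH(\pi^\star_{\tau,i})-\tau(\pi^\star_{\tau,i})^\intercal\log\pi_i^{(t)}=\tau\cH(\pi_i^{(t)})+\tau\KL(\pi^\star_{\tau,i},\pi_i^{(t)})$, one verifies
\begin{align*}
\sum_{i\in\cN}(\pi^\star_{\tau,i}-\pi_i^{(t)})^\intercal g_i^{(t)} \;=\; \tilde D^{(t)}+\tau V^{(t)},
\qquad
\tilde D^{(t)}:=\sum_{i\in\cN}\!\bigl[r_{\tau,i}^{(1)}(\pi^\star_{\tau,i},\pi_{-i}^{(t)})-r_{\tau,i}^{(1)}(\pi^{(t)})\bigr],
\end{align*}
where $r_{\tau,i}^{(1)}(\pi):=\pi_i^\intercal\br_i\pi+\tau\cH(\pi_i)$. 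The main obstacle is to establish $\tilde D^{(t)}\geq 0$, since this is what unlocks the contraction. My plan here is to pair a ``reflected'' identity with the QRE optimality of $\pi^\star_\tau$: define $\tilde D^{\prime(t)}:=\sum_i[r_{\tau,i}^{(1)}(\pi_i^{(t)},\pi^\star_{\tau,-i})-r_{\tau,i}^{(1)}(\pi^\star_\tau)]$, which is non-positive because the entropy cross-terms in $r_{\tau,i}$ do not depend on player $i$'s own policy and hence $\pi^\star_{\tau,i}$ also maximizes $r_{\tau,i}^{(1)}(\cdot,\pi^\star_{\tau,-i})$; on the other hand, expanding $\tilde D^{(t)}+\tilde D^{\prime(t)}$ and invoking \Cref{lem:sumdiff} with $(\mu,\nu)=(\pi^\star_\tau,\pi^{(t)})$ kills the bilinear cross-terms, the zero-sum identity $\sum_i r_i(\pi)=0$ at product policies cancels the diagonal bilinear terms, and the symmetric entropy contributions cancel exactly, giving $\tilde D^{(t)}+\tilde D^{\prime(t)}=0$. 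Chaining these two facts forces $\tilde D^{(t)}\geq 0$, and dropping it in the inequality above yields the clean contraction
\begin{align*}
V^{(t+1)} \;\leq\; (1-\eta^{(t)}\tau)V^{(t)} + (\eta^{(t)})^2 n G^2.
\end{align*}

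Plugging in $\eta^{(t)}=1/(\tau(t+K))$ converts the contraction to $(t+K)V^{(t+1)}\leq(t+K-1)V^{(t)}+nG^2/(\tau^2(t+K))$; a telescoping sum together with the harmonic-sum bound $\sum_{s<t}1/(s+K)=\cO(\log t)$ then gives $V^{(t)}=\tilde{\cO}(nG^2/(\tau^2 t))$. Feeding this into \Cref{prop:QREvsKL} yields $\textsf{Matrix-QRE-Gap}_\tau=\tilde{\cO}(R\sqrt{n}\,G/(\tau\sqrt{t}))$, and then \Cref{lem:NE-QRE} upgrades this to $\textsf{Matrix-NE-Gap}=\tilde{\cO}(R^2\sqrt{n}/(\tau\sqrt{t}))+\tau\log\max_i|\cA_i|$ once $\tau$ is chosen small enough that $G=\cO(R)$. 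Setting $\tau=\Theta(\epsilon/\log\max_i|\cA_i|)$ bounds the additive bias by $\epsilon/2$, and solving $R^2\sqrt{n}/(\tau\sqrt{t})=\Theta(\epsilon)$ for $t$ yields the claimed $t=\tilde{\cO}(nR^4/\epsilon^4)$ last-iterate complexity.
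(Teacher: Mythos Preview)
Your proposal is correct and follows essentially the same route as the paper: apply \Cref{lem:reg-ineq} per player, use the algebraic identity linking $(\pi^\star_{\tau,i}-\pi_i^{(t)})^\intercal g_i^{(t)}$ to the regularized payoff gap plus $\tau\KL$, invoke \Cref{lem:sumdiff} and the QRE optimality of $\pi^\star_\tau$ to drop the aggregate gap (your $\tilde D^{(t)}\geq 0$ is exactly the paper's step \eqref{eqn:change}--\eqref{eqn:recursive1}), then unroll with $\eta^{(t)}=1/(\tau(t+K))$ and finish via \Cref{prop:QREvsKL} and \Cref{lem:NE-QRE}. One small slip: the intermediate relation you quote should read $-\tau(\pi^\star_{\tau,i})^\intercal\log\pi_i^{(t)}=\tau\cH(\pi^\star_{\tau,i})+\tau\KL(\pi^\star_{\tau,i},\pi_i^{(t)})$ (your version has $\tau\cH(\pi^\star_{\tau,i})$ on the wrong side), but your final identity $\sum_i(\pi^\star_{\tau,i}-\pi_i^{(t)})^\intercal g_i^{(t)}=\tilde D^{(t)}+\tau V^{(t)}$ is correct.
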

\begin{proof} 
First, we bound the difference between $ r_{\tau, i} (\pi^{\star}_{\tau,i}, \pi_{-i}^{(t)})$ and $r_{\tau, i} (\pi^{(t)})$ as follows:
$$
\begin{aligned}
 &r_{\tau, i} (\pi^{\star}_{\tau,i}, \pi_{-i}^{(t)})  - r_{\tau, i} (\pi^{(t)})   \\&=\left(\pi^{\star}_{\tau,i} - \pi^{(t)}_{i}\right)^{\intercal} \br_i \pi^{(t)} +\tau\left(\sum_{a_i \in \cA_i} \pi_i^{(t)}(a_i) \log \pi_i^{(t)}(a_i)-\sum_{a_i \in \cA_i} \pi^{\star}_{\tau,i}(a_i) \log \pi^{\star}_{\tau,i}(a_i)\right)
\\
& =\left(\pi^{\star}_{\tau,i} - \pi^{(t)}_{i}\right)^{\intercal} g_i^{(t)}-\tau \mathrm{KL}\left(\pi^{\star}_{\tau,i}, \pi_i^{(t)}\right)
\\
& \leq \frac{\mathrm{KL}\left(\pi^{\star}_{\tau,i}, \pi_i^{(t)}\right)-\mathrm{KL}\left(\pi^{\star}_{\tau,i}, \pi_i^{(t+1)}\right)}{\eta^{(t)}}+\eta^{(t)} \sum_{a_i \in \cA_i} \pi_i^{(t)}(a_i)\left(g_i^{(t)}(a_i)\right)^2-\tau \mathrm{KL}\left(\pi^{\star}_{\tau,i}, \pi_i^{(t)}\right)
\\
&\leq \frac{\mathrm{KL}\left(\pi^{\star}_{\tau,i}, \pi_i^{(t)}\right)-\mathrm{KL}\left(\pi^{\star}_{\tau,i}, \pi_i^{(t+1)}\right)}{\eta^{(t)}}+\eta^{(t)} \left(2R + \tau \max_{i \in \cN}\log(|\cA_i|)\right)^2 - \tau \mathrm{KL}\left(\pi^{\star}_{\tau,i}, \pi_i^{(t)}\right).
\end{aligned}
$$
Here, the penultimate inequality holds by \Cref{lem:reg-ineq} since $\pi_{\tau, i} \in \Omega_i$ and $\eta^{(t)} g_i^{(t)}(a_i) \leq 1$ holds for all $t > 0$ and $a_i \in \cA_i$. Therefore, we have  
\numbering{\arxiv{\begin{align}
    &\mathrm{KL}\left(\pi^{\star}_{\tau,i}, \pi_i^{(t+1)}\right) \nonumber
    \\
    &\leq \left(1-\eta^{(t)} \tau\right) \mathrm{KL}\left(\pi^{\star}_{\tau,i}, \pi_i^{(t)}\right) + \eta^{(t)}(r_{\tau, i} (\pi^{(t)}) - r_{\tau, i} (\pi^{\star}_{\tau,i}, \pi_{-i}^{(t)}))  + (\eta^{(t)})^2 \left(2R + \tau \max_{i \in \cN}\log(|\cA_i|)\right)^2 \label{eqn:KLi}
\end{align}}}
\tnumbering{\begin{align}
    &\mathrm{KL}\left(\pi^{\star}_{\tau,i}, \pi_i^{(t+1)}\right) \nonumber
    \\
    &\leq \left(1-\eta^{(t)} \tau\right) \mathrm{KL}\left(\pi^{\star}_{\tau,i}, \pi_i^{(t)}\right) + \eta^{(t)}(r_{\tau, i} (\pi^{(t)}) - r_{\tau, i} (\pi^{\star}_{\tau,i}, \pi_{-i}^{(t)})) \nonumber  
    \\
    &\qquad + (\eta^{(t)})^2 \left(2R + \tau \max_{i \in \cN}\log(|\cA_i|)\right)^2 \label{eqn:KLi}
\end{align}}
\neurips{\begin{align}
    &\mathrm{KL}\left(\pi^{\star}_{\tau,i}, \pi_i^{(t+1)}\right) \nonumber
    \\
    &\leq \left(1-\eta^{(t)} \tau\right) \mathrm{KL}\left(\pi^{\star}_{\tau,i}, \pi_i^{(t)}\right) + \eta^{(t)}(r_{\tau, i} (\pi^{(t)}) - r_{\tau, i} (\pi^{\star}_{\tau,i}, \pi_{-i}^{(t)})) \nonumber  
    \\
    &\qquad + (\eta^{(t)})^2 \left(2R + \tau \max_{i \in \cN}\log(|\cA_i|)\right)^2 \label{eqn:KLi}
\end{align}}
for all $i \in \cN$ and $t \geq 0$. If we sum \eqref{eqn:KLi} over $i \in \cN$, we have 
\numbering{\arxiv{\begin{align}
    &\mathrm{KL}\left(\pi^{\star}_{\tau}, \pi^{(t+1)}\right) \nonumber
    \\
    &\leq \left(1-\eta^{(t)} \tau\right) \mathrm{KL}\left(\pi^{\star}_{\tau}, \pi^{(t)}\right) + n(\eta^{(t)})^2 \left(2R + \tau \max_{i \in \cN}\log(|\cA_i|)\right)^2 +  \sum_{i \in \cN} \eta^{(t)}(r_{\tau, i} (\pi^{(t)}) - r_{\tau, i} (\pi^{\star}_{\tau,i}, \pi_{-i}^{(t)}))\nonumber
 \\
 &= \left(1-\eta^{(t)} \tau\right) \mathrm{KL}\left(\pi^{\star}_{\tau}, \pi^{(t)}\right) + n(\eta^{(t)})^2 \left(2R + \tau \max_{i \in \cN}\log(|\cA_i|)\right)^2+  \sum_{i \in \cN} \eta^{(t)}(r_{\tau, i} (\pi_{i}^{(t)}, \pi^{\star}_{\tau, -i}) - r_{\tau, i} (\pi^{\star}_{\tau}) ) \label{eqn:change}
 \\
 &\leq \left(1-\eta^{(t)} \tau\right) \mathrm{KL}\left(\pi^{\star}_{\tau}, \pi^{(t)}\right) + n(\eta^{(t)})^2 \left(2R + \tau \max_{i \in \cN}\log(|\cA_i|)\right)^2
 \label{eqn:recursive1}
\end{align}
}}
\tnumbering{
\begin{align}
    &\mathrm{KL}\left(\pi^{\star}_{\tau}, \pi^{(t+1)}\right) \nonumber
    \\
    &\leq \left(1-\eta^{(t)} \tau\right) \mathrm{KL}\left(\pi^{\star}_{\tau}, \pi^{(t)}\right) + n(\eta^{(t)})^2 \left(2R + \tau \max_{i \in \cN}\log(|\cA_i|)\right)^2 \nonumber \\&\qquad+  \sum_{i \in \cN} \eta^{(t)}(r_{\tau, i} (\pi^{(t)}) - r_{\tau, i} (\pi^{\star}_{\tau,i}, \pi_{-i}^{(t)}))\nonumber
 \\
 &= \left(1-\eta^{(t)} \tau\right) \mathrm{KL}\left(\pi^{\star}_{\tau}, \pi^{(t)}\right) + n(\eta^{(t)})^2 \left(2R + \tau \max_{i \in \cN}\log(|\cA_i|)\right)^2\nonumber\\&\qquad+  \sum_{i \in \cN} \eta^{(t)}(r_{\tau, i} (\pi_{i}^{(t)}, \pi^{\star}_{\tau, -i}) - r_{\tau, i} (\pi^{\star}_{\tau}) ) \label{eqn:change}
 \\
 &\leq \left(1-\eta^{(t)} \tau\right) \mathrm{KL}\left(\pi^{\star}_{\tau}, \pi^{(t)}\right) + n(\eta^{(t)})^2 \left(2R + \tau \max_{i \in \cN}\log(|\cA_i|)\right)^2
 \label{eqn:recursive1}
\end{align}
}
\neurips{
\begin{align}
    &\mathrm{KL}\left(\pi^{\star}_{\tau}, \pi^{(t+1)}\right) \nonumber
    \\
    &\leq \left(1-\eta^{(t)} \tau\right) \mathrm{KL}\left(\pi^{\star}_{\tau}, \pi^{(t)}\right) + n(\eta^{(t)})^2 \left(2R + \tau \max_{i \in \cN}\log(|\cA_i|)\right)^2 \nonumber \\&\qquad+  \sum_{i \in \cN} \eta^{(t)}(r_{\tau, i} (\pi^{(t)}) - r_{\tau, i} (\pi^{\star}_{\tau,i}, \pi_{-i}^{(t)}))\nonumber
 \\
 &= \left(1-\eta^{(t)} \tau\right) \mathrm{KL}\left(\pi^{\star}_{\tau}, \pi^{(t)}\right) + n(\eta^{(t)})^2 \left(2R + \tau \max_{i \in \cN}\log(|\cA_i|)\right)^2\nonumber\\&\qquad+  \sum_{i \in \cN} \eta^{(t)}(r_{\tau, i} (\pi_{i}^{(t)}, \pi^{\star}_{\tau, -i}) - r_{\tau, i} (\pi^{\star}_{\tau}) ) \label{eqn:change}
 \\
 &\leq \left(1-\eta^{(t)} \tau\right) \mathrm{KL}\left(\pi^{\star}_{\tau}, \pi^{(t)}\right) + n(\eta^{(t)})^2 \left(2R + \tau \max_{i \in \cN}\log(|\cA_i|)\right)^2
 \label{eqn:recursive1}
\end{align}
}
for all $t \geq 0$ where \eqref{eqn:change} holds due to \eqref{eqn:sumzero} and $\sum_{i \in \cN} r_{\tau, i} (\pi^{(t)}) =\sum_{i \in \cN}r_{\tau, i} (\pi^{\star}_{\tau}) =0$, and \eqref{eqn:recursive1} holds due to that $\pi^\star_\tau$ is the NE for the game that having the payoff $r_{\tau, i}$.
If we recursively apply \eqref{eqn:recursive1}, we have
\begin{align*}
    &\mathrm{KL}\left(\pi^{\star}_{\tau}, \pi^{(t+1)}\right)
    \\
    &\leq \prod_{l= 0}^t (1- \eta^{(l)}\tau) \KL(\pi_\tau^\star, \pi^{(0)})  +  n\left(2R + \tau \max_{i \in \cN}\log(|\cA_i|)\right)^2\sum_{l = 0}^{t} (\eta^{(l)})^2 \prod_{s = l+1}^t (1-\eta^{(s)}\tau)
    \\
    &= \prod_{l= 0}^t (1- 1/(l + K)) \KL(\pi_\tau^\star, \pi^{(0)}) \\&\qquad +  n\left(2R + \tau \max_{i \in \cN}\log(|\cA_i|)\right)^2\sum_{l = 0}^{t} \frac{1}{\tau^2} (1/(l+K)^2) \prod_{s = l+1}^t (1-1/(s+K))
    \\
    &\leq K/(K+t) \KL(\pi_\tau^\star, \pi^{(0)})  \\&\qquad+  n\left(2R + \tau \max_{i \in \cN}\log(|\cA_i|)\right)^2 \frac{1}{\tau^2} \log\left( (t+K+1)/(K)\right) (1/(t+K))
    \\
    &= \tilde{\cO}\left( \left(K \sum_{i \in \cN} \log (|\cA_i|)+ n\max\left(R^2/\tau^2, \max_{i \in \cN} \log^2(|\cA_i|) \right)\right)/t\right) 
    \\
    &= \tilde{\cO}\left( n\max\left(R^2/\tau^2, \max_{i \in \cN} \log^2(|\cA_i|)\right)/t\right).
\end{align*}
Therefore, if we iterate \Cref{alg:polymatrix-game-QRE} for $T$ times, by \Cref{prop:QREvsKL} and  \Cref{lem:NE-QRE}, we obtain an
\begin{align*}
    \tilde{\cO}\left(\tau \max_{i \in \cN} \log |\cA_i|  + \left(\tau \max_{i \in \cN} \log |\cA_i|  + R\right) \sqrt{ n\max\left(R^2/\tau^2, \max_{i \in \cN} \log^2(|\cA_i|)\right)/T} \right) 
\end{align*}
approximate NE.  Therefore, if we want to obtain an $\epsilon$-NE ($\epsilon > 0$) for the matrix game in the last iterate, we need to have $\tilde{\cO}\left(nR^4/ \epsilon^4 \right)$ iteration.
\end{proof}

\numbering{\subsubsection{Diminishing regularization} }
\tnumbering{\subsection{Diminishing regularization}}

One might also consider the algorithm with a diminishing choice of $\tau$. We provide \Cref{alg:polymatrix-game-decreasing-regular}  for this case. 

\begin{algorithm}[H]
\caption{MWU for zero-sum NGs with diminishing entropy regularization}
\label{alg:polymatrix-game-decreasing-regular}
\begin{algorithmic}
\STATE{Choose $K =(R + 2\max_{i \in \cN} \log |\cA_i|)^2$} 
\STATE{Choose ${\pi}_{i}^{(0)}$ as a uniform distribution for all $i \in \cN$}
\FOR{timestep $t = 0, 1, \dots$}
\STATE{ Define $\tau^{(t)} = (t+K)^{-1/6}$ and $\eta^{(t)} = (t+K)^{-1/2}$}
\STATE {Update $g_i^{(t)} =\br_i \pi^{(t)}  - \tau^{(t)} \log {\pi}_{i}^{(t)}$ for all $i \in \cN$ }
\STATE {Define $\Omega_i^{(t)} = \left\{ \pi_i \in \Delta(\cA_i) \mid \pi_i(a_i) \geq \frac{1}{|\cA_i|(t+K)^2} \text{ for all } a_i \in \cA_i \right\}$ }
\STATE {Update the policy as ${\pi}_{i}^{(t+1)} = \argmax_{\pi_i \in \Omega_{i}^{(t)}}\left(\pi_i^\intercal g_i^{(t)} - \frac{1}{\eta^{(t)}} \KL(\pi_i, {\pi}_{i}^{(t)})\right)$ for all $i \in \cN$} 
\ENDFOR
\end{algorithmic}
\end{algorithm}
Let $\pi^{\star}_{\tau^{(t)}}$  be the unique NE in the policy space $\prod_{i \in \cN} \Omega_i^{(t)}$ for the game that has the payoff $r_{\tau^{(t)}, i}$ as 
\begin{align*}
r_{\tau^{(t)}, i}(\pi) &= r_{i}(\pi) + \tau^{(t)}\cH(\pi_i) - \sum_{j \in \cE_{r,i}}\frac{\tau^{(t)}}{|\cE_{r, j}|} \cH(\pi_j) = \pi_i^\intercal \br_i \pi + \tau^{(t)} \cH(\pi_i) - \sum_{j \in \cE_{r,i}}\frac{\tau^{(t)}}{|\cE_{r, j}|} \cH(\pi_j).
\end{align*}
Moreover, we can bound $g_i^{(t)}(a_i)$  for all $i \in \cN$ and $a_i \in \cA_i$:
\begin{align*}
    g_i^{(t)}(a_i) &\leq R- (t+K)^{-1/6}  \log {\pi}_{i}^{(t)}(a_i) \leq R +  2(t+K)^{-1/6} \log (|\cA_i|(t+K)) 
    \\
    &\leq  R + 2\max_{i \in \cN} \log (|\cA_i|),
\end{align*}
so that $\eta^{(t)} g_i^{(t)}(a_i) \leq 1$ for all $i \in  \cN$, $a_i \in \cA_i$, and $t \geq 0$. Therefore, we can apply \Cref{lem:reg-ineq}. 
\begin{theorem}
    \label{thm:mwu-diminish}
    The last iterate of \Cref{alg:polymatrix-game-decreasing-regular} requires no more than $\tilde{\cO}\left(n^3 R^{12}/\epsilon^6\right)$ iterations to achieve an $\epsilon$-NE of $(\cG, \cA, (r_{i,j})_{(i, j) \in \cE_{r}})$. 
\end{theorem}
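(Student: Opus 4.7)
The plan is to run the same template as in the proof of \Cref{thm:mwu-fixed}, but track how the reference point $\pi^{\star}_{\tau^{(t)}}$ drifts as the regularization $\tau^{(t)}$ shrinks. First, I would verify that the enlarged feasible set $\Omega_i^{(t)}$ together with the choice of $K$ still gives $\eta^{(t)}g_i^{(t)}(a_i)\le 1$ uniformly (this is already noted immediately before the theorem) and also guarantees $\pi^{\star}_{\tau^{(t)}}\in\prod_i\Omega_i^{(t)}$ for all $t\ge 0$, by reproducing the induction of \Cref{claim:insideOmega} with $\tau$ replaced by $\tau^{(t)}$ and the lower bound $(1/|\cA_i|)\exp(-R/\tau^{(t)})$ compared against $1/(|\cA_i|(t+K)^{2})$. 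This lets me apply \Cref{lem:reg-ineq} at each step. Combining it with the zero-sum polymatrix identity \eqref{eqn:sumzero} written for $r_{\tau^{(t)},i}$ and summing over $i\in\cN$ gives, exactly as in \eqref{eqn:recursive1}, the one-step inequality
\begin{align*}
\KL\bigl(\pi^{\star}_{\tau^{(t)}},\pi^{(t+1)}\bigr)
\;\le\;\bigl(1-\eta^{(t)}\tau^{(t)}\bigr)\KL\bigl(\pi^{\star}_{\tau^{(t)}},\pi^{(t)}\bigr)
+n(\eta^{(t)})^{2}\bigl(R+2\max_{i\in\cN}\log|\cA_i|\bigr)^{2}.
\end{align*}

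The second step is to pivot the reference point from $\pi^{\star}_{\tau^{(t)}}$ to $\pi^{\star}_{\tau^{(t+1)}}$. For this I would establish a Lipschitz-type estimate of the form $\KL\bigl(\pi^{\star}_{\tau^{(t+1)}},\pi^{\star}_{\tau^{(t)}}\bigr)=\cO\bigl((|\tau^{(t+1)}-\tau^{(t)}|/\tau^{(t)})^{2}\bigr)$, using that the regularized potential defining the QRE is $\tau^{(t)}$-strongly convex in the KL geometry on $\prod_i\Omega_i^{(t)}$, together with a reverse-Pinsker estimate on the interior sets $\Omega_i^{(t)}$ where probabilities are bounded below by $1/(|\cA_i|(t+K)^{2})$. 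A KL quasi-triangle inequality valid on such a set (with at most polylogarithmic blow-up in $t$) then converts the one-step descent above into a descent relation for $\KL(\pi^{\star}_{\tau^{(t+1)}},\pi^{(t+1)})$. With $\tau^{(t)}=(t+K)^{-1/6}$ we have $|\tau^{(t+1)}-\tau^{(t)}|\asymp(t+K)^{-7/6}$ and hence $|\tau^{(t+1)}-\tau^{(t)}|/\tau^{(t)}\asymp(t+K)^{-1}$, so the drift contributes $\tilde{\cO}((t+K)^{-2})$ per step, which is dominated by the additive noise $n(\eta^{(t)})^{2}=n(t+K)^{-1}$.

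Third, with $\eta^{(t)}\tau^{(t)}=(t+K)^{-2/3}$ and additive noise of order $n(t+K)^{-1}$, unrolling the recursion (by standard integral comparison, or by invoking \Cref{lem:sayincontraction} as in \Cref{thm:mwu-fixed}) yields
\begin{align*}
\KL\bigl(\pi^{\star}_{\tau^{(t)}},\pi^{(t)}\bigr)=\tilde{\cO}\bigl(n\,(t+K)^{-1/3}\bigr).
\end{align*}
Feeding this into \Cref{prop:QREvsKL} with $\tau=\tau^{(t)}$ gives $\textsf{Matrix-QRE-Gap}_{\tau^{(t)}}(\br,\pi^{(t)})=\tilde{\cO}\bigl(R\sqrt{n}\,(t+K)^{-1/6}\bigr)$, and then \Cref{lem:NE-QRE} adds a contribution $\tau^{(t)}\max_{i\in\cN}\log|\cA_i|=\tilde{\cO}((t+K)^{-1/6})$ to upper bound the unregularized $\textsf{Matrix-NE-Gap}$. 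Solving $R\sqrt{n}\,T^{-1/6}\lesssim\epsilon$ gives $T=\tilde{\cO}(n^{3}R^{12}/\epsilon^{6})$, matching the claim.

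The main obstacle is the perturbation estimate of step two: one needs the Lipschitz bound on $\pi^{\star}_{\tau}$ with respect to $\tau$ to be tight enough (specifically, linear in $|\tau^{(t+1)}-\tau^{(t)}|/\tau^{(t)}$) so that the accumulated drift does not swamp the contraction factor $1-\eta^{(t)}\tau^{(t)}$. This in turn relies on verifying that $\pi^{\star}_{\tau^{(t)}}$ sits in the \emph{interior} of $\Omega_i^{(t)}$ uniformly in $t$, which is precisely what the choice of $K$ and the rate $\tau^{(t)}=(t+K)^{-1/6}$ are designed to ensure. The rest of the argument is a direct, if careful, bookkeeping of the two coupled decaying sequences $\eta^{(t)}$ and $\tau^{(t)}$.
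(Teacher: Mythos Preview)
Your overall plan matches the paper's: the same one-step inequality from \Cref{lem:reg-ineq} and \eqref{eqn:sumzero}, the same accounting for the drift of the moving target $\pi^{\star}_{\tau^{(t)}}$, and the same final conversion via \Cref{prop:QREvsKL} and \Cref{lem:NE-QRE}. Where you diverge is precisely step two, and this is where there is a genuine concern. The paper does \emph{not} use any quasi-triangle inequality or reverse Pinsker; it simply adds the term $\KL(\pi^{\star}_{\tau^{(t+1)}},\pi^{(t+1)})-\KL(\pi^{\star}_{\tau^{(t)}},\pi^{(t+1)})$ to the recursion verbatim, unrolls, and bounds the accumulated sum $\sum_{l}\prod_{s>l}(1-(s+K)^{-2/3})\bigl[\KL(\pi^{\star}_{\tau^{(l+1)}},\pi^{(l+1)})-\KL(\pi^{\star}_{\tau^{(l)}},\pi^{(l+1)})\bigr]$ directly by invoking Lemmas~4 and~15 of \cite{cai2023uncoupled}, which give the bound $\max_i\log^{3}(|\cA_i|(t+K))\,(t+K)^{-1/3}$.

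Your route is fragile in two places. First, a KL quasi-triangle on $\Omega_i^{(t)}$ does not come with a polylogarithmic constant: reverse Pinsker on a set with minimum mass $1/(|\cA_i|(t+K)^{2})$ carries a factor $|\cA_i|(t+K)^{2}$, and any inequality of the form $\KL(a,c)\le C[\KL(a,b)+\KL(b,c)]$ with $C>1$ destroys the $(1-\eta^{(t)}\tau^{(t)})$ contraction in the very next step. Second, your claimed rate $\KL(\pi^{\star}_{\tau^{(t+1)}},\pi^{\star}_{\tau^{(t)}})=\cO\bigl((|\tau^{(t+1)}-\tau^{(t)}|/\tau^{(t)})^{2}\bigr)$ is stronger than what $\tau$-strong monotonicity delivers: the standard VI perturbation argument gives only a \emph{linear} bound $\KL=\cO(|\tau^{(t+1)}-\tau^{(t)}|/\tau^{(t)})=\cO((t+K)^{-1})$, hence via Pinsker $\|\pi^{\star}_{\tau^{(t+1)}}-\pi^{\star}_{\tau^{(t)}}\|_{1}=\cO((t+K)^{-1/2})$. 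Feeding this into the Bregman three-point identity yields a per-step drift of order $\tilde{\cO}((t+K)^{-1/2})$, which is too large to be absorbed by the contraction $1-(t+K)^{-2/3}$ (the steady state would scale like $(t+K)^{1/6}$ rather than decay). The paper sidesteps both issues by keeping the drift as an additive correction and bounding its \emph{sum} via the external lemma; to make your sketch go through you would need a substantially sharper QRE-Lipschitz estimate than the one your tools provide.
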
    
\begin{proof}
First, we bound the difference between $ r_{\tau^{(t)}, i} (\pi^{\star}_{\tau^{(t)},i}, \pi_{-i}^{(t)})$ and $ r_{\tau^{(t)}, i} (\pi^{(t)})$ as follows:
$$
\begin{aligned}
 &r_{\tau^{(t)}, i} (\pi^{\star}_{\tau^{(t)},i}, \pi_{-i}^{(t)})  - r_{\tau^{(t)}, i} (\pi^{(t)})  \\&=\left(\pi^{\star}_{\tau^{(t)},i} - \pi^{(t)}_{i}\right)^{\intercal} \br_i \pi^{(t)} +\tau^{(t)}\left(\sum_{a_i \in \cA_i} \pi_i^{(t)}(a_i) \log \pi_i^{(t)}(a_i)-\sum_{a_i \in \cA_i} \pi^{\star}_{\tau^{(t)},i}(a_i) \log \pi^{\star}_{\tau^{(t)},i}(a_i)\right)
\\
& =\left(\pi^{\star}_{\tau^{(t)},i} - \pi^{(t)}_{i}\right)^{\intercal} g_i^{(t)}-\tau^{(t)} \mathrm{KL}\left(\pi^{\star}_{\tau^{(t)},i}, \pi_i^{(t)}\right)
\\
& \leq \frac{\mathrm{KL}\left(\pi^{\star}_{\tau^{(t)},i}, \pi_i^{(t)}\right)-\mathrm{KL}\left(\pi^{\star}_{\tau^{(t)},i}, \pi_i^{(t+1)}\right)}{\eta^{(t)}}\\&\qquad+\eta^{(t)} \sum_{a_i \in \cA_i} \pi_i^{(t)}(a_i)\left(g_i^{(t)}(a_i)\right)^2-\tau^{(t)} \mathrm{KL}\left(\pi^{\star}_{\tau^{(t)},i}, \pi_i^{(t)}\right)
\\
&\leq \frac{\mathrm{KL}\left(\pi^{\star}_{\tau^{(t)},i}, \pi_i^{(t)}\right)-\mathrm{KL}\left(\pi^{\star}_{\tau^{(t)},i}, \pi_i^{(t+1)}\right)}{\eta^{(t)}}
\\&\qquad+\eta^{(t)} \left(R + 2(t+K)^{-1/6}\max_{i \in \cN} \log( |\cA_i|(t+K))\right)^2 -\tau^{(t)} \mathrm{KL}\left(\pi^{\star}_{\tau^{(t)},i}, \pi_i^{(t)}\right).
\end{aligned}
$$
Here, the penultimate inequality holds by \Cref{lem:reg-ineq} since $\pi_{\tau^{(t)}, i} \in \Omega_i^{(t)}$ and $\eta^{(t)} g_i^{(t)}(a_i) \leq 1$ holds for every $t > 0$ and $a_i \in \cA_i$. Therefore, we have  
\begin{align}
    &\mathrm{KL}\left(\pi^{\star}_{\tau^{(t+1)},i}, \pi_i^{(t+1)}\right) \nonumber
    \\
    &\leq \left(1-\eta^{(t)} \tau^{(t)}\right) \mathrm{KL}\left(\pi^{\star}_{\tau^{(t)},i}, \pi_i^{(t)}\right) + \eta^{(t)}(r_{\tau^{(t)}, i} (\pi^{(t)}) - r_{\tau^{(t)}, i} (\pi^{\star}_{\tau^{(t)},i}, \pi_{-i}^{(t)}))  \label{eqn:KLi-2}
    \\
    &\qquad  + (\eta^{(t)})^2 \left(R + 2(t+K)^{-1/6}\max_{i \in \cN} \log( |\cA_i|(t+K))\right)^2\nonumber
    \\
    &\qquad+ \mathrm{KL}\left(\pi^{\star}_{\tau^{(t+1)},i}, \pi_i^{(t+1)}\right)  - \mathrm{KL}\left(\pi^{\star}_{\tau^{(t)},i}, \pi_i^{(t+1)}\right)  \nonumber
\end{align}
for all $i \in \cN$ and $t \geq 0$. If we sum \eqref{eqn:KLi-2} over $i \in \cN$, we have 
\begin{align}
    &\mathrm{KL}\left(\pi^{\star}_{\tau^{(t+1)}}, \pi^{(t+1)}\right) \nonumber
    \\
    &\leq \left(1-\eta^{(t)} \tau^{(t)}\right) \mathrm{KL}\left(\pi^{\star}_{\tau^{(t)}}, \pi^{(t)}\right) + n(\eta^{(t)})^2 \left(R + 2(t+K)^{-1/6}\max_{i \in \cN} \log( |\cA_i|(t+K))\right)^2 \nonumber
    \\
    &\qquad +  \sum_{i \in \cN} \eta^{(t)}(r_{\tau^{(t)}, i} (\pi^{(t)}) - r_{\tau^{(t)}, i} (\pi^{\star}_{\tau^{(t)},i}, \pi_{-i}^{(t)})) +  \mathrm{KL}\left(\pi^{\star}_{\tau^{(t+1)}}, \pi^{(t+1)}\right)  - \mathrm{KL}\left(\pi^{\star}_{\tau^{(t)}}, \pi^{(t+1)}\right) \nonumber
 \\
 &= \left(1-\eta^{(t)} \tau^{(t)}\right) \mathrm{KL}\left(\pi^{\star}_{\tau^{(t)}}, \pi^{(t)}\right) + n(\eta^{(t)})^2 \left(R + 2(t+K)^{-1/6}\max_{i \in \cN} \log( |\cA_i|(t+K))\right)^2 \label{eqn:change-ne}
     \\
    &\qquad +  \sum_{i \in \cN} \eta^{(t)}(r_{\tau^{(t)}, i} (\pi_{i}^{(t)}, \pi^{\star}_{\tau^{(t)}, -i}) - r_{\tau^{(t)}, i} (\pi^{\star}_{\tau^{(t)}}) ) +  \mathrm{KL}\left(\pi^{\star}_{\tau^{(t+1)}}, \pi^{(t+1)}\right)  - \mathrm{KL}\left(\pi^{\star}_{\tau^{(t)}}, \pi^{(t+1)}\right) \nonumber
 \\
 &\leq \left(1-\eta^{(t)} \tau^{(t)}\right) \mathrm{KL}\left(\pi^{\star}_{\tau^{(t)}}, \pi^{(t)}\right) + n(\eta^{(t)})^2 \left(R + 2(t+K)^{-1/6}\max_{i \in \cN} \log( |\cA_i|(t+K))\right)^2\nonumber 
 \\
 &\qquad +  \mathrm{KL}\left(\pi^{\star}_{\tau^{(t+1)}}, \pi^{(t+1)}\right)  - \mathrm{KL}\left(\pi^{\star}_{\tau^{(t)}}, \pi^{(t+1)}\right),
 \label{eqn:recursive1-ne}
\end{align} 
for all $t\geq 0$ where \eqref{eqn:change-ne} holds due to \eqref{eqn:sumzero}, and \eqref{eqn:recursive1-ne} holds due to that $\pi^\star_\tau$ is the NE for the game with the payoff $r_{\tau^{(t)}, i}$. If we recursively apply \eqref{eqn:recursive1-ne}, we have
\numbering{\begin{align*}
    &\mathrm{KL}\left(\pi^{\star}_{\tau^{(t+1)}}, \pi^{(t+1)}\right)
    \\
    &\leq \prod_{l= 0}^t (1- \eta^{(l)}\tau^{(l)}) \KL(\pi_\tau^\star, \pi^{(0)})  \neurips{\\
    &\qquad}+  n\left(R + 2(t+K)^{-1/6}\max_{i \in \cN} \log( |\cA_i|(t+K))\right)^2\sum_{l = 0}^{t} (\eta^{(l)})^2 \prod_{s = l+1}^t (1-\eta^{(s)}\tau^{(s)})
    \\
    &\qquad + \sum_{l = 0}^{t} \prod_{s = l+1}^t (1-\eta^{(s)}\tau^{(s)}) \left( \mathrm{KL}\left(\pi^{\star}_{\tau^{(t+1)}}, \pi^{(t+1)}\right)  - \mathrm{KL}\left(\pi^{\star}_{\tau^{(t)}}, \pi^{(t+1)}\right)\right)
    \\
    &\leq \prod_{l= 0}^t (1- (t+K)^{-2/3}) \KL(\pi_\tau^\star, \pi^{(0)})  \neurips{\\
    &\qquad}+  n\left(R + 2(t+K)^{-1/6}\max_{i \in \cN} \log( |\cA_i|(t+K))\right)^2\sum_{l = 0}^{t} (1/(l+K)) \prod_{s = l+1}^t (1-(s+K)^{-2/3})
    \\
    &\qquad + \sum_{l = 0}^{t} \prod_{s = l+1}^t (1-(s+K)^{-2/3}) \left( \mathrm{KL}\left(\pi^{\star}_{\tau^{(t+1)}}, \pi^{(t+1)}\right)  - \mathrm{KL}\left(\pi^{\star}_{\tau^{(t)}}, \pi^{(t+1)}\right)\right)
    \\
    &\underset{(i)}{\leq} \tilde{\cO} \left(n \max( R^2, \max_{i \in \cN} \log(|\cA_i|t)t^{-1/3}) (t+K)^{-1/3}  + \max_{i \in \cN}\log^3(|\cA_i|t) (t+K)^{-1/3} 
 \right)\neurips{\\
    &}= \tilde{\cO} \left(nR^2t^{-1/3}\right).
\end{align*}}
\tnumbering{\begin{align*}
    &\mathrm{KL}\left(\pi^{\star}_{\tau^{(t+1)}}, \pi^{(t+1)}\right)
    \\
    &\leq \prod_{l= 0}^t (1- \eta^{(l)}\tau^{(l)}) \KL(\pi_\tau^\star, \pi^{(0)})  \\&\qquad+  n\left(R + 2(t+K)^{-1/6}\max_{i \in \cN} \log( |\cA_i|(t+K))\right)^2\sum_{l = 0}^{t} (\eta^{(l)})^2 \prod_{s = l+1}^t (1-\eta^{(s)}\tau^{(s)})
    \\
    &\qquad + \sum_{l = 0}^{t} \prod_{s = l+1}^t (1-\eta^{(s)}\tau^{(s)}) \left( \mathrm{KL}\left(\pi^{\star}_{\tau^{(t+1)}}, \pi^{(t+1)}\right)  - \mathrm{KL}\left(\pi^{\star}_{\tau^{(t)}}, \pi^{(t+1)}\right)\right)
    \\
    &\leq \prod_{l= 0}^t (1- (t+K)^{-2/3}) \KL(\pi_\tau^\star, \pi^{(0)})  \\&\qquad+  n\left(R + 2(t+K)^{-1/6}\max_{i \in \cN} \log( |\cA_i|(t+K))\right)^2\sum_{l = 0}^{t} (1/(l+K)) \prod_{s = l+1}^t (1-(s+K)^{-2/3})
    \\
    &+ \sum_{l = 0}^{t} \prod_{s = l+1}^t (1-(s+K)^{-2/3}) \left( \mathrm{KL}\left(\pi^{\star}_{\tau^{(t+1)}}, \pi^{(t+1)}\right)  - \mathrm{KL}\left(\pi^{\star}_{\tau^{(t)}}, \pi^{(t+1)}\right)\right)
    \\
    &\underset{(i)}{\leq} \tilde{\cO} \left(n \max( R^2, \max_{i \in \cN} \log(|\cA_i|t)t^{-1/3}) (t+K)^{-1/3}  + \max_{i \in \cN}\log^3(|\cA_i|t) (t+K)^{-1/3} 
 \right)\\
    &= \tilde{\cO} \left(nR^2t^{-1/3}\right).
\end{align*}}
Here, (i) holds because   
\begin{align*}
    &\prod_{l= 0}^t (1- (t+K)^{-2/3}) \KL(\pi_\tau^\star, \pi^{(0)}) = \tilde{\cO}\left(\exp\left(-t^{1/3} \right)\right)
\end{align*}
holds, 
\begin{align*}
    \sum_{l = 0}^{t} (1/(l+K)) \prod_{s = l+1}^t (1-(s+K)^{-2/3}) = \tilde{\cO}( (t+K)^{-1/3})
\end{align*} 
holds by \cite[Lemma 4]{cai2023uncoupled}, and  
\numbering{\begin{align*}
    &\sum_{l = 0}^{t} \prod_{s = l+1}^t (1-(s+K)^{-2/3}) \left( \mathrm{KL}\left(\pi^{\star}_{\tau^{(t+1)}}, \pi^{(t+1)}\right)  - \mathrm{KL}\left(\pi^{\star}_{\tau^{(t)}}, \pi^{(t+1)}\right)\right) \neurips{\\
    &\qquad}\leq \max_{i \in \cN}\log^3(|\cA_i|(t+K)) (t+K)^{-1/3}
\end{align*}}
\tnumbering{
\begin{align*}
    &\sum_{l = 0}^{t} \prod_{s = l+1}^t (1-(s+K)^{-2/3}) \left( \mathrm{KL}\left(\pi^{\star}_{\tau^{(t+1)}}, \pi^{(t+1)}\right)  - \mathrm{KL}\left(\pi^{\star}_{\tau^{(t)}}, \pi^{(t+1)}\right)\right) \\
    &\qquad\leq \max_{i \in \cN}\log^3(|\cA_i|(t+K)) (t+K)^{-1/3}
\end{align*}
}
holds by \cite[Lemma 4, Lemma 15]{cai2023uncoupled}. Therefore, if we iterate \Cref{alg:polymatrix-game-decreasing-regular} for $T$ times, by \Cref{prop:QREvsKL} and  \Cref{lem:NE-QRE}, we obtain 
\begin{align*}
    \tilde{\cO}\left(T^{-1/6} \max_{i \in \cN} \log |\cA_i|  + \left(T^{-1/6} \max_{i \in \cN} \log |\cA_i|  + R\right)  \sqrt{nR^2T^{-1/3}} \right) 
\end{align*}
approximate NE.  Therefore, if we want to obtain $\epsilon$-NE ($\epsilon > 0$) for the matrix game, we need to have $\tilde{\cO}\left(n^3R^{12} /\epsilon^6\right)$ iterations. 
\end{proof}

\end{document}